\title{Delicate topology protected by rotation symmetry: Crystalline Hopf insulators and beyond}
\begin{document}

\title{Delicate topology protected by rotation symmetry: Crystalline Hopf insulators and beyond}

\author{Aleksandra Nelson\,\orcidlink{0000-0002-2168-3846}}
\email{anelson@physik.uzh.ch}
\affiliation{Department of Physics, University of Zurich, Winterthurerstrasse 190, 8057 Zurich, Switzerland}
\author{Titus Neupert\,\orcidlink{0000-0003-0604-041X}}
\email{titus.neupert@uzh.ch}
\affiliation{Department of Physics, University of Zurich, Winterthurerstrasse 190, 8057 Zurich, Switzerland}
\author{A. Alexandradinata\,\orcidlink{0000-0003-2235-497X}}    \email{aalexan6@ucsc.edu}
\affiliation{Physics Department, University of California Santa Cruz, Santa Cruz, CA 95064, USA}
\author{Tom\'{a}\v{s} Bzdu\v{s}ek\,\orcidlink{0000-0001-6904-5264}}
\email{tomas.bzdusek@psi.ch}
\affiliation{Condensed Matter Theory Group, Paul Scherrer Institute, 5232 Villigen PSI, Switzerland}
\affiliation{Department of Physics, University of Zurich, Winterthurerstrasse 190, 8057 Zurich, Switzerland}

\begin{abstract}
Pontrjagin's seminal topological classification of two-band Hamiltonians in three momentum dimensions is hereby enriched with the inclusion of a crystallographic rotational symmetry. 
The enrichment is attributed to a new topological invariant which quantifies a $2\pi$-quantized change in the Berry-Zak phase between a pair of rotation-invariant lines in the bulk, three-dimensional Brillouin zone; 
because this change is reversed on the complementary section of the Brillouin zone, we refer to this new invariant as a returning Thouless pump (RTP). 
We find that the RTP is associated to anomalous values for the angular momentum of surface states, which guarantees metallic in-gap states for open boundary condition with sharply terminated hoppings;
more generally for arbitrarily terminated hoppings, surface states are characterized by Berry-Zak phases that are quantized to a rational multiple of $2\pi$.  
The RTP adds to the family of topological invariants (the Hopf and Chern numbers) that are known to classify two-band Hamiltonians in Wigner-Dyson symmetry class~A. 
Of these, the RTP and Hopf invariants are \textit{delicate}, meaning that they can be trivialized by adding  a particular trivial band to either the valence or the conduction subspace. 
Not all trivial band additions will nullify the RTP invariant, which allows its generalization beyond two-band Hamiltonians to arbitrarily many bands; such generalization is a hallmark of \textit{symmetry-protected} delicate topology.
\end{abstract}
\date{\today}

\maketitle

\setcounter{tocdepth}{2}

\renewcommand{\baselinestretch}{1.25}
\makeatletter
{\small \tableofcontents }
\makeatother
\renewcommand{\baselinestretch}{1.0}

\section{Introduction}\label{sec:intro}

One of the central problems in topology  -- to determine equivalence classes of maps between two manifolds~\cite{Hatcher:2003,Hopf:1931,pontrjagin_classification,Whitehead:1947} -- has found diverse applications in physics~\cite{wilczekzee_linkingnumbers,stone_mathematicsforphysics,Radu:2008,Ackerman:2017,Mermin:1979}. 
We focus here on an application in condensed matter physics, where the momentum ($\bk$)-dependent Bloch Hamiltonian can be viewed as a map from the Brillouin zone (BZ) to a classifying space of Hamiltonians~\cite{kitaev_periodictable,Ryu:2010,read_compactwannier}. 
One of the seminal results is Pontrjagin's topological classification~\cite{pontrjagin_classification} of two-band, insulating Hamiltonians~\cite{Hopfinsulator_Moore, hopfsurfacestates_deng,kennedy_hopfchern} in Wigner-Dyson symmetry class~A~\cite{altland1997} and in three momentum dimensions. 
[By `two-band, insulating', we mean the Hamiltonian has a spectral gap throughout the Brillouin zone, with a valence subspace (also a conduction subspace) given by one linearly-independent Bloch function.]
The classification is given by four quantized invariants~\cite{Auckly:2010,kennedy_thesis}, three of which are Chern invariants defined on two-dimensional sub-tori of the 3-torus, while the fourth can be identified as the integer ($\mathbb{Z}$) valued Hopf invariant if the Chern class is trivial. 
However, Pontrjagin's classification has never been generalized to include crystallographic point-group symmetries, i.e., crystalline symmetries beyond discrete translations. 

Crystallographic point-group symmetries have had a transformative role in the topological classification of band structures, rendering some topological invariants to vanish~\cite{Chen_bulktopologicalinvariants}, as well as introducing new topological invariants~\cite{Hsieh_SnTe,Chen_bulktopologicalinvariants,AA_wilsonloopinversion,Shiozaki2014,Hourglass,benalcazar_quantizedmultipole,Schindler_higherorderTI}. 
One approach to characterize band topology is based on the analysis of crystalline symmetry representations of one-particle Bloch states~\cite{TQC, bandcombinatorics_kruthoff,Po_symmetryindicators,po_fragile}.
This approach led to a successful prediction of numerous new topological materials~\cite{Fang_completecatalog,Maia_completecatalog,Xiangang_completecatalog}; however, by design, these symmetry-representation-based filtering schemes do not identify topologically nontrivial band structures that are \textit{band representations}~\cite{zak1980,Zak_bandrepresentations}, i.e., band structures admitting a description by a set of exponentially-localized, locally-symmetric crystalline (Wannier) orbitals~\cite{nogo_AAJH,bouhon_wilsonloopapproach,crystalsplit_AAJHWCLL}, of which the Hopf insulator~\cite{Hopfinsulator_Moore} is an example. 

As was recently revealed~\cite{Nelson:2021,AA_teleportation}, the presence of crystalline symmetry also extends the characterization of two-band, insulating Hamiltonians with trivial first Chern class beyond the $\mathbb{Z}$-valued Hopf invariant. 
In particular, if the valence and the conduction bands transform differently under rotation, additional $\mathbb{Z}$-valued topological invariants can be defined: 
each of them is given by a $2\pi$-quantized difference in Berry-Zak phase  between two rotation-invariant lines (in the Brillouin zone) separated by half a reciprocal-lattice vector, with the Zak phase reverting to its original value over a full reciprocal vector. 
By viewing two of three independent momentum coordinates as adiabatic parameters, this Zak-phase reversion can be interpreted as an adiabatic evolution of the electric polarization defined for a single-momentum Hamiltonian: a unit charge is pumped across one (or more) real-space lattice periods within half an adiabatic cycle, and is then pumped back in the next half of the cycle. 
In analogy with the quantized charge pumping first investigated by Thouless \cite{thouless_pump}, we call the Zak-phase reversion a returning Thouless pump (RTP). We refer to the corresponding rotation-symmetry-enriched topological two-band Hamiltonians as 
\emph{crystalline Hopf insulators}. 

In this work, we classify   two-band, insulating Hamiltonians  (with trivial Chern class) by the Hopf and the rotation-protected RTP invariants, for all crystallographically allowed orders of rotation: $n=2,3,4,6$.
We discover that these invariants are not independent: the RTP invariants determine the Hopf invariant modulo $n$, resembling in spirit a known relation between rotation-symmetry representations and the Chern number~\cite{Chen_bulktopologicalinvariants}. 
At the same time, the RTP invariants characterize qualitatively distinct phases beyond the Pontrjagin classification; as proof of principle we will demonstrate a model that carries vanishing Chern and Hopf invariants but nonvanishing RTP invariants. 

A RTP leads to potentially measurable surface signatures which are qualitatively unique among the known topological insulators. 
Fundamentally, the RTP leads to \textit{anomalous} values for the angular momentum of \textit{all} surface states that are localized to a single, rotation-invariant crystalline \textit{facet}. 
By `facet', we mean a crystalline surface retaining two of three independent translational symmetries of the bulk space group; by `all surface states', we mean all surface-localized states irrespective of whether they are occupied or unoccupied by fermions; by `anomalous', we mean that these angular-momentum values are non-identical to the angular momentum of bulk valence states, and also non-identical to the angular momentum of bulk conduction states.  
Such an \textit{angular-momentum anomaly} is physically manifested in two~ways.\\

\noindent (\emph{i}) For open boundaries with sharply-terminated hoppings, the anomaly implies the existence of facet-localized states along any line in the  reduced/surface Brillouin zone connecting a certain pair of rotation-invariant reduced momenta. These states energetically connect the bulk valence and bulk conduction bands in a manner that  is robust against continuous, gap-preserving deformations of the \textit{bulk} Hamiltonian; however, the energetic connection crucially relies on the \textit{sharpness} of the boundary condition, namely that Hamiltonian matrix elements are everywhere identical as in the bulk, except for those hopping matrix elements crossing a facet; these exceptions are set to zero. 
Such \textit{conditionally robust} surface states have no analog in stable~\cite{TKNN,kane2005B,Hourglass} and fragile~\cite{po_fragile,nogo_AAJH,bouhon_wilsonloopapproach,crystalsplit_AAJHWCLL,Bouhon:2020} topological insulators, and are unique to delicate topological insulators (a notion introduced in Ref.~\cite{Nelson:2021} and elaborated below).
\\

\noindent (\emph{ii}) For a general choice of the Hamiltonian termination, facet-localized states are characterized by Berry-Zak phases that are quantized to a rational multiple of $2\pi$, which is distinct from the Berry-Zak phases of bulk states in both the bulk conduction and the bulk valence bands. 
How surface states can be topologically distinct from bulk states may be rationalized in a slab geometry with two facets: the Zak phase of top-facet-localized states are `conjugate' to the Zak phase of the bottom-facet-localized states, such that the totality of states localized to both facets are topologically equivalent to the bulk states. 
Colloquially, one may say that topologically `interesting' bands have arisen from being \textit{physically} separated across the bulk of the slab; such a hallmark of delicate topological insulators stands in stark contrast to the conventional wisdom of \textit{energetically} separating topologically-interesting bands across a bulk energy gap.\\

Unlike stable and fragile topology, both Hopf and RTP invariants are delicate~\cite{Nelson:2021}, meaning that they are not stable under addition of a particular trivial band to either the bulk conduction or the bulk valence subspace.
Addition of any trivial band makes the Hopf invariant ill-defined, however the RTP invariants remain well-defined and quantized upon addition of trivial bands with certain angular momenta. 
Such symmetry-constrained generalization of the RTP to Hamiltonians with an arbitrary number of bands is a distinctive feature of \textit{symmetry-protected delicate topology}, whose existence improves the odds of discovering delicate topology in realistic materials.

\section{Organization of the manuscript}

{\centering
\textbf{Berry dipole as origin of crystalline \nopagebreak\\ Hopf insulators: Sec.~\ref{sec:berry-dipole}--\ref{sec:lattice-RTP}.}
\par}\nopagebreak\smallskip\nopagebreak

The crystalline Hopf insulator originates from a unique semi-metallic phase -- characterized by a band-touching point that acts as a rotation-symmetric, dipolar source of Berry curvature. By elucidating the geometric-topological properties of this \textit{Berry dipole} in Sec.~\ref{sec:berry-dipole}, we can explain (in Sec.~\ref{sec:lattice-RTP}) why a Berry dipole mediates a topological phase transition from a trivial insulator to a crystalline Hopf insulator. 

It is also in  Sec.~\ref{sec:lattice-RTP} that we formalize the notion of the returning Thouless pump (RTP) and explain why it is quantized. Together with the Hopf invariant (reviewed therein), the RTP invariant classifies two-band, insulating, tight-binding Hamiltonians with trivial Chern class. A substantial amount of basic band-theoretic vocabulary (including the notion of band representations) is also reviewed in this section, to streamline subsequent reading.\\

{\centering
\textbf{Models of crystalline Hopf insulators: Sec.~\ref{sec:models}.}
\par}\nopagebreak\smallskip\nopagebreak

Beyond its explanatory power, a Berry dipole is also a theoretically useful strategy for inventing tight-binding models of crystalline Hopf insulators -- a program carried out in Sec.~\ref{sec:models}. As proof of principle, we invent representative models of crystalline Hopf insulators for all $\mathrm{P}n$ space groups\footnote{A $\mathrm{P}n$ space group,  with $n=2,3,4,6$,  is a semidirect product of a three-dimensional translational group and an order-$n$ rotational group.}
(as summarized   in Table \ref{tab:model-zoo}), illustrating how crystallographic symmetry enriches the topological classification of two-band Hamiltonians; in particular, we show explicitly a model with a trivial Hopf invariant but a nontrivial RTP invariant. \\

{\centering
\textbf{Complete classification of crystalline \nopagebreak\\ Hopf insulators: Sec.~\ref{sec:RTP-Hopf}.}
\par}\nopagebreak\smallskip\nopagebreak

The Hopf and RTP invariants are not independent but are related modulo $n$, as shown in Sec.~\ref{sec:RTP-Hopf} for all $\mathrm{P}n$ space groups, with the main result summarized in \s{sec:mainresult}. Without yet a rigorous algebraic-topological proof, we conjecture that this Hopf-RTP relation gives the complete classification of $\mathrm{P}n$-symmetric, two-band insulating Hamiltonians with trivial Chern class. \\

{\centering
\textbf{Bulk-boundary correspondence: Sec.~\ref{sec:BBC}.}
\par}\nopagebreak\smallskip\nopagebreak

We establish the boundary signatures of the bulk RTP invariant  in Sec.~\ref{sec:BBC}. The fundamental signature is the angular-momentum anomaly (described briefly above); the more physically accessible signatures are the conditionally-robust surface states as well as the anomalous values of the Zak phase of surface bands. Our result here, combined with the boundary signature of the bulk Hopf invariant established in \ocite{AA_teleportation}, completes the bulk-boundary correspondence of crystalline Hopf insulators. Because the angular-momentum anomaly offers one way to prove the mod-$n$ Hopf-RTP relation, we have chosen to present the bulk-boundary correspondence section (Sec.~\ref{sec:BBC}) before the above-mentioned Hopf-RTP section (Sec.~\ref{sec:RTP-Hopf}). However, there is no substantial loss in switching the order of reading, because we have found other means to independently prove the Hopf-RTP relation.  \\

{\centering
\textbf{Symmetry-protected delicate topology beyond \nopagebreak\\ two-band Hamiltonians: Sec.~\ref{sec:multi-band}.}
\par}\nopagebreak\smallskip\nopagebreak

We discuss the extent to which the Hopf and RTP invariants can be generalized beyond two-band Hamiltonians in Sec.~\ref{sec:multi-band}, which naturally leads to the related notions of delicate vs.~symmetry-protected delicate topology. We re-establish a known result that the Hopf invariant is delicate-topological and cannot be extended to (${>}{2}$)-band Hamiltonians; however, we highlight that the RTP is a symmetry-protected delicate topological invariant that can be extended to (${>}{2}$)-band Hamiltonians, with the caveat that the expanded Hilbert space satisfies certain symmetry constraints.\\

{\centering
\textbf{Multicellularity of crystalline Hopf insulators: Sec.~\ref{sec:CHI-multicel}.}
\par}\smallskip

A real-space implication of a nontrivial RTP or a nontrivial Hopf invariant is the impossibility to confine a representative Wannier orbital (of either the valence or the conduction band) to a single primitive unit cell. This obstruction exists despite all Wannier orbitals being exponentially-localized and locally-symmetric. 
Such `multicellularity' is investigated in Sec.~\ref{sec:CHI-multicel} for various models of crystalline Hopf insulators, and proven more generally to be a (symmetry-protected) delicate-topological property.
Going beyond the terminology introduced in our earlier work~\cite{Nelson:2021}, we here refine the notion of multicellularity through the introduction of \emph{grading}: a positive-integer-valued number that places a lower bound on the number of unit cells into which representative Wannier orbitals
can be contained. \\

{\centering
\textbf{Topological semimetals intermediate \nopagebreak\\ crystalline Hopf insulators: Sec.~\ref{sec:semi-metallic-transition-region}.}
\par}\nopagebreak\smallskip\nopagebreak

In spite of the conceptual simplicity of the Berry dipole, it is worth remarking that such a band touching is fine-tuned, i.e., occupying a region in the phase diagram which is `harder to reach' than generic band touchings for $\mathrm{P}n$-symmetric Hamiltonian. Aiming toward more naturalistic models of delicate topology,  Sec.~\ref{sec:semi-metallic-transition-region} investigates the more generic semimetallic phases (`generic' = requiring fewer tuning parameters) that intermediate a trivial insulator and a crystalline Hopf insulator. In particular, we find  that the \textit{rotation}-protected delicate topological insulators, studied in this work, are proximate to \textit{mirror}-protected delicate topological semimetals -- a notion that we introduce here. \\

{\centering
\textbf{Material applications and outlook \nopagebreak\\ of delicate topology: Sec.~\ref{sec:conclusion}.}
\par}\nopagebreak\smallskip\nopagebreak

We conclude the paper in Sec.~\ref{sec:conclusion}, with a partial summary of results that emphasizes the possible natural/(meta)material/cold-atomic applications. We also present  an outlook toward future developments in delicate topology.\\

{\centering
\textbf{Appendices.}
\par}\nopagebreak\smallskip\nopagebreak

The manuscript is supplemented by a series of appendices \ref{app:glossary}--\ref{app:incompatibility}, which mostly serve either a pedagogical purpose or add to the rigor of discussions in the main text. 
The sole exception are:
Appendix~\ref{app:glossary}, which is a glossary of all recurring technical symbols and abbreviations; 
and Appendix~\ref{app:index-of-notions} that contains a list of key terms and notions
accompanied with their brief definitions.

\section{Crystalline Hopf insulators from lifting a Berry-dipole band degeneracy}\label{sec:berry-dipole}

One convenient strategy to realize known topological insulators has been to perturb topological semi-metallic phases so as to develop an energy gap (cf.~Refs.~\cite{kane2005A, Murakami2007A} and chapter~8 of Ref.~\cite{bernevig_book}). 
We show here that crystalline Hopf insulators can be realized by perturbing a rotation-symmetric semi-metallic phase with a band touching (at an isolated $\bk$-point) that acts as a dipolar source of Berry curvature.  We present a $\bk\cdot\bp$ Hamiltonian for such a \textit{Berry dipole} in Sec.~\ref{sec:Berry-definition},  then argue in Sec.~\ref{sec:Berry-transitions} that a Berry-dipole band degeneracy mediates a topological phase transition. We then move from $\bk\cdot\bp$ Hamiltonians to tight-binding lattice models in \s{sec:lattice-RTP}, showing that a Berry dipole mediates a quantized change of the integer-valued Hopf and RTP invariants.

\subsection{Rotation-symmetric \texorpdfstring{$\bk{\cdot}\bp$}{k.p} model of a Berry dipole}\label{sec:Berry-definition}

In contrast to a Weyl point [Fig.~\ref{fig:Berry-curvature}(a)], the Berry curvature integrated over a sphere enclosing the Berry dipole necessarily vanishes.
However, the Berry flux through a hemisphere
oriented parallel to the dipole axis is quantized to integer multiplies of $2\pi$ [Fig.~\ref{fig:Berry-curvature}(b)],
\begin{equation}
    \Psi=\!\!\!\!\int\limits_\textrm{hemisphere}\!\!\!\!\bm{\mathcal{F}}\cdot d\bm{S}\;\quad\in{2\pi}\mathbb{Z},
    \label{eq:Berry-dipole-flux}
\end{equation}
where $\bm{\mathcal{F}}$ is the Berry curvature of the valence band and $d\bm{S}$ is an infinitesimal surface 
element normal to the hemisphere. We say that the Berry dipole is of the $\mathbbm{d}$-th order if the absolute value of the flux $\left|\Psi\right|$ through a hemisphere equals $2\pi \mathbbm{d}$. 

We further restrict our attention to \emph{rotation-symmetric} Berry-dipoles, for which the dipole axis coincides with a rotation axis of symmetry. We assume that a two-band Hamiltonian is symmetric under continuous rotations with respect to the $z$-axis:
\begin{equation}
    \label{eq:rot-sym-Berry-dipole-contin}
    \begin{split}
    &R_{\theta}\,h(\bk)\,R_{\theta}^{-1}=h(\theta\bk), \\
    &R_{\theta}=e^{-i\theta\Delta\ell/2\cdot\sigma_z},
    \end{split}
\end{equation}
with $\theta\bk$ being the momentum vector obtained from $\bk$ by a counterclockwise rotation of $(k_x,k_y)$ by an arbitrary angle $\theta$, $R_\theta$ being the matrix representation of a rotation by angle $\theta$, and $\Delta\ell$ (assumed $\,\neq\!\!0$) is the integer-valued \textit{Berry-dipole spin}, in units where $\hbar=1$.

To study the prospective gapped phases obtained by lifting the Berry-dipole band degeneracy, we construct a two-band $\bk\cdot\bp$ Hamiltonian that depends on a tuning parameter~$\phi$. 
At $\phi\neq0$ the two energy bands are separated by a finite gap, which closes for $\phi=0$ when the bands form a Berry dipole. 
A simple realization of a rotation-symmetric Berry-dipole Hamiltonian is encoded with a spinor
\begin{equation}
    \zeta(\bk)=\begin{pmatrix}\zeta_1(\bk)\\ \zeta_2(\bk)\end{pmatrix}=\begin{pmatrix}\left(k_{-\sign(\Delta\ell)}\right)^{|\Delta\ell|}\\ \upsilon\, k_z+i\phi\end{pmatrix} \in \mathbb{C}^2,
    \label{eq:kp-sym-spinor}
\end{equation}
with $k_\pm=k_x\pm ik_y$ and helicity $\upsilon\in\{-1,+1\}$ (cf.~footnote~\ref{foot:helicity} for a geometric interpretation of $\upsilon$). The spinor serves as an input for~the
\begin{align}
    \text{\textit{spinor-form Hamiltonian}:} \as h(\bk)=[\zeta^{\dagger}(\bk)\bm{\sigma}\zeta(\bk)]\cdot\bm{\sigma}, 
    \label{eq:zszs-hamiltonian}
\end{align}
with $\bm{\sigma}=(\sigma_x, \sigma_y, \sigma_z)$ being the vector of Pauli matrices.
We have obtained \q{eq:kp-sym-spinor} by first assuming that $\zeta$ has a multi-variable Taylor expansion in all momentum components and in the tuning parameter $\phi$, then deriving [in  Appendix~\ref{app:kp-spinor}] the lowest-order form for $\zeta$ such that the $\bk{\cdot}\bp$-Hamiltonian in Eq.~\eqref{eq:zszs-hamiltonian} fulfills the symmetry condition in Eq.~\eqref{eq:rot-sym-Berry-dipole-contin}. If one further allows for continuous deformations of the Hamiltonian that preserve its `topology' (in a sense that will be made increasingly precise), $\zeta$ can be simplified to the form shown in~\q{eq:kp-sym-spinor}.\footnote{We emphasize that a Hamiltonian with rotation symmetry given by Eq.~\eqref{eq:rot-sym-Berry-dipole-contin} can as well be constructed with the spinor 
\begin{equation}
    \zeta(\bk)=\begin{pmatrix}\upsilon\, k_z+i\phi \\ \left(k_{\sign(\Delta\ell)}\right)^{|\Delta\ell|}\end{pmatrix},
    \label{eq:kp-sym-spinor-2}
\end{equation}
as shown in Appendix~\ref{app:kp-spinor}. 
Spinor-form Hamiltonians [Eq.~\eqref{eq:zszs-hamiltonian}] constructed with spinors given by Eqs.~\eqref{eq:kp-sym-spinor} resp.~\eqref{eq:kp-sym-spinor-2} differ by symmetry eigenvalues of their eigenstates. 
Namely, define an angular momentum $\ell_v$ ($\ell_c$) of the valence (conduction) eigenstate as 
\begin{equation}
    R_\theta\ket{u_{v(c)}(0,0,k_z)}=e^{i\theta\ell_{v(c)}}\ket{u_{v(c)}(0,0,k_z)}.
    \label{eq:kp-angular-mom}
\end{equation}
Then for the Hamiltonian defined through spinor \eqref{eq:kp-sym-spinor} the difference $\ell_c-\ell_v$ coincides with the Berry-dipole spin $\Delta\ell$. 
This corresponds to a valence (conduction) state at rotation axis being induced from the first (second) basis orbital, where these orbitals define a basis in which the Hamiltonian and the rotation matrix are expressed. 
For the Hamiltonian defined though spinor \eqref{eq:kp-sym-spinor-2}, one instead finds the difference $\ell_c-\ell_v=-\Delta\ell$, meaning that the valence (conduction) state is induced by the second (first) basis orbital. The two classes of models are related by flipping the sign of their energy bands.
Throughout Sec.~\ref{sec:berry-dipole} we will present results for the Hamiltonian given by the spinor \eqref{eq:kp-sym-spinor}. \label{foot:spinor2}}

The energy-momentum dispersion of each band of the spinor-form Hamiltonian (indexed by $\pm$) is simply the squared norm of the spinor multiplied by $\pm 1$:\footnote{To reveal this, first recall the completeness relation for the Pauli matrices: $\frac{1}{2}\sum_{i\in\{0,x,y,z\}} \sigma^i_{\alpha\beta}\sigma^i_{\gamma\delta} = \delta_{\alpha\delta}\delta_{\beta\gamma}$~\cite{Pauli-completeness-relation}, where $\delta$ is the Kronecker symbol and $\sigma^0$ is the identity matrix. This is equivalent to
\begin{subequations}
\begin{equation}
\sum_{i\in\{x,y,z\}}\sigma^i_{\alpha\beta}\sigma^i_{\gamma\delta} = 2\delta_{\alpha\delta}\delta_{\beta\gamma} - \delta_{\alpha\beta}\delta_{\gamma\delta}.\label{eqn:pauli-completeness}
\end{equation}
Since the two-band Hamiltonian in Eq.~(\ref{eq:zszs-hamiltonian}) is traceless, the two eigenvalues have the same magnitude but opposite sign: $\pm E$. The determinant (being the product of eigenvalues) is then found as $\det h = -E^2$. Simultaneously, for a Pauli-matrix Hamiltonian $h = \sum_{i\in\{x,y,z\}} h_i \sigma^i$, the determinant is also expressed as $\det h = -\sum_{i\in\{x,y,z\}} h_i^2$. For the spinor-form Hamiltonian, $h_i = \zeta^*_\alpha \sigma^i_{\alpha \beta} \zeta_\beta$; therefore, matching the two expressions for $\det h$ results in
\begin{equation}
E^2 = \sum_{i\in\{x,y,z\}}\zeta^*_\alpha \sigma^i_{\alpha\beta} \zeta_\beta \zeta^*_\gamma \sigma^i_{\gamma\delta} \zeta_\delta \stackrel{\textrm{(\ref{eqn:pauli-completeness})}}{=} \zeta_\alpha^*\zeta_\gamma^*\left(2\delta_{\alpha\delta}\delta_{\beta\gamma} - \delta_{\alpha\beta}\delta_{\gamma\delta}\right)\zeta_\beta\zeta_\delta ,
\end{equation}
\end{subequations}
where summation over repeated indices $\alpha,\beta,\gamma,\delta$ is implied. From the last expression, $E^2 = (\zeta_\alpha^* \zeta_\alpha)^2$ follows easily, which is equivalent to Eq.~(\ref{eqn:spinor-Ham-energy}).}
\begin{equation}
E_{\pm}(\bk) = \pm \norm{\zeta(\bk)}^2.\label{eqn:spinor-Ham-energy}
\end{equation}
The particular form of the Hamiltonian defined by Eqs.~(\ref{eq:kp-sym-spinor}) and~(\ref{eq:zszs-hamiltonian}) implies that the energy gap grows quadratically with the $k_z$-component of momentum and with the parameter $\phi$. 
Furthermore, the Berry-dipole spin $\Delta\ell$ determines the exponent of $k_{\pm}$ to be $|\Delta\ell|$, and therefore the energy gap grows with $2|\Delta \ell|$-th power of the momentum distance from the rotation axis.
In certain parts of our discussion, it will be convenient to consider the \emph{spectrally flattened}\footnote{For traceless, two-band Hamiltonians that are \emph{not} of the spinor form, the spectral flattening is instead defined as 
\begin{equation}
h_\mathrm{flat}(\bk) = \frac{h(\bk)}{\sqrt{-\det[h(\bk)]}}.\label{eqn:flatten-nonspinor}
\end{equation}} 
spinor-form Hamiltonian, 
\begin{equation}
h_\mathrm{flat}(\bk) = \frac{1}{\norm{\zeta(\bk)}^2}h(\bk),\label{eq:zszs-hamiltonian-normalized}
\end{equation}
which is defined for all $(\bk,\phi)\neq(\boldsymbol{0},0)$, and for which the spectrum is simply $E_\mathrm{flat}(\bk) = \pm 1$.

\begin{figure}
    \centering
    \includegraphics{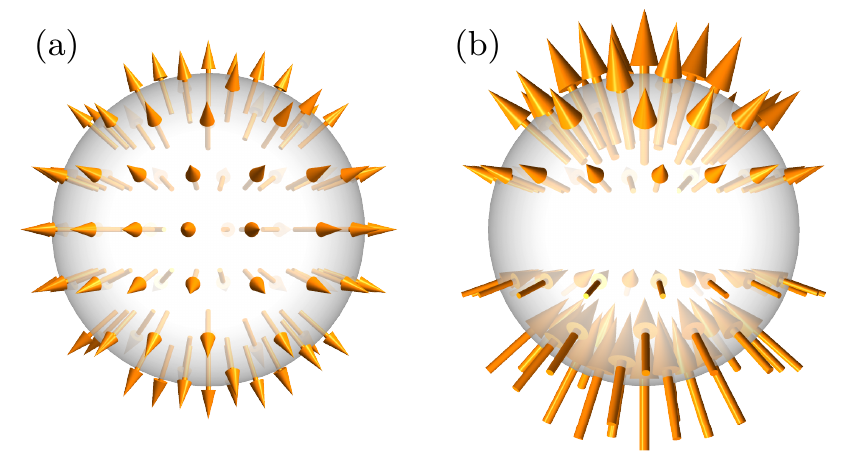}
    \caption{Berry-curvature vector fields of (a) an isotropic Weyl point with positive chirality, and of (b) a rotation-symmetric Berry dipole in Eq.~(\ref{eq:kp-sym-spinor}) with spin $\Delta\ell{=} -1$.
    Both vector fields are plotted over a sphere centered at
    the band-touching point.}
    \label{fig:Berry-curvature}
\end{figure}

We derive in Appendix~\ref{app:berry-flux} for the Hamiltonian defined by Eqs.~(\ref{eq:kp-sym-spinor}) and~(\ref{eq:zszs-hamiltonian}) that the Berry-dipole spin $\Delta\ell$ determines the Berry flux through the upper hemisphere as\footnote{
To derive the Berry flux for the Hamiltonian defined through the spinor \eqref{eq:kp-sym-spinor-2}, observe that this spinor can be brought to the form
\eqref{eq:kp-sym-spinor} by first flipping the order of the basis orbitals (i.e., rotating the Hilbert space with Pauli matrix $\sigma_x$) and by subsequently changing the sign of the Berry-dipole spin. The order of the basis orbitals (i.e., the rotation of the Hilbert space) does not affect the Berry flux, and therefore
\begin{equation}
   \Psi\left(\substack{\textrm{Berry dipole}\\\textrm{spinor \eqref{eq:kp-sym-spinor-2}}}\right)=2\pi\Delta\ell. 
\end{equation} 
Recalling the expression for the valence and conduction angular momenta given in footnote~\ref{foot:spinor2}, we conclude that the Berry flux through the upper hemisphere of both spinor-form Hamiltonians is given by
\begin{equation}
   \Psi=-2\pi(\ell_c-\ell_v). 
\end{equation}
\label{foot:Berry-flux2}} 
\begin{equation}
\Psi\left(\substack{\textrm{rotation-sym.}\\\textrm{Berry dipole}}\right)=-2\pi\Delta\ell; \label{eqn:Berry-dipole-flux}
\end{equation} 
implying that the absolute value of the Berry-dipole spin 
equals the order of the Berry dipole $|\Delta\ell|=\mathbbm{d}$.

\subsection{Berry dipole as a topological phase transition}\label{sec:Berry-transitions}

While the Hamiltonian given in Eqs.~(\ref{eq:kp-sym-spinor}) and ~(\ref{eq:zszs-hamiltonian}) has been defined for $\bk \in \mathbb{R}^3$, the fact that each Hamiltonian matrix element is a low-order polynomial in $\bk$ allows us to interpret the same Hamiltonian as a small-momentum, truncated Taylor expansion  of a tight-binding Hamiltonian defined over the Brillouin zone. Tuning $\phi$ through zero makes the energy gap close at an isolated point $\bk=\bze$, then subsequently reopens the gap; throughout this tuning, the rotation-symmetry representations at $\bk=\bze$ are \textit{not} inverted. We will find that the parameter $\phi=0$ marks a critical point between two gapped phases that are distinguished by the Hopf and Returning-Thouless-pump topological invariants, both of which are integers defined for the tight-binding Hamiltonian. 

\begin{figure}
    \centering
    \includegraphics[width=0.475\textwidth]{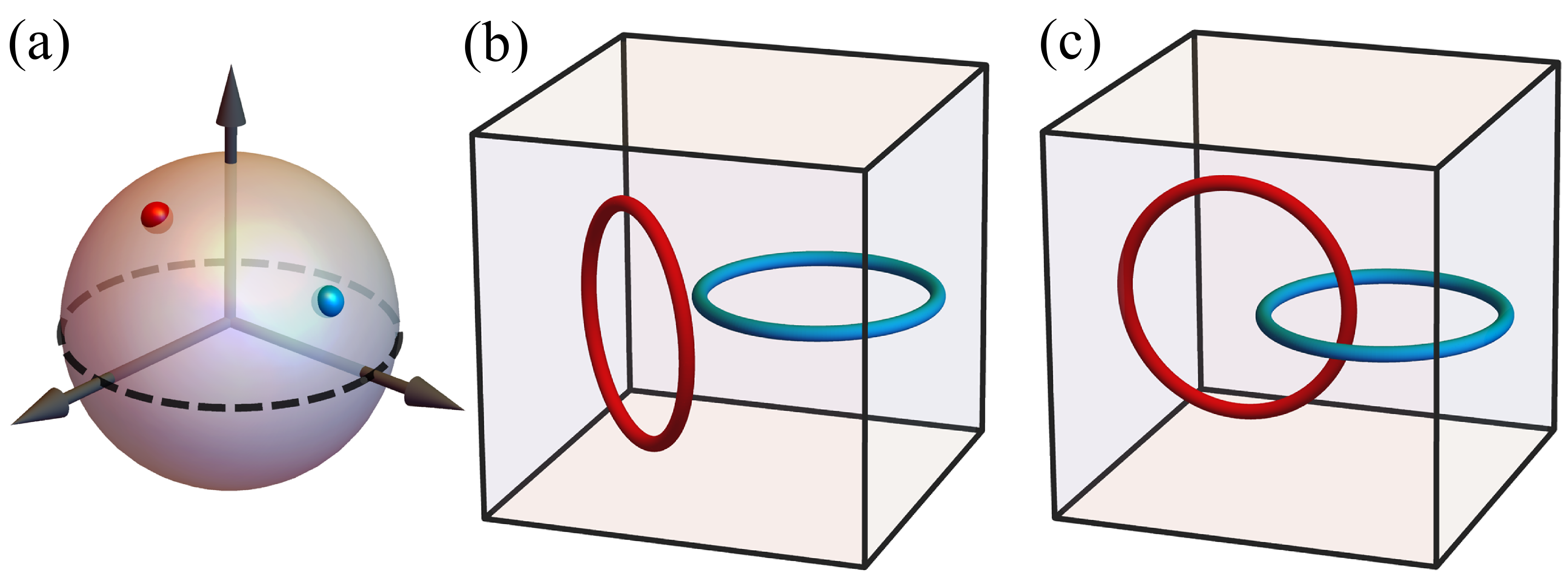}
    \caption{
    Relation between the Hopf invariant and the linking of preimages. (a) The Bloch sphere ($S^{\!2}$) represents the space of (spectrally {flattened}) gapped two-band Hamiltonians, with 
     the three axes indicating the decomposition of such Hamiltonians into Pauli matrices: $h_\mathrm{flat}=\boldsymbol{h}_\mathrm{flat}(\bk)\cdot \bsigma$. (b) For a Hamiltonian with trivial Hopf invariant, and for two arbitrarily chosen points on the Bloch sphere: $\boldsymbol{h}_{\textrm{red}},\boldsymbol{h}_{\textrm{blue}}\in S^{\!2}$ (indicated by red and blue dots), we plot their respective preimages inside the Brillouin zone (black frame), i.e., the collection of momenta $\bk$ where $\boldsymbol{h}_\mathrm{flat}(\bk)=\boldsymbol{h}_\textrm{red}$ and $\boldsymbol{h}_\mathrm{flat}(\bk)=\boldsymbol{h}_\textrm{blue}$ respectively. Panel (c) shows the linked preimages for a Hamiltonian with absolute value of the Hopf invariant $\abs{\chi} = 1$. 
    }
    \label{fig:Hopf-linking}
\end{figure}

To demonstrate why the Berry dipole intermediates a change in the Hopf invariant, we utilize the known,  linking-number interpretation of the Hopf invariant, that applies to two-band Hamiltonians with trivial Chern class.  (This triviality is assumed throughout the manuscript.) 
The Hamiltonian can be viewed as a continuous map from the BZ three-torus to the Bloch two-sphere; for a three-toroidal $\bk$-space, the Hopf invariant equals the linking number of the preimages corresponding to any two distinct points on the Bloch two-sphere \cite{Hopf:1931,wilczekzee_linkingnumbers,kennedy_hopfchern}, cf.~Fig.~\ref{fig:Hopf-linking}.
In the $\bk\cdot \bp$ model, this linking number is not well-defined since the preimage manifolds are noncompact, forming a family of infinite (and for $\phi\neq 0$ non-intersecting) straight lines\footnote{
According to the correspondence $S^{\!2} \approx\mathbb{C}P^1$ (the complex projective plane), points on the Bloch sphere are in one-to-one correspondence with rays in the Hilbert space $\mathbb{C}^2$~\cite{stone_mathematicsforphysics}. 
Therefore, for any two $\bk$-points $(\bk_1$ and $\bk_2$) in the preimage of a single Bloch-sphere point, the corresponding spinors $\zeta(\bk_1)$ and $\zeta(\bk_2)$, as given by the Hamiltonian in Eq.~(\ref{eq:zszs-hamiltonian}), are related to each other by rescaling with a non-zero complex number. 
Such rescaling keeps the ratio of the two spinor components invariant; specifying the preimage is equivalent to fixing a constant $c\in\mathbb{C}$ in $\zeta_1(\bk)=c \zeta_2(\bk)$. 
Plugging in the spinor components of a first-order Berry dipole from \q{eq:kp-sym-spinor} for the case $\abs{\Delta\ell}=1$, one obtains $k_x\pm i\,k_y=c(\upsilon k_z+i\phi)$, where the sign on the left-hand side equals $-\sign(\Delta\ell)$. This may be viewed as a straight $\bk$-line (parametrized by $k_z$) which intersects the $k_z=0$ plane at point $k_x \pm ik_y = ic \phi$. Note that the inclination 
of the straight line,  as captured by $d(k_x\pm i k_y)/dk_z = c\upsilon$, is independent of the tuning parameter $\phi$.\label{foot:rescale-spinor}} in $\mathbb{R}^3$, as illustrated representatively by two lines in \fig{fig:Hopf-winding}(a) for the case $\Delta\ell=1$ and $\upsilon=1$.\footnote{The geometric constellation of the preimages motivates the adopted name \emph{helicity} for the parameter $\upsilon\in\{-1,+1\}$; namely: if we extract the polar coordinate of any particular preimage line, $\theta(k_z)=\arg[k_x(k_z)+
i k_y(k_z)]$ 
(where polar coordinate $\theta$ and the radial coordinate $k$ are defined via $k_x = k\cos\theta$ and $k_y = k\sin\theta$), as a function of $k_z$ [the dependence being emphasized by the round brackets: `$(k_z)$'], then 
\begin{equation}
\sign\left [\frac{d \theta (k_z)}{ dk_z} \right]= \sign[\phi\upsilon].\label{eqn:chirality-def}
\end{equation} 
Pictorially, as one moves along the positive $k_z$ direction, we observe all preimage lines to wind in the positive direction around the $k_z$ axis when $\sign[\phi\upsilon]=+1$ (resp.~in the negative direction when $\sign[\phi\upsilon]=-1$), indicating a notion of handedness. [Note that these conclusions are derived for a Hamiltonian given by Eqs.~\eqref{eq:kp-sym-spinor} and \eqref{eq:zszs-hamiltonian} with the Berry-dipole spin fixed to $\Delta\ell=1$; the sign of Eq.~(\ref{eqn:chirality-def}) is flipped for $\Delta\ell = -1$.]
\label{foot:helicity}} [The red line is the preimage of $\bm{h}\propto(1,0,0)$, and the cyan line a preimage of $\bm{h}\propto(-1,0,0)$.] Given any two distinct points on the two-sphere,  a \emph{change} in their associated linking number (as attributed to the Berry-dipole band touching) remains well-defined. 
For instance, comparing  Figs.~\ref{fig:Hopf-winding}(a) and (c) (for $\phi<0$ and $\phi>0$, respectively) shows that the red and cyan lines `pass through' each other  as $\phi$ is tuned through zero, which can be interpreted as a unit change in the linking number. The sense of this interpretation is that no matter how we `close off' the cyan and red lines at $\bk$-infinity, their linking number must change by unity under the `passing through'.

\begin{figure*}
    \centering
    \includegraphics{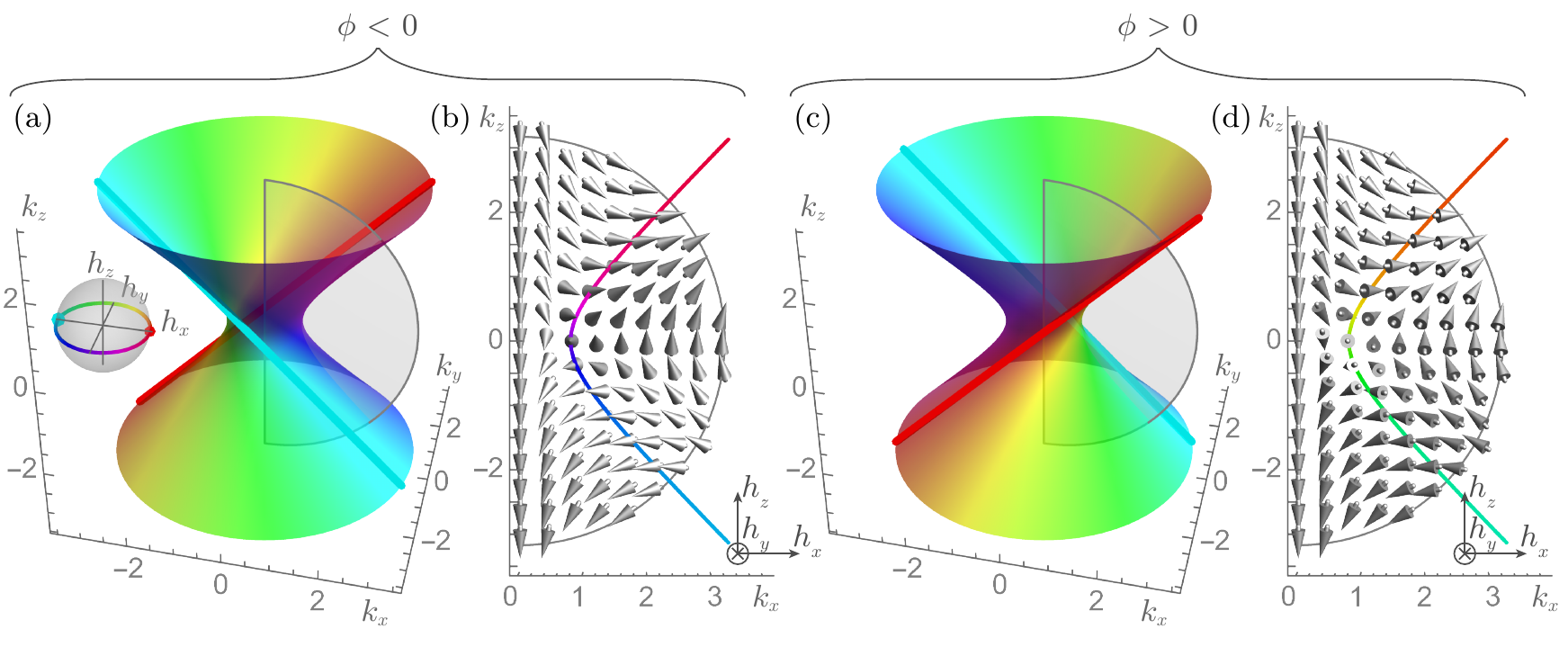}
    \caption{ Panels (a--b) characterize the Hamiltonian of 
    Eqs.~(\ref{eq:kp-sym-spinor}--\ref{eq:zszs-hamiltonian}) with 
    $\Delta\ell=\upsilon=+1$ and $\phi=-0.8$; for panels (c--d), we maintain $\Delta\ell=\upsilon=+1$ but change $\phi=+0.8$. For both Hamiltonians, the corresponding momentum surfaces where the $z$-component of $\bm{h}_{flat}(\bk)$ vanishes [$h_{flat,z}(\bk)=0$] form hyperboloids illustrated in panels (a) and (c) respectively.     The coloring of the surfaces indicates the values of $h_{flat,x}$ and $h_{flat,y}$ according to an inset in panel (a).
    The cyan and red lines indicate the momenta where the Hamiltonian $\bm{h}_{flat}\propto(1,0,0)$ and $\bm{h}_{flat}\propto(-1,0,0)$, respectively. 
    The gray semicircle represents a 2D radial cut given by $\varphi_\bk=0$, $\theta_\bk\in[0,\pi]$ and $|\bk|\in[0,R]$, with spherical coordinates $(|\bk|,\theta_\bk,\varphi_\bk)$ specified in the main text. The vector field of $\bm{h}_{flat}(\bk)$ is illustrated over the gray semicircle in panels (b) and (d), for $\phi=-0.8$ and $+0.8$ respectively.
    The colored line in panel (b) [and also (d)] denotes momenta where the radial cut is intersected by the hyperboloid; the color scheme follows again the legend in panel (a).
    }
    \label{fig:Hopf-winding}
\end{figure*}

To diagnose a change in the Hopf invariant of the tight-binding Hamiltonian, we look at changes in the \textit{continuum Hopf number} (of the  $\bk\cdot \bp$ Hamiltonian), defined as an integral of the Chern-Simons three-form:
\begin{equation}
    \chi^\textrm{cont.}=-\frac{1}{4\pi^2}\int\limits_{\mathbb{R}^3}d^3k\,\bm{\mathcal{F}}\cdot\boldsymbol{\mathcal{A}},\as \bm{\mathcal{F}}=\bm{\nabla}_{\bm{k}}\times \boldsymbol{\mathcal{A}}(\bm{k})
    \label{eq:hopfinvar-cont}
\end{equation}
with  $\boldsymbol{\mathcal{A}}(\bm k)=\bra{u_v(\bm k)}\ket{i\bm{\nabla}_{\bm{k}}u_v(\bm k)}$ the Berry gauge connection of the intra-cell wave function $\ket{u_{v}(\bk)}$ (the subscript `$v$' stands for `valence band'), and $\bm{\mathcal{F}}(\bm{k})$ the corresponding Berry curvature.   

The continuum analog of  the RTP invariant is given by the Berry-Zak phase computed along the rotation axis:
\begin{equation}
    \mathscr{Z}^{\mathrm{cont.}} = \int\limits_{-\infty}^\infty \mathcal{A}_z(0, 0, k_z)dk_z, \label{eq:pol-continuum}
\end{equation}
where $\mathcal{A}_z$ is the $z$ component of the Berry connection. We refer to \q{eq:pol-continuum} as the \textit{continuum Zak phase}. Let us clarify that the quantity is indeed a phase, meaning it is  well-defined modulo integer multiples of $2\pi$. This holds under the conditions that the wave function is differentiable everywhere on the $k_z$-line, and moreover coincides at $k_z\to\pm\infty$, allowing to  `compactify' the $k_z$-line by identifying the points at infinity. (Under these conditions, gauge transformations can at most change the value of $\mathscr{Z}^{\mathrm{cont.}}$ by an integer multiple of $2\pi$.\footnote{Under a gauge transformation $\ket{u(\bk)}\ri e^{-i\beta(\bk)}\ket{u(\bk)}$, the Berry connection transforms as $\mathcal{A}_z\ri \mathcal{A}_z+ \partial_{k_z}\beta$. If the resultant wave function is to remain differentiable with respect to $k_z$ and periodic over the compactified $k_z$ axis, then $\beta/2\pi$ must be a differentiable function of $k_z$ with  $\int (\partial_{k_z}\beta) \, dk_z/2\pi$ being an integer-valued winding number.  }) A stronger statement, namely that $\mathscr{Z}^{\mathrm{cont.}}/2\pi$ is integer-valued, follows from the Hamiltonian matrix being diagonal along the rotation-invariant $\bk$-line, according to the rotation constraint in \q{eq:rot-sym-Berry-dipole-contin}.\footnote{The diagonal constraint ensures the existence of a gauge where the wave function is constant along the rotation-invariant $\bk$-line, and hence the Berry connection vanishes at each point on said line.}

Changes in the continuum Hopf number $\chi^\textrm{cont.}$ and Zak phase $\mathscr{Z}^{\mathrm{cont.}}/2\pi$ across a gapless point are quantized to integer values, assuming that the $\bk \cdot \bp$ Hamiltonian is a small-momentum expansion of a tight-binding Hamiltonian. In the next section, we will identify $\delta\chi^\textrm{cont.}
\in \Z$ as the change in the Hopf invariant of the tight-binding model, and $\delta \mathscr{Z}^{\mathrm{cont.}}/2\pi \in \Z$ as the corresponding change in the RTP invariant. Assuming the gapless point is described by the Berry-dipole Hamiltonian [Eqs.~\eqref{eq:kp-sym-spinor},\eqref{eq:zszs-hamiltonian}],  $\chi^\textrm{cont.}$ and $\mathscr{Z}^\textrm{cont.}$ are modified by
\begin{align}
    &\delta\chi^\textrm{cont.}
    =-\upsilon\Delta\ell,
    \label{eq:Berry-dipole-hopf-change} \\
    &\delta \mathscr{Z}^\textrm{cont.}
    =-2\pi\upsilon, \label{eq:Berry-dipole-pol-change}
\end{align}
as $\phi$ is tuned from negative to positive values.\footnote{In analogy with footnote~\ref{foot:Berry-flux2}, we argue that the change in the continuum Hopf number over a Berry dipole given by the spinor \eqref{eq:kp-sym-spinor-2} has an opposite sign than for the spinor \eqref{eq:kp-sym-spinor}, $\delta\chi^{\mathrm{cont.}}=\upsilon\Delta\ell$. 
Therefore, more generally for both spinors \eqref{eq:kp-sym-spinor} and \eqref{eq:kp-sym-spinor-2} this change is determined by the difference in angular momenta between the conduction and the valence band, defined in footnote~\ref{foot:spinor2}:
\begin{equation}
    \delta\chi^\textrm{cont.}=-\upsilon(\ell_c-\ell_v).
\end{equation}
The change in the continuum Zak phase given by Eq.~\eqref{eq:Berry-dipole-pol-change} is valid for Hamiltonians defined through both spinors \eqref{eq:kp-sym-spinor} and \eqref{eq:kp-sym-spinor-2}. \label{foot:hopf-change-2}} 
Note that although the continuum Zak phase is defined only modulo $2\pi$, the change in the Zak phase across a Berry dipole \eqref{eq:Berry-dipole-pol-change} is well defined modulo nothing if we require a wave function to be smooth for all $(\bk,\phi)\neq(0,0)$.

Eq.~\eqref{eq:Berry-dipole-pol-change} can be directly derived after noticing that the valence intra-cell wave function of the Hamiltonian in Eq.~\eqref{eq:zszs-hamiltonian} can be analytically expressed as
\begin{equation}
   \ket{u_v(\bk)}=e^{i\beta(k_z)} i\sigma_y\zeta^*(\bk)/\|\zeta(\bk)\|,
   \label{eq:valence-band-spinor}
\end{equation}
with $\zeta$ given in \q{eq:kp-sym-spinor} and a phase $\beta(k_z)$ chosen to ensure a smooth wave function along the compactified $k_z$-axis.\footnote{
To guarantee this, we choose the phase $\beta(k_z)$, which depends smoothly on $k_z$, such that $\beta(-\infty)=0$ and $\beta(\infty)=\pi$.
} Plugging this wave function in the definition of the continuum Zak phase in Eq.~\eqref{eq:pol-continuum} gives the change in the Zak phase according to Eq.~\eqref{eq:Berry-dipole-pol-change}.\footnote{The Berry connection computed in the gauge specified in Eq.~\eqref{eq:valence-band-spinor} is $\mathcal{A}_z(0, 0, k_z)=-\upsilon\phi/(k_z^2+\phi^2)-\partial_{k_z}\beta(k_z)$ (cf.~supplementary Sec.~III of Ref.~\cite{Nelson:2021} for the explicit calculation), which plugged into the definition of the continuum Zak phase in Eq.~\eqref{eq:pol-continuum} gives $\mathscr{Z}^\textrm{cont.}
=-\pi\,\left[1+\upsilon\,\sign(\phi)\right]$. 
Hence, when $\phi$ changes sign from negative to positive, the continuum Zak phase changes according to Eq.~\eqref{eq:Berry-dipole-pol-change}. \label{foot:Zak-change}} 

The quantized jump of the Hopf number expressed in Eq.~\eqref{eq:Berry-dipole-hopf-change} is derived generally in Appendix~\ref{app:change-hopf}. 
For the particular case of $\Delta\ell=1$ and $\upsilon=1$, we present here a pictorial argument for $\chi^\textrm{cont.}=\pm1/2$, with the sign changing across $\phi=0$. We utilize Wilczek and Zee's ~\cite{wilczekzee_linkingnumbers, stone_mathematicsforphysics} correspondence of the unit Hopf invariant with a `self-eating, rotating' skyrmion texture for the \textit{pseudospin vector} $\bm{h}$ associated to the Hamiltonian $h(\bk)=\bm{h}(\bk)\cdot\bsigma$. By `self-eating' and `rotating', we mean that the skyrmion's `worldline' forms a loop in $\bk$-space, with the skyrmion pseudospin texture (in a plane orthogonal to the loop) rotating by $2\pi$ as $\bk$ is advanced along the loop, as illustrated in Fig.~1 of Ref.~\cite{wilczekzee_linkingnumbers}. 
Analogously, we now argue that $\chi^\textrm{cont.}=\pm1/2$ for the Berry-dipole transition is associated to `half' a self-eating, rotating skyrmion: consider a 2-dimensional radial cut given by $\varphi_\bk=0$, $\theta_\bk\in[0,\pi]$ and $|\bk|\in[0,R]$, where $(|\bk|,\theta_\bk, \varphi_\bk)$ are spherical coordinates in the $\bk$-space\footnote{We specify the spherical coordinates to be $k_x=|\bk|\sin\theta_\bk\cos\varphi_\bk$, $k_y=|\bk|\sin\theta_\bk\sin\varphi_\bk$ and $k_z=|\bk|\cos\theta_\bk$.~\label{foot:S2-coordinates}} and $R$ is a large radius. 
The pseudospin texture on this cut forms half of a skyrmion, as $\bm{h}(\bk)$ covers half of the Bloch two-sphere when $\bk$ is swept over the cut [cf.~Fig.~\ref{fig:Hopf-winding}(b)]. Indeed the vectors on the edge of the cut are mapped to the great circle of the two-sphere with $h_y=0$, while all vectors within the cut cover the hemisphere given by $h_y<0$ for $\phi<0$, and the hemisphere given by $h_y>0$ for $\phi>0$ [cf.~Fig.~\ref{fig:Hopf-winding}(d)]. 
Whether it is the hemisphere with $h_y<0$ or with $h_y>0$, determines the sign of $\chi^{cont.}=\pm1/2$. The sense in which this half-skyrmion is `rotating' is illustrated in \fig{fig:Hopf-winding}(a,c), where the hyperboloid is the preimage of the Bloch-sphere equator ($h_z=0)$: each point on the hyperboloid is colored according to the angle $\theta_{xy}$ between the $(h_x,h_y)$-vector and the $h_x$-axis ($h_x=\sqrt{h_x^2+h_y^2}\cos\theta_{xy}$), 
following a color-coding scheme in the inset of Fig.~\ref{fig:Hopf-winding}(a). For any point on the hyperboloid, we see that $\theta_{xy}$ changes by $2\pi$ as we encircle the rotation axis while circumnavigating the hyperboloid sheet at a fixed `elevation' $k_z$. 

The presented results demonstrate that the Berry-dipole transition alters the value of the topological numbers in Eqs.~(\ref{eq:hopfinvar-cont}) and~(\ref{eq:pol-continuum}).
It is worth remarking that, per codimension counting~\cite{vonNeumann:1929,Bzdusek:2017}, a generic band degeneracy
in a 4-dimensional momentum-parameter space for Wigner-Dyson class $\textrm{A}$ extends over
1-dimensional nodal lines. 
Therefore, the Berry dipole, which constitutes a single gapless \emph{point} in the $4$-dimensional space, requires certain amount of fine-tuned parameters. 
Nevertheless, as discussed in Appendix~\ref{app:phase-transition-general}, the Berry-dipole is a \emph{representative} model for the considered topological phase transitions, because any process that alters the value of the Hopf invariant while keeping the Chern class trivial can be, in a sense specified in the Appendix, deformed into the Berry-dipolar form in Eqs.~(\ref{eq:kp-sym-spinor}) and~(\ref{eq:zszs-hamiltonian}). If one is interested in realizing delicate topology for realistic Hamiltonians, it pays to explore the phase diagram in the vicinity of the fine-tuned Berry-dipolar form of the Hamiltonian. It turns out the vicinity contains a variety of toplogical (semi)metallic phases, some of which manifest a generalized notion of symmetry-protected delicate topology. Such a discussion is deferred to~Sec.~\ref{sec:semi-metallic-transition-region}.

In the next section, we show how the $\bk\cdot\bp$ model of the Berry dipole arises as the small-momentum expansion of a tight-binding Hamiltonian on a lattice. Such a lattice regularization makes precise the notion of `closing off' the infinite preimage lines.

\section{From Berry dipoles to RTP and Hopf invariants}\label{sec:lattice-RTP}

In this section we extend the notion of the Berry dipole from $\bk\cdot\bp$ models to $\mathrm{P}n$-symmetric tight-binding Hamiltonians ($n\in\{2,3,4,6\}$), with $\mathrm{P}n$ being the space group whose order-$n$ point group is generated by an $n$-fold rotation. These tight-binding Hamiltonians will be formulated in Sec.~\ref{sec:lattice-ham}, which also contains a pedagogical introduction to several basic concepts in band theory that underlie the subsequent exposition. In Sec.~\ref{sec:lattice-invariants} we define the lattice analogs of the continuum Hopf number and Zak phase. The lattice analog of a nontrivial continuum Zak phase will be interpreted as a `returning' Thouless pump (RTP), by analogy with charge pumping in one-momentum-parameter Hamiltonians.
Finally, in Sec.~\ref{sec:Hopf-RTP-correspondence}, we discuss the changes in the Hopf and in the RTP invariants across the possible
Berry-dipole band touching points. In particular, we show how the individual contributions from each touching point add up for a resulting change in a lattice invariant.

The introduced notions will be illustrated on concrete models in the subsequent Sec.~\ref{sec:models}.  
The present
discussion also constitutes the basis for deriving the bulk-boundary correspondence of these topological models (analyzed in Sec.~\ref{sec:BBC}), and will help us derive a constraint on the admissible values of the Hopf and the RTP invariants (discussed in Sec.~\ref{sec:RTP-Hopf}).

\subsection{\texorpdfstring{$\mathrm{P}n$}{Pn}-symmetric tight-binding Hamiltonians\label{sec:lattice-ham}}

We review here certain basic properties of $\mathrm{P}n$-symmetric tight-binding Hamiltonians which form the basis for a following higher-level discussions. 
Much of what is said in this subsection is either well-known or exists in the literature, but we hope that our pedagogical introduction of basic band-theoretic vocabulary streamlines the subsequent reading. It is not impossible for an experienced reader to selectively read parts of this subsection, or even skip this subsection on a first reading, returning only when clarification is needed for certain symbols or terminology.

The Hilbert space of our tight-binding Hamiltonians is first formulated in \s{sec:bandrep} as an infinite set of Wannier orbitals on which the space group $\mathrm{P}n$ has a certain action. Such a Hilbert space is our first example of a `band representation' -- a fundamental notion in the topological classification of crystalline insulators. Basic properties of the momentum-dependent tight-binding Hamiltonian are pedagogically reviewed in \s{sec:kdependentTBham}, with emphasis on the angular momentum of Bloch states -- the latter is an essential concept in formulating the returning Thouless pump. Our assumption that the tight-binding Hamiltonian has trivial first Chern class has implications (for the space-group action on Wannier orbitals) that are elaborated in \s{sec:trivialchern}.

\begin{figure}
    \centering
    \includegraphics{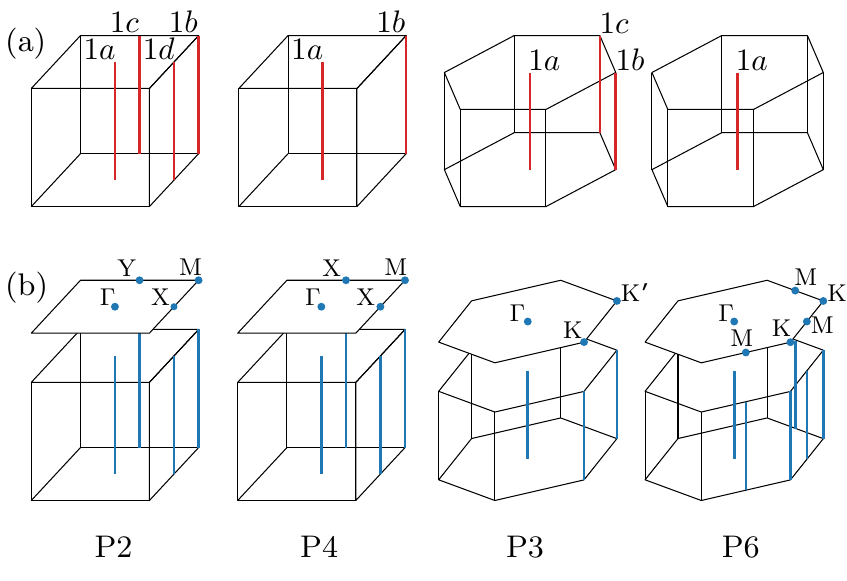}
    \caption{For each of the space groups P$n$ ($n\in\{2,4,3,6\}$), we show (a) all $C_n$-invariant Wyckoff positions in the primitive unit cell, and (b) all $C_m$-invariant momentum-lines in the Brillouin zone ($\textrm{BZ}$), with $m$ a nontrivial divisor of $n$, as well as all $C_m$-invariant momentum-points in the reduced Brillouin zone ($\textrm{rBZ}$); $\textrm{rBZ}$ is obtained from $\textrm{BZ}$ by projection in the direction of the rotation axis.
    }
    \label{fig:wp-rotinvmom}
\end{figure}

\subsubsection{\texorpdfstring{$\mathrm{P}n$}{Pn}-symmetric, tight-binding Hilbert space and band representations}\la{sec:bandrep}

We consider a class of $\mathrm{P}n$-symmetric tight-binding Hamiltonian whose Hilbert space is specified by two \textit{representative tight-binding-basis orbitals}: $\varphi_1$ and $\varphi_2$, with $\varphi_1$ unrelated to $\varphi_2$ by any space-group operation.\footnote{Our considerations can be generalized to any $\mathcal{G}$-symmetric tight-binding Hamiltonian, with $\mathcal{G}$ any space group (double or not, magnetic or not) that contains $\mathrm{P}n$ as a subgroup, such that  $\varphi_1$ is unrelated to $\varphi_2$ by elements~in~$\mathcal{G}$.}
Modulo Bravais-lattice translations, $\varphi_\alpha$ is defined to have  a positional center (or \textit{Wyckoff position}) $\br_\alpha=(\br_{\alpha,\perp},z_\alpha)$. 

For any point $\br$ in $\R^3$, its \textit{site-stabilizer} $\mathcal{G}_{\br}$ is defined as the subgroup of the space group $\mathcal{G}$ that preserves $\br$; in our example with $\mathcal{G}=\mathrm{P}n$, we assume $\mathcal{G}_{\br_\alpha}$ to be isomorphic to the order-$n$-rotational point group $C_n$; in short, we say that $\br_\alpha$ is a  \textit{$C_n$-invariant Wyckoff position}. All $C_n$-invariant Wyckoff positions for space
groups P$n$, $n\in\{2,3,4,6\}$ are presented in Fig.~\ref{fig:wp-rotinvmom}(a) as red lines, indicating that $\br_{\alpha,\perp}$ are symmetry-fixed but $z_\alpha$ are not. 

We further assume $\varphi_\alpha$ transforms under a one-dimensional representation
$e^{i2\pi\mathcal{L}_\alpha/n}$ of $C_n$, for both $\alpha=1,2$. We refer to $\mathcal{L}_\alpha\in \{0,1,\ldots,n-1\} $ as the \emph{on-site angular momenta}.\footnote{ 
Mod-$n$ integer-valued angular momenta most directly apply to bosonic particles, with the caveat that `insulating Hamiltonians' should be interpreted more generally as single-boson Hamiltonians with a spectral gap separating the two bands. The ultra-cold atomic realization of the Hopf invariant in \ocite{schuster_hopfdipolarspin} is a case in point.\medskip

\noindent Mod-$n$ integer-valued angular momenta also apply to three categories of single-\emph{fermion} Hamiltonians:\medskip

\noindent (\emph{i}) In the absence of spin-orbit-coupling, the wave function at each $\bk$ factorizes into a tensor product of orbital and spin wave functions.  Assuming the spin $SU(2)$ symmetry is not spontaneously broken by magnetism, the spin wave function plays a trivial `spectator role' and is removed from the Hilbert space under consideration. A `two-band Hamiltonian' is then implicitly the restriction of a spin-dependent Hamiltonian to its action on the orbital wave function; a conduction band (given by a single linearly-independent Bloch-type orbital wave function) is implicitly a shorthand for an energy-degenerate conduction band (given by two linearly-independent, spin-dependent Bloch functions), and likewise for the valence band.\medskip 

\noindent (\emph{ii}) If the spin $SU(2)$ symmetry is spontaneously broken down to a subgroup $H\subset SU(2)$, with $H$ containing  $U(1)$ as a subgroup (meaning one component of spin is conserved, say, $S_z$), then the Hilbert space decomposes into two $S_z$-labelled sectors, and our `two-band Hamiltonian' should be interpreted as a mean-field Hamiltonian applying  to either of these sectors. In each sector, one can remove the spectator spin quantum number, such that all angular momenta are integer-valued. For instance, a ferromagnetic, spin-$U(1)$-symmetric realization of the Hopf insulator was introduced by two of the present co-authors in \ocite{AA_teleportation}.\medskip 

\noindent (\emph{iii}) If the spin ${S}{U}(2)$ symmetry is spontaneously broken down to $H$ with no $U(1)$ subgroup, then
the mean-field Hamiltonian essentially has a interaction-induced spin-orbit coupling, which can be addressed in essentially the same manner as a single-particle, non-symmetry-breaking Hamiltonian whose spin-orbit coupling has a relativistic origin.   In both cases, the spin quantum number is not a spectator variable, and one should begin by considering \emph{half-integer-spin representations} of the site stabilizer. However, our restriction to one-dimensional representations of the site stabilizer allow to redefine the $n$-fold rotation operator by an overall multiplicative phase factor, such that all eigenvalues of the redefined rotation operator are shifted toward mod-$n$ integer-valued angular momenta. A `two-band Hamiltonian' is then unambiguously  `two-band', with the spin-orbit coupling needing to be sufficiently strong that the bands closest to the Fermi level are spin-split.}
(For the purpose of introducing notation used throughout the text, $\mathcal{L}_\alpha$ is an example of a \textit{mod-$p$ quantity}, $p\in\mathbb{Z}$:
\begin{subequations}
\begin{equation}   
\textrm{integer mod} \;p \in \{0,1,\ldots,p-1\},
\end{equation}
and we will denote a mod-$p$ equivalence between two integer-valued quantities by `$=_p$':
\begin{equation}
    a=_p b \as\textrm{meaning}\as a\;\mathrm{mod}\; p=b\;\mathrm{mod}\; p.)
    \label{eq:equal-mod-m}
\end{equation} 
\end{subequations}

By applying all Bravais-lattice translations to the representative basis orbital $\varphi_\alpha$, one generates an infinite set of Wannier orbitals that forms a representation space of $\mathrm{P}n$. 
(We use `representative' in the English sense of one representing many, and `representation' in the mathematical sense of group representations.)
In general, the process of going from a representation of subgroup $H$ to a representation of a group $\mathcal{G} \supset H$ is known mathematically as \textit{induction}. 
In the crystallographic context where $\mathcal{G}$ is a space group, an exponentially-localized Wannier orbital gives a representation of a site-stabilizer $H=\mathcal{G}_{\br}\subset \mathcal{G}$, and the resultant induced representation is known as a \textit{band representation}~\cite{Zak_bandrepresentations,nogo_AAJH,bouhon_wilsonloopapproach,crystalsplit_AAJHWCLL}. 
Not only are band representations relevant to the construction of tight-binding models, but also the categorization into  band representations vs.~non-band representations has been viewed in many works~\cite{shiozaki_review,TQC, bandcombinatorics_kruthoff,Po_symmetryindicators,po_fragile} as being synonymous to a categorization into topologically `trivial' vs.~`nontrivial' band insulators. 
Crystalline Hopf insulators present a counterpoint to these views by being `trivial' due to their band-representability,
yet also nontrivial in the more refined classification of delicate topological insulators. 

Unless specified otherwise in this paper, we will only concern ourselves with space groups $\mathcal{G}=\mathrm{P}n$, with Wyckoff positions $\br$ that are $C_n$-invariant, and with representations (of the site-stabilizer $\mathrm{P}n_{\br}$) that are one-dimensional. Under these assumptions, an equivalence class of band representation is fully specified by the data ($\mathcal{L},\br_{\perp}$) with $\mathcal{L}$ an on-site angular momentum, and with the understanding that the $z$-component of $\br$ is irrelevant to the representation of $\mathcal{G}_{\br}$ and hence also irrelevant to the representation of $\mathcal{G}$. Henceforth, we will often apply the figure-of-speech of a \textit{band representation induced from an ($\mathcal{L},\br_{\perp}$)-orbital}, and denote the corresponding equivalence class by BR$(\mathcal{L},\br_{\perp})$. Two band representations belonging in the same equivalence class are said to be \textit{symmetry-equivalent} [a formal definition is given in Eq.~\eqref{isoorbital} and footnote~\ref{foot:sym-equiv} of Sec.~\ref{sec:lattice-invariants}].

In the particular case that the representative Wannier orbital is a tight-binding-basis orbital $\varphi_\alpha$ with on-site angular momentum $\mathcal{L}_\alpha$ and Wyckoff position $\br_\alpha$, the induced representation will be referred to as a \textit{basis band representation} and denoted by BBR$[\varphi_\alpha]$. A basis band representation is a refined notion\footnote{The conventional notion of band representations carries an additional notion of equivalence which is absent in our definition of basis band representations. This equivalence is more specifically an isomorphism in space-group-equivariant vector bundle theory~\cite{crystalsplit_AAJHWCLL}, as elaborated in \s{sec:multicell}. Under this equivalence, $z_\alpha$ can be continuously deformed along the rotation axis, and the support of the representative Wannier orbital can be continuously deformed from one-site-localized to exponentially-localized.} of a band representation carrying two pieces of additional data: (a)  the $z_\alpha$ coordinate of the Wyckoff position, and (b) the fact that the representative Wannier orbital $\varphi_\alpha$ is not just exponentially-localized but also compactly supported on a single `site'. BBR$[\varphi_\alpha]$ is a member of the equivalence class BR$(\mathcal{L}_\alpha,\br_{\alpha,\perp})$. Data (a), despite being irrelevant to the symmetry representation, is of paramount relevance to the returning Thouless pump -- a topological invariant unique to delicate topological insulators. 

The sum of basis band representations induced from $\varphi_1$ and $\varphi_2$ defines the Hilbert space of our tight-binding Hamiltonian. 

\subsubsection{\texorpdfstring{$\mathrm{P}n$}{Pn}-symmetric, \texorpdfstring{$\bk$}{k}-dependent tight-binding Hamiltonian and itinerant angular momenta}\la{sec:kdependentTBham}

Linearly combining all Wannier orbitals of BBR$[\varphi_\alpha]$ with weight coefficient $e^{i\bk\cdot (\bR+\br_\alpha)}$, where $\bR$ is a Bravais-lattice vector, defines a momentum-dependent \textit{Bloch state}; the two Bloch states thus attained may be viewed as basis vectors for a two-by-two Hamiltonian matrix $h(\bk)$, as briefly reviewed in Appendix~\ref{app:tb-formalism}. We refer to $h(\bk)$ as the \textit{$\bk$-dependent tight-binding Hamiltonian}, to be distinguished from the \textit{real-space tight-binding Hamiltonian} which acts on the infinite set of Wannier orbitals.  

Because the Wyckoff positions of $\varphi_1$ and $\varphi_2$ are not necessarily equal (modulo lattice translations), the tight-binding Hamiltonian is not necessarily periodic in reciprocal-lattice translations, and instead satisfies:
\begin{equation}
h(\bk+\bG) = V(\bG)^{-1}h(\bk)V(\bG),
\label{eq:ham_non-periodic}
\end{equation}
with $\bG$ an arbitrary reciprocal-lattice vector, and $V$ a matrix that encodes the two Wyckoff positions:
$[V(\bG)]_{\alpha\beta}=\exp(i\bG\cdot\br_\alpha)\delta_{\alpha\beta}$ [note that $V(\bG)^{-1} = V(-\bG)$].
Consequently, the two-component (`{intra-cell}') eigenstates of $h(\bk)$ are two-component wave functions (with ket notation $\ket{u(\bk)}$) that are not necessarily periodic over the Brillouin zone; it is advantageous to fix the boundary condition as  $\ket{u(\bk+\bG)}=V(-\bG)\ket{u(\bk)}$, for subsequent computations of the Berry-Zak phase. 

The $\mathrm{P}n$ symmetry constrains the Hamiltonian as 
\begin{align}
R_{C_n}h(\bk)R_{C_n}^{-1}=h(C_n\bk),
\label{eq:hamilt-sym-condition}    
\end{align}
with $R_{C_n}$ a  \textit{rotation matrix} that is more generally defined as
\begin{equation}
R_{C_m}=\begin{pmatrix}
e^{i2\pi \mathcal{L}_1/m} & 0 \\
0 & e^{i2\pi \mathcal{L}_2/m}
\end{pmatrix}
\end{equation}
for  $1<m\leq n$ being any non-unit, positive divisor of $n$;  the diagonal elements of the rotation matrix specified by the on-site angular momenta of the two representative basis orbitals.

An important concept that underlies the returning Thouless pump (to be formally defined in the next subsection) is that individual Bloch states can carry an angular momentum. To elaborate, we focus on Bloch states with momenta on any of the vertical $C_m$-invariant $\bk$-lines in the BZ. Each such line is denoted $\gamma_\Pi$, where the \textit{reduced momentum} $\Pi$ is the projection of $\gamma_{\Pi}$ onto the 2D plane perpendicular to the rotation axis. Analogously, the projection of the 3D BZ will be referred to as the \textit{reduced BZ} (rBZ). 
For each of the space groups P$n$, $n\in\{2,3,4,6\}$, all rotation-invariant $\bk$-lines are presented in Fig.~\ref{fig:wp-rotinvmom}(b). Note that when $m<n$, the $C_m$-invariant $\bk$-lines form an $n/m$-fold multiplet ($n/m$\textit{-plet}, in short) of $C_n$-related $\bk$-lines. Therefore, we will sometimes speak of the \emph{set of rotation-inequivalent rotation-invariant $\bk$-lines}, which includes only one $\bk$-line per each $n/m$-plet.

For any $\bk \in \gamma_{\Pi}$, the Hamiltonian satisfies the commutation relation:  
\begin{equation}\label{eq:ham_commut}
\begin{split}
C_m\text{-invariant}\;\bk\text{-line} \;\gamma_{\Pi}:\as&[h(\Pi, k_z), V_{\bG_{\Pi}}R_{C_m}]=0,\\
&\bG_\Pi=C_m\Pi-\Pi.
\end{split}
\end{equation}
Hence, for a Hamiltonian restricted to a rotation-invariant line, we can define an \emph{itinerant rotation matrix}
\begin{equation}
\widetilde{R}_{C_m}(\Pi)=V_{\bG_\Pi}R_{C_m},
\label{eq:itinerant_rotation}
\end{equation}
whose two eigenvalues 
\begin{equation}
    \widetilde{\rho}_{m,\alpha}(\Pi)=\exp[i\left(\frac{2\pi\mathcal{L}_\alpha}{m}+\br_{\alpha,\perp}\cdot\bG_\Pi\right)\,]\coloneqq\exp(i\frac{2\pi\widetilde{\mathcal{L}}_\alpha(\Pi)}{m})
    \label{eq:ellv-via-rho}
\end{equation}
determine the symmetry of extended/itinerant {basis} Bloch states. The eigenvalues of $\widetilde{R}_{C_m}$ are a function of the on-site angular momenta and Wyckoff positions, because $V_{\bG_\Pi}$ and $R_{C_m}$ are both diagonal matrices with
eigenvalues $e^{i\br_{\alpha,\perp}\cdot\bG_\Pi}$ and $e^{i2\pi\mathcal{L}_\alpha/m},$ respectively. The logarithm of $\widetilde{\rho}_{m,\alpha}(\Pi)$ defines a mod-$m$ \emph{itinerant angular momentum} 
\begin{equation}
    \widetilde{\mathcal{L}}_\alpha(\Pi)=\mathcal{L}_\alpha+m\frac{\br_{\alpha,\perp}\cdot\bG_\Pi}{2\pi}\mod m,
    \label{eq:itin-am-from-basis}
\end{equation}
which sums two contributions: the first from rotating the representative basis orbital $\varphi_\alpha$ about its positional center, and the second from rotating the plane-wave phase factor $e^{i\bk\cdot(\bR+\br_\alpha)}$ in the definition of the Bloch state. The commutation relation \q{eq:ham_commut} guarantees that that the Hamiltonian matrix and the itinerant rotation matrix have common eigenstates at $C_m$-invariant $\bk$-lines. The eigenvalue of the itinerant rotation matrix corresponding to the valence state of the Hamiltonian defines the itinerant angular momentum of the valence band $\widetilde{\mathcal{L}}_v(\Pi)$, which at a fixed reduced momentum $\Pi$ coincides either with $\widetilde{\mathcal{L}}_1(\Pi)$ or with $\widetilde{\mathcal{L}}_2(\Pi)$, with a possibly different choice at different $\Pi$. Defined in a similar way is the itinerant angular momentum of the conduction band $\widetilde{\mathcal{L}}_c(\Pi)$, which coincides with $\widetilde{\mathcal{L}}_2(\Pi)$ or $\widetilde{\mathcal{L}}_1(\Pi)$, respectively.

\subsubsection{Triviality of Chern class and band-representability}\la{sec:trivialchern}

We restrict ourselves to  tight-binding insulating Hamiltonians with trivial Chern class. In three momentum dimensions, the triviality of the Chern class means the vanishing of all three independent first-Chern numbers, which are defined as integrals of the Berry curvature over three independent 2D sub-tori of the 3D Brillouin torus.   

A theorem proven in \ocite{nogo_AAJH}  ensures that the bulk valence band, being nondegenerate at each $\bk$ and with trivial Chern class, is a band representation; this also holds for the bulk conduction band. It follows that our models would be missed under classification schemes in which band representations are deemed trivial, such as the methods of band structure combinatorics~\cite{bandcombinatorics_kruthoff}, topological quantum chemistry~\cite{TQC}, and symmetry indicators~\cite{Po_symmetryindicators}.

To remind ourselves that the valence band is a band representation, we sometimes refer to the valence band as a \textit{valence band representation} ($\mathrm{VBR}$). Note however that the valence band is generically \textit{not} a basis band representation, even if the two are symmetry-equivalent. The inter-orbital matrix elements of the tight-binding Hamiltonian prevent the valence band from being \emph{continuously deformable} to a basis band representation. 

\subsection{Defining the RTP and Hopf invariants for tight-binding Hamiltonians\label{sec:lattice-invariants}}

We are ready to discuss the lattice analogues of the continuum Hopf number and Zak phase discussed in Sec.~\ref{sec:berry-dipole}. 
The Hopf invariant is given by an integral of the Chern-Simons form over the 3-dimensional BZ,
\begin{equation}
    \chi = -\frac{1}{4\pi^2}\int\limits_\textrm{BZ}d^3k\,\bm{\mathcal{F}}\cdot\boldsymbol{\mathcal{A}},
    \label{eq:hopfinvar}    
\end{equation}
which differs from Eq.~(\ref{eq:hopfinvar-cont}) in two aspects: (\emph{i}) the integration domain is changed from the 3-dimensional real space to the BZ, and (\emph{ii}) the intra-cell wave function $\ket{u_v(\bk)}$, used to compute the Berry connection and curvature, is now an eigenstate of the lattice tight-binding Hamiltonian. The Hopf invariant takes integer values for any two-band insulating tight-binding Hamiltonian with a trivial Chern class; this fact has been demonstrated for $\bk$-periodic Hamiltonians~\cite{Hopfinsulator_Moore}, and we further show in Appendix~\ref{app:Hopf-periodic-nonperiodic} that the same fact holds also for $\bk$-nonperiodic Hamiltonians for which the positional centers of the tight-binding-basis orbitals are distinct.

According to the geometric theory of polarization \cite{zak_berryphase, kingsmith_polarization}, the Zak phase
of the valence-band wave function [computed over a $\bk$-line with a fixed reduced momentum $\bk_\perp=(k_x,k_y)$] is equal (up to a factor $2\pi$) to the 1D electric polarization for a one-momentum-parameter Hamiltonian, obtained by restricting $h(\bk)$ to the same $\bk$-line:
\begin{equation}
    \mathscr{P}(\bk_\perp)=\frac{1}{2\pi}\int_{-\pi}^\pi \mathcal{A}_z(\bk_\perp, k_z)dk_z.
    \label{eq:pol-def}
\end{equation}
In referring to $\mathscr{P}$ as `polarization', we have taken the liberty of setting the electron charge $e$ and the lattice constant  to unity.  

Suppose that at a $C_m$-invariant reduced momentum $\Pi$, the itinerant angular momenta of the valence band $\widetilde{\mathcal{L}}_v(\Pi)$ and of the conduction band $\widetilde{\mathcal{L}}_c(\Pi)$ are distinct. In the sense that $\{\widetilde{\mathcal{L}}_{v}(\Pi)\}\cap\{\widetilde{\mathcal{L}}_{c}(\Pi)\}=\varnothing$, we call this a
\begin{equation}\label{eq:mut-disj}
\begin{split}
&\text{\textit{mutually-disjoint condition at}}\;\Pi: \as \Delta\widetilde{\mathcal{L}}(\Pi)\neq 0. \\
&\textrm{where}\;\;\Delta\widetilde{\mathcal{L}}(\Pi)=\widetilde{\mathcal{L}}_c(\Pi)-\widetilde{\mathcal{L}}_v(\Pi)\mod{m}.
\end{split}
\end{equation}
It follows from this condition that the tight-binding Hamiltonian $h(\bk)$ \textit{is diagonal along the line $\gamma_\Pi$ in a basis that simultaneously diagonalizes the rotation matrix} [cf.~\q{eq:hamilt-sym-condition}]. 

Let us not forget that $h(\bk)_{\alpha \beta}$ is a two-by-two matrix with row index $\alpha$ corresponding to a basis Bloch state that linearly combines the representative  tight-binding-basis orbital $\varphi_{\alpha}$ with all its Bravais-lattice translations [cf.~Eq.~\eqref{eq:app:Bloch-basis}]. 
Each of the two basis Bloch states may be said to be purely of one \textit{orbital character}; clearly, there are only two orbital characters in a two-band Hamiltonian. 
The above-positioned, italicized statement  may then be equivalently stated as: any Bloch energy eigenstate (with momentum $\bk \in \gamma_\Pi$) is (up to a trivial $U(1)$ phase) a basis Bloch state with momentum $\bk$, and is thus purely of one orbital character. 
Because the Hamiltonian is assumed to be insulating, the orbital character is constant along any rotation-invariant $\bk$-line  $\gamma_\Pi$. 
If the valence-band wave function along  $\gamma_\Pi$ is of the $\varphi_\alpha$-character, then the polarization at $\Pi$ is fixed, up to an integer, to the $z$-coordinate of $\varphi_\alpha$'s Wyckoff position:
\begin{equation}
\gamma_\Pi \;\text{of}\;\varphi_\alpha\text{-character}\;\;\imp\;\; \mathscr{P}(\Pi)=^{\mathrm{P}n}_1z_{\alpha}.
\label{eqn:pol-quant}
\end{equation} 
We have added a superscript to  $=^{\mathrm{P}n}_1$ to remind ourselves that the equivalence modulo integers is robust under continuous deformations of the Hamiltonian that preserves both the bulk energy gap and the $\mathrm{P}n$ symmetry.
For a derivation of the intuitive result in \q{eqn:pol-quant}, see Appendix~\ref{app:pol-position}.

Let us assume that the mutually-disjoint condition is satisfied at a pair of rotation-invariant reduced momenta $\Pi_1$ and $\Pi_2$. Two possibilities emerge: either (a) the valence-band wave functions along $\gamma_{\Pi_1}$ and $\gamma_{\Pi_2}$ are of the same orbital character,  or (b) they are not. 
Utilizing the Greek word `\emph{isos}' meaning `equal to', we say in case (a) that the \textit{iso-orbital condition} holds at $\Pi_1$ and $\Pi_2$.
An instructive restatement of the iso-orbital condition is that there exists a single basis band representation BBR$[\varphi_\alpha]$ induced from a tight-binding-basis orbital $\varphi_\alpha$ ($\alpha\in \{1,2\}$), such that the restriction of the valence band representation $\mathrm{VBR}$  to $\gamma_{\Pi_1}$ equals\footnote{To formalize being `equal', one may say that the projector to all Bloch states (on the BZ-line $\gamma_{\Pi}$ at fixed reduced momentum $\Pi$) are equal for $\mathrm{VBR}$ and for BBR$[\varphi_\alpha]$. Note that equality of the sum of projectors over a $\bk$-line $\gamma_{\Pi}$ implies equality of projectors at each point $\bk\in\gamma_{\Pi}$ individually.} the restriction of BBR$[\varphi_\alpha]$ to $\gamma_{\Pi_1}$, and likewise for $\gamma_{\Pi_1}\ri \gamma_{\Pi_2}$:\footnote{If this equality holds at \emph{all} rotation-invariant lines $\gamma_\Pi$, we say that the valence band representation VBR is \emph{symmetry-equivalent} to the basis band representation $\mathrm{BBR}[\varphi_\alpha]$. In other words, both band representations belong to the equivalent class $\mathrm{BR(\mathcal{L}_\alpha, \br_{\alpha,\perp})}$, cf.~Sec.~\ref{sec:bandrep}. \label{foot:sym-equiv}}
\begin{equation}\la{isoorbital}
\begin{split}
&\text{\textit{iso-orbital condition at}}\;\Pi_1\;\text{and}\;\Pi_2:\as \!\!\!\exists\; \alpha\in \{1,2\}\;\; \textrm{s.t.}\\
&  \;\; \mathrm{VBR}\bigg|_{\gamma_{\Pi_1}}\!\!\!\!\!\!=\mathrm{BBR}[\varphi_\alpha]\bigg|_{\gamma_{\Pi_1}}\;\;\!\!\text{and}\;\;\mathrm{VBR}\bigg|_{\gamma_{\Pi_2}}\!\!\!\!\!=\mathrm{BBR}[\varphi_\alpha]\bigg|_{\gamma_{\Pi_2}}\!\!\!\!\!\!.
\end{split}
\end{equation}
It follows from \q{eqn:pol-quant} that the polarizations at both $\Pi_1$ and $\Pi_2$ are symmetry-fixed to $z_\alpha$, modulo integers. This contrasts with case (b), where one of the two polarizations is symmetry-fixed to $z_1$, and the other to $z_2$. In the absence of fine-tuning, one expects that $z_1\neq_1 z_2$, because by assumption no space-group symmetry relates the two tight-binding-basis orbitals.\footnote{Depending on the context, there may exist chemical principles (beyond symmetry principles) that approximately enforce $z_1=_1 z_2$. For instance, this is true of crystals containing a single atom within each primitive unit cell, with both tight-binding-basis orbitals chosen to be atomic orbitals.} 

Since our interest is to formulate an invariant difference in polarizations, let us  assume that both mutually-disjoint and iso-orbital conditions hold:  
\begin{align}
\!\!&\text{\textit{RTP invariant}}:\as    \Delta\mathscr{P}_{\Pi_1\Pi_2}=\mathscr{P}(\Pi_2)-\mathscr{P}(\Pi_1), \label{eq:RTP-def}\\
\!\!&\Delta\mathscr{P}_{\Pi_1\Pi_2}\!\in\!\mathbb{Z} \Leftarrow \!\!
\begin{cases} \text{\textit{mutually-disjoint condition}: \q{eq:mut-disj} \!\!\!\!} \\ \text{\textit{iso-orbital condition}: \q{isoorbital}}
\end{cases} \label{twoconds}
\end{align}
It is worth emphasizing that the integer-valued polarization difference is uniquely defined, because the triviality of the first Chern class allows to find a global smooth gauge for the intra-cell wave function $\ket{u_v(\bk)}$ (cf.~Refs.~\cite{Panati_trivialityblochbundle,DeNittis:2018,Peterson:1959} and footnote~12 in Ref.~\cite{nogo_AAJH}), so that the polarization $\mathscr{P}(\bk_\perp)$ is a continuous and periodic function over the rBZ.

A nonzero value of this polarization difference can be interpreted as a \emph{returning Thouless pump} (RTP) in a 2D sub-torus of the BZ containing the lines $\gamma_{\Pi_1}$ and $\gamma_{\Pi_2}$~\cite{Nelson:2021}. For this interpretation, we view the Hamiltonian as having only one momentum component $k_z$ and view the perpendicular momentum component $\bkp$ as an external adiabatic parameter. $\Delta\mathscr{P}_{\Pi_1\Pi_2}\neq0$ then means that a charge is pumped across $\Delta\mathscr{P}_{\Pi_1\Pi_2}$-number of lattice periods in the $z$-direction, when $\bkp$ is advanced from $\Pi_1$ to $\Pi_2$. The assumption of a trivial first Chern class ensures that the pumped charge is reverted back in the second half of the adiabatic cycle. 
Henceforth, we refer to the quantized polarization difference as the \emph{RTP invariant}. 

The RTP invariant is well-defined for any pair of `mutually-disjoint' reduced momenta that also satisfies the iso-orbital condition \eqref{isoorbital}. 
It turns out that a combination of symmetry ($\mathrm{P}n$), topology (triviality of Chern class) and Hilbert-space constraint (two-band Hamiltonian) imposes that the iso-orbital condition holds for any `mutually-disjoint' pair in space groups $\textrm{P}n$ with $n=3,4,6$, as is proven in Appendix~\ref{app:no-band-inv-from-Chern}. This has implications -- for the mod-$n$ relation between the Hopf and RTP invariants -- that are  discussed in \s{sec:relaxdipole}, wherein we also discuss the exceptional case of $\mathrm{P}2$.

For any $\mathrm{P}n$-symmetric two-band tight-binding Hilbert space induced by any pair of basis orbitals on $C_n$-symmetric Wyckoff positions,  we have exhaustively identified all possible pairs of reduced momenta which satisfy the mutually-disjoint condition [the first line in \q{twoconds}]. 
We summarize in Table~\ref{tab:itinerant_ell} the itinerant angular momentum difference $\widetilde{\mathcal{L}}_2(\Pi)-\widetilde{\mathcal{L}}_1(\Pi)\pmod{m}$ between two basis Bloch states for (\emph{i}) all rotation-invariant lines at (\emph{ii}) all space groups P$n$, $n\in\{2,3,4,6\}$, given (\emph{iii}) any difference of the on-site angular momenta, $\Delta\mathcal{L} = \mathcal{L}_2 - \mathcal{L}_1$, and (\emph{iv}) any positional difference: $\Delta\br_\perp=\br_{2,\perp}-\br_{1,\perp}$ [projected to the $(x,y)$-plane] of the two tight-binding-basis orbitals.
All these
options [with the sole exception of $\Delta\mathcal{L}=0, \Delta \bm{r}_\perp=(0,0)$, which is not listed in the table]
exhibit at least two rotation-invariant reduced momenta with non-zero values of $\widetilde{\mathcal{L}}_2(\Pi)-\widetilde{\mathcal{L}}_1(\Pi)\pmod{m}$. 
These momenta thus have non-zero itinerant angular momentum differences between conduction and valence states   $\Delta\widetilde{\mathcal{L}}(\Pi)=\widetilde{\mathcal{L}}_c(\Pi)-\widetilde{\mathcal{L}}_v(\Pi)\pmod{m}$ and therefore satisfy the mutually-disjoint condition.

To recapitulate, Table~\ref{tab:itinerant_ell} tells us where (in the BZ) one might in principle find a nontrivial RTP, given a certain tight-binding Hilbert space. To derive the tight-binding Hamiltonian that realizes an RTP, our strategy (elucidated in Sec.~\ref{sec:models}) is to design Berry-dipole band-touching points along any rotation-invariant $\bk$-line specified in Table~\ref{tab:itinerant_ell}.

\begin{table}
    \caption{Itinerant angular momenta differences $\widetilde{\mathcal{L}}_2(\Pi)-\widetilde{\mathcal{L}}_1(\Pi) \mod m$
    at all $C_m$-invariant reduced momenta $\Pi$ (value of $m$ given in brackets) [cf.~Fig.~\ref{fig:wp-rotinvmom}(b)] for space groups P$n$, $n\in\{2,3,4,6\}$, and for basis orbitals with all possible differences in angular momentum $\Delta\mathcal{L}=\mathcal{L}_2-\mathcal{L}_1$  and with relative displacement
    $\Delta\br_\perp=\br_{2,\perp}-\br_{1,\perp}=\sum_i\Delta r_{\perp, i} \bm{a}_i$ where $\bm{a}_i$, $i=1,2$ are primitive lattice translation vectors. 
    For $\mathrm{P}2$ and $\mathrm{P}4$ symmetries we choose $\bm{a}_1=(1,0)$ and $\bm{a}_2=(0,1)$, while for P3 and P6 symmetries $\bm{a}_1=\frac{\sqrt{3}}{2}(\sqrt{3},1)$ and $\bm{a}_2=\frac{\sqrt{3}}{2}(\sqrt{3},-1)$. We omit the cases $\Delta \mathcal{L} = 0$ with $\Delta \br_\perp =(0,0)$, when the itinerant angular momenta of the two bands coincide at all~$\Pi$. 
    }    
    \label{tab:itinerant_ell}
    \begin{ruledtabular}
    \begin{tabular}{lll@{\hspace{4mm}}llll}
    \rule{0pt}{3ex}
        Group & $\Delta\mathcal{L}$ & $\Delta\br_\perp =\! (\Delta r_{\perp,1},\Delta r_{\perp,2})$ & \multicolumn{4}{c}{$\widetilde{\mathcal{L}}_2(\Pi)-\widetilde{\mathcal{L}}_1(\Pi) \mod m$} \\ 
        & & & \multicolumn{4}{c}{for $C_m$-invariant $\Pi$} \vspace{0.7ex}\\
        \hline
        \rule{0pt}{3ex}
         & & & $\Gamma$($C_2$) & X($C_2$) & Y($C_2$) & M($C_2$) \\
         \cline{4-7}
         P2 & 0 & $(0,1/2)$ & 0 & 0 & 1 & 1 \\
         & & $(1/2, 0)$ & 0 & 1 & 0 & 1 \\
         & & $(1/2,1/2)$ & 0 & 1 & 1 & 0 
         \vspace{2mm}\\
         & 1 & $(0,0)$ & 1 & 1 & 1 & 1 \\
         & & $(0,1/2)$ & 1 & 1 & 0 & 0 \\
         & & $(1/2, 0)$ & 1 & 0 & 1 & 0 \\
         & & $(1/2,1/2)$ & 1 & 0 & 0 & 1 
         \vspace{2mm}\\
         & & & $\Gamma$($C_3$) & K($C_3$) & $\tkpr$($C_3$) & \\
         \cline{4-6}
         P3 & 0 & $(1/3,1/3)$ & 0 & 1 & 2 &  \\
         & & $(2/3,-1/3)$ & 0 & 2 & 1 &  
         \vspace{2mm}\\
         & 1 & $(0,0)$ & 1 & 1 & 1 &  \\
         & & $(1/3,1/3)$ & 1 & 2 & 0 &  \\
         & & $(2/3,-1/3)$ & 1 & 0 & 2 & 
         \vspace{2mm}\\
         & -1 & $(0,0)$ & 2 & 2 & 2 &  \\
         & & $(1/3,1/3)$ & 2 & 0 & 1 &  \\
         & & $(2/3,-1/3)$ & 2 & 1 & 0 & 
         \vspace{2mm}\\
         & & & $\Gamma$($C_4$) & X($C_2$) & $\tm$($C_4$) & \\
         \cline{4-6}
         P4 & 0 & $(1/2,1/2)$ & 0 & 1 & 2 &  
         \vspace{2mm}\\
         & 1 & $(0,0)$ & 1 & 1 & 1 &  \\
         & & $(1/2,1/2)$ & 1 & 0 & 3 &  
         \vspace{2mm}\\
         & 2 & $(0,0)$ & 2 & 0 & 2 &  \\
         & & $(1/2,1/2)$ & 2 & 1 & 0 &  
         \vspace{2mm}\\
         & -1 & $(0,0)$ & 3 & 1 & 3 &  \\
         & & $(1/2,1/2)$ & 3 & 0 & 1 &  
         \vspace{2mm}\\
         & & & $\Gamma$($C_6$) & K($C_3$) & M($C_2$) & \\
         \cline{4-6}
         P6 & 1 & $(0,0)$ & 1 & 1 & 1 &  \\
          & 2 & $(0,0)$ & 2 & 2 & 0 &  \\
          & 3 & $(0,0)$ & 3 & 0 & 1 &  \\
          & -2 & $(0,0)$ & 4 & 1 & 0 &  \\
          & -1 & $(0,0)$ & 5 & 2 & 1 &  
    \end{tabular}
    \end{ruledtabular}
\end{table}

\subsection{Berry dipole mediates a change in the RTP and Hopf invariants}\label{sec:Hopf-RTP-correspondence}

Both the Hopf and the RTP invariants of a lattice tight-binding Hamiltonian can be altered across critical phase-transition points where the conduction and valence bands touch (as a Berry dipole)  along one (or more) rotation-invariant $\bk$-line(s). 

For any Berry-dipole band touching at $C_m$-invariant momentum $\bk^\textrm{t}$, we assume that the tight-binding Hamiltonian $h(\bk)$ has a convergent, multi-variable Taylor expansion in $(\bk-\bk^\textrm{t})$, such that the truncated Taylor expansion is a $\bk\cdot \bp$ Hamiltonian of the spinor-form [cf.\ \q{eq:zszs-hamiltonian}], with spinor given by \q{eq:kp-sym-spinor} for a certain Berry-dipole spin $[\Delta \ell(\bk^\textrm{t})\in \Z]$ 
and a certain Berry-dipole helicity $[\upsilon(\bk^\textrm{t})=\pm 1]$ 
that both depend on the particular band touching $\bk^\textrm{t}$. 
For this to be possible, we additionally assume that the valence band is of a $\varphi_1$ character at $\bk^t$ -- this guarantees that the valence band in the $\bk\cdot\bp$ expansion is also of the first orbital character, as required for the Hamiltonian defined through the spinor \eqref{eq:kp-sym-spinor} [cf.~footnote~\ref{foot:spinor2}]. Comparison of  the symmetry constraints \q{eq:rot-sym-Berry-dipole-contin} and Eqs.~(\ref{eq:ham_commut}--\ref{eq:ellv-via-rho}) allows to equate (modulo $m$) the Berry-dipole spin with the difference in itinerant angular momenta of the basis Bloch states:
\begin{equation}
\Delta{\ell}(\bk^\textrm{t}) =_m \widetilde{\mathcal{L}}_2(\Pi^\textrm{t})- \widetilde{\mathcal{L}}_1(\Pi^\textrm{t})=_m\Delta\widetilde{\mathcal{L}}(\Pi^\textrm{t}),\label{eqn:l-L-mod-m}
\end{equation}
with $\Pi^\textrm{t}$ the projection of $\bk^\textrm{t}$ onto the $(k_x,k_y)$-plane;\footnote{As we show in Appendix~\ref{app:spinor-lattice-sym}, to the lowest-in-$\bk$ order, $\Delta\ell(\bk^\textrm{t})$ of corresponding continuum expansion is chosen to be the smallest (in absolute value) integer that obeys this constraint.} the Berry-dipole helicity depends on details of the Hamiltonian hopping matrix elements. 
The second equality follows from the assumption that the valence band is of $\varphi_1$ character, and hence conduction band is of $\varphi_2$ character, at $\bk^\textrm{t}$.

Recall from Sec.~\ref{sec:Berry-transitions} that the Berry dipole results in an integer-valued change of the continuum Hopf number $\delta \chi^{\mathrm{cont.}}=-\Delta \ell\, \upsilon$ according to \q{eq:Berry-dipole-hopf-change}. Recall that $\chi^{\mathrm{cont.}}$ can be viewed as a momentum-local contribution to the Hopf invariant -- a BZ-integral of the Chern-Simons three-form. Suppose our tight-binding Hamiltonian depends on a non-momentum variable $\Phi\in \R$, and that for an isolated value of $\Phi=\Phi_c$, the conduction and valence bands touch as Berry dipoles at $T$ momenta, $\{\bk^\textrm{t}\}_{\textrm{t}=1,\ldots, T}$. The net change in the Hopf invariant is then obtained by summing over all $\bk^\textrm{t}$:\footnote{If the valence band is instead of a $\varphi_2$ character, the $\bk\cdot\bp$ expansion in the spinor-form Hamiltonian will be defined via the spinor given in Eq.~\eqref{eq:kp-sym-spinor-2}. In this case, the local contributions to the Hopf invariant will instead be given~by  \begin{equation}\delta\chi^\textrm{cont.}(\bk^\textrm{t})=\upsilon(\bk^\textrm{t})\Delta \ell(\bk^\textrm{t})\label{eq:chi-total-change-inverted}
\end{equation} 
(cf.~footnote~\ref{foot:hopf-change-2}). Note that both Eqs.~(\ref{eq:chi-total-change}) and~(\ref{eq:chi-total-change-inverted}) can be formulated with the \emph{same} sign on the right-hand side if we employ the difference $\ell_c-\ell_v$ (instead of $\Delta \ell = \ell_2 - \ell_1$) at $\bk^\textrm{t}$.}
\begin{equation}
    \delta\chi=\sum_{\textrm{t}=1}^T\delta\chi^{\mathrm{cont.}}(\bk^\textrm{t}), \as \!\! \delta\chi^{\mathrm{cont.}}(\bk^\textrm{t})=-\upsilon(\bk^\textrm{t})\Delta \ell(\bk^\textrm{t}) \in \Z.
    \label{eq:chi-total-change}
\end{equation}
If $\bk^\textrm{t}$ is a $C_m$-invariant momentum in a $\mathrm{P}n$-symmetric BZ with $m<n$, then $\bk^\textrm{t}$ belongs to 
an $n/m$-plet of dipole-band-touching momenta which are mutually related by $C_n$ symmetry, and $\delta \chi^{\mathrm{cont.}}$ is identical for each member of this $n/m$-plet.\footnote{$\delta \chi^{\mathrm{cont.}}$ is identical for each member of a $n/m$-plet because the Berry-connection vector $\boldsymbol{\mathcal{A}}$ transforms as a vector under spatial operations, Berry-curvature vector $\bm{\mathcal{F}}=\curl \boldsymbol{\mathcal{A}}$ transforms as a pseudo-vector, and hence the
Chern-Simons three-form $\bm{\mathcal{F}}\cdot\boldsymbol{\mathcal{A}}$ transforms as a pseudo-scalar.} 
  
Recall that a Berry-dipole band touching at $\bk^\textrm{t}$  changes also the continuum Zak phase by $\delta \mathscr{Z}^{\mathrm{cont.}}=-2\pi \upsilon$ according to \q{eq:Berry-dipole-pol-change}. $\delta \mathscr{Z}^{\mathrm{cont.}}$ can be viewed as a momentum-local contribution to the RTP invariant $\Delta\mathscr{P}_{\Pi_1\Pi_2}$ if $\bk^\textrm{t}$ is a point on either of the rotation-invariant lines: $\gamma_{\Pi_1}$ and $\gamma_{\Pi_2}$. The net change in the RTP invariant across the critical parameter value $\Phi_c$ is contributed by all dipole-band-touching points within either line:
\begin{equation}
    \delta \Delta\mathscr{P}_{\Pi_1\Pi_2}=-\sum_{\bk^\textrm{t}\in \gamma_{\Pi_1}} \upsilon(\bk^\textrm{t})+\sum_{\bk^\textrm{t}\in \gamma_{\Pi_2}} \upsilon(\bk^\textrm{t}).
    \label{eq:rtp-change}
\end{equation}
If $\bk^\textrm{t}$ belongs to a $n/m$-plet, then $\upsilon$ is identical for each member of this multiplet.\footnote{$\upsilon$ is identical for each member of a $n/m$-plet, because $\boldsymbol{\mathcal{A}}$ transforms as a vector under spatial operations, and in particular  $\mathcal{A}_z$ transforms as a scalar under rotations with axes parallel to $z$.} 

Let us begin from a  canonical trivial phase with only on-site energies and zero `hopping'  matrix elements of the real-space tight-binding Hamiltonian. In such a phase, both valence and conduction bands are basis band representations, and the Hopf and RTP invariants vanish. By tuning the Hamiltonian, we then transit to a different gapped phase via a Berry-dipole critical point, with both Hopf and RTP invariants simultaneously modified through Eqs.~(\ref{eq:chi-total-change}) and~(\ref{eq:rtp-change}). One can eliminate the Berry-dipole helicity $\upsilon(\bk^\textrm{t})$ from both equations, thus expressing the Hopf invariant in terms of the RTP invariants -- this is done in
Sec.~\ref{sec:RTP-Hopf}. A phase transition involving multiple Berry dipoles can in principle leave the Hopf invariant unchanged, while altering the RTP invariant. This happens when the individual band-touching contributions in Eq.~\eqref{eq:chi-total-change} cancel each other.\\

We name insulators that are classified by both Hopf and RTP invariants as \emph{crystalline Hopf insulators}. 
In the next section we illustrate the richness of their classification by means of explicit tight-binding models. The RTP invariant, in contrast to the Hopf invariant, admits generalization to more than two bands~\cite{Nelson:2021} as will be discussed in details in Sec.~\ref{sec:rtpbeyond2}. Therefore, we adopt the name of \emph{RTP insulator} to define a two-or-more band insulator characterized by at least one non-zero RTP invariant.

\section{Models of crystalline Hopf insulators}
\label{sec:models}
In this section we illustrate the concept of crystalline Hopf insulators on several explicit tight-binding models. In Secs.~\ref{sec:zero-hopf}, we present an example model for which the Hopf invariant remains zero although the RTP invariants are 
nontrivial. In Sec.~\ref{sec:same-hopf-diff-rtp}, we introduce a model that exhibits two phases with
the same non-vanishing value of the Hopf invariant but which are topologically distinguished by the RTP invariant. 
Based on these examples, we claim that rotational symmetry enriches the topological classification of Hopf insulators through the introduction of the RTP invariants. 
We also note a relation between the Hopf and the RTP invariants that holds in both models. 
This relation is formally generalized and proven in Sec.~\ref{sec:RTP-Hopf}. 
Finally, in Sec.~\ref{sec:repr-models} we present several representative tight-binding models, that possess quantized values of the Hopf and RTP invariants, in all space groups P$n$. 
All the presented models are summarized in Table~\ref{tab:model-zoo} and their phase diagrams are presented in Table~\ref{tab:models_summary}.

\subsection{Hopf-less RTP insulator\label{sec:zero-hopf}}

We present a model that has a trivial Hopf invariant but non-trivial RTP invariants. 
The existence of such a model demonstrates that in the presence of crystalline symmetries the topological classification of two-band Hamiltonians (with trivial first Chern class) is not exhausted by the Hopf invariant. 

One way to realize a tight-binding model with the above-mentioned property is to start with a trivial insulator and induce a phase transition via a pair of Berry dipoles at two rotation-invariant points $\bk^1\in\gamma_{\Pi_1}$, $\bk^2\in\gamma_{\Pi_2}$, that lie on two rotation-invariant lines in the BZ.
For simplicity, we assume that both $\gamma_{\Pi_j}$ ($j=1,2$) are invariant (modulo lattice translations) under $n$-fold rotation, with $n$ the order of the maximal point group of $\mathrm{P}n$. 
The changes in the Hopf invariant attributed to the two Berry dipoles are chosen to cancel each other: $\delta\chi(\bk^1)=-\delta\chi(\bk^2)$.
The model can nonetheless acquire
a nonzero RTP invariant $\Delta\mathscr{P}_{\Pi_1\Pi_2}$, 
if the corresponding changes in polarization at these points
have opposite signs, $\delta\mathscr{P}(\Pi_1) = -\delta\mathscr{P}(\Pi_2)$.

With the RTP-protecting symmetry chosen to be a 4-fold rotation $C_{4z}$, represented by $R_{C_{4z}}=\diag(1,i)$, a model with the required characteristics
is achieved by the Hamiltonian
\begin{equation}\label{eq:Hopf-less}
\begin{split}
h(\bm{k}) \eq [\zeta^{\dagger}(\bm{k})\bm{\sigma}\zeta(\bm{k})]\cdot\bm{\sigma},\;\;\;\;\zeta(\bk)=(\zeta_1,\zeta_2)^\top \\
\zeta_1(\bm{k}) \eq \sin k_x - i\sin k_y, \\
\zeta_2(\bm{k}) \eq \sin k_z + i\,\Phi\left[\Phi + \cos k_z (\cos k_x + \cos k_y)\right],
\end{split}
\end{equation}
with $\bm{\sigma}=(\sigma_x, \sigma_y, \sigma_z)$ a vector of the Pauli matrices and $\Phi\in\mathbb{R}$ a tuning parameter. 
The model is defined on a tetragonal lattice in $\bm{r} = (x,y,z)$ space, and we rescale the dimensionless lattice constants/periods in both the vertical and the in-plane direction to $1$, such that the first Brillouin zone is given by $\abs{k_{x,y,z}}\leq \pi$.
The presented model has a pair of basis orbitals $\{\varphi_j\}_{j=1,2}$ placed along a rotational axis passing through the center of the unit cell, which
transform under rotations like a spinless $s$ and $p_+=p_x+ip_y$ orbital (i.e., they carry angular momenta $\mathcal{L}_1=0$ and~$\mathcal{L}_2=1$, respectively).

Owing to the trivial value of the Chern number on all subplanes of the BZ, the eigenstates of the Hamiltonian \eqref{eq:Hopf-less} can be expressed as periodic and analytic functions of $\bk$ over the Brillouin torus. 
We adopt the gauge in which the valence (conduction) eigenstates are
given by $\ket{u_v}=i\sigma_y  \zeta^*/\|\zeta\|$
($\ket{u_c}=\zeta/\|\zeta\|$), with $\zeta(\bk)$ an analytic and periodic function specified in \q{eq:Hopf-less}.
One easily verifies that the valence and conduction wave functions along 
each rotation-invariant line 
transform under rotation with itinerant angular momenta $\widetilde{\mathcal{L}}_v=0$ resp.~$\widetilde{\mathcal{L}}_c=1$, implying that 
the mutually-disjoint condition is fulfilled. 
The RTP invariants are the polarization differences between pairs of $C_{4z}$-invariant reduced momenta $\Gamma=(0,0)$ and $\tm=(\pi,\pi)$ and $C_{2z}$-invariant reduced momentum $\tx=(\pi,0),(0,\pi)$.  
All possible pairwise differences in polarization  can be expressed as a linear combination of two independent ones: $\Delta\mathscr{P}_{\Gamma\tx}$ and~$\Delta\mathscr{P}_{\Gamma\tm}$. 

\begin{figure}
    \centering
    \includegraphics{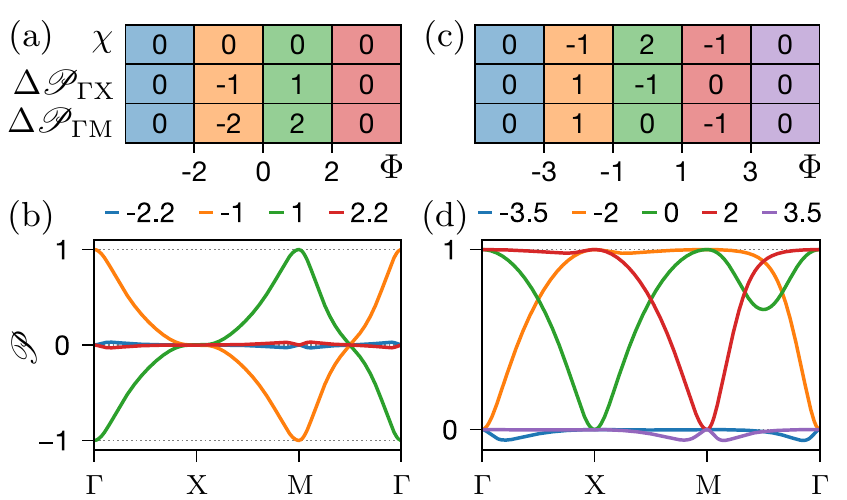}
    \caption{Phase diagram (a)~of the model with zero Hopf invariant [defined in Eq.~(\ref{eq:Hopf-less})], and (c)~of the Moore-Ran-Wen (MRW) model [described by Eq.~\eqref{eq:hamMRW}]. Both models belong to space group $\mathrm{P}4$, with $\mathcal{L}_1=0$ and $\mathcal{L}_2=1$ orbitals set along the same $C_4$ rotation axis. 
    The three rows corresponds to the values of the Hopf invariant ($\chi$) and of two quantized polarization differences between points $\Gamma$ and X ($\Delta\mathscr{P}_{\Gamma\textrm{X}}$) resp.~between $\Gamma$ and M ($\Delta\mathscr{P}_{\Gamma\textrm{M}}$) in the reduced Brillouin zone. 
    (b,d) For each phase in the phase diagram (a) resp.~(c), the $z$-component of the polarization as a function of momentum along path $\Gamma$XM$\Gamma$ is shown for a representative value of the parameter $\Phi$.}
    \label{fig:Hopfless-MRW-RTP}
\end{figure}

To verify that the Hamiltonian
indeed enters the phase with zero Hopf invariant but non-trivial RTP, we analyze in detail the phase transition at $\Phi_c=-2$. 
For large negative parameter values $\Phi\to -\infty$, the valence band is $\bk$-independent and the Hopf invariant and the differences in polarization are all zero. Increasing $\Phi$ to $-2$, the gap closes simultaneously at two BZ points $(\Gamma, 0)$ and $(\tm, \pi)$. 
Expanding the Hamiltonian in Eq.~\eqref{eq:Hopf-less} around $(\Gamma, 0)$ [resp. $(\tm, \pi)$], to quadratic order in $\phi=\Phi-\Phi_c$ and in $\bm{\kappa}=\bk-(\Gamma,0)$ [resp.~$\bm{\kappa}=\bk-(\tm, \pi)$], we get $\zeta(\bm{\kappa})=(\kappa_-, \kappa_z-2i\phi)^\top$ [resp.~$\zeta(\bm{\kappa})=(-\kappa_-, -\kappa_z-2i\phi)^\top$]. 
These spinor functions are topologically equivalent to the form in Eq.~\eqref{eq:kp-sym-spinor} with $\Delta\ell=1$ and $\upsilon=-1$ [resp.~$\upsilon=+1$]. 
According to Eqs.~(\ref{eq:Berry-dipole-hopf-change}) and~(\ref{eq:chi-total-change}) 
the Hopf invariant does not change through this phase transition while from Eq.~(\ref{eq:Berry-dipole-pol-change}) the polarization exhibits a positive (resp.~negative) unit jump at $\Gamma$ (resp.~$\tm$). 
This produces two nonzero topological invariants $\Delta\mathscr{P}_{\Gamma\tx}=-1$ and $\Delta\mathscr{P}_{\Gamma\tm}=-2$, as governed by Eq.~\eqref{eq:rtp-change}. 

Upon further increasing $\Phi$, the model defined by Eq.~(\ref{eq:Hopf-less}) exhibits two more phase transitions (namely at parameter values $\Phi\in\{0,+2\}$), and eventually returns to the trivial phase, as summarized by the phase diagram in Fig.~\ref{fig:Hopfless-MRW-RTP}(a). To illustrate the quantization of the RTP invariants we plot in Fig.~\ref{fig:Hopfless-MRW-RTP}(b) the electric polarization of the valence band [Eq.~\eqref{eq:pol-def}] as a function of rBZ momentum on a path connecting rotation-invariant momenta $\Gamma$, $\tm$ and $\tx$ for a representative parameter value in each topological phase.

\subsection{RTP invariant distinguishes between insulators with the same Hopf invariant \label{sec:same-hopf-diff-rtp}}

We now consider the possibility that the RTP invariant distinguishes
two topological insulators with the same and \emph{nontrivial} value of the
Hopf invariant. 
For this purpose we re-examine the role of symmetries of the canonical Moore-Ran-Wen (MRW) model of the Hopf insulator~\cite{Hopfinsulator_Moore}, 
\begin{equation}\label{eq:hamMRW}
\begin{split}
h(\bm{k}) \eq \left[\zeta^{\dagger}(\bm{k})\bm{\sigma}\zeta(\bm{k})\right]\cdot\bm{\sigma},\;\;\;\;\zeta(\bk)=(\zeta_1,\zeta_2)^\top \\
\zeta_1(\bm{k}) \eq \sin k_x - i\sin k_y, \\
\zeta_2(\bm{k}) \eq \sin k_z + i\left(\sum_{i=x,y,z}\cos k_i + \Phi\right).
\end{split}
\end{equation}
The tight-binding basis vectors coincide with those of the previous model in Sec.~\ref{sec:zero-hopf}, hence the MRW model is classified by the same topological invariants $\Delta\mathscr{P}_{\Gamma\tx}$ and $\Delta\mathscr{P}_{\Gamma\tm}$ in addition to the Hopf invariant $\chi$. 

By applying Eqs.~(\ref{eq:Berry-dipole-hopf-change}) and~(\ref{eq:chi-total-change})
to the phase transitions exhibited by the MRW model at $\Phi\in\{-3,-1,1,3\}$, we get the phase diagram given in Fig.~\ref{fig:Hopfless-MRW-RTP}(c); the polarization as a function of rBZ momentum for all representative parameter values is numerically plotted in Fig.~\ref{fig:Hopfless-MRW-RTP}(d). We observe that two phases with $-3<\Phi<-1$, resp.~$1<\Phi <3$,
have the same value of the Hopf invariant $\chi=-1$. However, the two phases are distinguished by the RTP invariants, i.e., it is impossible to deform one phase into the other without closing the energy gap or breaking P$4$ symmetry. 

In addition to refining the topological classification,  the RTP invariant also explains
the existence of mid-gap surface states at open boundaries with sharply terminated hoppings, as elaborated in Sec.~\ref{sec:BBC_sharp}. 
Such surface states were first numerically observed in Ref.~\cite{Hopfinsulator_Moore}; 
however, since the crystalline symmetries and the RTP invariants of the model have previously not been analyzed, the appearance of these surface states lacked explanation. We are proposing in Sec.~\ref{sec:BBC_sharp}, that  the RTP invariants not only guarantee that surface states interpolate across the bulk gap for sharply-terminated boundaries, but also {put some restrictions on }
the  shape-topology of the Fermi surface (of surface states) over the rBZ. 

From the colored panels in \fig{fig:Hopfless-MRW-RTP}(a) and (c), one can deduce a pattern among the Hopf and RTP invariants that holds for both Hopfless and MRW models [cf.\ Eqs.~(\ref{eq:Hopf-less}) resp.~(\ref{eq:hamMRW})]:
\begin{equation}
    \chi=_42\Delta\mathscr{P}_{\Gamma\tx}+\Delta\mathscr{P}_{\Gamma\tm}.
    \label{eq:hopf-rtp-example}
\end{equation}
This \textit{Hopf-RTP relation} can be rationalized by noticing that all phase transitions in both models occur via Berry dipoles at rotation-invariant momenta; each Berry dipole is responsible for an increment to the  Hopf and RTP invariants given by Eqs.~\eqref{eq:Berry-dipole-hopf-change} and \eqref{eq:Berry-dipole-pol-change}. 
This suggests a more general Hopf-RTP relation that depends on the itinerant-angular-momentum difference between the valence and conduction bands, which reduces to \q{eq:hopf-rtp-example} in the case when the itinerant-angular-momentum difference is unity. 
This general relation is derived in Sec.~\ref{sec:RTP-Hopf} for all space groups P$n$ and all possible combinations of tight-binding-basis orbitals. 
We will further reveal in that section that the Hopf-RTP relation holds whether or not the  band-topology-altering band touchings are of the Berry-dipolar form.

\subsection{Representative models of crystalline Hopf insulators in all \texorpdfstring{$\mathrm{P}n$}{Pn} space groups \label{sec:repr-models}}

Adopting the tight-binding Hilbert spaces described in Sec.~\ref{sec:lattice-ham}, we will construct tight-binding Hamiltonians in the spinor form given by Eq.~\eqref{eq:zszs-hamiltonian} with different choices of the spinor functions $\zeta(\bk,\Phi)$.  All the constructed Hamiltonians below have a trivial first Chern class and a valence band representation which is symmetry-equivalent to\footnote{Symmetry-equivalent to, but \emph{not continuously deformable to}; cf.~end of Secs.~\ref{sec:bandrep} and \ref{sec:trivialchern} for the definition of these two distinct notions.} to a basis band representation, i.e., a band representation induced from either of the tight-binding-basis orbitals. 

As shown in Appendix~\ref{app:sym-cond-spinor}, the rotation-symmetry condition in Eq.~\eqref{eq:hamilt-sym-condition} is equivalent to the following condition on the spinor function:
\begin{equation}
    R_{C_n}(\bk)\zeta(\bk)=e^{i\beta(\bk)}\zeta(C_n\bk),
    \label{eq:sym-condition-spinor}
\end{equation}
with $\beta(\bk)$ being an arbitrary phase factor.
To construct a Hamiltonian that  transforms under reciprocal lattice translations according to Eq.~\eqref{eq:ham_non-periodic}, we choose each spinor component to transform as
\begin{align}
    \zeta_j(\bk+\bG)\eq e^{-i\bG\cdot\br_j}\zeta_j(\bk) 
    \label{eq:non-periodicity-spinor}
\end{align}
for both $j=1,2$.

\begin{table*}
    \begingroup
    \centering
    \caption{Glossary of representative Hamiltonians of crystalline Hopf insulators. The model Hamiltonians, with the indicated space group symmetry P$n$ and basis orbitals $\varphi_{1,2}$ characterized by their angular momentum difference $\Delta\mathcal{L}=\mathcal{L}_2-\mathcal{L}_1$ and real-space position difference $\Delta\br=(\Delta\br_\perp, 0)$, $\Delta\br_\perp=\br_{1,\perp}-\br_{2,\perp}$, are of the form $h(\bk)=[\zeta^\dagger(\bk)\bm{\sigma}\zeta(\bk)]\cdot\bm{\sigma}$ with the spinor components $\zeta(\bk)=(\zeta_1(\bk),\zeta_2(\bk))^\top$ listed in the table. 
    Notations in the table:  $\bm{t}(a)=\sqrt{3}\left(\sin(\pi a/3), -\cos(\pi a/3)\right)^\top$, $\bk_\perp=(k_x,k_y)^\top$.
    The `Hopf-less model' of Sec.~\ref{sec:zero-hopf} is not included in the table, while the `MRW model' of Sec.~\ref{sec:same-hopf-diff-rtp} corresponds to the second listed model.
    }\label{tab:model-zoo}
    \begin{ruledtabular}
    \begin{tabular}{l@{\hspace{0.25cm}}
    l@{\hspace{0.25cm}}l@{\hspace{0.55cm}}l@{\hspace{0.5cm}}l}
        SG & $\Delta\mathcal{L}$ & $\Delta \br_\perp$ & $\zeta_1(\bk)$ & $\zeta_2(\bk)$ \\ 
        \hline
        \rule{0pt}{0.5cm}
          P4 & 0 & $\Big(\tfrac{1}{2},\tfrac{1}{2}\Big)$ & $e^{i\frac{k_x+k_y}{2}}\left[\cos\frac{k_x}{2}\cos\frac{k_y}{2}+i\sin\frac{k_x}{2}\sin\frac{k_y}{2}\sin\frac{k_x+k_y}{2}\sin\frac{k_x-k_y}{2}\right]$ & $e^{i\frac{k_x+k_y}{2}}\left[\sin k_z + i\left(\sum_{i=x,y,z}\cos k_i + \Phi\right)\right]$ \vspace{0.3cm} \\
         & 1 & $(0,0)$ & $\sin k_x -i \sin k_y$ & $\sin k_z + i\left(\sum\limits_{i=x,y,z}\cos k_i + \Phi\right)$ \vspace{0.3cm} \\
         & & $\Big(\tfrac{1}{2},\tfrac{1}{2}\Big)$ & $e^{i\frac{k_x+k_y}{2}}\left[\sin\frac{k_x+k_y}{2}+i\sin\frac{k_x-k_y}{2}\right]$ & $e^{i\frac{k_x+k_y}{2}}\left[\sin k_z + i\left(\sum\limits_{i=x,y,z}\cos k_i + \Phi\right)\right]$ \vspace{0.3cm} \\
         & 2 & $(0,0)$ & $\sin\frac{k_x+k_y}{2}\sin\frac{k_x-k_y}{2} + i\sin k_x\sin k_y$ & $\sin k_z + i\left(\sum\limits_{i=x,y,z}\cos k_i + \Phi\right)$ \vspace{0.6cm} \\
         P3 & 0 & $\Big(\tfrac{1}{3},\tfrac{1}{3}\Big)$ & $1 + 2\exp{-i\tfrac{3}{2} k_x}\cos\left\{\tfrac{\sqrt{3}}{2} k_y\right\}$ & $e^{-ik_x}\left[\sin k_z + i\left(\sum\limits_{a=1}^3\sqrt{3}\sin\left\{\bm{t}(2a)\cdot \bk_\perp\right\} + \tfrac{9}{2}\cos k_z + \Phi\right)\right]$  \vspace{0.3cm} \\
          & 1 & $(0,0)$ & $\sum\limits_{a=1}^3 e^{-i a\frac{2\pi}{3}}\exp{i\bm{t}(2a)\cdot \bk_\perp}$ & $\sin k_z + i\left(\sum\limits_{a=1}^3\sqrt{3}\sin\left\{\bm{t}(2a)\cdot \bk_\perp\right\} + \tfrac{9}{4}\cos k_z + \Phi\right)$ \vspace{0.6cm} \\
          P6 & 1 or 2 & $(0,0)$ & $\sum\limits_{a=1}^{6}e^{-i\Delta\mathcal{L} a\frac{\pi}{3} } \exp{i\bm{t}(a)\cdot\bk_\perp}$ & $\sin k_z + i\left(\sum\limits_{a=1}^{6}\cos\{\bm{t}(a)\cdot\bk_\perp\} + 4\cos k_z + \Phi\right)$ \vspace{0.3cm} \\
          & 3 & $(0,0)$ &  $\sum\limits_{a=1}^{6}e^{i\pi a} \left[\exp{i\bm{t}(a)\cdot\bk_\perp}-i\exp{i\sqrt{3}\bm{t}\big(a+\tfrac{1}{2}\big)\cdot\bk_\perp}\right]$ & as for $\Delta\mathcal{L}=1,2$
        \vspace{0.1cm}  \\
        \end{tabular}    
    \end{ruledtabular}    
    \endgroup
\end{table*}

To find model Hamiltonians that realize non-trivial RTP and Hopf invariants, we look for spinor functions that vanish $[\zeta(\bk^0, \Phi^0)=(0,0)^\top]$ at a rotation-invariant point $\bk^0$ at an isolated parameter value $\Phi^0$; in the vicinity of $\bk^0$, we want the effective $\bk\cdot\bp$ Hamiltonian to have the Berry-dipolar form given by Eq.~\eqref{eq:kp-sym-spinor}. We can then claim with certainty 
that after reopening the gap the values of the Hopf and the RTP invariants will change according to Eqs.~(\ref{eq:Berry-dipole-hopf-change}, \ref{eq:Berry-dipole-pol-change}).

In Table~\ref{tab:model-zoo}, we list representative spinors for several choices of the space group symmetry P$n$ and of the basis orbitals, the latter specified by
the difference in on-site angular momenta $\Delta\mathcal{L}=\mathcal{L}_2-\mathcal{L}_1$ and in the positional centers
$\Delta\br_\perp=\br_{2,\perp}-\br_{1,\perp}$. \medskip 

\noindent A few comments are in order: \medskip 

\noindent{(\emph{i})} A tight-binding Hamiltonian with a fixed value for $\Delta\mathcal{L}$ and   $\Delta \boldsymbol{r}_\perp$ is compatible with multiple choices of the basis orbitals whose on-site angular momenta and Wyckoff positions differ by $\Delta\mathcal{L}$ and   $\Delta \boldsymbol{r}_\perp$ respectively, as long as both Wyckoff positions are $C_n$-invariant. However, in deriving the specific forms for the spinor $\zeta$ in Table~\ref{tab:model-zoo}, we have assumed that the tight-binding-basis orbital $\varphi_1$ has trivial on-site angular momentum $\mathcal{L}_1=0$ and a $C_n$-invariant Wyckoff position with reduced coordinate $\br_{1,\perp}=(0,0)$.\footnote{$\br_{1,\perp}=(0,0)$ implies that $\zeta_1$ is periodic in reciprocal-vector translations orthogonal to the $k_z$-axis, according to  \q{eq:non-periodicity-spinor}. In the presented models the reciprocal lattice vectors in $(k_x,k_y)$ plane are $\bG_{1,\perp}=(1,0)$, $\bG_{2,\perp}=(0,1)$ in the case of $\mathrm{P}2$ and $\mathrm{P}4$ space groups and $\bG_{1,\perp}=\frac{2\pi}{3}(1, \sqrt{3})$, $\bG_{2,\perp}=\frac{2\pi}{3}(1, -\sqrt{3})$ in the case of P3 and P6 space groups.} \medskip \\

\noindent{(\emph{ii})} Table~\ref{tab:model-zoo} does not explicitly present P$2$-symmetric models.   However, a representative P$2$-symmetric model can be obtained simply by perturbatively lowering the symmetry of the explicitly-given $\mathrm{P}4$-symmetric models, e.g., by applying a small uniaxial compression in one of the directions perpendicular to the rotation axis, while maintaining the bulk energy~gap.  \medskip \\

\noindent{(\emph{iii})} In Table~\ref{tab:model-zoo}, we have listed only models with positive values for the difference in on-site angular momenta: $\Delta\mathcal{L}\geq 0$ for a given difference in Wyckoff position $\Delta \br_{\perp}$. 
Model Hamiltonians with the same $\Delta \br_{\perp}$ but with negative $\Delta\mathcal{L}$ can be obtained from the tabulated model Hamiltonians by applying complex conjugation composed with the inversion $\bk \ri -\bk$:
\e{h_{-\Delta\mathcal{L}}(\bk)= h_{\Delta\mathcal{L}}^*(-\bk).\la{timereverse}}
$h_{-\Delta\mathcal{L}}$ and $h_{\Delta\mathcal{L}}$ can be viewed as two Hamiltonians related by time reversal, which inverts angular momenta while preserving real-spatial positions.\footnote{A two-band Hamiltonian $h_{\Delta\mathcal{L}}(\bk)$ is expressed in the basis of orbitals $\varphi_{1,2}$. Adopting the complex-conjugate orbitals $\varphi_{1,2}^*$, the time-reversal operator [understood as a linear map from the Hilbert space spanned by orbitals $(\varphi_1,\varphi_2)$ to the Hilbert space spanned by $(\varphi_1^*,\varphi_2^*)$] is represented as complex conjugation; therefore, the time-reversal-related Hamiltonian is $h^\Theta_{\Delta\mathcal{L}}(\bk) = h^*_{\Delta\mathcal{L}}(-\bk)$. Additionally, time-reversal flips the on-site angular momenta, but preserves the real-space coordinates: if the basis vectors $\varphi_{1,2}$ are characterized by the onsite angular momenta difference $\Delta\mathcal{L}$ and position difference $\Delta\br$, then $\varphi_{1,2}^*$ are characterized by $-\Delta\mathcal{L}$ and $\Delta\br$. In other words, it must be that $h^\Theta_{\Delta\mathcal{L}}(\bk) = h_{-\Delta\mathcal{L}}(\bk)$. The sought Eq.~(\ref{timereverse}) follows by combining the previous two. A more formal discussion is included as Appendix~\ref{app:time-rev-ham}.} 
Because time reversal also inverts the Berry-curvature vector, the polarization and the Hopf invariant of a pair of time-reversal-related Hamiltonians are related as 
\e{\mathscr{P}(\Pi;h_{\Delta\mathcal{L}})=\mathscr{P}(-\Pi;h_{-\Delta\mathcal{L}}),\as \chi[h_{\Delta\mathcal{L}}]=-\chi[h_{-\Delta\mathcal{L}}].\la{timerevinv}}

\section{Bulk-boundary correspondence of RTP}\label{sec:BBC}

The key notion that underlies the bulk-boundary correspondence of RTP insulators is that of an \emph{angular-momentum anomaly} associated to a rotation-symmetric boundary -- roughly speaking, this means the itinerant angular momentum values of surface states are non-identical to the itinerant angular momentum values of bulk-conduction states, as well as of bulk-valence states. 
This anomaly of the surface states derives from a nontrivial RTP invariant in the bulk, as proven in \s{sec:anomaly}.

The surface states having anomalous values of itinerant angular momenta has two implications that we explore subsequently.
First, in \s{sec:BBC_sharp} we show that, in the presence of an open boundary with sharply terminated hoppings, the surface-state energies necessarily interpolate across the bulk gap to connect the bulk-valence and bulk-conduction bands, ensuring that there are gapless excitations no matter where the Fermi level is positioned within the bulk energy gap.  

When the boundary is not sharp, this interpolation across the bulk energy gap is not guaranteed by a nontrivial RTP invariant. 
Nevertheless, \s{sec:BBC_zak} shows that a more general topological feature survives that is encoded not in the energy dispersion but rather in the wave function of surface-localized states. 
Namely, the Berry-Zak phase of surface states  along a certain symmetrically-chosen loop in the rBZ is quantized to rational multiples of $2\pi$, and  moreover these quantized values for the Berry-Zak phase are non-identical to the Berry-Zak phase of bulk-conduction Bloch states (also non-identical to the  Berry-Zak phase of bulk-valence Bloch states) along an equivalent loop in the bulk BZ. 
We call this a \textit{Zak-phase anomaly}.

\subsection{Angular-momentum anomaly}\la{sec:anomaly}

Here, we formulate the angular-momentum anomaly for a two-band, P$n$-symmetric, insulating Hamiltonian with trivial Chern class. Firstly in \s{sec:anomalyrough}, a sketchy formulation is given to quickly convey the essence of the anomaly.
The admitted vagueness in \s{sec:anomalyrough} will be dispelled \s{sec:anomalyprecise} by a detailed exposition of the projected position operator, conjoined with a case study of the $\mathrm{P}4$-symmetric Moore-Ran-Wen model. 
Much of this subsection will rely on model-dependent, pictorial illustrations;  a more formal, model-independent proof is deferred to Appendix~\ref{app:ang-mom-anomaly}.

\subsubsection{Angular-momentum anomaly: sketchy introduction}\la{sec:anomalyrough}

By assumption, there exists a pair of rotation-invariant reduced momenta, labelled $\Lambda$ and $\Xi$, where both mutually-disjoint and iso-orbital conditions are fulfilled [cf.\ \q{twoconds}]. 
This implies that the difference in  polarization between $\Lambda$ and $\Xi$ is quantized to integer values. 
For convenience, the labels for the reduced momenta are chosen such that the RTP invariant $\Delta\mathscr{P}_{\Lambda\Xi}$ [defined in Eq.~(\ref{eq:RTP-def})] is positive. 

Our goal is to elucidate what implications this bulk invariant has for a p\textit{$n$-symmetric facet}, with the p$n$ plane group being a subgroup (of the 3D space group P$n$) that is  generated by  $C_n$ rotation as well as translations in the directions perpendicular to the rotation axis. For concreteness, we consider a RTP insulator on a semi-infinite geometry defined for $z>0$, with the $z$-axis coinciding with the rotation axis; the boundary at $z=0$ forms a `bottom' p$n$-symmetric facet.

The angular-momentum anomaly is a property of the itinerant angular momenta of the Hilbert space associated to this semi-infinite geometry. 
Roughly speaking, a nontrivial RTP invariant guarantees that certain minimal number of bands of states localized to the surface have itinerant angular momenta that are distinct from those of the bulk bands; this minimal number of `anomalous' bands is nothing more than the absolute value of the RTP invariant itself.  

To characterize what exactly makes an `anomalous' band distinct from a bulk band, we will need to define four numbers that count Bloch states with certain itinerant angular momenta. 
Firstly, let us define $\sharp_f\widetilde{\mathcal{L}}_v(\Lambda)$ as the number of (linearly independent) rotation-invariant Bloch states\footnote{Whenever we discuss a 3D Hamiltonian in semi-infinite geometry ($z>0$) or slab geometry ($z_1 < z < z_2$), we use the term \emph{Bloch state} to describe any state with a well-defined in-plane (reduced) momentum $\bk_\perp=(k_x,k_y)$.} 
in our Hilbert space that (\emph{i}) have reduced momentum $\Lambda$, (\emph{ii}) has support in the proximity of the surface, and (\emph{iii}) have itinerant angular momentum coinciding with that of the bulk-valence band [$\widetilde{\mathcal{L}}_v(\Lambda)$]. 
We take a `state' localized near a surface to mean  an eigenstate of the projected position operator, with an eigenvalue that is close to the spatial coordinate of the surface; this will be formalized below in \s{sec:anomalyprecise}. 
Such a state is thus not necessarily an eigenstate of the Hamiltonian. 
We define $\sharp_f\widetilde{\mathcal{L}}_c(\Lambda)$ almost identically as $\sharp_f\widetilde{\mathcal{L}}_v(\Lambda)$, except in (\emph{iii}) we replace `bulk-valence' with `bulk-conduction', and $\widetilde{\mathcal{L}}_v\ri \widetilde{\mathcal{L}}_c$.
Lastly, we define two other numbers $\sharp_f\widetilde{\mathcal{L}}_{c,v}(\Xi)$ for states satisfying (\emph{i}--\emph{iii}) with $\Lambda \ri \Xi$. 

We propose that these counting numbers are related to the bulk RTP invariant as 
\e{ -\sharp_f \widetilde{\mathcal{L}}_v(\Lambda)+\sharp_f\widetilde{\mathcal{L}}_v(\Xi)=\sharp_f \widetilde{\mathcal{L}}_c(\Lambda)-\sharp_f\widetilde{\mathcal{L}}_c(\Xi)=\Delta\mathscr{P}_{\Lambda\Xi}.\la{amanomaly}}
We call this relation an \textit{angular-momentum anomaly}. Our emphasis on `bottom facet' is merely to fix a sign convention; if we had chosen instead a p$n$-symmetric facet at the `top' of a semi-infinite bulk, then \q{amanomaly} holds with replaced $\Delta\mathscr{P}_{\Lambda\Xi}\ri -\Delta\mathscr{P}_{\Lambda\Xi}$ on the right-hand side.

The condition $\Delta\mathscr{P}_{\Lambda\Xi}>0$ implies there are at least $|\Delta\mathscr{P}_{\Lambda\Xi}|$-number of `anomalous' surface-localized bands (\emph{a})~whose itinerant angular momentum at $\Lambda$ is identical to that of the bulk-conduction band, but distinct from that of the bulk-valence band, and (\emph{b})~whose itinerant angular momentum at $\Xi$ is identical to that of the bulk-valence band but distinct from that of the bulk-conduction band. The sense in which the angular-momentum values of surface-localized bands are anomalous is that they could neither derive from bulk conduction states alone, nor could they derive from bulk valence states alone.

\subsubsection{\texorpdfstring{$\!\!\!\!$}{}Angular-momentum anomaly: precise formulation and case study}\la{sec:anomalyprecise}

\begin{figure}
    \centering
    \includegraphics{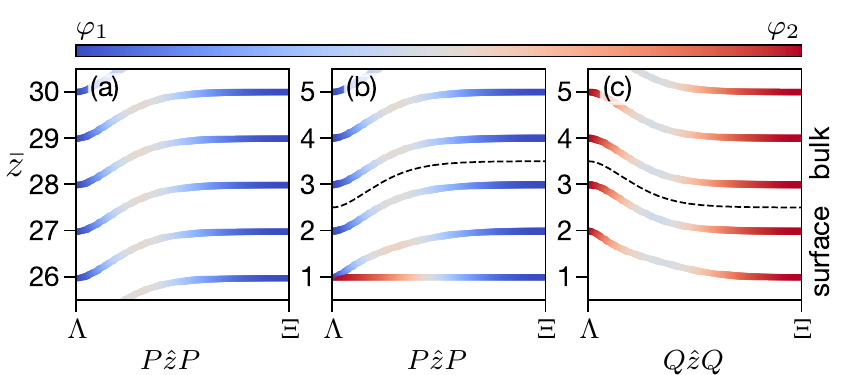}
    \caption{For a P4-symmetric Hamiltonian with the RTP invariant $\Delta\mathscr{P}_{\Lambda\Xi}=\mathscr{P}_\Xi-\mathscr{P}_\Lambda = +1$, we plot the spectrum of the projected position operators $P\hat{z}P$ and $Q\hat{z}Q$, along a straight-line path $\Lambda\Xi$ in rBZ. 
    In panel (a)~we focus only  on the bulk-like polarization bands; whereas in (b)~and (c) we recognize both bulk-like and surface-like polarization bands. 
    The color indicates the orbital decomposition of the wave function into the basis orbitals, with blue and~red tones indicating the wave function overlap with  
    $\varphi_1$ resp.~$\varphi_2$. 
    In this particular model, a rotation-invariant Bloch-Wannier state that overlaps with $\varphi_1$ (resp. $\varphi_2$) has a bulk-valence-like (resp.\ bulk-conduction-like) itinerant angular momentum, thus it is possible to infer the four numbers ($\sharp_f\widetilde{\mathcal{L}}_{c,v}(\Xi), \sharp_f\widetilde{\mathcal{L}}_{c,v}(\Lambda)$) from the above plots. 
    The dashed black line in panels (b) and (c) indicates (an arbitrary) partitioning of the polarization bands into bulk-like vs.~surface-like, and serves an argument presented in the text of Sec.~\ref{sec:anomalyprecise}.
    } 
    \label{fig:PzP}
\end{figure}

Let us elaborate on the meaning of `states localized near a surface' as eigenstates of the projected position operator in a semi-infinite geometry.  

Before tackling the more complicated case of a semi-infinite geometry, it is worth 
reviewing some basic properties of the projected position operator for an infinite-sized, discrete-translation-symmetric lattice. 
Assuming that a bulk energy gap exists, one can define the projector $P_b$ to the bulk-valence band of Bloch states, and $Q_b$ to the bulk-conduction band of Bloch states. 
Consider the projected position operators $P_b\hat{z}P_b$ or $Q_b\hat{z}Q_b$, with $(x,y,z)$ a set of Euclidean coordinates such that the $z$-axis lies parallel to the rotation axis. 
Because the projected position operator is invariant under a subgroup (of $\mathrm{P}n$) containing all translational elements confined within the $(x,y)$-plane, eigenstates of the projected position operator can be labelled by a reduced momentum $\bkp$. 
Such eigenstates which are extended (as a Bloch wave) in the $(x,y)$-plane also turn out to be maximally- and exponentially-localized (as a Wannier function) in the $z$ direction \cite{marzari1997}; we therefore call these eigenstates \textit{hybrid Bloch-Wannier states}. 

Suppose a Bloch-Wannier state is an eigenstate of $P_b\hat{z}P_b$ with eigenvalue $\bar{z}$ at $\bk_\perp$, then it must be that the spectrum of $P_b\hat{z}P_b$ at $\bk_\perp$ contains also  $\bar{z}$ plus any integer,\footnote{This follows from the Hamiltonian being symmetric under translations parallel to the rotation axis. Denoting the operator for such translations as $t_z$, it holds that $t_z\hat{z}t_z^{-1}=\hat{z}+1$. Given that $t_z$ is a symmetry, it follows that $[t_z,P_b]=0$, and hence $t_z P_b\hat{z}P_bt_z^{-1}=P_b(\hat{z}+1)P_b$.} i.e., the spectrum has a ladder-like structure, which we illustrate in Fig.~\ref{fig:PzP}(a) for the MRW model given in Eq.~\eqref{eq:hamMRW}.
Each `rung' of the ladder (indexed by integer $j$) is generically dispersive over the rBZ, and this dispersion is described by a continuous function  $\bar{z}_j(\bkp)$ which covers an interval (or `band') on the $z$-axis. 
In the context of two-band, insulating Hamiltonians, there is only a single Bloch-Wannier state with a given $\bkp$ in any unit-length interval of the $z$-axis, and the dispersive function $\bar{z}_j(\bkp)$ can be identified, modulo one, with the polarization function $\mathscr{P}(\bkp)$ \cite{kingsmith_polarization}. 
We describe all Bloch-Wannier eigenstates corresponding to  $\bar{z}_j(\bkp)$ as belonging to the $j$-th \textit{polarization band}, in analogy with how eigenstates of a discrete-translation-invariant Hamiltonian can be sorted into energy bands. The positive value $\Delta\mathscr{P}_{\Lambda\Xi}=\mathscr{P}_\Xi-\mathscr{P}_\Lambda>0$ implies that the ladder-rung function $\bar{z}_j(\bkp)$ is displaced at $\bk_\perp=\Xi$ relative to $\Lambda$ by a positive number $\Delta\mathscr{P}_{\Lambda\Xi}\in\mathbb{Z}^+$ of unit cells, as illustrated for $\Delta\mathscr{P}_{\Lambda\Xi}=1$ in~Fig.~\ref{fig:PzP}(a).

What implications does this bulk displacement have for surface-localized states in our semi-infinite slab geometry? 
Let us assume the  Hamiltonian matrix elements are terminated near the surface at $z=0$ such that a Fermi level exists within the bulk energy gap and does not intersect any surface-localized energy eigenstates; the specific values of these matrix elements are not essential to our argument. 
Then one can  define a projector $P$ to all occupied states, with the restriction of $P|_{\bkp}$ (to a reduced-momentum sector) being a smooth\footnote{The `smooth-in-$\bkp$' condition would fail if the Fermi level crossed a surface state -- in such a scenario the surface state would discontinuously move from projector $P$ to $Q$ as the state's energy passes through the Fermi level.} function of $\bkp$; likewise, $Q$  projects to all unoccupied states. 
We show in Fig.~\ref{fig:PzP}(b) and~(c) the spectrum of $P\hat{z}P$ and $Q\hat{z}Q$ respectively, with each curved line associated to a polarization~band.\footnote{We continue to call the eigenstates of $P\hat{z}P$ and of $Q\hat{z}Q$ in the semi-infinite geometry (and also in the slab geometry) as Bloch-Wannier states (and their $\bkp$-dependent eigenvalues as polarization bands), just as we did for the case of bulk states in infinite geometry, even though the eigenvalues $\bar{z}_j(\bk_\perp)$ (especially for the eigenstates localized near the surface) are now \emph{not} identified with the bulk polarization as in Ref.~\cite{kingsmith_polarization}.}

Let us sort polarization bands into two categories: \emph{bulk-like} vs.~\emph{surface-like}. 
Being `bulk-like' means to be indistinguishable from bulk polarization bands up to corrections that are exponentially small for large $z>0$. Let us then pick a polarization band of $P\hat{z}P$ that lies sufficiently far away from the surface to be bulk-like. (Which particular band is picked is not essential.) 
This chosen band serves as a cutoff: every polarization band lying above the chosen band (on the $z$-axis) is also categorized as bulk-like; every polarization band lying below the chosen band is then surface-like. We repeat this sorting for the polarization bands of $Q\hat{z}Q$. The net result is illustrated in \fig{fig:PzP}(b,c), with bulk-like and surface-like bands separated by a dashed line.

To reveal the angular-momentum anomaly, we first count how many bulk-like Bloch-Wannier states (coming from both projectors $P$ and $Q$) exist with reduced momentum $\bkp=\Lambda$ and \emph{bulk-valence-like} itinerant angular momentum $\widetilde{\mathcal{L}}_v(\Lambda)$.\footnote{Since $\widetilde{\mathcal{L}}_v(\Lambda)$ is the itinerant angular momentum of bulk valence Bloch states with reduced momentum $\Lambda$, we call this value of itinarant angular momentum in the following discussion as \textit{bulk-valence-like} at $\Lambda$ (and similarly for $\bk_\perp=\Xi$). \label{foot:bulk-valence-like}}
We call this number $\sharp_b \widetilde{\mathcal{L}}_v(\Lambda)$;  
likewise, we define the number $\sharp_b \widetilde{\mathcal{L}}_v(\Xi)$ at reduced momentum $\bkp=\Xi$.
In fact, both numbers $\sharp_b \widetilde{\mathcal{L}}_v(\Lambda)$ and $\sharp_b \widetilde{\mathcal{L}}_v(\Xi)$ are formally infinite, and only their difference is well-defined.
Numbers $\sharp_b \widetilde{\mathcal{L}}_c(\Lambda)$ and $\sharp_b \widetilde{\mathcal{L}}_c(\Xi)$ are defined similarly using the \emph{bulk-conduction-like} itinerant angular momentum.\footnote{
The notion of \textit{bulk-conduction-like} itinerant angular momentum at $\Lambda$ is defined similarly to footnote~\ref{foot:bulk-valence-like} with using the bulk conduction Bloch states with replaced $\widetilde{\mathcal{L}}_v(\Lambda)\to \widetilde{\mathcal{L}}_c(\Lambda)$. 
Note that in our terminology, every Bloch-Wannier state at each $\Lambda$ and $\Xi$ [whether surface-like or bulk like in the partitioning via the black dashed line in Fig.~\ref{fig:PzP}(b,c)] can be characterized as having either bulk-valance like or bulk-conduction-like itinerant angular momentum.} 

In \fig{fig:PzP}(b,c), any blue data-point at $\Pi=\Xi$ or $\Lambda$ corresponds to a Bloch-Wannier state with itinerant angular momentum $\widetilde{\mathcal{L}}_v(\Pi)$. 
Since there are no blue data in panel (c), Fig.~\ref{fig:PzP}(b) should convince the reader that the difference of $\sharp_b \widetilde{\mathcal{L}}_v(\Pi)$ between the two high-symmetry points equals the polarization difference between the two choices of $\Pi$, i.e., 
\begin{subequations}\label{eqn:equalsrtp+analogously}
\e{ \sharp_b \widetilde{\mathcal{L}}_v(\Lambda)-\sharp_b\widetilde{\mathcal{L}}_v(\Xi)=\Delta\mathscr{P}_{\Lambda\Xi}.\la{equalsrtp}}
Similarly, for Bloch-Wannier states with bulk-conduction-like angular momentum we find [by inspecting red data in \fig{fig:PzP}(b,c)] that 
\e{ \sharp_b \widetilde{\mathcal{L}}_c(\Lambda)-\sharp_b\widetilde{\mathcal{L}}_c(\Xi)=-\Delta\mathscr{P}_{\Lambda\Xi}.\la{analogously}}
\end{subequations}
The minus sign on the right-hand side of \q{analogously} reflects that the RTP invariant of the bulk conduction band is opposite in sign to the RTP invariant of the bulk valence band.

Observe that the total number of polarization bands (both bulk-like and surface-like) with angular momentum $\widetilde{\mathcal{L}}_v(\Pi)$ cannot depend\footnote{
This follows from the set of \textit{all} polarization bands (both bulk-like and surface-like, coming from both projects $P$ and $Q$) forming basis vectors of the Hilbert space $\mathscr{H}_{\textrm{semi-}\infty}$ in the semi-infinite geometry. 
Defining $\mathscr{H}_{\textrm{semi-}\infty}(\bkp)$ as the restriction of the Hilbert space to a reduced-momentum sector, continuity over the rBZ demands that the dimension of $\mathscr{H}_{\textrm{semi-}\infty}(\bkp)$ cannot depend on $\bkp$. (Strictly speaking, the dimension is infinite, and what we mean is that $\mathrm{dim}[\mathscr{H}_{\textrm{semi-}\infty}(\bkp)]-\mathrm{dim}[\mathscr{H}_{\textrm{semi-}\infty}(\bkp')]=0$ for any $\bkp$ and $\bkp'$.) 
Under our mutually-disjoint assumption, $\mathscr{H}_{\textrm{semi-}\infty}(\Pi)$, with $\Pi\in \{\Lambda,\Xi\}$, can be decomposed into two itinerant-angular-momentum sectors: 
\e{\mathscr{H}_{\textrm{semi-}\infty}(\Pi)=\mathscr{H}_{\textrm{semi-}\infty}(\Pi;\widetilde{\mathcal{L}}_v)\oplus \mathscr{H}_{\textrm{semi-}\infty}(\Pi;\widetilde{\mathcal{L}}_c).}
Now the dimensions of both Hilbert spaces on the right-hand side are equal by construction. This fact, combined with the constancy of dim[$\mathscr{H}_{\textrm{semi-}\infty}(\bkp)$] over the rBZ, implies \q{cannotdepend}.} on the choice between $\Pi=\Lambda$ and $\Pi=\Xi$:
\e{ \sharp_b \widetilde{\mathcal{L}}_v(\Lambda)+\sharp_f \widetilde{\mathcal{L}}_v(\Lambda)=\sharp_b \widetilde{\mathcal{L}}_v(\Xi)+\sharp_f \widetilde{\mathcal{L}}_v(\Xi).\la{cannotdepend}}
Combining \q{cannotdepend} with Eqs.~(\ref{eqn:equalsrtp+analogously}) gives the angular-momentum anomaly \eqref{amanomaly}.

The mismatch between the number of surface-like Bloch-Wannier bands with bulk-valence-like itinerant angular momentum at $\Lambda$ vs.~at $\Xi$ implies an anomaly, where some of these bands must change their character between $\Lambda$ and $\Xi$.
For instance, assuming $\Delta\mathscr{P}_{\Lambda\Xi}=1$,  \q{amanomaly} implies the existence of \textit{at least} one anomalous surface-like polarization band whose itinerant angular momentum is bulk-conduction-like at $\Lambda$ and bulk-valence-like at $\Xi$. 
In our model, the anomalous surface-like band lies at the bottom of \fig{fig:PzP}(b) and 
exhibits the largest color range. 
(That there exists only one, and that it
lies in the $P$ subspace, are model-dependent details.)
A non-pictorial, model-independent proof of the angular-momentum anomaly is presented in Appendix~\ref{app:ang-mom-anomaly}.

Importantly, the angular-momentum anomaly has implications beyond polarization bands. 
This is because the surface-like polarization band(s) are continuously deformable to surface-localized \emph{energy} band(s) of a Hamiltonian with open boundary conditions, modulo addition or subtraction of bands whose itinerant angular momenta are bulk-valence-like at all $\Pi$, or bulk-conduction-like at all $\Pi$. 
(Examples of analogous deformations were discussed in \ocite{fidkowski_bulkboundary} and \ocite{Cohomological}.) 
As a case in point, the anomalous surface-like polarization band in  \fig{fig:PzP}(b) is continuously deformable to the surface-localized energy band in \fig{fig:mrw_surface}(c). 
Henceforth, any statement that applies equally well to surface-localized energy bands and surface-like polarization bands is said to apply to \textit{faceted bands}, e.g., the angular-momentum anomaly holds for faceted bands. 
The states that make up a faceted band will be referred to as \textit{faceted states}.

Further implications of the anomaly are discussed in the subsequent two subsections.\\

\subsection{Conditionally-robust surface states at sharp boundaries \label{sec:BBC_sharp}}

The RTP implies \textit{conditionally-robust} surface states; namely,
the interpolation of surface-state energies across the bulk energy gap is robust against gap- and rotation-symmetry-preserving deformations of the \textit{bulk} Hamiltonian, assuming that the surface Hamiltonian is rotation-symmetric and satisfies the \textit{sharp boundary condition}.
`Sharpness' of a boundary means that 
all Hamiltonian elements are exactly as they would be in the bulk, except for hopping matrix elements crossing a surface; these latter elements are set to zero.
Examples of such conditionally-robust surface states were numerically observed for the rotation-symmetric Hopf-insulating models in Refs.~\cite{Hopfinsulator_Moore, hopfsurfacestates_deng}, but have thus far lacked explanation.

Here, we explain the relation of these conditionally-robust surface states to the RTP invariants. 
First, in Sec.~\ref{sec:sharp-BBC} we use spectral properties of Toeplitz and circulant matrices to derive a bulk-boundary correspondence for surface states along lines connecting high-symmetry momenta in rBZ. In Sec.~\ref{sec:fermiology} we discus how much information the RTP invariants convey about the actual shape-topology of the Fermi lines formed by the surface states inside rBZ. Finally, in Sec.~\ref{sec:sharpboundaryP4} we compare our theoretical discussion against concrete numerical calculations performed for the MRW model.

\subsubsection{Bulk-boundary correspondence at sharp boundaries}\label{sec:sharp-BBC}

To prove this conditional robustness, we first focus on rotation-invariant lines (denoted $\gamma_\Pi$) that project onto reduced momentum $\Pi\in\{\Lambda,\Xi\}$. 
By restricting the bulk Hamiltonian $h(\bk)$ to any of these lines, we obtain a Hamiltonian $h_{\Pi}(k_z)$ that depends continuously on one parameter $k_z$. 
Because of the mutually-disjoint condition \eqref{eq:mut-disj} at $\Pi$, $h_{\Pi}(k_z)$ is a diagonal two-by-two matrix (for all $k_z$), in the basis that simultaneously diagonalizes the rotation matrix. 
Thus, $h_{\Pi}(k_z)=h^v_{\Pi}(k_z)\oplus h^c_{\Pi}(k_z)$ can be interpreted as a direct sum of Hamiltonians for two non-hybridizing chains, with each chain having one orbital per primitive unit cell. 

Generally, the Hamiltonian of a one-orbital-per-cell chain with sharp boundaries is equal to a Hermitian Toeplitz matrix.\footnote{Elements of a Toeplitz matrix $\mathcal{T}$ are constant along each diagonal, i.e., $\mathcal{T}_{i,j}=\mathcal{T}_{i-j}$. Hermiticity further implies $\mathcal{T}_{i-j}=\mathcal{T}^*_{j-i}$.} 
As was shown in Ref.~\cite{toeplitz_zhu2017} for a sufficiently long chain, the spectrum of a Toeplitz matrix is asymptotically equal to the spectrum of a corresponding circulant matrix\footnote{Circulant matrices form a special class of Toeplitz matrices with $c_{i-j}=c_{-n+i-j}$ for matrices of size $n\times n$. Given a Toeplitz matrix $A_{i,j}=a_{i-j}$ we define the corresponding circulant matrix as \begin{equation}c_{i-j}=\begin{cases}a_{0} & \textrm{for}\;i=j, \\ a_{-i+j}+a_{n-i+j} & \textrm{otherwise.}\end{cases}\end{equation}} which describes the same tight-binding model with \emph{periodic} (instead of open) boundary conditions. 
It is known that a Hamiltonian defined on a periodic chain of length $L$, with one orbital per unit cell,  gives a continuous energy band in the limit $L\rightarrow \infty$. 
In particular, there are no isolated states outside of the (single) energy band for both $h^v_{\Pi}(k_z)$ and $h^c_{\Pi}(k_z)$.
Crucially, per the result of Ref.~\cite{toeplitz_zhu2017}, it follows that the continuity of the spectra of $h^v_{\Pi}$ and $h^c_{\Pi}$ persists in the presence of open boundary conditions on the chain, meaning that the two-orbital-per-cell chain $h_{\Pi}$ has no in-gap states at sharp boundaries.

Here, the angular-momentum anomaly enters the stage.  As argued in \s{sec:anomaly}, \q{amanomaly} with $\Delta\mathscr{P}_{\Lambda\Xi} > 0$  implies the existence of  $|\Delta\mathscr{P}_{\Lambda\Xi}|$-number of  `anomalous'  bottom-surface-localized energy band(s) whose itinerant angular momentum is bulk-conduction-like at $\Lambda$ and bulk-valence-like at $\Xi$. 
Moreover, the absence of any discrete energy eigenvalues within the bulk energy gap at both $\Lambda$ and $\Xi$ (explained in the previous paragraph) ensures that each of the anomalous surface-localized energy bands is energetically attached to the bulk valence band at $\Xi$, and also energetically attached to the bulk conduction band at $\Lambda$.
Thus, if we associate one $E(\bk_\perp)$ (i.e., energy vs.~reduced-momentum) dispersion to each linearly-independent, surface-localized Bloch state,
it is guaranteed that at least $|\Delta\mathscr{P}_{\Lambda\Xi}|$-number of surface-state dispersions interpolate across the bulk gap as the reduced momentum is varied along \textit{any} (possibly curved) line connecting $\Lambda$ and $\Xi$. 

\begin{figure}
    \centering
    \includegraphics{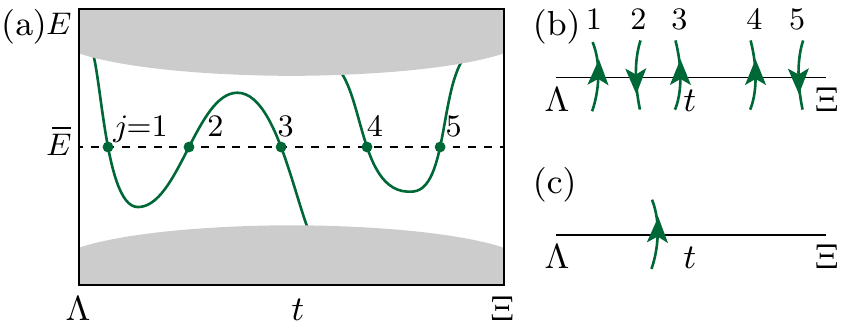}
    \caption{(a) Spectrum of a semi-infinite RTP insulator with an RTP invariant $\Delta\mathscr{P}_{\Lambda\Xi}=1$ along a path in rBZ, that starts at $\Lambda$, ends at $\Xi$ and is parametrized by a real parameter $t$. 
    Given any reference energy $\bar{E}$ inside the bulk energy gap, the value of the RTP invariant is equal, up to a sign, to the sum of $\sign[dE_j/dt]$ (signs of scalar velocities) over all surface-localized states ($j$) that cross $\bar{E}$ (green dots), cf.~Eq.~(\ref{eq:RTP-velocity}).
    (b) The surface-localized states form Fermi lines (green lines) at the Fermi energy with an orientation defined by green arrows. (c) There is an ambiguity in the exact number and position of the Fermi lines which is assigned to (\emph{i}) the possibility to add or remove pairs of Fermi states with opposite signs of scalar velocity, and (\emph{ii}) the possibility to shift a Fermi state along the $\Lambda\Xi$ line without removing it. 
    In this sense panels (b) and (c) are topologically equivalent, and can both arise for $\Delta\mathscr{P}_{\Lambda\Xi}=1$.}
    \label{fig:surf_spec}
\end{figure}

To be more precise, let us parametrize any such a line as $\bk_{\perp}(t)$ with $t\in[0,1]$ [where $\bk_\perp(0)=\Lambda$ and $\bk_\perp(1)=\Xi$]. 
Then for any fixed reference energy $\bar{E}$ within the bulk energy gap, one can associate to each linearly-independent Bloch state (at energy $\bar{E}$) a scalar velocity $dE_j/dt$, with $j$ an index to distinguish between multiple Bloch states crossing $\bar{E}$ [with possibly several such states along a given band, cf.~Fig.~\ref{fig:surf_spec}(a)]. 
These velocities are related to the RTP invariant as
\e{ \Delta\mathscr{P}_{\Lambda\Xi}=-\sum_j \sgn\bigg[\f{dE_j}{dt}\bigg], \label{eq:RTP-velocity}}
where $\sum_j$ sums over all states at the reference energy [denoted by green dots in Fig.~\ref{fig:surf_spec}(a)]. 
Thus, there are minimally $|\Delta\mathscr{P}_{\Lambda\Xi}|$-number of Bloch states at the reference energy, all with the same sign for the velocity. 
In principle, however, this number can be increased by any even number, so long as each added pair of Bloch states comes with opposite signs for $dE_j/dt$. 
Such possible additions are attributed to (\emph{i}) non-anomalous surface-localized energy bands whose itinerant angular momenta are either bulk-conduction-like at both  $\Lambda$ and~$\Xi$, or bulk-valence-like at both  $\Lambda$ and~$\Xi$ or to (\emph{ii}) additional crossings of $\bar{E}$ by the anomalous band when parameter $t$ swipes the interval $[0,1]$.

Figure~\ref{fig:surf_spec}(a) illustrates how the counting in \q{eq:RTP-velocity} is done for a schematic energy spectrum representing the case of $\Delta\mathscr{P}_{\Lambda\Xi}=1$: there are five surface-localized Bloch states at the reference energy $\bar{E}$, of which four have pairwise opposite signs for their scalar velocity. One pair [denoted by $j=4,5$ in Fig.~\ref{fig:surf_spec}(a)] can be associated to a non-anomalous surface-localized energy band `peeled' off from the bulk-conduction valence band. Another pair [denoted by $j=2,3$ in Fig.~\ref{fig:surf_spec}(a)] is attributed to additional crossings of energy $\bar{E}$ by the anomalous band. Contribution from these pairs vanishes in Eq.~\eqref{eq:RTP-velocity} leading to a correct prediction of the RTP invariant.

\subsubsection{Fermi-line topology}\label{sec:fermiology}

The above argument holds for any $\bar{E}$ within the bulk energy gap, and holds in particular for $\bar{E}$ being the Fermi energy of the bulk insulating Hamiltonian for an electronic system. 
This has further implications for the topology of `Fermi surfaces' at sharp boundaries.
Since Fermi surfaces contributed by surface states  form lines in the rBZ, we henceforth call them \textit{Fermi lines}. 

It is useful to equip each Fermi line with an
orientation: for this we define a right-handed (possibly non-orthogonal) system of the following three vectors: (\emph{i}) the first vector equals to $\hat{\boldsymbol{t}}(dE/dt)$ (with $\hat{\boldsymbol{t}}$ a unit two-vector that is locally parallel to the rBZ-line $\Lambda\Xi$, and points in the direction of increasing $t$), (\emph{ii}) the second vector points in the paper-to-sky direction, and (\emph{iii}) the third vector is locally tangential to the Fermi line.
The orientation of the third vector is fixed by the handedness of the frame, thus
defining the orientation of the Fermi line.  
For illustration, Fig.~\ref{fig:surf_spec}(b) presents five segments of the Fermi lines crossing the rBZ-line $\Lambda\Xi$, with orientations determined by applying the right-hand rule to Fig.~\ref{fig:surf_spec}(a). Because the right-hand-rule mapping is bijective, one can also start from the oriented Fermi-line segments in   Fig.~\ref{fig:surf_spec}(b) to deduce the signs of $dE_j/dt$ and therefore the RTP invariant via the rule in  \q{eq:RTP-velocity}.

Let us now consider the inverse problem of determining Fermi lines from the RTP invariant(s): knowing  all the RTP invariants in the rBZ, can one infer the topology of the shape of Fermi lines, as well as their orientations? 
We argue that the answer is `no', i.e., that the Fermi-line topology remains ambiguous, which can be understood as follows.
For each RTP invariant between reduced momenta $\Pi_1$ and $\Pi_2$, one can infer that the sum of all Fermi-level states on a possibly-curved rBZ-line  (connecting $\Pi_1$ and $\Pi_2$) must satisfy the velocity-RTP rule in \q{eq:RTP-velocity}. 
There are two potential sources of ambiguity. First, the oriented segments of Fermi lines crossing the rBZ-line $\Pi_1\Pi_2$ are determined modulo an ambiguity in the exact positioning of segments along  $\Pi_1\Pi_2$; however, such smooth changes cannot induce changes in the topology.
On the other hand, the Fermi lines crossing the rBZ-line $\Pi_1\Pi_2$ are fixed modulo addition/subtraction of pairs of Fermi-line segments with opposite orientations [as illustrated in Fig.~\ref{fig:surf_spec}(b,c)]; as this is a discontinuous change that alters the number of summands in Eq.~(\ref{eq:RTP-velocity}), it should be expected that it implies a change in the Fermi line topology.

The oriented Fermi-line segments, which are in the described sense ambiguously defined over all rBZ-lines $\{\Pi_1\Pi_2\}$ on which an RTP invariant is well-defined, can be continued and connected to form closed Fermi loops that preserve orientation.
We illustrate in Fig.~\ref{fig:Fl-ambig}(b,c) two topologically reconstructed Fermi lines obtained in this way for the same choice of RTP invariants, with the Fermi lines being topologically distinct from one another because of the ambiguity explained here.
The ambiguity due to addition/subtraction of paired Fermi-line segments translates to an ambiguity due to addition/subtraction of `trivial' Fermi loops; if a Fermi loop intersects any rBZ-line for which an RTP invariant is well-defined, the condition of `triviality' means that the intersection does not modify  $\sum_j \text{sgn}[dE_j/dt]$.\footnote{An example of a nontrivial loop is any that encircles only one rotation-invariant reduced momentum $\Pi$, assuming $\Pi$ is mutually-disjoint and belongs to an iso-orbital subset. An example of a trivial loop is any that encircles no such $\Pi$.}

\subsubsection{Sharp-boundary states and Fermi-line topology in \texorpdfstring{$\mathrm{P}4$}{P4}-symmetric model} \la{sec:sharpboundaryP4}

To illustrate the findings of this section, we focus on the same $\mathrm{P}4$-symmetric tight-binding Hilbert space that formed the basis for the 
Moore-Ran-Wen model, presented in Sec.~\ref{sec:same-hopf-diff-rtp}. 
To remind the reader, the mutually-disjoint condition holds at all rotation-invariant reduced momenta $\{\Gamma, \tx, \tm\}$, and the iso-orbital condition holds also for any pair in this set, implying an RTP invariant is well-defined for any pair in this set. There are however only two independent RTP invariants.

\begin{figure}
    \centering
    \includegraphics{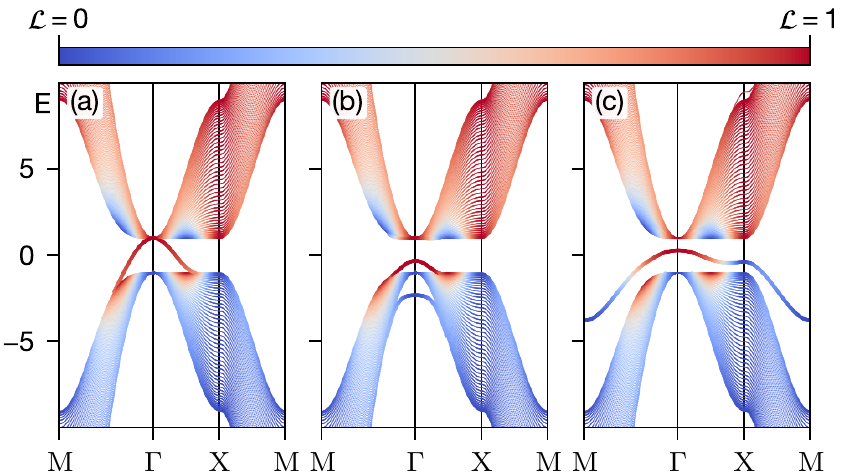}
    \caption{Spectrum of a semi-infinite MRW model [Eq.~\eqref{eq:hamMRW}] defined for $z>0$. The shades of blue vs.~red denote contributions to the eigenstates from the basis orbitals with angular momentum $\mathcal{L}=0$ vs.~$\mathcal{L}=1$. 
    At rotation-invariant reduced momenta the states are purely represented by orbitals with $\mathcal{L}=0$ or $1$. 
    (a) Sharp boundary possessing a gapless surface state. 
    (b) A surface-localized potential $V=-1.3$ added to both orbitals in the bottom layer of the semi-infinite system opens a gap in the surface spectrum. 
    (c) There exist a Hamiltonian perturbation that detaches the surface state from the rest of the spectrum.}
    \label{fig:mrw_surface}
\end{figure}

We start by looking at a semi-infinite Moore-Ran-Wen model Hamiltonian [given in the bulk by Eq.~\eqref{eq:hamMRW}] with $\Phi=-2$, for which the polarization protrudes in the negative $z$ direction at $C_4$-invariant momentum $\Gamma$ [orange line in Fig.~\ref{fig:Hopfless-MRW-RTP}(d)].
In Fig.~\ref{fig:mrw_surface}, we present the energy spectrum of the model, with colors (blue vs.~red) indicating the contribution of basis orbitals (with on-site angular momentum $\mathcal{L}=0$ vs.~$\mathcal{L}=1$) to the eigenstates. 
At rotation-invariant reduced momenta, each valence (resp.~conduction) bulk state  has itinerant angular momentum $\widetilde{\mathcal{L}}_v=0$ (resp.~$\widetilde{\mathcal{L}}_c=1$).
The itinerant angular momentum of the anomalous surface-localized energy band is bulk-valence-like at $\tx$ and $\tm$, and bulk-conduction-like at $\Gamma$.
At a sharp boundary [cf.~Fig.~\ref{fig:mrw_surface}(a)] we indeed see that, in accordance with the above argument, the surface-state energies interpolate from the bulk-conduction band at $\Gamma$ to the bulk-valence band at $\tx$; likewise, a similar interpolation occurs between $\Gamma$ and $\tm$. 

\begin{figure}
    \centering
    \includegraphics{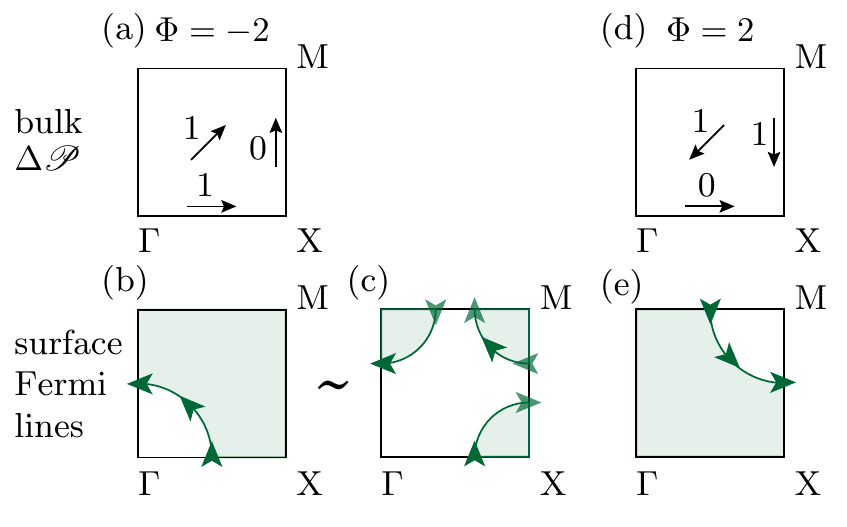}
    \caption{
    (a), (d) RTP invariants in the top right quadrant of the rBZ for the MRW model with parameter values (a) $\Phi=-2$ and (d) $\Phi=2$. 
    Arrows denote the direction along which the polarization is changing by the shown values.
    (b), (c), (e) Fermi lines are denoted by green lines and their orientations by green arrows. 
    Green shaded areas correspond to momenta in rBZ at which the surface-localized states are occupied. 
    In panel (b) we show the minimal number of Fermi lines that is compatible with the RTP invariants for the MRW model with $\Phi=-2$. 
    Another possible topology of Fermi lines is shown in panel (c), which differs from panel (b) by the addition/subtraction of Fermi loops that intersect the $\tx\tm$ rBZ lines twice and with opposite orientations.
    In both (b) and (c), the surface-band state at $\Gamma$ (at $\tx$ and $\tm$) is empty (occupied).
    (e) The areas of occupied surface states enclose different high-symmetry momenta in rBZ when the model parameter is changed to $\Phi=2$ and the RTP invariants acquire different values.
    }
    \label{fig:Fl-ambig}
\end{figure}

Now let us illustrate for the present RTP invariants [shown for the studied model in Fig.~\ref{fig:Fl-ambig}(a) with the arrows denoting the direction in which the polarization difference is calculated and the values equal to $\Delta\mathscr{P}$ in this direction] the possible topology of Fermi lines. 
In Fig.~\ref{fig:Fl-ambig}(b), we draw the minimal number of (oriented) Fermi lines compatible with the RTP invariants $\Delta\mathscr{P}_{\Pi_1\Pi_2}$, $\Pi_1,\Pi_2\in\{\Gamma, \tx, \tm\}$, $\Pi_1\neq\Pi_2$, within one quadrant of rBZ (denoted by green arrows). 
The Fermi lines at other quadrants can be obtained by a four-fold rotation of the displayed pattern around $\Gamma$. 
The area where the surface states are occupied with electrons is denoted by a green shade. 
Importantly, the RTP invariant defines only the minimal amount of Fermi lines that cross a given line connecting two rotation-invariant reduced momenta. 
As we show in Fig.~\ref{fig:Fl-ambig}(c), the connectivity of the Fermi lines can change, for instance, due to the addition of trivial Fermi loops located between $\tx$ and $\tm$, which do not enclose any rotation-invariant reduced momentum and which contribute to the summation in Eq.~(\ref{eq:RTP-velocity}) with a pair of states of opposite scalar velocity;
this new connectivity is presented in Fig.~\ref{fig:Fl-ambig}(c). 
However, the ambiguity in connecting the Fermi lines does not change the fact that for the given RTP invariants, the reduced momenta $\tx$ and $\tm$ are surrounded by areas with one more occupied surface state than in the area around the reduced momentum $\Gamma$. 

Reduced momenta around which the surface states are occupied can change when the values of the RTP invariant change. To see this, consider the MRW model with the parameter value $\Phi=2$.
The corresponding values of the topological invariants can be read from Fig.~\ref{fig:Fl-ambig}(d); in particular, note that the Hopf invariant coincides with the one for the previously considered parameter, i.e.,  $\chi[\Phi=-2]=\chi[\Phi=2]=-1$.
We see in Fig.~\ref{fig:Fl-ambig}(e) that with the new parameter value the areas with occupied surface states surround reduced momenta $\Gamma$ and $\tx$, instead of $\tm$ and $\tx$.

We emphasize that the assumption of a sharp boundary is essential for the robust surface states as it allows to use certain spectral properties of Toeplitz matrices. 
For non-sharp boundaries, the anomalous surface bands can be detached from the bulk bands. 
Because the energy eigenvalues of each anomalous surface band form a sheet over the rBZ, it is in principle possible that none of the sheets intersect the Fermi level, implying the existence of an energy gap to surface-state excitations.\footnote{This however does not rule out the existence of `gapless' excitations due to higher-order topology \cite{zhida_dminus2,Schindler_higherorderTI} -- a concept explored for delicate topological insulators in \ocite{penghaoAA_quantizedmagnetism}.} 
To illustrate  this point for the MRW model, we added to the bottom layer of the sharply-terminated tight-binding Hamiltonian an on-site potential that does not discriminate between $s$ and $p_+$ orbitals.
This leads to an energetic detachment of the anomalous surface-localized energy bands from the bulk-conduction band in  Fig.~\ref{fig:mrw_surface}(b), thus confirming that surface states are not robust for an arbitrary surface termination.\footnote{This is how we actually constructed the projectors $P$ and $Q$ with smooth-in-$\bk_\perp$ restrictions $P|_{\bk_\perp}$ and $Q|_{\bk_\perp}$ when generating the data in Fig.~\ref{fig:PzP}.} 
Finally, we show in Fig.~\ref{fig:mrw_surface}(c) that the anomalous surface-localized energy band can be energetically detached from \emph{both} bulk-conduction and bulk-valence bands for a particular, rotation-invariant surface termination.
However, despite the complete detachment, the itinerant angular momenta of the surface energy bands remain anomalous. This has non-trivial implications for their Zak phase, which we explore in the next subsection.

\subsection{Zak-phase anomaly as boundary signature of the returning Thouless pump \label{sec:BBC_zak}}

The angular-momentum anomaly at two rotation-invariant momenta $\Lambda$ and $\Xi$ implies the existence of `anomalous' faceted band(s) whose itinerant angular momentum is bulk-valence-like at one rotation-invariant, reduced momentum, 
say $\Lambda$, and bulk-conduction-like at another, say $\Xi$.  
This implies that there exists a symmetrically-chosen loop (to be specified below) in the reduced Brillouin zone which passes through both momenta, such that the Zak phase of the faceted band(s) (henceforth called the \textit{faceted Zak phase}) is topologically distinct from both the bulk-conduction Zak phase and the bulk-valence Zak phase on the corresponding BZ loop. 
We call this distinction of Zak phases the \emph{Zak-phase anomaly}. 

Our discussion is structured as follows. First, in Sec.~\ref{sec:fac-zak-def}, we define the faceted Zak phase. 
Then we proceed with Sec.~\ref{sec:bulk-Zak-def} to define the bulk-Zak phases. 
Having all the definitions at hand, in \s{sec:bbc-mrw} we exemplify the Zak-phase anomaly through the $\mathrm{P}4$-symmetric Moore-Ran-Wen model. 
Finally, in \s{sec:zakanomalymain}, we precisely formulate the Zak-phase anomaly  for all $\mathrm{P}n$ space groups. 
We clarify what we mean by two Zak phases being `distinct', and we summarize in Table~\ref{tab:models_summary} all the `symmetrically-chosen rBZ loop' for which the Zak-phase anomaly exists.

\subsubsection{Faceted Zak phase}\label{sec:fac-zak-def}

The faceted Zak phase is simplest to conceptualize in the case where the faceted band is a single surface-localized energy band that is completely detached from both the bulk-conduction and the bulk-valence bands, as illustrated in Fig.~\ref{fig:mrw_surface}(c) for the $\mathrm{P}4$-symmetric MRW model under non-sharp boundary conditions.
Letting $\ket{v(\bk_\perp)}$ denote the intra-cell\footnote{For the semi-infinite geometry, the `cell' is infinite in the $z$-direction.} wave function of the detached band, we define the \textit{faceted Zak phase}  as a line integral of the associated Berry connection over a symmetrically-chosen loop $\mathcal{S}$ in the rBZ ($d\bk_\perp$ denotes a line element along this loop):
\begin{align}
     \mathscr{Z}_f
    \eq
    \int\limits_{\mathcal{S}}d\bk_\perp\cdot\boldsymbol{\mathcal{A}}_\perp, \lin 
    \boldsymbol{\mathcal{A}}_\perp\eq(\mathcal{A}_x,\mathcal{A}_y),\as
    \mathcal{A}_\mu[v(\bk_\perp)]=i\bra{v(\bk_\perp)}\ket{\partial_\mu v(\bk_\perp)}.
    \label{eq:Zak-phase-rbz}
\end{align}
The path $\mathcal{S}$ is chosen so that the faceted Zak phase is quantized to integer multiples of $2\pi/n$ owing to the $\mathrm{P}n$ symmetry.
A few representative examples of symmetrically-chosen loops are illustrated in  \fig{fig:zak-paths}, but a full disclosure on the general form of the loops for any $\mathrm{P}n$ symmetry is deferred to \s{sec:zakanomalymain}. 
We emphasize that the faceted Zak phase inputs the wave functions of surface states independent of whether the states are occupied by fermions.

More generally, one may deal with any number of surface-localized energy bands (as long as they are detached from the bulk energy bands), or any number of surface-like polarization bands. 
Then the faceted Zak phases are obtained as the phases of the eigenvalues of a \textit{Wilson-loop matrix}; this matrix is a path-ordered exponential of the non-Abelian Berry connection:
\begin{equation}\label{eq:wilson-loop-rbz}
\begin{split}
    \mathcal{W}_f&= \hat{\mathrm{T}}\exp\left[i\int\limits_{\mathcal{S}}d\bk_\perp\cdot\boldsymbol{\mathcal{A}}_\perp[v(\bk_{\perp})]\right], \\ \mathcal{A}^{jj'}_\mu&=i\bra{v_j(\bk_\perp)}\ket{\partial_\mu v_{j'}(\bk_\perp)},
\end{split}
\end{equation}
where $\hat{\mathrm{T}}$ is the path-ordering operator, and $j,j'$ are band indices of the surface bands. 
An algorithm to extract the intra-cell wave functions $\ket{v_j(\bk_\perp)}$ for surface-like polarization bands was introduced in \ocite{nband-hopf} and is reviewed pedagogically in Appendix~\ref{app:wannier-cut}.

\subsubsection{Zak phase of the bulk bands}\label{sec:bulk-Zak-def}

The Zak-phase anomaly states that the faceted Zak phase is topologically distinct from the \textit{bulk-conduction Zak phase} as well as from the \textit{bulk-valence Zak phase}, which inputs the wave function of the bulk-conduction resp.~the bulk-valence band:
\begin{align}
     \mathscr{Z}_j
    \eq\int\limits_{\mathcal{S}(k_z)}d\bk_\perp\cdot\boldsymbol{\mathcal{A}}_\perp[u_j(\bk_{\perp},k_z)], \as j=v,c.
    \label{eq:Zak-phase-bulk}
\end{align}
The line integral is over a loop $\mathcal{S}(k_z)$ contained within a plane of fixed $k_z$; if projected in the $k_z$ direction, $\mathcal{S}(k_z)$ coincides with $\mathcal{S}$ defined for the faceted Zak phase. Note that $\mathscr{Z}_{v(c)}$ can depend on $k_z$ through $\mathcal{S}(k_z)$ for a \emph{general} two-band Hamiltonian.
However, the $C_n$ symmetry [in combination with the fact that $\mathcal{S}(k_z)$ is a symmetrically chosen path] imply that the admissible values of $\mathscr{Z}_j$ are quantized to integer multiples of $2\pi/n$, which reduces the prospective $k_z$-dependence to discontinuous jumps.
But such discontinuities are forbidden under the assumption that the two-band Hamiltonian is insulating, hence the bulk Zak phase ultimately has  no $k_z$ dependence.

\subsubsection{Case study: \texorpdfstring{$\mathrm{P}4$}{P4}-symmetric Moore-Ran-Wen model\label{sec:bbc-mrw}}

\begin{figure}
    \centering
    \includegraphics{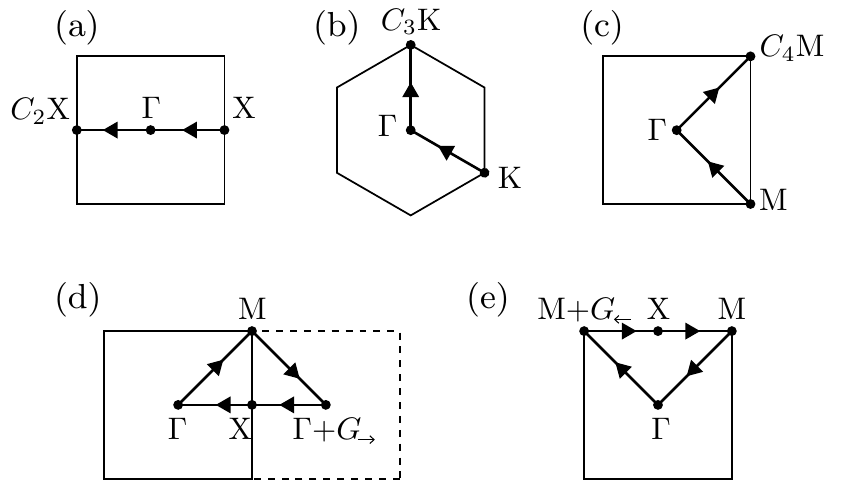}
    \caption{An (incomplete) list of symmetrically-chosen rBZ-loops for which a Zak-phase anomaly may exist. 
    (a) $C_2$-related-, (b) $C_3$-related- and (c) $C_4$-related-two-segment loops. 
    (d) and (e) Two exceptional four-segment loops, described in Sec.~\ref{sec:zakanomalymain}.}
    \label{fig:zak-paths}
\end{figure}

Let us illustrate the Zak-phase anomaly with the $\mathrm{P}4$-symmetric MRW model Hamiltonian given in  Eq.~\eqref{eq:hamMRW}. 
We focus on the parameter range $-3<\Phi<-1$ where the polarization protrudes (in the negative $z$ direction) at $\Gamma$, as illustrated in Fig.~\ref{fig:Hopfless-MRW-RTP}(d); this implies RTP invariants $\Delta\mathscr{P}_{\Gamma\tx}=1$ and $\Delta\mathscr{P}_{\Gamma\tm}=1$.
The corresponding energetically detached anomalous surface band is illustrated in Fig.~\ref{fig:mrw_surface}(c) for a suitable choice of a non-sharp boundary.

According to our formulation of the Zak-phase anomaly, we expect that the faceted Zak phase, computed for the anomalous surface band  in Fig.~\ref{fig:mrw_surface}(c), is distinct from the bulk Zak phases. 
Based on the just-mentioned RTP invariants, we further expect this mismatch of Zak phases for an rBZ-loop $\mathcal{S}_2$ passing through momenta $\Gamma$ and $\tx$, as well as an rBZ-loop $\mathcal{S}_4$ passing through $\Gamma$ and $\tm$.
For this mismatch to be quantized to integer multiples of $2\pi/4$, the rBZ-loops must be chosen symmetrically: we choose 
$\mathcal{S}_2$ (resp. $\mathcal{S}_4$) such that half the loop is mapped to the other half by a two-fold (resp.\ four-fold) rotation in the rBZ, as illustrated in Fig.~\ref{fig:zak-paths}(a, c). 
While $\mathcal{S}_2= \tx\Gamma(C_2\tx)$ can be chosen as a straight loop, $\mathcal{S}_4=\tm\Gamma(C_4\tm)$ necessarily has kinks at the four-fold-invariant momenta.~\footnote{Compare also to similar consideration of Berry phases on $C_3$-related paths by Ref.~\cite{bouhon_wilsonloopapproach} within the context of fragile topology.}
The advantage of these \textit{$C_m$-related loops} ($m=2,4)$ is that the Zak phase for a nondegenerate energy band is quantized to rational multiples of $2\pi$ and fully determined by the itinerant angular momenta at $\Gamma$, $\tx$ and $\tm$, according to a theorem in \ocite{TBO_JHAA} (cf.~Sec.~IV\kern 0.1em D\kern 0.1em 1 therein):
\begin{subequations}\label{eq:zak-wilson-mrw-gx+gm}
\e{
\frac{\mathscr{Z}_{\tx\Gamma(C_2\tx),j}}{2\pi} &=_1 \frac{\widetilde{\mathcal{L}}_j(\tx)-\widetilde{\mathcal{L}}_j(\Gamma)}{2},
\label{eq:zak-wilson-mrw-gx} \\
\frac{\mathscr{Z}_{\tm\Gamma(C_4\tm),j}}{2\pi} &=_1
\frac{\widetilde{\mathcal{L}}_j(\tm)-\widetilde{\mathcal{L}}_j(\Gamma)}{4},
\label{eq:zak-wilson-mrw-gm}
}
\end{subequations}
where $j\in\{v,c,f\}$ respectively denotes the bulk-valence, the bulk-conduction, and the faceted Zak phase. 

Let us offer a hand-waving argument that captures the essential intuition of the above results. 
The Zak phase for $\tm\Gamma(C_4\tm)$ can be expressed as a sum of two contributions: $\mathscr{Z}_{\tm\Gamma(C_4\tm)}=\omega_{\tm\Gamma}+\omega_{\Gamma(C_4\tm)}$, with $\omega_{\tm\Gamma}$ [resp.\ $\omega_{\Gamma(C_4\tm)}$] the contribution from the oriented line $\tm\Gamma$ [resp.\ $\Gamma(C_4\tm)$]. 
Note that $\omega_{\Gamma(C_4\tm)}$ receives as input the wave function along $\Gamma(C_4\tm)$, which is related by four-fold symmetry to the wave function along $\tm\Gamma$. 
However, $\tm\Gamma$ is mapped by four-fold rotation to the inverse of $\Gamma(C_4\tm)$, i.e.,  $\Gamma(C_4\tm)$ with opposite orientation. 
It follows that the contribution to   $\varphi_{\tm\Gamma(C_4\tm)}$ from an infinitesimal line element along $\tm\Gamma$ cancels the contribution from a rotation-related line element along  $\Gamma(C_4\tm)$; the only nontrivial contributions to $\mathscr{Z}_{\tm\Gamma(C_4\tm)}$ arise from applying four-fold rotation to the Bloch states at rotation-invariant momenta, thus giving   \q{eq:zak-wilson-mrw-gm}. 
This sketchy explanation is formalized in  Appendix~\ref{app:two-segm}.

Equations~(\ref{eq:zak-wilson-mrw-gx+gm}) can be applied to calculate the bulk-conduction Zak phase, the bulk-valence Zak phase, as well as the faceted Zak phase. 
For the bulk Zak phases, we input into the formula the values of itinerant angular momenta in the second row of the $\mathrm{P}4$ segment of Table~\ref{tab:itinerant_ell}. 
To compute the faceted Zak phase, we apply that the itinerant angular momentum of the anomalous surface band is bulk-valence-like at $\tx$ and $\tm$ and bulk-conduction-like at $\Gamma$ according to the angular-momentum anomaly [cf.\ \s{sec:anomaly}]. 
The final result of these computations confirms a mismatch between the bulk and faceted Zak phases, namely: 
\begin{equation}\label{eq:zak_mrw}
\begin{split}
    \textrm{for}\;\;\tx\Gamma(C_2\tx):\;\;&\mathscr{Z}_v=0,\;\;\mathscr{Z}_c=0, \;\;\mathscr{Z}_f=\pi, \\
    \textrm{for}\;\;\tm\Gamma(C_4\tm):\;\;&\mathscr{Z}_v=0,\;\; \mathscr{Z}_c=0,\;\; \mathscr{Z}_f=3\pi/2,    
\end{split}
\end{equation}
with all equalities understood to hold modulo $2\pi$. Similar computations can be performed for different choices of the parameter $\Phi$ in the MRW model, with the resultant values of the Zak phases summarized in the second row and last column of Table~\ref{tab:models_summary}.  

\begin{table*}
    \begin{threeparttable}
    \begingroup \centering
    \caption{
    Summary of the Hopf-RTP relation [cf.~Eq.~\eqref{eq:hopf-RTP}] and bulk-boundary correspondence [cf.~Eqs.~\eqref{thesame}, \eqref{eq:zak-bulk-surf} and \eqref{eqn:zakanomaly_all}] for the tight-binding models presented in Table~\ref{tab:model-zoo}. 
    The MRW model corresponds to the second listed model.
    The column `Phase diagram' specifies the values of the Hopf and RTP invariants in each phase of the given tight-binding models, specified by the values of the model parameter $\Phi$. 
    The column `Loop' lists all $\bk$-loops going through at least one pair of rotation-invariant momenta for which the RTP invariant is nontrivial and a Zak-phase anomaly exists. 
    The bulk Zak phases are given in columns `$\mathscr{Z}_v$' and `$\mathscr{Z}_c$'; the faceted Zak phase  is given in the column `$\mathscr{Z}_f(\Phi)$' as a function of the parameter $\Phi$.
    }    
    \label{tab:models_summary}
    \begin{ruledtabular}
    \begin{tabular}{l@{\hspace{0.25cm}}l@{\hspace{0.25cm}}l@{\hspace{0.8cm}}ll@{\hspace{0.2cm}}rccl}
         \multicolumn{3}{l}{Model symmetry} & \multicolumn{2}{c}{Hopf-RTP relation} & \multicolumn{4}{c}{Bulk-boundary correspondence} \vspace{0.2cm}\\
         SG & $\Delta\mathcal{L}$ & $\Delta \br_\perp$ & Hopf-RTP relation & \hspace{8.8 mm} Phase diagram & Loop \hspace{0.2cm} & $\mathscr{Z}_v$ & $\mathscr{Z}_c$ & $\mathscr{Z}_f(\Phi)$ \\ 
         \hline
         $\mathrm{P}4$ & 0 & $(1/2,1/2)$ & $\chi=_42\Delta\mathscr{P}_{\tx\tm}$ & \raisebox{-0.6 cm}
          {\includegraphics{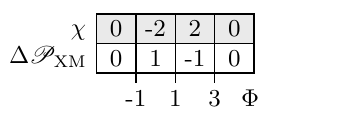}} & $\Gamma\tm(\Gamma+\bG_1)\tx\Gamma^{(1)}$& $0$& $0$ & \raisebox{-0.55 cm}
          {\includegraphics{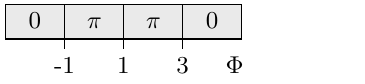}}\\
          & 1 & $(0,0)$ & $\chi=_4\Delta\mathscr{P}_{\Gamma\tm}+2\Delta\mathscr{P}_{\Gamma\tx}$ & \raisebox{-0.75 cm}
          {\includegraphics{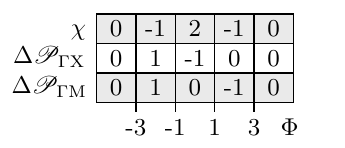}} & $\begin{array}{r} \tx\Gamma(C_2\tx)\\\tm\Gamma(C_4\tm)          \end{array}$
          & $\begin{array}{c} 0\\0\end{array}$ 
          & $\begin{array}{c} 0\\0\end{array}$ 
          & \raisebox{-0.7 cm}
          {\includegraphics{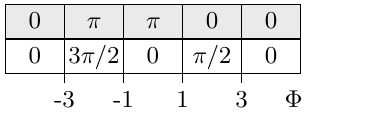}}\\
          & & $(1/2,1/2)$ & $\chi=_4\Delta\mathscr{P}_{\tm\Gamma}$ & \raisebox{-0.6 cm}
          {\includegraphics{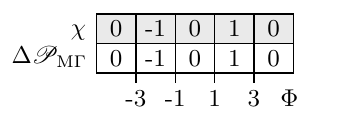}} & $\tm\Gamma(C_4\tm)$& $0$ & $\pi$ & \raisebox{-0.55 cm}
          {\includegraphics{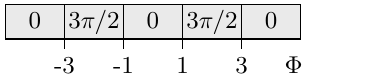}}\\
          & $2$ & $(0,0)$ & $\chi=_4 2\Delta\mathscr{P}_{\tm\Gamma}$ & \raisebox{-0.6 cm}
          {\includegraphics{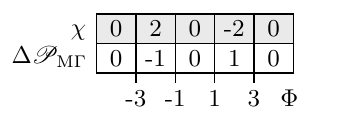}} & $\tm\Gamma(C_4\tm)$ & 0 & 0 & \raisebox{-0.55 cm}
          {\includegraphics{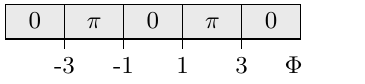}} \vspace{0.8cm}  \\   
         $\mathrm{P}3$ & 0 & $(1/3,1/3)$ & $\chi=_3\Delta\mathscr{P}_{\tkpr\tk}$ & \raisebox{-0.6 cm}
          {\includegraphics{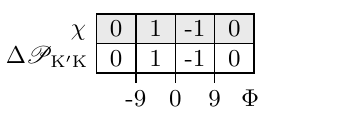}} & $\tkpr\tk(C_3^{-1}\tkpr)$ & 0 & $2\pi/3$ & \raisebox{-0.55 cm}
          {\includegraphics{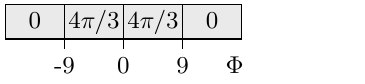}}\\
          & 1 & $(0,0)$ & $\chi=_3\Delta\mathscr{P}_{\Gamma\tk}+\Delta\mathscr{P}_{\Gamma\tkpr}$ & \raisebox{-0.75 cm}
          {\includegraphics{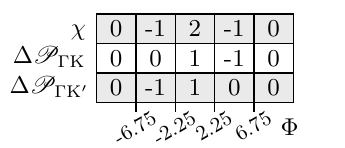}} & $\begin{array}{r} \tk\Gamma(C_3\tk) \\ \tkpr\Gamma(C_3\tkpr) \end{array}$ & $\begin{array}{c} 0\\0\end{array}$ & $\begin{array}{c} 0\\0\end{array}$ & \raisebox{-0.7 cm}
          {\includegraphics{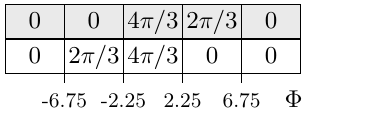}} \vspace{0.8cm} \\
         $\mathrm{P}6$ & $1$ & $(0,0)$ & $\chi=_62\Delta\mathscr{P}_{\Gamma\tk} + 3\Delta\mathscr{P}_{\Gamma\tm}$ & \raisebox{-0.75 cm}
          {\includegraphics{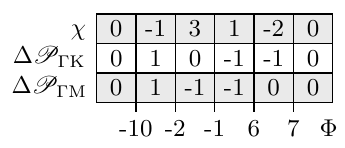}} & $\begin{array}{r} \tk\Gamma(C_3\tk) \\ \tm\Gamma(C_2\tm) \end{array}$ & $\begin{array}{c} 0\\0\end{array}$ & $\begin{array}{c} 0\\0\end{array}$ & \raisebox{-0.7 cm}
          {\includegraphics{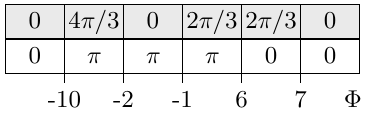}}\\
         & $2$ & $(0,0)$ & $\chi=_6 2\Delta\mathscr{P}_{\tk\Gamma}$ & \raisebox{-0.6 cm}
          {\includegraphics{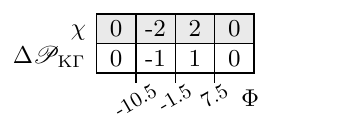}} & $\tk\Gamma(C_3\tk)$ & 0 & 0 & \raisebox{-0.55 cm}
          {\includegraphics{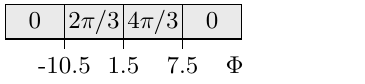}}\\
         & $3$ & $(0,0)$ & $\chi=_6 3\Delta\mathscr{P}_{\tm\Gamma}$ & \raisebox{-0.6 cm}
          {\includegraphics{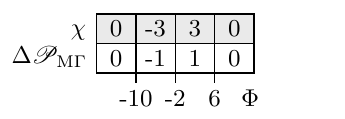}} & $\tm\Gamma(C_2\tm)$ & 0 & 0 & \raisebox{-0.55 cm}
          {\includegraphics{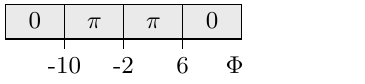}} \\
    \end{tabular}
    \end{ruledtabular}
    \endgroup
    \begin{tablenotes}
    \item[$^{(1)}$] $\bG_1=(2\pi, 0)$ is one of the reciprocal lattice vectors in the reduced BZ. 
    \end{tablenotes}
    \end{threeparttable}
\end{table*}

\subsubsection{Zak-phase anomaly: main result}\la{sec:zakanomalymain}

Having obtained some intuition about the Zak-phase anomaly in specific models, we now prove it in a more general setting.
Let us first specify the symmetrically-chosen loops for which a Zak-phase anomaly exists in correspondence with a nontrivial RTP invariant in general P$n$ symmetric models.
We assume that the RTP invariant $\Delta\mathscr{P}_{\Lambda\Xi}>0$ for two rotation-invariant reduced momenta $\Lambda$ and $\Xi$. The associated little groups are denoted by $\mathcal{G}_{\Lambda}$ and $\mathcal{G}_{\Xi}$, with both groups being  subgroups of the cyclic, order-$n$ group $C_n$, which is isomorphic to the point group  of $\mathrm{P}n$. 

Suppose there is an integer $m$ that satisfy the following two properties: \\

\noindent (\emph{i}) $m>1$ and divides the order of $\mathcal{G}_\Pi$, for both $\Pi\in \{\Lambda,\Xi\}.$\\

\noindent (\emph{ii}) The itinerant angular momenta of bulk-conduction and bulk-valence bands are distinct modulo $m$:
\e{\widetilde{\mathcal{L}}_c(\Pi) \neq_{m} \widetilde{\mathcal{L}}_v(\Pi),\label{eq:mut-disj-lx}}
for both $\Pi\in \{\Lambda,\Xi\}.$ \\

\noindent If there are multiple values of $m$ satisfying (\emph{i}--\emph{ii}), for concreteness we choose the largest such value for $m$. 
Then the Zak-phase anomaly exists for a suitably chosen oriented loop in rBZ. 
This loop is defined as a (possibly bent) path that concatenates two segments, $\Xi\Lambda$ and $\Lambda(\Xi+\bG)$, such that one segment is, up to a translation by a reciprocal lattice vector and with reversed orientation, equal to the $C_m$ rotation of the other segment:
$\Lambda (\Xi+\bG) = C_m(\Lambda\Xi) + \bG'$.
For brevity, we call such path as a \emph{$C_m$-related-two-segment loop.}
In Fig.~\ref{fig:zak-paths}(a--c), we illustrate a representative $C_m$-related-two-segment loop for all $m=2,3,4$.\footnote{A $C_6$-related-two-segment loop does not exist because there is only a single $C_6$-invariant reduced momentum -- the rBZ center. For Hamiltonians with six-fold symmetry, a nontrivial RTP can arise solely from a subgroup (e.g., $\mathrm{P}2$ or $\mathrm{P}3$) of the full space group.}

Let us formulate the Zak-phase anomaly for these two-segment loops, supposing that the faceted-band subspace consists of $\mathcal{N}$ linearly-independent Bloch states that are smoothly defined over the rBZ. We emphasize that this discussion equally applies to a set of surface-localized energy bands detached from the rest of the spectrum or of surface-like polarization bands defined by the partitioning illustrated in Fig.~\ref{fig:PzP}(b,c).

Parallel transport (within the faceted-band subspace, along a rBZ loop) is encoded in $\mathcal{N}$ faceted Zak phases $\{\mathscr{Z}^{i}_f\}_{i=1}^\mathcal{N}$, which are computed by diagonalizing the $\mathcal{N}$-by-$\mathcal{N}$ Wilson-loop matrix in \q{eq:wilson-loop-rbz}. 
For a generically chosen rBZ-loop, the $\mathcal{N}$ phases take generic values without preference for rational multiples of $2\pi$. 
However, for the $C_m$-related-two-segment rBZ-loops, we propose that $\Delta\mathscr{P}_{\Lambda\Xi}$-number of the faceted Zak phases (denoted $\{\mathscr{Z}^{i}_f \}_{i=1,\ldots,\Delta\mathscr{P}_{\Lambda\Xi}}$) are symmetry-fixed to a rational multiple of $2\pi/n$ that depends on the itinerant angular momenta of the bulk-valence and bulk-conduction bands:
\begin{equation}
    \frac{\mathscr{Z}^i_f}{2\pi} =_1\f{\widetilde{\mathcal{L}}_v(\Xi) - \widetilde{\mathcal{L}}_c(\Lambda)}{m}.\la{thesame}
\end{equation}
This result derives ultimately from the angular-momentum anomaly, which guarantees the existence of at least  $\Delta\mathscr{P}_{\Lambda\Xi}$-number of anomalous faceted bands with itinerant angular momenta that are bulk-conduction-like at $\Lambda$ and bulk-valence-like at $\Xi$:
\begin{align}
    \widetilde{\mathcal{L}}_f(\Lambda) = \widetilde{\mathcal{L}}_c(\Lambda); \as
    \widetilde{\mathcal{L}}_f(\Xi) = \widetilde{\mathcal{L}}_v(\Xi).
    \label{eq:surface_ell}
\end{align}
Application of a theorem from Sec.~IV\kern 0.1em D\kern 0.1em 1 in Ref.~\cite{TBO_JHAA} (which relates itinerant angular momenta to Zak phases) gives us  \q{thesame}, as elaborated in Appendix~\ref{app:two-segm}. 
Equation~\eqref{thesame} should be contrasted with the Zak phases of the bulk-valence and bulk-conduction bands:
\begin{equation}\label{eq:zak-bulk-surf}
\begin{split}
\f{\mathscr{Z}_v}{2\pi} &=_1 \f{ \widetilde{\mathcal{L}}_v(\Xi) - \widetilde{\mathcal{L}}_v(\Lambda) }{m}, \\
\f{\mathscr{Z}_c}{2\pi} &=_1 \f{\widetilde{\mathcal{L}}_c(\Xi) - \widetilde{\mathcal{L}}_c(\Lambda)}{m},
\end{split}
\end{equation}
which also follow directly from applying the just-stated theorem to the itinerant angular momenta of each individual bulk band. 
Both bulk Zak phases are distinct from each of the anomalous faceted Zak phase, as can be proven by the method of contradiction.\footnote{Assuming that $\mathscr{Z}_f=_{2\pi}\mathscr{Z}_v$,  \q{eq:zak-bulk-surf} implies that $\widetilde{\mathcal{L}}_v(\Lambda)=_m\widetilde{\mathcal{L}}_c(\Lambda)$ which contradicts the initial assumption \eqref{eq:mut-disj-lx}. Similarly we can prove that $\mathscr{Z}_f=_{2\pi}\mathscr{Z}_v$ leads to a contradiction.} 

By going through all possible RTP invariants in all possible $\mathrm{P}n$-symmetric two-band Hamiltonians, we find only two exceptions to the rule that an integer $m$ exists satisfying (\emph{i}--\emph{ii}). 
In both these exceptions (to be elaborated below), we propose that the Zak-phase anomaly exists for a \emph{four-segment} loop going through four rotation-invariant reduced momenta  $\{\Lambda, \Xi, \Omega,\Omega+\bG\}$, with $\Omega$ and $\bG$ specified below.  \\

\noindent \textit{Exception 1:} A $\mathrm{P}4$-symmetric tight-binding Hilbert space that is a sum of two basis band representations, induced from two orbitals with onsite angular momentum difference $\Delta\mathcal{L}=0$ and centered at $C_4$-invariant Wyckoff positions with relative distance $\Delta\br_\perp=(1/2, 1/2)$ [for example at positions $1a$ and $1b$ defined in \fig{fig:wp-rotinvmom}(a)].
According to Table~\ref{tab:itinerant_ell}, the mutually-disjoint condition in this case is fulfilled only at reduced momenta $\tx$ and $\tm$, hence there is only one independent RTP invariant $\Delta\mathscr{P}_{\tx\tm}$. In the first row of Table~\ref{tab:model-zoo}, we present a model Hamiltonian that realizes
Exception 1, with a one-parameter phase diagram summarized in the first row of Table~\ref{tab:models_summary}. 
The itinerant angular momenta of the bulk bands satisfy
\begin{equation}\label{eq:mut-disj-C4s1as1b-xm}
\begin{split}
    \widetilde{\mathcal{L}}_v(\tx)&\neq_2\widetilde{\mathcal{L}}_c(\tx), \\
    \widetilde{\mathcal{L}}_v(\tm)&\neq_4\widetilde{\mathcal{L}}_c(\tm), \\
    \textrm{but}\;\;\widetilde{\mathcal{L}}_v(\tm)&=_2\widetilde{\mathcal{L}}_c(\tm),
\end{split}
\end{equation}
implying that the choice $m{=}4$ violates the property (\emph{i}) at $\tx$, while the choice $m{=}2$ violates the property (\emph{ii}) at $\tm$. 
One may verify that a Zak-phase anomaly does not exist for a $C_2$-related-two-segment loop $\tm\tx(\tm+\bG)$, but it exists for four-segment loop going through $\{ \Lambda=\tm,\Xi=\tx,\Omega=\Gamma, \Gamma+(2\pi, 0)\}$, as illustrated in~\fig{fig:zak-paths}(d).\\

\noindent \textit{Exception 2:} A $\mathrm{P}4$-symmetric tight-binding Hilbert space that is identical to Exception 1, except the onsite angular momenta difference is $\Delta\mathcal{L}=2$. 
In analogy with the first exception, we find the appropriate four-segment rBZ-loop that goes through $\{ \Lambda=\Gamma,\Xi=\tx,\Omega=\tm, \tm+(-2\pi,0)\}$, as illustrated  in \fig{fig:zak-paths}(e). \\

\noindent In both Exceptions, the mutually-disjoint condition holds at $\Lambda$ and $\Xi$, but not at $\Omega$. [As a reminder, the last statement means that the  itinerant angular momentum at $\Omega$ is identical for both bulk-valence and bulk-conduction states: $\widetilde{\mathcal{L}}_v(\Omega)=\widetilde{\mathcal{L}}_c(\Omega)\equiv \widetilde{\mathcal{L}}(\Omega).$] 

By generalizing techniques developed in \ocite{TBO_JHAA}, we derive in \app{app:four-segm} a relation between the Zak phases (for these four-segment rBZ-loops) and the itinerant angular momenta of the bulk bands.
We find that $\Delta\mathscr{P}_{\Lambda\Xi}$-number of the faceted Zak phases (denoted $\{\mathscr{Z}^i_f \}_{i=1,\ldots,\Delta\mathscr{P}_{\Lambda\Xi}}$) are symmetry-fixed to a rational multiple of $2\pi/n$ that depends on the bulk itinerant angular momenta as: 
\begin{subequations}
\label{eqn:zakanomaly_all}
\begin{equation}
\label{eq:zakanomaly_foursegment}
\f{\mathscr{Z}^i_f}{2\pi} =_{1} \frac{\widetilde{\mathcal{L}}(\Omega)-\widetilde{\mathcal{L}}_c(\Lambda)}{m_\Lambda} + \frac{\widetilde{\mathcal{L}}(\Omega)-\widetilde{\mathcal{L}}_v(\Xi)}{m_\Xi},
\end{equation}
where $m_{\Lambda(\Xi)}$ denotes the maximal order of rotation symmetry of the little group $\mathcal{G}_{\Lambda(\Xi)}$.
This rational value is distinct (mod $2\pi$) from the symmetry-fixed values of the bulk-valence and bulk-conduction Zak phases:
\begin{equation}
\begin{split}
\f{\mathscr{Z}_v}{2\pi} =_1\frac{\widetilde{\mathcal{L}}(\Omega)-\widetilde{\mathcal{L}}_v(\Lambda)}{m_\Lambda} + \frac{\widetilde{\mathcal{L}}(\Omega)-\widetilde{\mathcal{L}}_v(\Xi)}{m_\Xi}, \\
\f{\mathscr{Z}_c}{2\pi} =_1 \frac{\widetilde{\mathcal{L}}(\Omega)-\widetilde{\mathcal{L}}_c(\Lambda)}{m_\Lambda} + \frac{\widetilde{\mathcal{L}}(\Omega)-\widetilde{\mathcal{L}}_c(\Xi)}{m_\Xi}.
\label{eq:bulkzak_foursegment}
\end{split}
\end{equation}
\end{subequations}
Once again, we can show by contradiction that the faceted Zak phase is always distinct from the bulk-valence and bulk-conduction Zak phases. 

To illustrate the Zak-phase anomaly for a four-segment loop, we employ a model Hamiltonian presented in the first row of Table~\ref{tab:model-zoo}, and substitute the bulk values of itinerant angular momenta into Eqs.~(\ref{eqn:zakanomaly_all}), to derive the faceted and bulk Zak phases detailed in the first row of Table~\ref{tab:models_summary}. 
The remaining entries in  Table~\ref{tab:models_summary} summarize the Zak-phase anomaly for various two-segment loops, and for all remaining tight-binding models in Table~\ref{tab:model-zoo}.

\section{Relating the RTP and Hopf invariants} \label{sec:RTP-Hopf}

By considering the $\bk\cdot\bp$ description near Berry-dipolar band-touching points, we have previously argued in Sec.~\ref{sec:Hopf-RTP-correspondence} that the RTP and Hopf invariants are not completely independent; a specific relation [cf.~\q{eq:hopf-rtp-example}] between the two invariants was extracted heuristically  for a class of Hamiltonians with $\mathrm{P}4$ symmetry. 
Here, we derive this $\mathrm{P}4$ result more systematically, and further generalize the Hopf-RTP relation to all $\mathrm{P}n$ space groups, with the main result encapsulated in Eqs.~(\ref{hopfrtp346}--\ref{eq:hopf-RTP}) of \s{sec:mainresult}.
Representative case studies from all $\mathrm{P}n$ space groups are presented in \s{sec:modelRTPHopf} to corroborate the Hopf-RTP relation. 

We prove the Hopf-RTP relation in three different ways, with each offering distinct insights:\\  

\noindent (\emph{i}) In \s{sec:hopf-rtp-motivation}, we prove the relation assuming that all phase transitions are mediated by Berry dipoles (by generalizing the arguments in Sec.~\ref{sec:Hopf-RTP-correspondence}). \\

\noindent (\emph{ii}) In Sec.~\ref{sec:proof-RTP-Hopf}, we relax the just-mentioned assumption, and prove the Hopf-RTP relation more generally by relating the angular-momentum anomaly of RTP invariants (formulated in Sec.~\ref{sec:anomaly})s
to the bulk-boundary correspondence of Hopf insulators.\\

\noindent (\emph{iii}) In \s{sec:linking-RTP-Hopf}, we  prove the relation purely from a bulk perspective, by utilizing an interpretation of  the Hopf invariant in terms of linking numbers (as was briefly introduced in Sec.~\ref{sec:Berry-transitions}).

\subsection{Main result: Hopf-RTP relation}\label{sec:mainresult}

Let us first formulate the RTP invariants in the most general context of a two-band, $\mathrm{P}n$-symmetric, insulating  Hamiltonian with trivial first Chern class.

Consider a subset of rotation-invariant reduced momenta $\{\Pi_j\}_{j=1,\ldots,J_\alpha}$ in the rBZ. 
For the polarization difference between any pair $j,j'$ in this subset to be quantized to integer values, it is necessary that: \\

\noindent (\emph{i}) The itinerant angular momenta of both bulk-conduction and bulk-valence bands satisfy the mutually-disjoint condition for all $j$:
\e{ \Delta\widetilde{\mathcal{L}}(\Pi_j)=\widetilde{\mathcal{L}}_c(\Pi_j)-\widetilde{\mathcal{L}}_v(\Pi_j) \;\text{mod}\; m_{\Pi_j}
\neq 0.\la{mutualdisj}}

\noindent (\emph{ii}) For all $j$, the restriction of the valence band representation ($\mathrm{VBR}$) to ${\gamma_{\Pi_j}}$ is identical to the restriction of a basis band representation to $\gamma_{\Pi_j}$:
\e{ \exists\,\alpha\in \{1,2\}\;\, \textrm{s.t.} \,\;\forall j\in \{1,\ldots, J_\alpha\},\;\, \mathrm{VBR}\bigg|_{\gamma_{\Pi_j}}\!\!\!\!\!\!=\mathrm{BBR}[\varphi_\alpha]\bigg|_{\gamma_{\Pi_j}}\!\!\!\!\!\!.\la{mutualdisj2}}
Here, $\mathrm{BBR}[\varphi_\alpha]$ denotes the basis band representation induced from tight-binding-basis orbital $\varphi_\alpha$ with positional center $\br_\alpha=(\br_{\alpha,\perp},z_\alpha)$.  
Note the equality in \q{mutualdisj2} holds with the same basis band representation (specified by $\alpha=1,2$) for all $j$.\\

\noindent One may view condition (\emph{ii}) as a generalization of the iso-orbital condition in \q{isoorbital} for any number of rotation-invariant reduced momenta. 
Conditions (\emph{i}--\emph{ii}) jointly imply that the polarizations at all $\Pi_j$ are symmetry-fixed, modulo integers, to~$z_\alpha$. 

We take the order ($J_\alpha$) of the subset to be maximal, in the sense that every rotation-invariant reduced momentum satisfying the iso-orbital condition for BBR$[\varphi_\alpha]$ and the mutually-disjoint condition should be included in the subset. Consequently, if the subset contains a $C_m$-invariant reduced momentum $\Pi_j$, it also contains all momenta from the corresponding $n/m$-plet obtained from $\Pi_j$ by a $C_n$ rotation.
Because there are only two basis band representations in a two-band Hamiltonian, all `mutually-disjoint' rotation-invariant reduced momenta in the rBZ are sortable into at most two maximal `iso-orbital' subsets.

\subsubsection{\texorpdfstring{$\mathrm{P}n$}{Pn}-symmetric Hopf-RTP relation for \texorpdfstring{$n=3,4,6$}{n=3,4,6}}

For space groups $\mathrm{P}n$ ($n=3,4,6$), a
combination of symmetry ($\mathrm{P}n$), topology (triviality of Chern class) and Hilbert-space constraint (two-band Hamiltonian) ensures that all reduced momenta in the rBZ satisfying the mutually-disjoint condition belong to a \textit{single} iso-orbital subset $\{\Pi_j\}_{j=1\dots J}$, as shown in Appendix~\ref{app:no-band-inv-from-Chern}. 
Then the RTP invariants relate to the Hopf invariant and the itinerant-angular-momenta differences through
\e{
\ins{For $\mathrm{P}n$} (n=3,4,6), \as \chi&=_n   \sum_{j=2}^{J} \Delta\widetilde{\mathcal{L}}(\Pi_{j})\;\Delta\mathscr{P}_{\Pi_1\Pi_{j}}.\la{hopfrtp346} 
}
Within the momenta subset, the choice of $\Pi_1$ can be made arbitrarily without  affecting  the above relation modulo $n$.

\subsubsection{\texorpdfstring{$\mathrm{P}2$}{P2}-symmetric Hopf-RTP relation}

For space group $\mathrm{P}2$, depending on the equivalence class\footnote{Two Hamiltonians are equivalent if they  can be continuously deformed into one another while preserving gap and $\mathrm{P}2$ symmetry.} of the Hamiltonian, either \\

\noi{P2-I} the rBZ contains only a single maximal momentum subset satisfying (\emph{i}--\emph{ii}), in which case one may also apply the formula \eqref{hopfrtp346} with $n=2$, or\\

\noi{P2-II} the four rotation-inequivalent rotation-invariant momenta of the rBZ can be divided into two maximal subsets, with one subset $\{\Pi_{j_1}^1\}_{j_1=1,\ldots,J_1}$ satisfying (\emph{i}--\emph{ii}) for BBR$[\varphi_1]$, and the other $\{\Pi_{j_2}^2\}_{j_2=1,\ldots,J_2}$ satisfying (\emph{i}--\emph{ii}) for BBR$[\varphi_2]$. 
It is shown in Appendix~\ref{app:no-band-inv-from-Chern} that neither $J_1$ nor $J_2$ can be odd, lest the first Chern invariant in the $(k_x,k_y)$-plane is also odd.
Focusing on Hamiltonians with trivial first Chern class restricts our consideration then to $J_1=J_2=2$, with the following, modified Hopf-RTP relation:
\e{
\text{For P2-II}, \as \chi&=_2  \sum_{\alpha=1}^2
\Delta\mathscr{P}_{\Pi^\alpha_1\Pi^\alpha_2}. \la{eq:hopf-RTP-P2II}
}
Within any maximal iso-orbital momenta subset indexed by $\alpha$, the choice of $\Pi_1^\alpha$ can be made arbitrarily without  affecting  the above relation modulo $2$. 
A representative $\mathrm{P}2$-symmetric model of class P2-I can be obtained by lowering the symmetry of either of the 
$\mathrm{P}4$-symmetric models 
{presented in the second and third rows of} \tab{tab:model-zoo}, while an example realization of a class P2-II Hamiltonian is presented in Sec.~\ref{sec:P2-example}.

\subsubsection{Summarizing Hopf-RTP relation for all P$n$ space groups}

The previously-presented Hopf-RTP relations can be viewed as particular instances of a summarizing Hopf-RTP relation: 
\e{
 \text{For any P}n, \as \chi&=_n  \sum_{\alpha=1}^2 \sum_{j_\alpha=2}^{J_\alpha} \Delta\widetilde{\mathcal{L}}(\Pi_{j_\alpha}^\alpha)\;\Delta\mathscr{P}_{\Pi^\alpha_1\Pi^\alpha_{j_\alpha}}, \la{eq:hopf-RTP}
 }
 with one of $\{J_1,J_2\}$ possibly being zero.

\subsection{Model Hamiltonians exemplifying the Hopf-RTP relation}\la{sec:modelRTPHopf}

\subsubsection{Examples for \texorpdfstring{$\mathrm{P}n$}{Pn} (\texorpdfstring{$n=3,4,6$}{n=3,4,6}) models}

For each $\mathrm{P}n$-symmetric ($n=3,4,6$) model Hamiltonian presented in \tab{tab:model-zoo}, we display in Table \ref{tab:models_summary} the particular forms of the Hopf-RTP relation, derived from \q{hopfrtp346} by inputting the model-dependent itinerant angular momenta listed in Table~\ref{tab:itinerant_ell}. 
Additionally, Table \ref{tab:models_summary} provides a phase diagram with the values of the Hopf and RTP invariants for each phase, corroborating the relation between the invariants.

While the Tables only explicitly describe models whose tight-binding-basis orbitals differ in on-site angular momenta by $0\leq\Delta\mathcal{L}\leq n/2$, models with $\Delta\mathcal{L}<0$ can be obtained from the tabulated models by time reversal. 
One may verify that if the Hopf-RTP relation [\q{hopfrtp346}] holds for a model $h_{\Delta \mathcal{L}}(\bk)$, it holds consistently for the time-reversed model $h_{-\Delta \mathcal{L}}(\bk)$ [related by time reversal to $h^*_{\Delta \mathcal{L}}(-\bk)$ via \q{timereverse}].\footnote{
For this consistency check, we utilize that the Hopf invariant and the polarization transform under time reversal as in \q{timerevinv}, while the difference in itinerant angular momentum changes as  
\begin{equation}
\Delta\widetilde{\mathcal{L}}(-\Pi;h_{-\Delta\mathcal{L}})=m-\Delta\widetilde{\mathcal{L}}(\Pi;h_{\Delta\mathcal{L}})\label{eqn:DelL-transfo}
\end{equation}
for a $C_m$-invariant momentum $\Pi$. This equality can be derived from \q{eq:itin-am-from-basis} with the observation that $\bG_{-\Pi}=C_m(-\Pi)+\Pi=-\bG_{\Pi}$, and that time reversal inverts on-site angular momenta while preserving Wyckoff positions.  Combining these time-reversal relations into the statements of the Hopf-RTP relation [\q{hopfrtp346}], we get
\begin{align}
    -\chi[h_{\Delta\mathcal{L}}]&\stackrel{\textrm{\q{hopfrtp346}}}{=_n} -  \sum_{j=2}^{J} \Delta\widetilde{\mathcal{L}}(\Pi_{j};h_{\Delta\mathcal{L}})\;\Delta\mathscr{P}_{\Pi_1\Pi_{j}}[h_{\Delta\mathcal{L}}]\lin
    &\stackrel{\textrm{Eqs.~(\ref{eqn:DelL-transfo},\ref{timerevinv})}}{=_n}-\sum_{j=2}^{J} \left[m_j-\Delta\widetilde{\mathcal{L}}(-\Pi_{j};h_{-\Delta\mathcal{L}})\right]\;\Delta\mathscr{P}_{-\Pi_1,-\Pi_{j}}[h_{-\Delta\mathcal{L}}] \lin
    &=_n\sum_{j=2}^{J} \Delta\widetilde{\mathcal{L}}(-\Pi_{j};h_{-\Delta\mathcal{L}})\;\Delta\mathscr{P}_{-\Pi_1,-\Pi_{j}}[h_{-\Delta\mathcal{L}}] \lin
    &=_n\sum_{i=2}^{J} \Delta\widetilde{\mathcal{L}}(\Pi_{i};h_{-\Delta\mathcal{L}})\;\Delta\mathscr{P}_{\Pi_1,\Pi_{i}}[h_{-\Delta\mathcal{L}}]\stackrel{\textrm{\q{timerevinv}}}{=_n}\chi[h_{-\Delta\mathcal{L}}]. \label{eq:hoptrtpTR}
\end{align}
To derive the third row, we used that $C_m$-invariant momenta always appear in $n/m$-plets with equal values of polarization, and therefore we deduced, that the term $\sum m_j\Delta\mathscr{P}_{-\Pi_1,-\Pi_j}$ will contribute multiples of $n$ to the sum. 
To derive the fourth row, we used that a maximal subset $\{\Pi_j\}_{j=1\dots J}$ satisfying mutually disjoint and iso-orbital conditions in a $h[\Delta\mathcal{L}]$ model is in one-to-one correspondence with a maximal iso-orbital subset $\{-\Pi_j\}_{j=1\dots J}$ in a $h[-\Delta\mathcal{L}]$ model. 
Therefore one may relabel reduced momenta in the sum and arrive at the fourth row of Eq.~\eqref{eq:hoptrtpTR}. 
Using the fact that the Hopf invariant changes sign under timer reversal \eqref{timerevinv}, we conclude that if the Hopf-RTP relation holds for $h_{\Delta\mathcal{L}}$ then it must also hold for $h_{-\Delta\mathcal{L}}$.}

\subsubsection{Model Hamiltonian in class \texorpdfstring{$\mathrm{P}2\mathrm{-II}$}{P2-II} \label{sec:P2-example}}

To exemplify the Hopf-RTP relation in Eq.~\eqref{eq:hopf-RTP-P2II}, we construct an explicit model with a valence band representation that is not symmetry-equivalent to a basis band representation. 
To do so, we start with the $\mathrm{P}4$-symmetric MRW model [Eq.~\eqref{eq:hamMRW}] and add a term that reduces the symmetry to~$\mathrm{P}2$:
\begin{equation}
    h^\mathrm{inv.}=f(1-\cos k_x)\sigma_z,
\end{equation}
which closes the bulk energy gap along two out of the four rotation-invariant lines ($\gamma_\tx$ and $\gamma_\tm$) for $0.5\leq f \leq 12.5$. 
After the gap reopens, the valence band representation satisfies
\begin{equation}
\begin{split}
    \mathrm{VBR}\bigg|_{\gamma_\Gamma\cup\,\gamma_{\ty}}&=\mathrm{BBR}[\varphi_1],\bigg|_{\gamma_\Gamma\cup\,\gamma_{\ty}}, \\
    \mathrm{VBR}\bigg|_{\gamma_{\tm}\cup\,\gamma_{\tx}}&=\mathrm{BBR}[\varphi_2],\bigg|_{\gamma_{\tm}\cup\,\gamma_{\tx}},
\end{split}
\end{equation}
while the triviality of the Chern class is preserved.
Since the mutually-disjoint condition is fulfilled at all rotation-invariant momenta, the complete set of RTP invariants is $\Delta\mathscr{P}_{\Gamma\mathrm{Y}},\Delta\mathscr{P}_{\tx\tm}\in\mathbb{Z}$. For the presented model, the Hopf-RTP relation reads
\begin{equation}
    \chi=_2 \Delta\mathscr{P}_{\Gamma\ty}+\Delta\mathscr{P}_{\tm\tx}.
\end{equation}

\subsection{Hopf-RTP relation from Berry dipoles\label{sec:hopf-rtp-motivation}}

\subsubsection{Berry-dipole transitions lead to Eq.~(\ref{hopfrtp346})} 

Our goal here is to prove the Hopf-RTP relation in \q{hopfrtp346}, assuming that one begins from a canonical trivial phase and that all subsequent band touchings are Berry-dipolar. 
By `canonical trivial', we mean a phase with only on-site energies and zero `hopping'  matrix elements of the real-space tight-binding Hamiltonian. 
In such a phase, the Hopf and RTP invariants vanish, and both bulk-valence and bulk-conduction bands are basis band representations. 
In the following proof we assume the valence band representation to be induced from basis orbital $\varphi_1$, and the conduction band representation from $\varphi_2$. Then the itinerant angular momenta of the valence and conduction bands are compatible with the Berry dipole $\bk\cdot\bp$ Hamiltonian of the form given by Eqs.~\eqref{eq:kp-sym-spinor} and \eqref{eq:zszs-hamiltonian} (cf. discussion in footnote~\ref{foot:spinor2}). We comment on the opposite choice of the orbitals inducing valence and conduction bands at the end of this section.

Let us modify the Hamiltonian and transit to a distinct gapped phase via a Berry-dipole critical point. 
We assume that a Berry-dipole band touching occurs at an $(n/m)$-multiplet of $C_m$-invariant momenta $\{\bk^\textrm{t}=(\Pi^\textrm{t},k_z^\textrm{t})\}_{\textrm{t}=1,\ldots,n/m}$. 
As explained in \s{sec:Hopf-RTP-correspondence} [Eq.~(\ref{eq:chi-total-change})], the Hopf invariant changes by an integer value:  
\begin{subequations}
\e{\delta \chi=-\sum_{\textrm{t}=1}^{n/m}\upsilon(\bk^\textrm{t}) \Delta\ell(\bk^\textrm{t})=-\f{n}{m}\upsilon(\bk^1) \Delta\ell(\bk^1), \label{eq:chi-change-1}} 
with helicity $\upsilon(\bk^\textrm{t})=\pm 1$ and spin $\Delta\ell(\bk^\textrm{t})\in \Z$ being parameters of the Berry-dipole $\bk\cdot \bp$ Hamiltonian [cf.\ \q{eq:kp-sym-spinor}] and $\bk^1$ being a representative momentum of the $n/m$-multiplet. 
At the same time, according to Eq.~\eqref{eq:Berry-dipole-pol-change}, the Zak phase at reduced momentum $\Pi_t$ must change by $-2\pi \upsilon(\bk^\textrm{t})$, implying (by the geometric theory of polarization) that the polarization at $\Pi^\textrm{t}$ changes by $\delta\mathscr{P}(\Pi^\textrm{t})= -\upsilon(\bk^\textrm{t})$. 
Combining the last two equalities relates the change in the Hopf invariant to the change in polarization:
\e{   \delta\chi \eq \f{n}{m} \Delta{\ell}(\bk^1)\;\delta\mathscr{P}(\Pi^1). \la{combinelast}}
The integer-valued $\Delta{\ell}(\bk^1)$ defined through the continuous-rotation-invariant, Berry-dipole Hamiltonian can be identified, modulo $m$, as the difference in  itinerant angular momenta 
between conduction and valence states at $\bk^1$ [cf.\ \q{eqn:l-L-mod-m}].

One may therefore substitute  $\Delta{\ell}\ri \Delta \widetilde{\mathcal{L}}$ in \q{combinelast} if we relax the equality to an equivalence modulo $n$:
\e{  \delta\chi =_n \f{n}{m} \Delta\widetilde{\mathcal{L}}(\Pi^1)\;\delta\mathscr{P}(\Pi^1)=\sum_{\Pi}\Delta\widetilde{\mathcal{L}}(\Pi)\;\delta\mathscr{P}(\Pi). \label{eq:chi-change-3}}
In the last equality, we 
extended the expression to sum over all $C_{m_{\Pi}}$-invariant ($m_{\Pi}>1$) momenta $\Pi$, with the understanding that values of $\Pi$ not belonging to the $(n/m)$-multiplet do not contribute to the sum, as the change in polarization $\delta\mathscr{P}(\Pi)$ is zero for these non-belongers. 

Next, we apply that 
differences $\delta\mathscr{P}$ in polarization are unaffected by the choice of spatial origin, hence one may as well redefine the polarization $\mathscr{P}\ri \mathscr{P}'=\mathscr{P}-z_1$, with $z_1$ the z-coordinate of the positional center of basis orbital $\varphi_1$: 
\begin{align}
    \delta\chi=_n \sum_{\Pi}\Delta\widetilde{\mathcal{L}}(\Pi)\,\delta[\mathscr{P}(\Pi)-z_1].
    \label{eq:deltachi-deltap}
\end{align}
Beginning from the canonical trivial phase and tuning a Hamiltonian parameter, the net change of the Hopf/RTP invariants after a series of band touchings is simply a sum of changes attributed to each band touching, and moreover $\delta \chi$ is linearly related to $\delta[\mathscr{P}(\Pi)-z_1]$ [cf.\ Eq.~\eqref{eq:deltachi-deltap}]. This implies the Hopf invariant is related to the polarization as 
\begin{equation}
    \chi=_n\sum_{\Pi}\Delta\widetilde{\mathcal{L}}(\Pi)[\mathscr{P}(\Pi)-z_1].
    \label{eq:chi-pol-z1}
\end{equation}
By our assumption that the Hamiltonian has a RTP, there exists at least one rotation-invariant momentum (call it $\Pi'$) where the mutually-disjoint condition is satisfied: $\Delta\widetilde{\mathcal{L}}(\Pi')\neq 0$; the polarization $\mathscr{P}(\Pi')$ must then equal $z_1$ modulo integer, according to an argument presented below \q{eq:pol-def}. By replacing  $z_1$ in \q{eq:chi-pol-z1} by $\mathscr{P}(\Pi')$ (mod 1), one arrives 
at
\begin{equation}
    \chi=_n \!\sum_{\Pi}\!\Delta\widetilde{\mathcal{L}}(\Pi) [\mathscr{P}(\Pi)-\mathscr{P}(\Pi')]+ (\text{integer})\!\sum_{\Pi}\Delta\widetilde{\mathcal{L}}(\Pi).
    \label{eq:chi-pol-z12}
\end{equation}
The sum $\sum_{\Pi}\Delta\widetilde{\mathcal{L}}(\Pi)$ in the last term may be recognized, modulo $n$, as the first Chern invariant over a 2D subtorus (of the BZ) parametrized by $k_x$ and $k_y$ [cf.~Appendix~\ref{app:Chern-via-ell} for details]. But this Chern invariant vanishes by our assumption that the first Chern class is trivial.
Therefore, we can simply drop the last term in \q{eq:chi-pol-z12} under the equivalence modulo $n$:
\begin{equation}
    \chi=_n\sum_{\Pi}\Delta\widetilde{\mathcal{L}}(\Pi)\Delta \mathscr{P}_{\Pi' \Pi}; \as \Delta \mathscr{P}_{\Pi' \Pi} = \mathscr{P}(\Pi)-\mathscr{P}(\Pi').
    \label{eq:chi-pol-via-Berrydip}
\end{equation}
\end{subequations}
Assuming that there is only one maximal iso-orbital subset $\{\Pi_j\}_{j=1\dots J}$, the above equation is equivalent to \q{hopfrtp346}. To recognize this equivalence, note in the summation over all rotation-invariant reduced momenta, that $\Delta\widetilde{\mathcal{L}}(\Pi)\neq 0$ only when $\Pi$ belongs to the subset $\{\Pi_j\}_{j=1\dots J}$, and one may as well choose $\Pi'=\Pi_1$ (a representative member of this subset) without changing the relation modulo $n$. 

Let us finally show, that the derived result does not depend on the choice of the orbital inducing the valence band representation. Indeed, if instead of $\varphi_1$ we choose it to be $\varphi_2$, the spinor of a Berry dipole $\bk\cdot\bp$ Hamiltonian compatible with the itinerant angular momenta of the valence and conduction bands is given by Eq.~\eqref{eq:kp-sym-spinor-2}. As was discussed in footnote~\ref{foot:hopf-change-2}, the change in the Hopf invariant in this case is of an opposite sign than in Eq.~\eqref{eq:chi-change-1}. Simultaneously, the differences in itinerant angular momenta between the basis Bloch functions and between conduction and valence bands are related with an extra minus sign $\Delta\widetilde{\mathcal{L}}(\Pi)=_{m_\Pi}-\left(\widetilde{\mathcal{L}_2}(\Pi)-\widetilde{\mathcal{L}_1}(\Pi)\right)$. Therefore, for a $\bk\cdot\bp$ expansion around $\Pi$, $\Delta\ell=_{m_\Pi} -\Delta\widetilde{\mathcal{L}}(\Pi)$, and Eq.~(\ref{eq:chi-change-3}) holds without modifications.
 
Note that so far we didn't reproduce Eq.~(\ref{eq:hopf-RTP-P2II}) which applies for the exceptional $\mathrm{P}2$-symmetric case, which leads us to raise the following question: could there have been two maximal iso-orbital subsets within the described construction? 
The answer is that \emph{not under our assumption that all band touchings are Berry-dipolar}, as we proceed to clarify.

\subsubsection{Beyond the Berry-dipole assumption}\la{sec:relaxdipole}

We have just derived the Hopf-RTP relation \q{hopfrtp346} based on an assumption that the only band touchings are Berry dipoles. What Hamiltonian phases are missed under this assumption? Could it be that the Hopf-RTP relation takes a different form in these `missed phases'? The short answer is that there are no `missed phases' for space groups $\mathrm{P}(n=3,4,6)$; however, there exists `missed phases' for $\mathrm{P}2$, for which a more general Hopf-RTP relation in \q{eq:hopf-RTP} holds.

What exactly distinguishes the `missed phases' from the phases that are not missed? To pin this down, one needs to appreciate that a Berry-dipole band touching does not transform the band representation of the  bulk-valence subspace into a distinct band representation. In more detail, if one begins from the canonical trivial phase with the bulk-valence band being a basis band representation  BBR$[\varphi_j]$, with $\varphi_j$ a basis orbital with on-site angular momentum $\mathcal{L}_j$ and Wyckoff position $\br_j$, then any subsequent Berry-dipole band touchings will not change the fact that the bulk-valence band is a band representation induced from an $(\mathcal{L}_j,\br_{j,\perp})$-orbital; the $z$-coordinate of the Wyckoff position will, however, generically differ. This follows from three facts:\\

\noindent (\emph{i}) Recall that the  Berry dipole is, by definition, a dipolar source of Berry curvature with zero monopole charge. Hence \textit{Fact 1 (topological formulation): a Berry dipole cannot change the first Chern class of the bulk-valence band.}
A theorem proven in \ocite{nogo_AAJH} tells that a unit-rank $\mathrm{P}n$-symmetric band  with trivial first Chern class is simply a band representation of $\mathrm{P}n$. 
By `unit-rank', we mean that the band is given by a single linearly-independent Bloch function $\psi_{\bk}$ over the Brillouin torus, as is certainly true for the bulk-valence band of a two-band Hamiltonian. 
Thus an equivalent statement is \textit{Fact 1 (crystallographic formulation): a Berry-dipole transition preserves the band-representability of  a unit-rank bulk-valence band.}\\

\noindent (\emph{ii}) Because a Berry-dipole band touching occurs at an isolated $\bk$-point, while the itinerant angular momentum is a property of a $\bk$-line, a Berry-dipole touching cannot change the itinerant angular momenta of the bulk-valence band.
Hence \textit{Fact 2: the bulk valence band has exactly the same itinerant angular momenta as a basis band representation induced from $\varphi_1$.} \\

\noindent (\emph{iii}) \textit{Fact 3 (crystallographic formulation): two unit-rank band representations of $\mathrm{P}n$ are equivalent as band representations if and only if they have  exactly the same itinerant angular momenta.} 
To be equivalent as unit-rank band representations of $\mathrm{P}n$ means that their corresponding representative Wannier orbitals have identical on-site angular momenta $\mathcal{L}$ and $xy$-projected Wyckoff positions $\br_{\perp}$; we have been denoting equivalence classes of band representations by BR$(\mathcal{L},\br_{\perp})$. 
Applying the same theorem in Ref.~\cite{nogo_AAJH}, an equivalent restatement is \textit{Fact 3 (topological formulation): given a $\mathrm{P}n$-symmetric, unit-rank band with trivial Chern class, its itinerant angular momenta are in a one-to-one correspondence with an
equivalence class of band representations.} 
Fact 3 is proven in \app{app:onetoone}, and contrasts with the other known fact that if the Chern class is nontrivial, the Chern invariants are \textit{not} in one-to-one correspondence with the itinerant angular momenta~\cite{Chen_bulktopologicalinvariants}.\\
 
\noindent Synthesizing these three facts, we find that in any Hamiltonian phase connected to the canonical trivial phase via Berry-dipole transitions, the valence band is a band representation in the same equivalence class [BR$(\mathcal{L}_j,\br_{j,\perp})]$ as for BBR$[\varphi_j]$. 
Thus we must admit the possibility of `missed phases' where the bulk-valence band is a band representation but  is not symmetry-equivalent to  either of BBR$[\varphi_j]$.

Actually, for two-band Hamiltonians with space groups P$n$  ($n=3,4,6$),  the triviality of the first Chern class and the Hilbert-space constraint of a two-band Hamiltonian altogether imply that the valence band representation is symmetry-equivalent to one of BBR$[\varphi_j]$, as proven in  Appendix~\ref{app:no-band-inv-from-Chern}. 
This suggests that \q{hopfrtp346} holds without exceptions for the P$n$  ($n=3,4,6$) space groups. 
This is true, and the most direct way to prove this is by the methods of Sec.~\ref{sec:proof-RTP-Hopf} or Sec.~\ref{sec:linking-RTP-Hopf}. 
On the other hand, a trivial Chern class in the $\mathrm{P}2$ space group does \textit{not} imply the bulk-valence band is symmetry-equivalent to one of BBR$[\varphi_j]$. 
Thus the more general Hopf-RTP relation in \q{eq:hopf-RTP} is only needed for the $\mathrm{P}2$ space group. For models in class P2-I this general relation coincides with Eq.~\eqref{hopfrtp346}, as for P$n$ ($n=3,4,6$), while for models in class P2-II it takes the form in Eq.~\eqref{eq:hopf-RTP-P2II}. This latter relation is proven separately in Sec.~\ref{sec:hopf-RTP-proof-non-BBR}.

\subsection{Hopf-RTP relation from angular-momentum anomaly}\label{sec:proof-RTP-Hopf}

Having discussed the bulk-boundary correspondence of the RTP insulator, we employ the angular-momentum anomaly [introduced in \s{sec:anomaly}] to prove the general, mod-$n$ Hopf-RTP relation in Eq.~\eqref{eq:hopf-RTP}. 
This proof will use the known bulk-boundary correspondence of the Hopf insulator~\cite{AA_teleportation}, which we first review in Sec.~\ref{sec:BBC-Hopf}. 
To make the proof self-contained, we recapitulate in Sec.~\ref{sec:BBC-rtp} a few salient features of the bulk-RTP-boundary correspondence [discussed at length in Sec.~\ref{sec:BBC}]. 
Subsequently, in Secs.~\ref{sec:BBC-Hopf-RTP-BBR} and \ref{sec:hopf-RTP-proof-non-BBR}, we relate the surface signatures of both Hopf and RTP invariants, and then derive the relation between their bulk counterparts.
Throughout this section, we consider a semi-infinite geometry where the insulator occupies a space with $z>0$; our boundary will be the bottom $(z=0)$ facet.

\subsubsection{Review: Bulk-boundary correspondence of the Hopf insulator}\label{sec:BBC-Hopf}

Hopf insulators do not necessarily have conducting surface states whose energies interpolate between the bulk valence and conduction bands; any mid-gap surface states can be removed from the gap by a gap-preserving Hamiltonian deformation [cf.~Fig.~\ref{fig:mrw_surface}(b)]. 
Despite being removable from the energy gap, the totality of all surface-localized states (occupied and unoccupied) possesses a  Chern number (which we call a \textit{faceted Chern number}) that is equal to the bulk Hopf invariant~$\chi$~\cite{AA_teleportation}. 
We note that this equality holds in the convention of the surface normal vector $\hat{\bm{n}}$ pointing outside of the sample. For the bottom facet this means that the normal vector is oriented opposite to the unit vector along the $z$-axis: $\hat{\bm{n}}=-\hat{\bm{e}}_z$.

To rationalize the stated equality between the faceted Chern number and the Hopf invariant, consider that the total surface anomalous Hall conductance on a semi-infinite geometry must vanish, when the contributions from all states (occupied or not) in the Hilbert space are added up. 
The bulk-conduction and bulk-valence bands each contribute a surface Hall conductance of $-\chi/2$ with the same sign \cite{rauch_sahc} [that the Hopf invariant is equal (in magnitude and in sign) for both the bulk-conduction and the bulk-valence bands follows from a fundamental antisymmetry of all Hamiltonians that are a sum of Pauli matrices: $\sigma_y h(\bk)^* \sigma_y=-h(\bk)$].
For the net surface Hall conductance to vanish, there must exist faceted states with a nontrivial, compensating Chern number: $\mathscr{C}_f=\chi$.

As with the faceted Zak phases described in Sec.~\ref{sec:BBC_zak}, there are two approaches to compute the faceted Chern number $\mathscr{C}_f$. First, one can find a Hamiltonian deformation that detaches surface-localized bands from the rest of the spectrum [e.g., cf.~Fig.~\ref{fig:mrw_surface}(c)],  and then compute the Chern number by standard numerical algorithms. Alternatively, one can compute the Chern number of the surface-like polarization bands, employing a method introduced in \ocite{nband-hopf} and reviewed in Appendix~\ref{app:wannier-cut}.

\subsubsection{Recapitulation: Bulk-boundary correspondence of the RTP insulator\label{sec:BBC-rtp}}

There exists at most two maximal subsets  of rotation-invariant reduced momenta in the rBZ satisfying the mutually-disjoint \eqref{mutualdisj} and iso-orbital \eqref{mutualdisj2} conditions. 
We denote them as $\{\Pi_{j_\alpha}^\alpha\}_{j_\alpha=1,\ldots, J_\alpha}$, with $\alpha{=}1,2$ a subset index. 
Picking $\Pi_1^\alpha$ as a representative member in each subset, a complete set of independent RTP invariants is given by $\{\Delta\mathscr{P}_{\Pi_1^1\Pi_{j_1}^1}\}_{j_1=2\dots J_1}\bigcup\{\Delta\mathscr{P}_{\Pi_1^2\Pi_{j_2}^2}\}_{j_2=2\dots J_2}.$\footnote{According to Appendix~\ref{app:no-band-inv-from-Chern}, a maximal subset containing only one element (i.e., one reduced momentum) is incompatible with the Hamiltonian having trivial Chern class, therefore there is at least one well-defined RTP invariant for each non-empty maximal subset.}

An angular-momentum anomaly is associated to  each of the independent RTP invariant $\Delta\mathscr{P}_{\Pi_1^\alpha\Pi_{j_\alpha}^\alpha}$ [cf.\ Sec.~\ref{sec:anomaly}]. To recapitulate what this means: defining $\mathcal{N}_c(\Pi)$ [resp.\ $\mathcal{N}_v(\Pi)$] as the total number of faceted states whose itinerant angular momenta are bulk-conduction-like (resp.\ bulk-valence-like) at $\Pi$, then 
\begin{equation}\la{anomaly2}
\begin{split} 
\mathcal{N}_v(\Pi_{j_\alpha}^\alpha) &= \mathcal{N}_v(\Pi_1^\alpha)+\Delta\mathscr{P}_{\Pi_1^\alpha\Pi_{j_\alpha}^\alpha}, \\
\mathcal{N}_c (\Pi_{j_\alpha}^\alpha) &= \mathcal{N}_c(\Pi_1^\alpha)-\Delta\mathscr{P}_{\Pi_1^\alpha\Pi_{j_\alpha}^\alpha}, 
\end{split}\end{equation}
with the total number of faceted bands being independent~of~$\Pi$:
\e{
\mathcal{N}_\mathrm{tot} \eq \mathcal{N}_v(\Pi)+\mathcal{N}_c(\Pi).
\la{ntot}}

\subsubsection{Hopf-RTP relation for a valence band that is symmetry-equivalent to a basis band representation\label{sec:BBC-Hopf-RTP-BBR}}

Assuming that the valence band is symmetry-equivalent to a basis band representation, there is only one maximal subset of reduced momenta $\{\Pi_{j}\}_{j=1\dots J}$ fulfilling the mutually-disjoint and iso-orbital conditions, and therefore a complete set of independent RTP invariants is given by $\{\Delta\mathscr{P}_{\Pi_1\Pi_{j}}\}_{j=2\dots J}$.

Our next step is to relate the surface signatures of the RTP and Hopf invariants, by utilizing a relation between Chern numbers and itinerant angular momenta derived in  \ocite{Chen_bulktopologicalinvariants} for p$n$-symmetric Hamiltonians. 
A simple manipulation of their relation (detailed in Appendix~\ref{app:Chern-via-ell}) allows to express  the faceted Chern number, modulo $n$, as the sum of itinerant angular momenta of all faceted states at \textit{all} rotation-invariant reduced momenta:
\begin{subequations}
\e{
    \mathscr{C}_f &=_n-\sum_{i=1}^{\mathcal{N}_\mathrm{tot}}
    \sum_{\Pi\in \text{all}}\widetilde{\mathcal{L}}_i(\Pi), \as \widetilde{\mathcal{L}}_i(\Pi) \in \{0,1,\ldots, m_{\Pi}-1\}.
    \label{eq:C-facet-via-ell}
}
By `all rotation-invariant reduced momenta', we mean not just $C_n$-invariant momenta, but also any $C_{m_{\Pi}}$-invariant momentum with $1<m_{\Pi}<n$, where $m_{\Pi}$ is the order of the little group of $\Pi$. Note that each $C_{m_\Pi}$-invariant momentum appears as a member of an $n/m_\Pi$-plet of symmetry-related momenta; in Eq.~(\ref{eq:C-facet-via-ell}) we include all members of each multiplet. Each of $\widetilde{\mathcal{L}}_i(\Pi)$ should be understood as mod-$m_{\Pi}$ variables taking values in $\{0,1,\ldots, m_{\Pi}-1\}$. 
The minus sign in \q{eq:C-facet-via-ell} [which is absent in Eq.~(\ref{eq:app:Chern-via-ell}) in the appendix] reflects the fact that the facet normal is oriented opposite to the rotation axis $\hat{\bm{z}}$. 
One may then apply \q{ntot} to separate faceted states into two categories, namely with bulk-valence-like vs.~bulk-conduction-like itinerant angular momenta:\footnote{For $\Pi$ not belonging to the maximal subset $\widetilde{\mathcal{L}}_v(\Pi)=\widetilde{\mathcal{L}}_c(\Pi)$, meaning that partitioning of $\mathcal{N}_\textrm{tot}$ into $\mathcal{N}_v(\Pi)$ and $\mathcal{N}_c(\Pi)$ is not unique. However, the equality of itinerant angular momenta ensures that the derivation goes through \emph{irrespective} of the chosen partitioning. A convenient choice which helps to derive the subsequent \q{eq:C-facet-separrate-bulk} is to set $\mathcal{N}_{v(c)}(\Pi)=\mathcal{N}_{v(c)}(\Pi_1)$ for all $\Pi$ not belonging to the maximal subset and $\Pi_1$ being the chosen representative number of the maximal subset.
}
\e{
    \mathscr{C}_f     &=_n -\sum_{\Pi\in \text{all}}\left[\mathcal{N}_v(\Pi)\widetilde{\mathcal{L}}_v(\Pi)+\mathcal{N}_c(\Pi)\widetilde{\mathcal{L}}_c(\Pi)\right].
    \label{eq:C-facet-via-ell2}
}
Inserting the angular-momentum-anomaly [Eq.~(\ref{anomaly2})] in \q{anomaly2}, we obtain
\begin{equation}\label{eq:C-facet-separrate-bulk}
\begin{split}
    \mathscr{C}_f     &=_n-\bigg[\mathcal{N}_v(\Pi_1)\left(\sum_{\Pi\in \text{all}}\widetilde{\mathcal{L}}_v(\Pi)\right)+\mathcal{N}_c(\Pi_1)\left(\sum_{\Pi\in \text{all} } \widetilde{\mathcal{L}}_c(\Pi)\right) \\
    &\qquad\quad+\sum_{j=2}^{J}\Delta\mathscr{P}_{\Pi_1\Pi_{j}}\left(\widetilde{\mathcal{L}}_v(\Pi_{j})-\widetilde{\mathcal{L}}_c(\Pi_{j})\right)\bigg].
\end{split}
\end{equation}
The first two sums 
on the right-hand side individually vanish modulo $n$ as they are related to the Chern numbers of the bulk-valence resp.~bulk-conduction bands, which are assumed to be zero in studied Hamiltonians.\footnote{To elaborate, consider that $(-)\sum_{\Pi\in\mathrm{all}}\widetilde{\mathcal{L}}_v(\Pi)$ [resp.\ $(-)\sum_{\Pi\in\mathrm{all}}\widetilde{\mathcal{L}}_c(\Pi)$] equals (modulo $n$) the Chern number in the $(k_x,k_y)$-plane of the bulk-valence [resp.\ bulk-conduction] band, as derived in \app{app:Chern-via-ell}. Our assumption that the Hamiltonian has trivial Chern class implies that the first two sums in Eq.~\eqref{eq:C-facet-separrate-bulk} are zero modulo $n$.}
The third and only remaining term on the right-hand side of \q{eq:C-facet-separrate-bulk} represents the effect of the anomaly in replacing  angular momenta which are bulk-conduction-like with those that are bulk-valence-like, or vice versa (depending on the sign of the RTP invariant.) 
A simple rewriting gives: 
\begin{equation}
    \mathscr{C}_f=_n\sum_{j=2}^{J}\Delta\mathscr{P}_{\Pi_1\Pi_{j}}\Delta\widetilde{\mathcal{L}}(\Pi_{j}),
    \label{eq:C-facet-ell-relation}
\end{equation}
\end{subequations}
with $\Delta\widetilde{\mathcal{L}}(\Pi) = \widetilde{\mathcal{L}}_c(\Pi)-\widetilde{\mathcal{L}}_v(\Pi)\pmod{m_\Pi}$.\footnote{The fact that all $C_{m_\Pi}$-invariant momenta from an $n/m_\Pi$-plet are present in the sum ensures that a possible multiples-of-$m_\Pi$ difference between $\Delta\widetilde{\mathcal{L}}(\Pi)$ and $\widetilde{\mathcal{L}}_c(\Pi)-\widetilde{\mathcal{L}}_v(\Pi)$ does not modify Eq.~\eqref{eq:C-facet-ell-relation} modulo $n$.}
Using the equality between the faceted Chern number and the bulk Hopf invariant $\chi=\mathscr{C}_f$ [cf.\ \s{sec:BBC-rtp}], we arrive at the mod-$n$ Hopf-RTP relation in \q{hopfrtp346}. 
We also emphasize that, in contrast to Sec.~\ref{sec:hopf-rtp-motivation}, the present derivation made no assumption whatsoever concerning the phase transitions encountered when reaching the final Hamiltonian from the `canonical trivial' phase.

\subsubsection{Hopf-RTP relation for a valence band that is not symmetry-equivalent to a basis band representation\label{sec:hopf-RTP-proof-non-BBR}}

We show in Appendix~\ref{app:no-band-inv-from-Chern} that of the $\mathrm{P}n$ groups, it is only the $\mathrm{P}2$ group that allows for crystalline Hopf insulators whose valence band representation is not  symmetry-equivalent to a basis band representation.
This case arises only when the mutually-disjoint condition is satisfied for all rotation-invariant momenta:
\begin{equation}\forall \Pi_j, \as \Delta\widetilde{\mathcal{L}}(\Pi_j)=1, \label{eq:mut-disj-P2}
\end{equation}
and when the two maximal subsets of reduced momenta (fulfilling the iso-orbital condition) each have exactly two elements, which we denote as $\{\Pi^\alpha_1,\Pi^\alpha_2\}_{\alpha=1,2}$; there are then only two RTP invariants $\{\Delta\mathscr{P}_{\Pi^\alpha_1\Pi^\alpha_2}\}_{\alpha=1,2}$.

To prove the Hopf-RTP relation \eqref{eq:hopf-RTP-P2II}, we again express the faceted Chern number in terms of the itinerant angular momenta of the faceted states, i.e., we utilize \q{eq:C-facet-via-ell} with $n=m_{\Pi}=2$ for all $\Pi$. 
Because each term is integer-valued, one may freely alter plus and minus signs [in \q{eq:C-facet-via-ell}] without modifying the equality modulo 2. 
Next we insert into \q{eq:C-facet-via-ell}  the angular-momentum-anomaly [Eqs.~\ref{anomaly2}], obtaining:
\begin{equation}\la{toinsert}
\begin{split}
    \mathscr{C}_f    &=_2  \sum_{\alpha=1}^2 \bigg[\mathcal{N}_v(\Pi_1^\alpha)\sum_{j=1}^2 \widetilde{\mathcal{L}}_v(\Pi_j^\alpha)+(v\ri c)\bigg]\\
    &\as +\sum_{\alpha=1}^2\Delta\mathscr{P}_{\Pi^\alpha_1\Pi^\alpha_2}\Delta\widetilde{\mathcal{L}}(\Pi^\alpha_2).
    \end{split}
\end{equation}
The square-bracketed terms in the first line of \q{toinsert} sum to an even integer\footnote{To reveal that this sum is indeed even, we rewrite the bracketed terms as
\begin{subequations}
    \begin{align}
        \sum_{\alpha=1}^2 \sum_{j=1}^2 \bigg\{\mathcal{N}_v(\Pi_1^1) \widetilde{\mathcal{L}}_v(\Pi_j^\alpha) + \left[\mathcal{N}_v(\Pi_1^\alpha)-\mathcal{N}_v(\Pi_1^1)\right]\widetilde{\mathcal{L}}_v(\Pi_j^\alpha)+(v\ri c)\bigg\}.
    \end{align}
The first term is proportional to $\sum_{\Pi\in \text{all}}\widetilde{\mathcal{L}}_v(\Pi)$, which has the same parity as the first Chern number of the bulk-valence band; hence this term vanishes (modulo $2$) under our assumption that the Hamiltonian has trivial first Chern class.
Likewise, the analogous term with $[v\ri c]$ vanishes modulo $2$, and we are left with
    \begin{align}
        &\sum_{j=1}^2\left[\left\{\mathcal{N}_v(\Pi_1^2)-\mathcal{N}_v(\Pi_1^1)\right\}\widetilde{\mathcal{L}}_v(\Pi_j^2)+\left\{\mathcal{N}_c(\Pi_1^2)-\mathcal{N}_c(\Pi_1^1)\right\}\widetilde{\mathcal{L}}_c(\Pi_j^2)\right] \lin
        &=_2\!\!\sum_{j=1}^2\left[\left\{\mathcal{N}_v(\Pi_1^2)-\mathcal{N}_v(\Pi_1^1)\right\}\!\widetilde{\mathcal{L}}_v(\Pi_j^2)+\left\{-\mathcal{N}_v(\Pi_1^2)+\mathcal{N}_v(\Pi_1^1)\right\}\!\widetilde{\mathcal{L}}_c(\Pi_j^2)\right].\la{tocom}
    \end{align}
In obtaining the last row, we used that $\mathcal{N}_c(\Pi)=\mathcal{N}_\mathrm{tot}-\mathcal{N}_v(\Pi)$ and that $\mathcal{N}_\mathrm{tot}$ appearing twice and with opposite signs in the second curly-bracketed term is cancelled out in the expression.
Combining both terms in \q{tocom}, we get
    \begin{align}
        \sum_{j=1}^2\left\{\mathcal{N}_v(\Pi_1^1)-\mathcal{N}_v(\Pi_1^2)\right\}\Delta\widetilde{\mathcal{L}}(\Pi_j^2) &=_2\left\{\mathcal{N}_v(\Pi_1^1)-\mathcal{N}_v(\Pi_1^2)\right\}\sum_{j=1}^2 1=_20.
    \end{align}
    \end{subequations}
} 
and therefore can be dropped from the modulo $2$ equality.
Then, bearing in mind that $\chi=\mathscr{C}_f$ and $\Delta\widetilde{\mathcal{L}}=1$ [cf.\ \q{eq:mut-disj-P2}], the desired Hopf-RTP relation \eqref{eq:hopf-RTP-P2II} follows from \q{toinsert}.

\subsection{Hopf-RTP relation from linking numbers}\label{sec:linking-RTP-Hopf}

An alternative proof of the Hopf-RTP relation derives from the linking-number interpretation of the Hopf invariant, which we briefly introduced in \s{sec:Berry-transitions} [cf.~Fig.~\ref{fig:Hopf-linking}]. 
Namely, the Hamiltonian with a flattened spectrum $h_\mathrm{flat}(\bk)=\boldsymbol{h}_\mathrm{flat}(\bk)\cdot\bsigma$ [defined for the spinor-form Hamiltonian in \q{eq:zszs-hamiltonian-normalized}] is viewed as a continuous map from the BZ three-torus to the Bloch two-sphere. The corresponding Hopf invariant equals the linking number of the (oriented) preimages of \textit{any two} distinct points on the Bloch two-sphere~\cite{Hopf:1931,wilczekzee_linkingnumbers,kennedy_hopfchern}. 

For instance, let us arbitrarily pick a point $b$ on the Bloch two-sphere [green dot in Fig.~\ref{fig:whitehead-min}(b)]. 
Because the Hamiltonian maps between spaces whose dimensions differ by unity, the preimage of $b$, denoted $\gamma(b)$, is generically a one-dimensional submanifold of the BZ. We will make a few assumptions to simplify the first-round discussion, deferring the general treatment to \app{app:hopf-rtp-whitehead}. 
The first assumption is that $\gamma(b)$ is topologically equivalent to a contractible circle, as illustrated with green path in Fig.~\ref{fig:whitehead-min}(a).
Then, $\gamma(b)$ is equipped with a positive-integer-valued multiplicity $\mu[\gamma(b)]$ and an orientation $\bm{\tau}[\gamma(b)]$, 
which is a vector tangential to $\gamma(b)$ at each $\bk_0\,{\in}\,\gamma(b)$.

We proceed by defining these quantities. 
Let $\tinyloop$ [blue in Fig.~\ref{fig:whitehead-min}(a)] be an infinitesimal loop linking with $\gamma(b)$ and centered at $\bk_0$ [black dot in Fig.~\ref{fig:whitehead-min}(a)]; as $\bk$ is varied around $\tinyloop$, the image of $\bk$ on the Bloch sphere will make $\mu[\gamma(b)]$ number of revolutions (or `orbits') around $b$, as illustrated in Fig.~\ref{fig:whitehead-min}(b). 
Given that the direction of these revolutions is governed by a right-hand rule with a thumb pointing inside of the Bloch sphere, the orientation $\bm{\tau}[\gamma(b)]$ at $\bk_0$ is determined using the right-hand rule by the direction in which $\bk$ is varied around $\tinyloop$, as illustrated with a green arrow in Fig.~\ref{fig:whitehead-min}(a).

\begin{figure}
    \centering
    \includegraphics{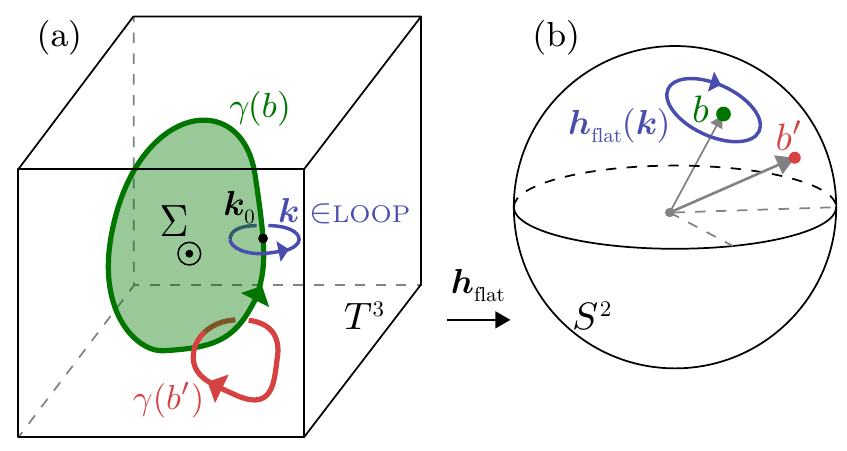}
    \caption{
    Two-band, insulating Hamiltonian with flattened spectrum is represented as a map from a three-torus to a two-sphere, $h_\mathrm{flat}(\bk)=\boldsymbol{h}_\mathrm{flat}(\bk)\cdot\bsigma$ with $\boldsymbol{h}_\mathrm{flat}\in S^2$. 
    The preimage of a point $b\in S^2$ [green dot in (b)] is generically a one-dimensional BZ-submanifold $\gamma(b)$  [green line in (a)] with the orientation defined by the winding of $\boldsymbol{h}_\mathrm{flat}(\bk)$ around the point $b$ [blue line in (b)] as $\bk$ varies around infinitesimal ${\tinyloop}$ [blue line in (a)] centered at $\bk_0\in\gamma(b)$ [black dot in (a)].
    The preimage line $\gamma(b)$ is a boundary to an oriented surface $\Sigma$ [green shaded area in (a)]. 
    Preimage $\gamma(b')$ of another point $b'\in S^2$ [red dot in (b)] intersects $\Sigma$ a number of times equal to $|\chi|$.
    }
    \label{fig:whitehead-min}
\end{figure}

It is useful to introduce a Gaussian surface $\Sigma$ bounded by $\gamma(b)$; any surface bounded by $\gamma(b)$ will do [cf.~Fig.~\ref{fig:whitehead-min}(a) for an example of such a surface]. 
The surface $\Sigma$ inherits an orientation from the orientation of $\gamma(b)$ via the right-hand rule: if the fingers are aligned along the positively oriented direction of $\partial\Sigma = \gamma(b)$, then the thumb gives the positive orientation of $\Sigma$. 
The linking-number interpretation states that for any point $b'\neq b$ on the Bloch two-sphere, the corresponding preimage $\gamma(b')$ will link with $\gamma(b)$ with a \emph{linking number}\footnote{
The notion of `linking number' must be generalized to contexts beyond the Hopf map from $S^3$ to $S^2$. For crystallographic-symmetry-constrained maps, it is useful to adopt a \textit{generalized linking number} which encodes not just that two loops link but also the multiplicity $\mu[\gamma(b)]$ of each loop.  
(In the Hopf map, $\mu=1$  and is a non-issue.) 
The generalized linking number of  two loops [$\gamma(b)$ and $\gamma(b')$] is then $\mu[\gamma(b)]\times\mu[\gamma(b')]$
multiplied with the usual notion of linking number. A subsequent use of terminology (`intersection number') in the main text should also be understood in this generalized sense.\label{foot:Hopf-linking} } that is equal to the Hopf invariant $\chi$~\cite{Hopf:1931,wilczekzee_linkingnumbers,kennedy_hopfchern}. 
This means that if we parametrize  $\gamma(b')$ by a variable that we interpret as `time', then $\gamma(b')$ will intersect the Gaussian surface $\Sigma$ a total of $|\chi|$ `times'; the sign of $\chi$ distinguishes two topologically distinct ways in which two oriented loops can link [cf.~Fig.~\ref{fig:link-sign}]. 
Because this linking holds for any choice~\cite{kennedy_hopfchern} of $b'$ over the Bloch sphere (for fixed $b$), it follows
that the image of $\Sigma$ covers the Bloch sphere $\chi$ times~\cite{Whitehead:1947}. 

Furthermore, the map from $\Sigma$ to the Bloch sphere can also be viewed as a map between a pair of two-spheres; namely, because the boundary $\partial\Sigma$ of $\Sigma$ is mapped to a single point on the Bloch sphere, one can identify $\partial \Sigma \sim \textrm{pt.}$ with a single point, such that the quotient $\Sigma /{\sim} \simeq S^2$ is topologically identified as a two-sphere. 
But the wrapping number of a map between two two-spheres is nothing more than the (first) Chern number. 
Hence, we identify $\chi$ as the Chern number of the map from $\Sigma$ to the Bloch two-sphere; we call this the \textit{Hopf-Chern relation} (not to be confused with the fact that the Hamiltonian map has trivial first Chern class). 
The Hopf-Chern relation was first espoused by Whitehead for the Hopf map: $S^3 \ri S^2$~\cite{Whitehead:1947}; we are proposing here that the Hopf-Chern relation holds also for crystallographic-symmetry-constrained maps from  $T^3 \ri S^2$ with trivial first Chern class.

\begin{figure}
    \centering
    \includegraphics{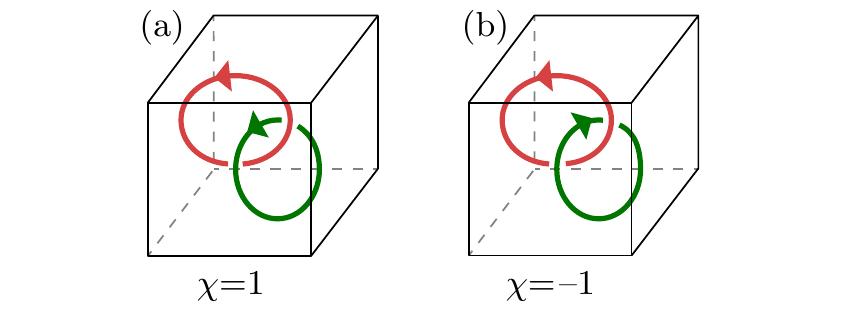}
    \caption{Geometrical interpretation of the Hopf invariant $\chi$ as a linking number of two oriented preimage lines (green and red loops).
    The sign of $\chi$ [positive in (a), negative in (b)] distinguishes between two topologically distinct ways to link the oriented loops.}
    \label{fig:link-sign}
\end{figure}

The Hopf-Chern relation turns out to be very useful in proving the Hopf-RTP relation. 
Deferring the general proof (which holds for any $\mathrm{P}n$ symmetry) to Appendix~\ref{app:hopf-rtp-whitehead}, we focus here on proving the relation for a particular class of $\mathrm{P}4$-symmetric Hamiltonians for which only a single RTP invariant $\Delta\mathscr{P}_{\Gamma \tm}$ is well-defined, that being the polarization difference between the only two four-fold-invariant reduced momenta: $\Gamma$ and $\tm$. 
An example of such a model is given in the third row [i.e., $\textrm{SG} = \textrm{P}
4$, $\Delta\mathcal{L}=1$, $\Delta \br_\perp = (1/2,1/2)$] of Table~\ref{tab:model-zoo}. 

Due to $\mathrm{P}4$ symmetry and the mutually disjoint condition, the wave function of the valence (and also of the conduction) band is constant along the four-fold-invariant $\bk$-line specified by $\Gamma$; in other words, the Hamiltonian maps this $\bk$-line to a single point on the Bloch sphere. 
This single point is either the north pole or the south pole, in a basis that simultaneously diagonalizes the rotation matrix. 
These statements that apply to $\Gamma$ apply equally well to $\tm$. 
In the presented argument we restrict ourselves to models with the valence (resp.~conduction) band representation being symmetry-equivalent to the basis band representation BBR$[\varphi_1]$ (resp.~BBR$[\varphi_2]$). Then both four-fold-invariant $\bk$-lines map to the south pole (cf.~Fig.~\ref{fig:whitehead-c4-min} for an illustration).

\begin{figure}
    \centering
    \includegraphics{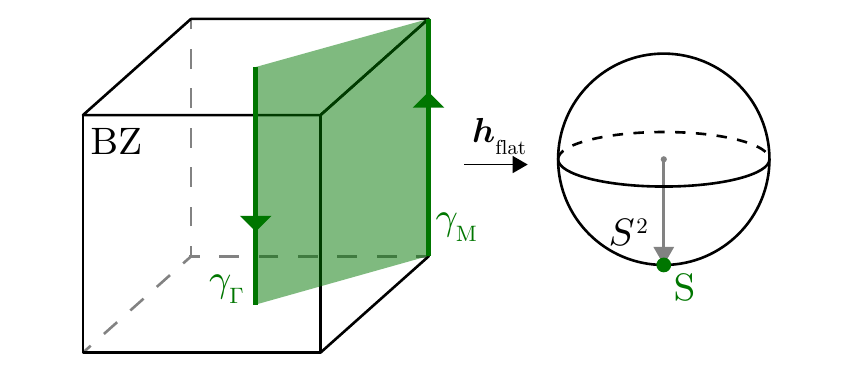}
    \caption{Preimage manifold of the south pole $\textrm{S}\sim -\sigma_z$ consists of two vertical rotation-invariant lines $\gamma_\Gamma$ and $\gamma_\textrm{M}$, illustrated by dark-green oriented lines. Light-green sheet indicates a 
    surface bounded by $\gamma_\Gamma\cup\gamma_\textrm{M}$. }
    \label{fig:whitehead-c4-min}
\end{figure}

Let us suppose for the moment that the preimage of the south pole comprises only these two $\bk$-lines. 
Defining orientations of these preimage lines using the above-described procedure is not straightforward; to overcome this complication, we present in Appendix~\ref{app:hopf-rtp-whitehead} a symmetry-based method to determine preimage orientations. 
To make the discussion in this section more brief, we defer application of this method to the general proof in Appendix~\ref{app:hopf-rtp-whitehead} and assume no knowledge of the preimage orientations in the current discussion. This will leave a sign ambiguity in the derived
Hopf-RTP relation.

A Gaussian surface that is bounded by the preimage of the south pole is given by a vertical rectangular sheet with the two vertical sides equal to the four-fold-invariant $\bk$-lines, and the two horizontal sides identified due to the periodicity of the Brillouin torus, cf.~Fig.~\ref{fig:whitehead-c4-min}. 
For an orientation of the constructed surface to be well-defined from orientations of the preimages, it must be
that the two vertical lines have opposite orientations; one choice of possible orientations is illustrated in Fig.~\ref{fig:whitehead-c4-min}.
Then the Hopf-Chern relation tells us that the Hopf invariant equals, up to a sign determined by the preimage orientation, to the Chern number defined over this sheet. 
Application of Stokes' theorem further gives us that the same Chern number equals the difference in Berry-Zak phase between the two four-fold-invariant $\bk$-lines, divided by $2\pi$ \cite{AA_wilsonloopinversion}. 
But this difference (divided by $2\pi$) is simply the polarization difference ($\Delta\mathscr{P}_{\Gamma \tm}$) between these lines. 
Hence we arrive at a Hopf-RTP relation $\chi=\pm\Delta\mathscr{P}_{\Gamma \tm}$. The sign in this relation can be further specified after the preimage orientations are fixed. This additional step is completed for any P$n$ symmetry in Appendix~\ref{app:hopf-rtp-whitehead}.

This particular instance of the Hopf-RTP relation was na\"ively derived under the assumption that the preimage of the south pole comprises no preimages besides
the two four-fold-invariant $\bk$-lines. 
Thus, suppose the preimage of the south pole involves a $\bk$-loop centered at a generic $\bk$-point. 
There must then be three other $\bk$-loops related by four-fold symmetry. 
The Chern numbers of all four $\bk$-loops must be identical, because the Berry curvature transforms as a pseudovector under crystallographic operations. 
It follows that the na\"ive Hopf-RTP relation can be modified by additions of integer multiples of four:
\e{ \chi =_4 \pm\Delta\mathscr{P}_{\Gamma \tm}.\label{eq:hopfrtp-min-ex}}
We note that other choices of additional preimages are possible; they are all analyzed in Appendix~\ref{app:hopf-rtp-whitehead}, and result in the same mod-4 reduction of the Hopf-RTP relation.
It would be beneficial to points readers' attention to the result indicated in row $\textrm{SG} = \textrm{P}4$, $\Delta\mathcal{L}=1$, $\Delta \br_\perp = (1/2,1/2)$] in Table~\ref{tab:models_summary}, which specifies the sign in the relation \eqref{eq:hopfrtp-min-ex} to be minus.
This is a particular instance of the $\mathrm{P}n$-symmetric Hopf-RTP relation in \q{hopfrtp346}, which is proven more generally in Appendix~\ref{app:hopf-rtp-whitehead}.

\section{{Delicate crystalline insulators beyond two-band Hamiltonians}} \label{sec:multi-band}

In discussing possible extensions of crystalline Hopf insulators beyond two-band Hamiltonians, we ask: \textit{Is the classification by the Hopf and RTP invariants naturally generalizable when the Hilbert space is expanded to include more trivial bands?} 
(The meaning of `triviality' will be sharpened below.)
Speaking colloquially, does the topological nontriviality `survive' an expansion of the Hilbert space to include strictly-localized degrees of freedom? Formalizing what `survival' means for the crystalline Hopf insulator leads directly to the related notions of \emph{delicate topology} and \emph{symmetry-protected delicate topology}. 
The categorization of topological insulators into delicate vs.~fragile vs.~stable is the first step (or an essential premise) in any formal, algebraic-topological classification~\cite{kitaev_periodictable,nogo_AAJH,bouhon_wilsonloopapproach,crystalsplit_AAJHWCLL,AA_teleportation,DeNittis_classifyAI,shiozaki_review,read_compactwannier}.

A different motivation for the above, italicized question comes from attempting to realize delicate topology in the laboratory. 
Any solid-state/meta-material/cold-atomic realization of topological band structure must contend with  the number of bands -- in any single experimental platform -- being large (even  infinite, formally). 
Thus, an effective few-band Hamiltonian is only an approximate description. 
It is then essential, when confronting nature, to explore (and hopefully break down) the limitations of this approximation.

Our analysis of the postulated question is organized as follows. 
Firstly, in Sec.~\ref{sec:procedure}, we introduce the procedure by which we expand a topologically-nontrivial two-band Hamiltonian to a $({>}2)$-band Hamiltonian.
The implications of the Hopf invariant being delicate-topological are explored  in \s{sec:hopfbeyond2}; the implications of the RTP invariant being symmetry-protected delicate-topological are explored in \s{sec:rtpbeyond2}. 
Both the Hopf and the RTP invariants necessitate a topological obstruction for the localization of Wannier orbitals which is explained in the subsequent Sec.~\ref{sec:CHI-multicel}; going beyond Hopf and RTP insulators, it will be argued therein that this Wannier obstruction is generally a delicate-topological property.

\subsection{\texorpdfstring{$({>}2)$}{>2}-band Hamiltonians from two-band Hamiltonians}\la{sec:procedure}

We consider a specific procedure to produce a $({>}2)$-band Hamiltonian from a two-band Hamiltonian: we expand the two-band Hilbert space by adding to it a basis band representation (of $\mathrm{P}n$) induced from a third, representative tight-binding-basis orbital $\varphi_3$. 
If $\varphi_3$ is centered at a $C_n$-invariant Wyckoff position $\br_3$ and forms a certain one-dimensional representation of the site-stabilizer group $\mathrm{P}n_{\br_3}$, then the Hilbert space is expanded to three bands. 
(The expansion is to four or more bands if $\br_3$ is not $C_n$-invariant.)

By construction, a basis band representation is a \textit{tightly-bound band representation} -- a band representation in which each Wannier orbital is localized to a single site, or \textit{one-site-localized}.  
(A `site' is a notion existing only in the tight-binding formalism, and represents the physical location of a tight-binding-basis orbital. 
If $\br_3$ is $C_n$-invariant, the allowed sites are $\br_3$ plus any Bravais-lattice vector.)

A choice is made whether to add this tightly-bound band representation to the conduction or valence subspace; in either case, we preserve the notion of the bulk energy gap inherited from the two-band Hamiltonian.
(The choice between conduction vs.~valence is made by tuning the on-site energies of the tightly-bound band representation as desired.) 
After this addition procedure, we allow any continuous, $\mathrm{P}n$-symmetric deformation of the $({>}2)$-band Hamiltonian that preserves the bulk gap. 
This procedure can be repeated to produce a Hamiltonian with any number of conduction and valence bands. 

Given that a two-band Hamiltonian is characterized by a nontrivial Hopf or RTP invariant, one may ask if this special topological property of the conduction and valence subspace is lost under the expansion procedure detailed above. 
At this point, the notion of delicate topology emerges. The Hopf and RTP invariants, defined respectively by Eqs.~\eqref{eq:hopfinvar} and~\eqref{eq:RTP-def}, are examples of  delicate topological invariants~\cite{Nelson:2021}. 
This means their status as topological invariants is lost under the addition of certain `trivial' bands to either the conduction or to the valence subspace. 
In the classification of stable/fragile crystalline topological insulators, being `trivial' means being a band representation. 
In contrast, in the more refined classification of delicate crystalline topological insulators, being `trivial' means being a \textit{tightly-bound band representation}.\footnote{For the mathematically-oriented reader, further motivation of this definition of `triviality' stems from observing that a tightly-bound band representation is a space-group-equivariant analog of a product bundle, as explained in Appendix~G of \ocite{crystalsplit_AAJHWCLL}.}

However, the Hopf and RTP invariants differ in that the Hopf invariant becomes ill-defined 
under the addition of \emph{any} tightly-bound band representation (as elaborated in \s{sec:hopfbeyond2}), while the RTP invariant loses its invariant status only under the addition of \emph{particular} tightly-bound band representations (cf.\ \s{sec:rtpbeyond2}). 
This difference, and its implications, are elaborated subsequently.

\subsection{Hopf insulators beyond two-band Hamiltonians}\la{sec:hopfbeyond2}

For two-band Hamiltonians with trivial first Chern class, a bulk-boundary correspondence equates the bulk Hopf invariant with the faceted Chern number. 
We will separately address whether the bulk or boundary invariant has a natural extension to $({>}2)$-band Hamiltonians.

\subsubsection{Delicacy of the bulk Hopf invariant}\la{sec:delicatehopf}

As originally formulated~\cite{Hopf:1931}, the Hopf invariant classifies maps from a three-sphere ($S^{\! 3}$) to a two-sphere ($S^{\! 2}$). 
In our context, the two-sphere is the Bloch sphere -- the space of two-band Hamiltonians. 
Our assumption that the Chern class is trivial allows to continuously deform the map such that its restriction to a 2D subtorus (of the Brillouin torus) is a constant map. 
In particular, it is possible to have the entire boundary of the (first) BZ map to the same point on the Bloch sphere. 
This allows one to compactify the boundary to a single point, $\partial\textrm{BZ}\sim\textrm{pt.}$, which deforms the BZ into a three-sphere, $\textrm{BZ}/{\sim}\simeq S^{\! 3}$ (where `$/{\sim}$' indicates a quotient modulo the equivalence relation~\cite{Quotient-topology}).
It follows that maps from the Brillouin three-torus to the Bloch sphere, under the constraint of a trivial first Chern class, are topologically equivalent to maps from a three-sphere to a two-sphere.
For this reason, we (following other authors~\cite{Hopfinsulator_Moore}) have taken the liberty of using the term `Hopf invariant' to classify Chern-trivial maps from the Brillouin three-torus to the Bloch two-sphere. 
By this definition, any enlargement of the two-band Hamiltonian renders the Hopf invariant to be ill-defined.

It was shown in \ocite{Hopfinsulator_Moore} that for two-band Hamiltonians, the Hopf invariant has an equivalent expression as an integral of the Chern-Simons three-form of the Abelian Berry gauge field [cf.~Eq.~\eqref{eq:hopfinvar}].
Unlike the Hopf invariant, the Chern-Simons formula remains well-defined and gauge-invariant for higher-than-two-band Hamiltonians with trivial Chern class, assuming we limit ourselves to expanding only the conduction subspace. 
(The Chern-Simons formula is not directly applicable to the case where the valence subspace is expanded due to the Abelian nature of the Berry gauge field in the formula.) 

One may then ask if the Chern-Simons formula remains quantized to integers for (${N_c}{\,{+}\,}1$)-band Hamiltonians with a $({N_c}{\,{>}\,}1)$-band conduction subspace and trivial first Chern class. 
The negative answer can be rationalized from algebraic-topological considerations: the space of (${N_c}+1$)-band Hamiltonians with a $({N_c}\,{>}\,1)$-band conduction subspace is the complex Grassmannian $Gr_\mathbb{C}({N_c},1)=U({N_c}+1)/[U({N_c})\times U(1)]$~\cite{kitaev_periodictable}, where $U({N_c})$
is the group of unitary matrices of dimension 
${N_c}$, and it is known~\cite{kennedy_thesis} that the first Chern class gives the complete classification of maps from the BZ three-torus to 
$Gr_\mathbb{C}({N_c},1)$ for ${N_c}>1$.\footnote{A different extension was proposed in Ref.~\cite{nband-hopf} for the \emph{$N$-band Hopf insulators}, where $N$ non-degenerate bands are separated by $N-1$ energy gaps; then a $\mathbb{Z}$-valued topological invariant corresponding to the third homotopy group of the \emph{complex flag manifold}~\cite{Tiwari:2020,Bouhon:2020} ${F\!\ell}_\mathbb{C}=U(N)/U(1)^{\times N}$ is computed as the quantized sum of the non-quantized integrals of the Chern-Simons $3$-form over the 
$N$ individual bands. 
This quantity remains invariant as long as no band degeneracy is formed between any pair of bands. 
In our present work, we instead focus on the `conventional' scenario where topological invariants change only if one closes the energy gap separating the valence bands from the conduction bands.
}

\subsubsection{Delicacy of the faceted Chern number} \label{sec:chern-delicacy}

We just showed that the Hopf invariant becomes ill-defined after enlargement of the two-band Hamiltonian. 
Can we say something about the faceted Chern number {when the rank of either the valence or the conduction subspace is larger than one}? 
As one of us showed in Ref.~\cite{penghaoAA_quantizedmagnetism} [cf.~Sec.~II.A.1 and Fig.~3(b) therein], in the absence of any crystalline symmetry except the lattice translation, the presence of multiple bands in a given subspace allows to modify the value of the faceted Chern number by an arbitrary integer number, depending on the number of considered faceted bands $\mathcal{N}_f$. 
This is due to the fact that the bulk bands can \emph{individually} carry non-zero Chern numbers even when the total Chern number of the corresponding subspace is zero. 
Now, if we enlarge the set of surface-like polarization bands by including one polarization band with a bulk-like non-zero Chern number, the value of the faceted Chern number will be modified by this bulk Chern number.

While both the Hopf invariant and the faceted Chern number lose their invariance for $({>}2)$-band Hamiltonians, there is a special case of $\mathrm{P}n$-symmetric Hamiltonians for which ``$\mathscr{C}_f$ mod-$n$'' remains a topological invariant. 
For this to be true, the valence band representation (and also the conduction band representation) must be symmetry-equivalent to a composite band representation that is decomposable into $N_v$ ($N_c$) identical band representations $\mathrm{BR}(\mathcal{L}_v, \br_{v,\perp})$ [$\mathrm{BR}(\mathcal{L}_c, \br_{c,\perp})$]. 
In this case the rotation symmetry guarantees that the Chern number of each individual bulk band is a multiple of $n$ and therefore it can modify the faceted Chern number by multiples of $n$ which keeps the value of $\mathscr{C}_f\mod n$. In Sec.~\ref{sec:rtpbeyond2} we define this invariant in terms of the angular-momentum anomaly.

\subsection{RTP insulators beyond two-band Hamiltonians}\la{sec:rtpbeyond2}

For two-band Hamiltonians, the bulk-boundary correspondence of RTP insulators equates the bulk RTP invariant with an anomaly associated to the angular momentum of faceted states. We will separately address whether the bulk invariant or the boundary anomaly has a natural extension to $({>}2)$-band Hamiltonians.

\subsubsection{Symmetry-protected delicacy of the bulk RTP invariant}\la{sec:delicatertp}

The RTP invariant $\Delta\mathscr{P}_{\Pi\Pi'}$ has been defined as an integer-valued polarization difference of the valence band between two rotation-invariant reduced momenta $\Pi$ and $\Pi'$. 
A natural way to extend the RTP invariant to higher-than-two-band Hamiltonians is to ask whether the valence-subspace polarization difference is fixed by symmetry to integer values if the valence/conduction subspace contains more than one band. 
We find that the polarization difference remains quantized  if the enlarged Hamiltonian obeys two conditions which are generalizations of \q{mutualdisj} and \q{mutualdisj2}. \\

\noindent (\emph{i}) The enlarged Hamiltonian satisfies the \emph{mutually-disjoint condition} at $\Pi$ and $\Pi'$: collecting the itinerant angular momenta of bulk-valence and bulk-conduction states (at $\Pi$) into the respective sets $\{\widetilde{\mathcal{L}}_{v,j}(\Pi)\}_{j=1}^{N_v}$ and $\{\widetilde{\mathcal{L}}_{c,j}(\Pi)\}_{j=1}^{N_c}$, we require that the two sets are disjoint:
\begin{equation}
\label{eqn:mutually-disjoint-multiband}
\begin{split}
\textrm{\emph{mutually-disjoint condition at $\Pi$:}} \\
\left\{\widetilde{\mathcal{L}}_{v,j}(\Pi)\right\}_{j=1}^{N_v}\cap\left\{\widetilde{\mathcal{L}}_{c,j}(\Pi)\right\}_{j=1}^{N_c}=\varnothing.
\end{split}
\end{equation}
This condition is required to hold for both $\Pi$ and $\Pi'$. \\

\noindent (\emph{ii}) There exists a composite basis band representation $\mathrm{CBR}$, such that the restriction of the valence band representation $\mathrm{VBR}$  to $\Pi$ equals the restriction of $\mathrm{CBR}$ to $\Pi$, and likewise $\mathrm{VBR}|_{\Pi'}=\mathrm{CBR}|_{\Pi'}$.\footnote{ A composite basis band representation is defined as 
a Whitney sum of elementary basis band representations: $\mathrm{CBR}=\mathrm{EBR}_1 \oplus \mathrm{EBR}_2\oplus\ldots \oplus \mathrm{EBR}_{N_v}$. 
Each of $\mathrm{EBR}_j$ is induced from a basis orbital with a certain site-stabilizer symmetry representation and a certain positional center $(\br_{j,\perp},z_j)$. \label{foot:CBR}} \\ 
    
In general, along the rotation-invariant $\bk$-line specified by $\Pi$, the tight-binding Hamiltonian $h(\Pi,k_z)$ is block-diagonal in the basis where the itinerant rotation matrix is diagonal; what the mutually-disjoint condition (\emph{i}) gives additionally is that each block (corresponding to one angular-momentum sector) belongs either to the conduction or to the valence subspace. 
This implies that the Hilbert space spanned by the valence eigenstates of $h(\Pi,k_z)$ is independent of $k_z$. (\emph{i}--\emph{ii}) jointly imply that the multi-band polarization of the valence subspace of $h(\Pi,k_z)$ is equal, modulo integers, to the multi-band polarization of the just-mentioned composite band representation.\footnote{The multi-band polarization can be computed via the non-Abelian Wilson loop as $\mathscr{P}(\Pi)=_1\frac{1}{2\pi}\Im\log\det\mathcal{W}$ (see, e.g., \ocite{AA_wilsonloopinversion}). 
The multi-band polarization of the composite band representation simply equals $\sum_{j=1}^{{N_v}}z_j$, with $z_j$ defined in the previous footnote \ref{foot:CBR}. 
To prove the mod-one equality of $\mathscr{P}(\Pi)$ and $\sum_{j=1}^{{N_v}}z_j$, one can apply an equivalent expression of the Wilson loop in terms of the $z_j$ and the projector $p(\Pi,k_z)$ to the valence subspace of $h(\Pi,k_z)$, as shown in Appendix~C of \ocite{AA_wilsonloopinversion}; the mutually-disjoint condition gives that $p(\Pi,k_z)=\diag(\mathbbold{1},\mathbbold{0})$ is a $k_z$-independent diagonal matrix, with $\mathbbold{1}$ a ${N_v}\times {N_v}$ identity matrix and  $\mathbbold{0}$ a  ${N_c}\times {N_c}$ zero matrix.} 
Since this statement holds also for $\Pi \leftrightarrow \Pi'$ with the \textit{same} composite basis band representation, we deduce that the valence polarization of $h(\Pi,k_z)$ is equal, modulo integers, to the valence polarization of $h(\Pi',k_z)$. 
Thus one concludes the polarization difference between $\Pi$ and $\Pi'$ is quantized to integer values. This holds true whether or not the composite basis band representation is identical to the valence band at  rotation-invariant $\bk$-lines other than $\Pi$ or $\Pi'$.

Working within the symmetry constraints of (\emph{i}--\emph{ii}), one can generalize the RTP invariant to Hamiltonians of arbitrary number of bands.
The stability of the RTP invariant under symmetry-constrained Hilbert-space enlargements is the defining property of \textit{symmetry-protected delicate topology}.

To illustrate how the RTP invariant can be nullified, one can add to the Hilbert space any basis band representation $\mathrm{BBR}[\varphi_3]$ of $\mathrm{P}n$ with a $C_m$-invariant Wyckoff position $\br_3$ that is \textit{not} $C_n$-invariant, in other words $1<m<n$, and with $\varphi_3$ forming a one-dimensional representation of $\mathrm{P}n_{\br_3}$. 
Such a band representation necessarily has rank $r$ greater than one, meaning that the Hilbert space is spanned by $r$ independent Bloch states at each $\bk$.
However, BBR$[\varphi_3]$ cannot be split/decomposed into multiple band representation of smaller rank.  For such band representations, one may verify (e.g., from the Bilbao crystallographic server \cite{elcoro_EBRinBilbao}) that for every pair of a rank-1 elementary band representation BBR$[\varphi_1]$ and a rank-$(r\!>\!\!1)$ elementary band representation BBR$[\varphi_3]$
\begin{align}
    \exists! \,\Pi, \textrm{ s.t.} \as \forall j=1\dots r,\as
    \widetilde{\mathcal{L}}_1(\Pi)\neq\widetilde{\mathcal{L}}_3^j(\Pi),
\end{align}
where $\exists!$ means `exists one and only one', $\widetilde{\mathcal{L}}_1(\Pi)$ is itinerant angular momentum of BBR$[\varphi_1]$ and $\widetilde{\mathcal{L}}_3^j(\Pi)$, $j=1\dots r$ are itinerant angular momenta of BBR$[\varphi_3]$.
This implies that the addition of   BBR$[\varphi_3]$, whether it is to the conduction or valence subspace, makes the mutually-disjoint condition unfulfillable at (at least) two rotation-invariant reduced momenta, and hence, nullifies all RTP invariants.

\subsubsection{Angular-momentum anomaly of \texorpdfstring{$({>}2)$}{>2}-band Hamiltonians}\la{sec:anganomalymore2}

For $({>}2)$-band Hamiltonians with an RTP invariant $\Delta\mathscr{P}_{\Pi_1\Pi_2}$ that is well-defined and greater than zero, the angular-momentum anomaly [encapsulated by \q{amanomaly} and derived in Appendix~\ref{app:ang-mom-anomaly}] holds just as well as for two-band Hamiltonians, implying the existence of at least $\Delta\mathscr{P}_{\Pi_1\Pi_2}$-number of anomalous faceted bands whose itinerant angular momenta are bulk-conduction-like at $\Pi_1$ and bulk-valence-like at $\Pi_2$. 

We remind the reader that for \textit{two}-band Hamiltonians, an itinerant angular momentum is bulk-valence-like at $\Pi$ if it is identical to the itinerant angular momentum of the bulk-valence band at $\Pi$. 
By assumption of a two-band Hamiltonian, the bulk-valence subspace comprises a single band, hence the itinerant angular momentum at $\Pi$ is uniquely defined. 

In contrast, for an $N_v$-band bulk-valence subspace, there is not a unique angular momentum, but a set $\{\widetilde{\mathcal{L}}^j_{v} (\Pi)\}_{j=1,\ldots, N_v}$ of possibly non-identical angular momenta. 
An itinerant angular momentum that is \textit{bulk-valence-like} is then equal to any of the $N_v$ values in this set (which potentially contains repetitions). 
Likewise, one may generalize the notion of \textit{bulk-conduction-like} to an $N_c$-band bulk-conduction subspace. 
It is with these generalized notions that the angular-momentum anomaly [encapsulated by \q{amanomaly}] holds for $(N_c+N_v)$-band Hamiltonians. 

We see that specifying $\Delta\mathscr{P}_{\Pi_1\Pi_2}$ does not uniquely specify the subspace of faceted bands, owing to the non-uniqueness in itinerant-angular-momentum values implicit in (the generalized notions of) `bulk-conduction-like' and `bulk-valence-like'. 
In other words, there are multiple, topologically-distinct faceted subspaces consistent with a single value of  $\Delta\mathscr{P}_{\Pi_1\Pi_2}$. 
This suggests there exists a finer topological classification [of P$(n{>}2)$-symmetric $(N_c+N_v)$-band Hamiltonians] than is afforded by the RTP invariant $\Delta\mathscr{P}_{\Pi_1\Pi_2}$ alone.
We believe this to be a promising research direction.

Why did we restrict the above conjecture to $\mathrm{P}n$ with $n\,{>}\,2$? 
The $\mathrm{P}2$ space group is special in that there are only two irreducible representations of the $C_2$ point group, thus being bulk-valence-like (or bulk-conduction-like) uniquely specifies an itinerant angular momentum.\footnote{This is because the mutually-disjoint condition at $\Pi$ holds only if (\emph{i}) all members of the bulk-valence-like set  $\{\widetilde{\mathcal{L}}^j_{v} (\Pi)\}_{j=1,\ldots, N_v}$ of itinerant angular momenta are equal, (\emph{ii}) all members of the bulk-conduction-like set $\{\widetilde{\mathcal{L}}^j_{c} (\Pi)\}_{j=1,\ldots, N_c}$ of itinerant angular momenta are also equal, and (\emph{iii}) $\widetilde{\mathcal{L}}^1_{v}(\Pi)\neq \widetilde{\mathcal{L}}^1_{c} (\Pi)$.} 
It follows that the full set of RTP invariants uniquely determines the itinerant angular momenta of the faceted subspace.

In the general setting of ($N_v+N_c$)-band P$n$-symmetric Hamiltonians the angular-momentum anomaly manifests in conditionally-robust surface states that interpolate between valence and conduction band in the same way as for the 2-band Hamiltonians. 
Our proof of the conditional-robustness of the surface states, presented in Appendix~\ref{app:Toeplitz-bounds}, uses certain spectral properties of block-Toeplitz matrices and generalizes the $2$-band proof presented in \s{sec:BBC_sharp}. 

For another physical manifestation of the angular-momentum anomaly -- the Zak-phase anomaly -- to hold, the sufficient (but not necessary) condition is to have identical itinerant angular momenta of all valence bands at $\Pi_1$, and also at $\Pi_2$; the same must be true for the conduction itinerant angular momenta. 
In this case, the Zak-phase anomaly tells that $\mathcal{N}$ faceted bands have a set of $\mathcal{N}$ Zak phases, with $|\Delta\mathscr{P}_{\Pi_1\Pi_2}|$ of them being symmetry-fixed to a rational multiple of $2\pi/n$, and that these symmetry-fixed Zak phases are distinct from the bulk-valence and also from the bulk-conduction Zak phase. 
This can be shown by applying the theorem from Sec.~IV\kern 0.1em D\kern 0.1em 1 in Ref.~\cite{TBO_JHAA} in the same way as it was done for the two-band case in Sec.~\ref{sec:zakanomalymain} and Appendix~\ref{app:Zak-phase-anomaly}. 

Let us finally consider the faceted Chern number of the $({>}2)$-band RTP insulators. 
As we already mentioned in Sec.~\ref{sec:chern-delicacy}, in the special case of identical itinerant angular momenta of the valence bands $\widetilde{\mathcal{L}}_{v,j}(\Pi)=\widetilde{\mathcal{L}}_v(\Pi)$ for all $j=1\dots N_v$ at each $\Pi$ [and same for the conduction bands $\widetilde{\mathcal{L}}_{c,j}(\Pi)=\widetilde{\mathcal{L}}_c(\Pi)$ for all $j=1\dots N_c$], the angular-momentum anomaly leads to a faceted Chern number defined modulo $n$. 
Given the itinerant angular momenta of the $\mathcal{N}_f$ faceted bands, the mod-$n$-Chern number is given by
\begin{align}
    \mathscr{C}_f&=_n-\sum_{i=1}^{\mathcal{N}_f}\sum_\Pi\widetilde{\mathcal{L}}_i(\Pi)\lin 
    &=_n \sum_{\alpha=1}^2 \sum_{j_\alpha=2}^{J_\alpha} \Delta\widetilde{\mathcal{L}}(\Pi_{j_\alpha}^\alpha)\;\Delta\mathscr{P}_{\Pi^\alpha_1\Pi^\alpha_{j_\alpha}},
    \label{eq:C-facet-via-ell-multi-band}
\end{align}
where $\{\Pi^\alpha_{j_\alpha}\}_{j_\alpha=1\dots J_\alpha}$, $\alpha=1,2$, are two maximal subsets of rotation-invariant reduced momenta satisfying conditions \eqref{mutualdisj} and \eqref{mutualdisj2}. By `maximal' we mean that each set contains all reduced momenta, satisfying these conditions, and the size of each subset is given by an integer $J_\alpha$.
This relation is proven in exactly the same way as the Hopf-RTP relation in Sec.~\ref{sec:proof-RTP-Hopf}.
This similarity with the Hopf-RTP relation [cf.\ \q{eq:hopf-RTP}] of two-band Hamiltonians is not accidental. 
For two-band Hamiltonians with trivial Chern class, the bulk-boundary correspondence of Hopf insulators equates the Hopf invariant with the faceted Chern number: $\chi=\mathscr{C}_f$. 
For $({>}2)$-band Hamiltonians with identical valence (conduction) itinerant angular momenta, the Hopf invariant loses its invariant meaning, yet the faceted Chern number remains well-defined modulo $n$ due to the anomalous values of itinerant angular momenta.

\section{Multicellularity of delicate topological insulators}\label{sec:CHI-multicel}

The Hopf and the RTP invariants topologically obstructs the valence subspace (and also the conduction subspace) from being continuously deformable to a tightly-bound band representation~\cite{Nelson:2021}. 
It is worth emphasizing that the RTP invariant \textit{alone} would readily impose such an obstruction, assuming the continuous deformation preserves $\textrm{P}n$ symmetry throughout. 
For instance, this would hold for RTP insulators with trivial Hopf invariant, as was modelled in \s{sec:zero-hopf}. 

The obstruction against being tightly-bound holds despite both the conduction as well as the valence bands being band representations.
A different way to state this is that despite the Wannier orbitals being exponentially-localized and locally-symmetric, some of the Wannier orbitals necessarily have support across multiple primitive unit cells. Such obstructed band representations are described as  \textit{multicellular}.

Our exposition of multicellularity is organized in the present section as follows.
We first demonstrate in Sec.~\ref{sec:multicell} that multicellularity is generally a delicate property, meaning that the topological obstruction against being tightly bound can be nullified by adding a certain trivial band to either the conduction or the valence subspace. 
Then in Sec.~\ref{sec:graded-multicell} we introduce a grading of multicellularity, defined roughly to be the smallest number of primitive unit cells that a representative Wannier orbital can be confined to, if allowing for any continuous deformation of the two-band Hamiltonian that preserves the energy gap and symmetry. 
Such grading can meaningfully distinguish between different crystalline Hopf insulators, as we exemplify in Sec.~\ref{sec:grading-examples}. 
Finally, we extend the notion of multicellularity beyond two-band Hamiltonians in Sec.~\ref{sec:grading-multiband}.

\subsection{Multicellularity is a delicate property}\label{sec:multicell}

From the discussion in Sec.~\ref{sec:multi-band}, we see that the specific forms of multicellularity -- associated with the Hopf and RTP invariants -- are delicate-topological. 
In fact, a stronger statement holds true, namely that \emph{any} form of multicellularity existing in principle is a delicate property. 
In other words, for any fixed-band, insulating Hamiltonian (inclusive of, but going beyond, crystalline Hopf insulators) whose valence subspace is a band representation with a topological obstruction to being tightly-bound, this obstruction is removable by adding a certain tightly-bound band representation to either the conduction or the valence subspace.

This follows from a few facts in space-group-equivariant vector bundle theory that we will just state, referring the reader to  \ocite{crystalsplit_AAJHWCLL} for the detailed argument. 
For a rapid initiation to the terminology:\footnote{See Ref.~\cite{Fruchart:2013} for an introductory discussion of band topology from the perspective of vector bundles.} an energy band can be viewed as a vector bundle~\cite{Hatcher:2003,FreedMoore_twistedequivariantmatter,crystalsplit_AAJHWCLL,Bouhon:2020} with the base space equal to the Brillouin zone and with fibres in direct correspondence with the $\bk$-dependent Bloch functions; while  `space-group-equivariance' means, in our context, that applying the rotation operator to a Bloch function at $\bk$ produces a Bloch function at $C_n\bk$ within the same band.

The first fact we need, proven in Appendix~G of~\ocite{crystalsplit_AAJHWCLL}, states that a tightly-bound band representation and a non-tightly-bound representation (with the same space group $\mathcal{G}$, Wyckoff position, and site-stabilizer representation) are isomorphic as $\mathcal{G}$-equivariant vector bundles.\footnote{The standard notion of isomorphism in vector bundles is explained in~\ocite{Hatcher:2003}; the notion of a space-group-equivariant isomorphism is explained in~\ocite{crystalsplit_AAJHWCLL}.} 
The second fact follows from the universal space-group-equivariant vector bundle theorem,\footnote{The universal bundle theorem in~\ocite{Atiyah1989} applies to $\calp$-equivariant vector bundles with finite group $\calp$. For  $\mathcal{G}$-equivariant vector bundles with $\mathcal{G}$ a space group, we observe that the translation subgroup of $\mathcal{G}$ acts trivially on the bundle, hence we  may directly apply the universal bundle theorem with $\calp$ identified as the (finite) point group of $\mathcal{G}$.} which states that a $\mathcal{G}$-equivariant isomorphism between two vector bundles ($\mathfrak{E}$ and $\mathfrak{E}'$) corresponds bijectively to a $\mathcal{G}$-equivariant homotopy between $\mathfrak{E}$ and $\mathfrak{E}'$. 
Applied to our context, this means there exists a continuous, $\mathcal{G}$-symmetric deformation between a non-tightly-bound band representation and a tightly-bound representation (with the same data specified above). 
If the band representation in question is the valence band of a tight-binding Hamiltonian, the existence of this $\mathcal{G}$-equivariant isomorphism (Fact 1) is guaranteed only if the rank of the conduction subspace is unconstrained. 
Equivalently stated (due to Fact 2), if one can arbitrarily expand the conduction subspace by adding tightly-bound band representations, a valence band representation can always be continuously deformed to a tightly-bound band representation while preserving space-group symmetry.
This is equivalent to stating that multicellularity of the valence Wannier orbitals is a delicate property, exactly as advertised in the previous paragraph.

\subsection{Graded multicellularity: definition and motivation}\label{sec:graded-multicell}

The RTP invariant suggests a refinement of multicellularity through the introduction of \emph{grading}.
We will first define graded multicellularity in the simplified case of an insulator whose valence subspace is composed of a single band with trivial first Chern class (i.e., a single, analytic Bloch function over the Brillouin torus): the \emph{grade} $\mathfrak{g}\in\mathbb{Z}^+\cup \{\infty\}$ is the lower bound on the number of primitive unit cells that make up a non-primitive cell,\footnote{A non-primitive cell is a path-connected volume of space that, when translated through all vectors of a Bravais lattice, does \textit{not} cover all of space without either overlapping itself or leaving voids~\cite{ashcroft_mermin}.} which is able to contain a representative Wannier orbital; the lower bound is defined given that one is allowed to continuously deform a tight-binding Hamiltonian $h(\bk)$ of fixed matrix dimension while preserving the energy gap and crystalline symmetry.\footnote{We allow to continuously and symmetrically deform the Wyckoff positions of the tight-binding-basis orbitals, as detailed in the Supplemental Material of \ocite{Nelson:2021}.} 
(For crystalline topological insulators in Wigner-Dyson classes A, AI and AII, the notion of grading is typically more useful when one imposes a local-symmetry condition~\cite{nogo_AAJH,crystalsplit_AAJHWCLL} on the Wannier orbitals.)
In particular, an insulator with $\mathfrak{g}=1$ is referred to as \emph{unicellular}, while all $\mathfrak{g}\geq 2$ (including $\mathfrak{g}=\infty$, in principle) correspond to different grades of multicellularity.
\smallskip

The notion of grading depends on whether the Wannier orbitals form an orthonormal basis for the valence band. 
To formalize this discussion, it is useful 
to distinguish between three kinds of Wannier orbitals: (1) the conventional \emph{orthonormal (`$\mathpzc{on}$-') Wannier orbitals}, (2) the generalized \emph{orthogonal (`$\mathpzc{og}$-') Wannier 
orbitals}, and (3) the further relaxed \emph{Wannier-$\mathpzc{type}$ orbitals}. 

As is well-known in the conventional case, assuming that the Chern class is trivial, $\mathpzc{on}$-Wannier orbitals are obtained by inverse Fourier transform  of Bloch states [cf.~\q{blochwannierfourier}], with the (\emph{i}) Bloch states  depending smoothly on $\bk$, and with (\emph{ii}) the intra-cell components  of the Bloch states being orthonormal at each~$\bk$:
\begin{subequations}
\e{\braket{u_{j}^{\mathpzc{on}}(\bk)}{u_{j'}^{\mathpzc{on}}(\bk)}=\delta_{jj'}, \as j,j'=1,\ldots, N_v{+}N_c.}
As respective consequences of (\emph{i}--\emph{ii}), the $\mathpzc{on}$-Wannier orbitals are (\emph{i}) exponentially localized, and (\emph{ii}) obey the orthonormality relation  \begin{equation}
\Big\langle\!\braket{W_{j,\bR}^{\mathpzc{on}}}{W_{j',\bR'}^{\mathpzc{on}}}\!\Big\rangle=\delta_{jj'}\delta_{\bR,\bR'},\la{orthonormalwannier}
\end{equation}
\end{subequations}
where $\bR,\bR'$ are Bravais-lattice vectors, and our bra-ket notation for Bloch and Wannier states is explained in Appendix~\ref{app:tb-formalism}. 
In contrast, one may drop the normalization of the intra-cell components of the Bloch states
but retain the weaker condition of orthogonality and completeness at each $\bk$, resulting in $\left\{\big|{u_{j}^\mathpzc{og}(\bk)}\big\rangle\right\}_{j=1}^{N_v{+}N_c}$.
Then the inverse Fourier transform generates  $\mathpzc{og}$-Wannier orbitals, 
which obey
\begin{equation}
\Big<\!\braket{W_{j,\bR}^{\mathpzc{og}}}{W_{j',\bR'}^{\mathpzc{og}}}\!\Big>=\mathscr{N}_{j,\bR}^2\delta_{jj'}\delta_{\bR,\bR'},
\label{eqn:og-wannier}
\end{equation}
with $\mathscr{N}_{j,\bR}\in \mathbb{R}{\setminus}\{0\}$ the respective norms.
Similar to the $\mathpzc{on}$- case, the number of $\mathpzc{og}$-Wannier orbitals matches the dimensions of the valence resp.~conduction subspace.
Finally, the Wannier-$\mathpzc{type}$ orbitals $\left\{\left|W^\mathpzc{type}_{j,\bR}\right>\!\Big>\right\}_{j=1}^{N}$ considered by Ref.~\cite{read_compactwannier} constitute a potentially \emph{overdetermined} basis with $N \geq N_v + N_c$, i.e., either the valence subspace or the conduction subspace (or both) are potentially represented by more Bloch states (and consequently, after applying the inverse Fourier transform, by more Wannier orbitals) than is the actual dimension of the respective band subspace. 
In this case, the only retained property is the \emph{completeness} of the intra-cell components $\left\{\big|{u_{j}^\mathpzc{type}(\bk)}\big\rangle\right\}_{j=1}^{N}$ of the Bloch states at each~$\bk$.

Either kind of the the Wannier representations is often required to be \emph{exponentially-localized}, meaning that the amplitudes of the Wannier orbital can be bounded by $\mathrm{e}^{-\norm{R}/r_w}$ with some width $r_w>0$. A more stringent requirement is for the Wannier representation to be \emph{compactly supported}, meaning that each Wannier orbital $\left|W_{j,\bR}\right>\!\Big>$ has non-vanishing amplitude on a finite number of sites that appear within a bounded region in space. Clearly, the grade $\mathfrak{g}$ could only be finite for the compactly supported Wannier orbitals.
\smallskip

One may ask if the graded multicellularity is more fruitfully defined in terms of the $\mathpzc{on}$-Wannier orbitals, the $\mathpzc{og}$-Wannier orbitals, or the Wannier-$\mathpzc{type}$ orbitals. 
To place this question in a wider context, let us first review several basic relations between the existence of the various kinds of Wannier orbitals. In subsequent paragraphs, we review non-trivial results of three works~\cite{Dubail:2015,read_compactwannier,Schindler:2021} that have explored related themes.

First, note that having an exponentially-localized $\mathpzc{on}$-Wannier representation is a more stringent condition than having an exponentially-localized $\mathpzc{og}$-Wannier representation, and having either of them is more stringent that having an exponentially-localized Wannier-$\mathpzc{type}$ representation.
The analogous relations between the existence of $\mathpzc{on}$-Wannier/$\mathpzc{og}$-Wannier/Wannier-$\mathpzc{type}$ representations also hold if we replace in the previous sentence `exponentially-localized' by `compactly supported'. 
More interestingly, if the valence subspace of a $\bk$-periodic tight-binding Hamiltonian is diagonalized by (unnormalized) intra-cell functions $\left\{\big|u_j^{\mathpzc{og}}(\bk)\big\rangle\right\}_{j=1,\ldots,N_v}$ that are vector-valued polynomial functions of $\{e^{ik_\xi}\}_{\xi=x,y,z}$ with no zeros, the inverse Fourier transform of the intra-cell functions constitute representative compactly-supported $\mathpzc{og}$-Wannier orbitals.
Since the normalization $\norm{\big|{u_{j}^\mathpzc{og}(\bk)}\big\rangle}$ of a vector-valued polynomial is also guaranteed to be analytic with no zeros, the inverse Fourier transform of $\big|{u_j^\mathpzc{og}(\bk)}\big\rangle/\norm{\big|{u_{j}^\mathpzc{og}(\bk)}\big\rangle} = \big|{u_j^\mathpzc{on}(\bk)}\big\rangle$ is an exponentially-localized $\mathpzc{on}$-Wannier orbital.
Conversely, the non-existence of an exponentially-localized $\mathpzc{on}$-Wannier representation implies the non-existence of a compactly supported $\mathpzc{og}$-Wannier representation.

Let us next summarize the non-trivial results found by Refs.~\cite{Dubail:2015,read_compactwannier,Schindler:2021}. 
Assuming that the valence bands support an exponentially-localized Wannier-$\mathpzc{type}$ (resp.~exponentially-localized $\mathpzc{on}$-Wannier) representation, Ref.~\cite{read_compactwannier} (Ref.~\cite{Schindler:2021}) analyzed within the context of topological band theory whether the representation is \emph{compactly supported}. 
In both cases, a non-trivial value of a topological invariant was found to obstruct the existence of a compactly supported Wannier representation (of the respective kind). 

Namely, N.~Read (generalizing an earlier work with J.~Dubail~\cite{Dubail:2015}) showed in Ref.~\cite{read_compactwannier} that stable topological bands (i.e., bands belonging to a non-trivial stable equivalence class in algebraic $K$-theory) 
in all Altland-Zirnbauer symmetry classes and in real-spatial dimensions $d\geq 2$~\cite{kitaev_periodictable,Ryu:2010}  do not admit a (symmetric) compactly-supported Wannier-$\mathpzc{type}$ representation, 
unless the nontriviality of the topological band is due solely to weak topological invariants existing already for $d=1$.
This obstruction is equivalent to saying that not all representative Wannier-$\mathpzc{type}$ orbitals 
(respecting an Altland-Zirnbauer symmetry) can have finite extent, i.e., the `grading' is infinite, if the notion of grade were generalized to apply to any band subspace (not just band representations), and if the lower bound on the number of primitive cells is defined in the stable-equivalence class (as opposed to an equivalence class of $h(\bk)$ with fixed matrix dimension).

More recently, Schindler and Bernevig in Ref.~\cite{Schindler:2021} showed that certain obstructed atomic insulators~\cite{TQC} (whose valence subspaces are band representations but whose conduction subspaces are fragile) do not admit a compactly-supported $\mathpzc{on}$-Wannier representation. 
This statement holds (in this case for the conventional $\mathpzc{on}$-Wannier
orbitals) 
even if one allows for the matrix dimension of $h(\bk)$ to increase, so long as the conduction subspace remains fragile.

We thus find that in all the previous studies, the categorization into compact vs.~non-compact has been used as a real-space-centric diagnostic for trivial vs.~nontrivial band structures in the stable or fragile topological classification. (Here we view the Schindler-Bernevig model as a generalized example of fragile topology.)
\smallskip

In our work, we opt for the `middle ground' with respect to the three kinds of Wannier orbitals, by adopting the $\mathpzc{og}$- case. 
Namely, what we propose
is that: \medskip \\
\noindent (\emph{a}) delicate topological insulators can have compactly-supported $\mathpzc{og}$-Wannier functions, and that  \medskip \\ 
\noindent (\emph{b}) the grade $\mathfrak{g}$, as defined for $\mathpzc{og}$-Wannier orbitals,
offers a finer-grained notion of compactness, with different finite values of $\mathfrak{g}$ distinguishing different categories of delicate topological insulators. \medskip \\
Presently, we do not know if the analog of $\mathfrak{g}$, defined for conventional $\mathpzc{on}$-Wannier orbitals, can take finite values for delicate topological insulators. For the models presented here, the $\mathpzc{on}$-Wannier orbitals are infinite in extent.

\subsection{Graded multicellularity: case studies}\label{sec:grading-examples}

To exemplify Proposals (\emph{a}--\emph{b}) formulated in the last paragraph of Sec.~\ref{sec:graded-multicell}, we proceed to briefly investigate bounds on the grade $\mathfrak{g}_\textrm{Hopf}$ of the Hopf insulator if we allow for continuous Hamiltonian deformation that may break any point group symmetry, and on the grade $\mathfrak{g}_\textrm{RTP}$ of an RTP insulator in the presence of rotation symmetry. 

We begin with the case of the Hopf insulator. On one hand, based on the homotopic inequivalence of the Hopf-insulating Hamiltonian map to a constant map, it was shown in Ref.~\cite{Nelson:2021} that the Hopf insulator is not unicellular, implying a lower bound on the grade: $\mathfrak{g}_\textrm{Hopf}\geq 2$.
On the other hand, one may consider the concrete MRW model of the Hopf insulator~\cite{Hopfinsulator_Moore} [corresponding to the entry ``$\textrm{P}4$, $\Delta\mathcal{L}=1$, $\Delta\br_\perp=(0,0)$'' in Table~\ref{tab:model-zoo}] with non-normalized valence intra-cell function:
\begin{equation}
\ket{u_v^{\mathpzc{og}}(\bk)}=i\sigma_y\left(\begin{array}{c}
\sin k_x + i\sin k_y \\
\sin k_z - i\left(\sum\limits_{\xi=x,y,z}\cos k_\xi + \Phi\right)
\end{array}\right).\label{eqn:MRW-valence-state}
\end{equation}
Its inverse Fourier transform yields an $\mathpzc{og}$-Wannier
orbital supported over seven primitive unit cells: 
a `central' cell, together with six `adjacent' cells that are attributable to the appearance of $e^{\pm i k_{x,y,z}}$ in Eq.~(\ref{eqn:MRW-valence-state}). Combining both results,
\begin{subequations}
\begin{equation}
2 \leq \mathfrak{g}_\textrm{Hopf} \leq 7. \label{eqn:Hopf-grade}
\end{equation}

The second example that we use to illustrate these ideas is the crystalline Hopf insulator characterized by RTP invariants $\Delta \mathscr{P}_{\Pi\,\Pi'}$ and by Hopf invariant $\chi$ [recall that the invariants are constrained through Eq.~(\ref{eq:hopf-RTP})]. 
For simplicity, let us assume that all rotation-invariant reduced momenta are encompassed in a single iso-orbital subset -- which per Table~\ref{tab:itinerant_ell} implies that $\Delta\br_\perp = (0,0)$, i.e., both basis orbitals lie along the same rotation axis.
Because the polarization $\mathscr{P}_z(k_x,k_y)$ extends over at least $[1{+}\max_{\Pi,\Pi'}\{\abs{\Delta \mathscr{P}_{\Pi\,\Pi'}}\}]$  
rotation-invariant primitive unit layers, the $\mathpzc{og}$-Wannier orbital
extends over at least the same number of primitive unit cells, giving the (rather generous) lower bound
\begin{equation}
\mathfrak{g}_\textrm{RTP}\geq 1+\max_{\Pi,\Pi'}\{\abs{\Delta \mathscr{P}_{\Pi\,\Pi'}}\}\label{eqn:CHI-grade}
\end{equation}
for the grade of RTP insulators. 

In fact, a more restrictive lower bound than Eq.~(\ref{eqn:CHI-grade}) can be deduced by further considering the spread of the $\mathpzc{og}$-Wannier orbital in directions \emph{perpendicular} to the rotation axis. 
Let us allude to the fact that, owing to the non-trivial value of any RTP invariant, (1) the valence band representation ($\textrm{VBR}$) of an RTP insulator \emph{is not continuously deformable into} a basis band representation ($\textrm{BBR}$). 
However, it follows from the rotation eigenvalues of the valence band (under the stated assumption of a single all-encompassing iso-orbital subset) that (2) the $\textrm{VBR}$ \emph{is symmetry-equivalent to} a $\textrm{BBR}$ spanned by one of the basis orbitals (let's say $\textrm{BBR}[\varphi_1]$).

We now argue as follows. 
First, the fact (1) implies that $\textrm{VBR}$ must contain some contribution from orbital $\varphi_2$ in some unit cells.
However, the symmetry equivalence of $\textrm{VBR}$ to $\textrm{BBR}[\varphi_1]$, stated by fact (2), implies that the $\mathpzc{og}$-Wannier orbital transforms under $C_n$ with the angular momentum $\mathcal{L}_1$ of orbital $\varphi_1$. 
This transformation rule precludes the $\varphi_2$ contribution from orbitals along the rotation axis. 
It can then be seen that the minimal way for the $\mathpzc{og}$-Wannier orbital to acquire a contribution from $\varphi_2$ orbitals while exhibiting the angular momentum $\mathcal{L}_1$ is due to an $n$-plet of sites that transform cyclically into each other under $C_n$ in at least one of the $[1{+}\max_{\Pi,\Pi'}\{\abs{\Delta \mathscr{P}_{\Pi\,\Pi'}}\}]$ rotation-invariant primitive unit layers. Therefore, under the stated assumptions (namely: space group $\textrm{P}n$, an all-encompassing iso-orbital subset, and at least one RTP invariant with non-zero value), we find
\begin{equation}
\mathfrak{g}_\textrm{RTP} \geq 1 + n + \max_{\Pi,\Pi'}\{\abs{\Delta \mathscr{P}_{\Pi\,\Pi'}}\}.\label{eqn:stricter-RTP-grade}
\end{equation}
In particular, for the non-trivial phases of the MRW model (which fulfill the stated assumption with $n=4$ and $\max_{\Pi,\Pi'}\{\abs{\Delta \mathscr{P}_{\Pi\,\Pi'}}\}=1$), in combination with the argument presented before Eq.~(\ref{eqn:Hopf-grade}), we obtain that
\begin{equation}
6 \leq \mathfrak{g}_\textrm{MRW} \leq 7.
\end{equation}
\end{subequations}

\subsection{Grading of models with \texorpdfstring{${\,{>}{2}\,}$}{>2} bands}\label{sec:grading-multiband}

We finally generalize the definition of graded multicellularity to delicate topological insulators with an $N_v$-band valence subspace, i.e., the valence subspace is spanned by ${N_v>1}$ 
linearly independent Bloch states at each $\bk$. 
For this purpose, it is convenient to adopt in the present subsection certain bundle-theoretic language: we view the valence subspace as a rank-$N_v$ vector bundle, with an  $N_v$-{(complex)-}dimensional vector space associated to each $\bk$ of the base space (our Brillouin torus).
In constructing the $\mathpzc{og}$-Wannier
orbitals,
we must first decompose the rank-$N_v$ bundle into $N_v$ unit-rank bundles (i.e., line bundles); this decomposition amounts to choosing $N_v$ orthogonal (but not necessarily normalized)
intra-cell functions $\left\{\big|{u_j^{\mathpzc{og}}(\bk)}\big\rangle\right\}_{j=1,\ldots,N_v}$ which span the $N_v$-dimensional vector space at each $\bk$. 

The inverse Fourier transform of  $\big|{u_j^{\mathpzc{og}}(\bk)}\big\rangle$ defines representative (i.e., translation-inequivalent) $\mathpzc{og}$-Wannier orbitals
$\big|{W_{j,\bR}^{\mathpzc{og}}}\big\rangle\big\rangle$, and we define the \emph{partial grade} $\mathfrak{g}_j$ of the $j$-th line bundle to be the number of primitive unit cells in which $\big|{W_{j,\bR}^{\mathpzc{og}}}\big\rangle\big\rangle$ has a finite support. 
The \textit{total grade} $\mathfrak{g}$ is then defined as the lower bound of the \emph{sum} of the partial grades of all representative $\mathpzc{og}$-Wannier orbitals,
\begin{equation}
\mathfrak{g} - 1 = \textrm{min}\sum_{j=1}^{N_v} \left(\mathfrak{g}_j - 1\right),\label{eqn:higher-rank-grade}
\end{equation}
given that one is allowed to continuously deform the tight-binding Hamiltonian $h(\bk)$ of fixed matrix dimension while preserving the energy gap and crystalline symmetry.

For a concrete example, consider the case of an RTP insulator with an $N_v$-rank valence subspace and  $\max_{\Pi,\Pi'}\{\abs{\Delta \mathscr{P}_{\Pi\,\Pi'}}\}=M$, thereby generalizing the discussion of a crystalline Hopf insulator in Sec.~\ref{sec:grading-examples}. 
We argue that the total grade $\mathfrak{g}$ obeys again the inequality~(\ref{eqn:CHI-grade}).
Indeed, when one constructs the $\mathpzc{og}$-Wannier orbitals, 
the decomposition of the rank-$N_v$ bundle into $N_v$ line bundles leaves us a choice as to how to distribute the $M$ quanta of $2\pi$ Zak phases among the $N_v$ line bundles; effectively, we must choose an integer decomposition $M = \mathrm{n}_1 + \mathrm{n}_2 + \ldots + \mathrm{n}_{N_v}$.
This limits each partial grade to $\mathfrak{g}_j \geq 1 + \abs{\mathrm{n}_j}$, and the total grade to $\mathfrak{g} \geq 1 + \sum_{j=1}^{N_v} \abs{\mathrm{n}_j} \geq 1 + M$, irrespective of the rank of the valence subspace. 
We see that no matter how one performs the distribution, the inequality (\ref{eqn:CHI-grade}) holds. 

In addition, by adopting similar assumptions and arguments that led to Eq.~(\ref{eqn:stricter-RTP-grade}), we anticipate that each bundle with $\mathrm{n}_j\neq 0$ must have support on at least $n$ additional unit cells within some rotation-invariant primitive unit layer. Optimally, one can imagine these five unit cells to be the same ones for each bundle (alternatively, one could imagine assigning all the $M$ quanta of $2\pi$ Zak phase into a single bundle), allowing us to  generalize to the $({>}2)$-band case also the \emph{stricter} bound in~Eq.~(\ref{eqn:stricter-RTP-grade}).

Determining the exact value of the grade for Hopf and RTP insulators in space groups $\textrm{P}n$, as well as of other multicellular topological insulators, constitutes an interesting question that we leave open for future studies. 
We further anticipate the grade as defined by Eq.~(\ref{eqn:higher-rank-grade}) to be invariant under the addition of unicellular band representations that do not trivialize the delicate topological invariants, but further research is needed to firmly establish such a property of graded multicellularity.

\section{Topological semimetals intermediate crystalline Hopf insulators}\label{sec:semi-metallic-transition-region}

We began this paper by extolling the virtues of the Berry dipole [Eqs.~(\ref{eq:kp-sym-spinor}--\ref{eq:zszs-hamiltonian})], namely that (\emph{a})~it affords a clean, analytic derivation of the simultaneous change in Hopf and RTP invariants across a Berry-dipole phase-transition point [Sec.~\ref{sec:berry-dipole}], and (\emph{b})~it gives a practical strategy for a theorist to cook up tight-binding-model Hamiltonians of crystalline Hopf insulators [\s{sec:models}]. 

The price to pay for these virtues is that a Berry-dipole phase transition lies in a region of the phase diagram that is `hard to reach', i.e., to interpolate from an insulating, $\mathrm{P}n$-symmetric Hamiltonian to a Berry-dipole Hamiltonian requires fine-tuning  more Hamiltonian parameters than would otherwise be needed to interpolate to a generic, non-insulating, $\mathrm{P}n$-symmetric Hamiltonian. 
How many more parameters exactly? This we answer in \s{sec:notgeneric}, wherein we also attribute the additional fine-tuning to an `emergent' mirror symmetry (of the Berry-dipole $\bk\cdot\bp$ Hamiltonian) that is absent from the tight-binding Hamiltonian. 

The implications of being `hard to reach' are explored in the subsequent \s{sec:generictransition}, namely that a $\mathrm{P}n$-symmetric interpolation\footnote{If $h(\bk;t)$ interpolates between a trivial insulator $h(\bk;0)$ to a crystalline Hopf insulator $h(\bk;1)$, we say the interpolation is $\mathrm{P}n$-symmetric if the Hamiltonian is $\mathrm{P}n$-symmetric for all $t\in [0,1]$.} from trivial to crystalline Hopf insulator is typically intermediated by an `easier-to-reach' topological-semimetallic phase that involves either Weyl-point or nodal-line degeneracies. 
The basic properties of these intermediate topological-semimetallic phases are illustrated by a model $\bk\cdot \bp$ Hamiltonian in \s{sec:casestudyTSM}, which is a first step toward more naturalistic/material-realistic models of delicate topology.

The emergent mirror symmetry offers a symmetrically-distinct route to a crystalline Hopf insulator. 
This stems from the observation that the mirror-symmetric Berry-dipole band-touching \textit{point} can be continuously expanded (in $\bk$-space) to a mirror-symmetric topological nodal-\textit{loop}. 
Thus, an alternative strategy to realize a crystalline Hopf insulator is to  begin from a mirror-symmetric topological-nodal-line semimetal and break the mirror symmetry in a manner to be ascertained. 
Unlike previously-studied nodal-loop semimetals~\cite{Burkov:2011}, the presently described semimetal realizes the first-known example of a symmetry-protected delicate topological semimetal, as we explain in \s{sec:delicateSM}.

\subsection{The Berry dipole is `hard to reach'}\la{sec:notgeneric}

Let us construct a $C_n$-symmetric $\bk\cdot\bp$ Hamiltonian exhibiting a Berry dipole. 
For concreteness, we pick the case of $n=4$, with the two basis Bloch states differing by unity for their itinerant angular momenta; this is the case for both the $\Gamma$ and the $\tm$ line of both the Hopfless and the MRW models discussed in Sec.~\ref{sec:models}. 
A general $C_4$-symmetric $\bk\cdot\bp$ Hamiltonian then has the form
\begin{equation}\label{generalC4ham}
\begin{split}
h(\bk)\eq f(\bk)\sigma_++f(\bk)^*\sigma_-+g(\bk)\sigma_z, \;\; \sigma_{\pm}=\f{\sigma_x\pm i\sigma_y}{2}, \\
f(\bk)\eq \alpha k_{-} +\beta k_{-}(k_z-b) \;\in \mathbb{C}; \as k_{\pm}=k_x\pm ik_y, \\
g(\bk)\eq  a(k_z-b)^2+ck_+k_-+d  \;\in \mathbb{R},
\end{split}
\end{equation}
with $\alpha,\beta$ being complex constants, and $a,b,c,d$ real constants. 
Equation (\ref{generalC4ham}) reduces to the Hamiltonian of a Berry dipole centered at the momentum origin [Eqs.~(\ref{eq:kp-sym-spinor},\ref{eq:zszs-hamiltonian}) with $\phi=0$] if the above $\bk\cdot\bp$ parameters take the particular values:
\e{\text{Berry dipole:}\as b=d=\alpha=0,\as c=-a=\beta/2=1.\label{eqn:parameters}}
Even after allowing for simple generalizations of the Berry-dipole Hamiltonian,\footnote{Namely, the sign of the Hamiltonian is of little significance; the momenta can be rescaled $C_4$-symmetrically; the bands are allowed to touch at an arbitrary point along the $k_z$-axis. The resulting more general Berry-dipole Hamiltonian has the form
\begin{align}
    h(\bk)\eq\pm\zeta^\dagger(\bk)\bsigma\zeta(\bk)\cdot \bsigma, \lin
    \zeta(\bk)\eq\begin{pmatrix}
        Zk_- \\
        A(k_z-b)
    \end{pmatrix},
    \label{eq:Berry-dipole-gen}
\end{align}
with parameters $Z, A\in \mathbb{C}$ and $b\in\mathbb{R}$.
} we find it necessary to tune at least three real parameters to attain a Berry-dipole Hamiltonian from \q{generalC4ham}.\footnote{Indeed, to tune the Hamiltonian given by \q{generalC4ham} to the Berry-dipole Hamiltonian \eqref{eq:Berry-dipole-gen}, one needs to tune two real parameters to $d=\alpha=0$. Parameters $\beta$ and $c$, related to the free parameters of the Berry-dipole Hamiltonian as $\beta=\pm2ZA^*\in\mathbb{C}$ and $c=\pm|Z|^2\in\mathbb{R}$, are kept free. However, the real parameter $a$ must be tuned to the value fixed by the free parameters $\beta$ and $c$, namely $a=-|\beta/2|^2/c$. Thus, in total, we require , which in total results in three fine tuned parameters.}

The necessity of fine-tuning can be partially justified by  certain symmetries of the Berry-dipole $\bk\cdot \bp$ Hamiltonian that are generically absent in the tight-binding Hamiltonian. 
Since the $\bk\cdot \bp$ Hamiltonian is obtained by truncating a Taylor expansion of the tight-binding Hamiltonian, these `certain symmetries' apply to low-order terms in the expansion but are generically broken at higher order. 
To the extent that the low-energy physics of insulators is determined by Bloch states in a small $\bk$-region containing the band-touching point, one may refer to these `symmetries' as \textit{emergent}. 

One of these emergent symmetries constrains the Berry-dipolar Hamiltonian [Eqs. (\ref{eq:kp-sym-spinor}--\ref{eq:zszs-hamiltonian})] as
\e{ \sigma_z h(\bk_{\perp},k_z;\phi) \sigma_z= h(\bk_{\perp},-k_z;-\phi),\la{genmirror}}
which might be viewed as a $\pi$-rotation symmetry in the  4D $(\bk,\phi)$-space.\footnote{To see that this constraint holds, note that the spinor-form Hamiltonian \eqref{eq:zszs-hamiltonian} can be written in terms of spinor components as
\begin{equation}
    h(\bk; \phi)=\begin{pmatrix}
        |\zeta_1(\bk; \phi)|^2-|\zeta_2(\bk; \phi)|^2 & 2\zeta_1(\bk; \phi)\zeta_2^*(\bk; \phi) \\
        2\zeta_1^*(\bk; \phi)\zeta_2(\bk; \phi)& - |\zeta_1(\bk; \phi)|^2+|\zeta_2(\bk; \phi)|^2
    \end{pmatrix},
    \label{eq:ham-via-z1z2}
\end{equation}
where $\zeta(\bk; \phi)=(\zeta_1(\bk; \phi),\zeta_2(\bk; \phi))^\top$. Components of the spinor \eqref{eq:kp-sym-spinor} satisfy the relations $\zeta_{1(2)}(\bk_\perp, -k_z; -\phi)=(-)\zeta_{1(2)}(\bk_\perp, k_z; \phi)$. Plugging the spinor components evaluated at $(\bk_\perp, -k_z;-\phi)$ into \q{eq:ham-via-z1z2}, we see that \q{genmirror} holds.} 
At the phase-transition point ($\phi=0$), the $\pi$-rotation reduces to a  mirror symmetry that inverts $k_z$ in $\bk$-space, and correspondingly inverts the rotational axis in real space.\footnote{For $C_n$-symmetric $\bk\cdot \bp$ Hamiltonians with even $n$, composing a $z$-mirror reflection with a two-fold rotation (about $z$) gives a spatial-inversion operation. \label{foot:PT-sym}} (In short, we call this symmetry over the 3D $\bk$-space a \textit{$z$-mirror symmetry}.)  
An alternative view of \q{genmirror} is that $\phi$ parametrizes a Hamiltonian perturbation that breaks the $z$-mirror symmetry. One may further verify that the $z$-mirror symmetry fixes the complex $\bk\cdot \bp$ parameter $\alpha$ in \q{generalC4ham} to zero, as is true for the Berry-dipole Hamiltonian. 

Identifying the 4D $\pi$-rotation symmetry confers two other advantages: the first is that it inspires a strategy to realize crystalline Hopf insulators that is alluded to in \s{sec:generictransition}; the second is that it explains why the  continuum Hopf number [\q{eq:hopfinvar-cont}] and continuum Zak phase [\q{eq:pol-continuum}] attain the half-integer values:
\begin{subequations}
\e{&\chi^\textrm{cont.}[h(\bk;\phi)]= \f1{2} \Delta \ell \text{sgn}[\phi], \la{halfchi}\\    &\f{\mathscr{Z}^{\mathrm{cont.}}}{2\pi}[h(\bk;\phi)] =\f1{2} \text{sgn}[\phi].\la{halfzak}}
\end{subequations}
Equation (\ref{halfchi}) was argued pictorially in Sec.~\ref{sec:Berry-transitions} and is proven in Appendix~\ref{app:change-hopf}, while \q{halfzak} can be proven in analogy with the argument in Sec.~\ref{sec:Berry-transitions} and footnote~\ref{foot:Zak-change}, with the modified assumption that the intra-cell function is symmetric under the $\pi$-rotation.  
The 4D $\pi$-rotation symmetry constrains the continuum topological numbers as:
\begin{subequations}\la{inversionconstraint}
\begin{eqnarray}
\chi^\textrm{cont.}[h(\bk;\phi)]&=-\chi^\textrm{cont.}[h(\bk;-\phi)], \\
\f{\mathscr{Z}^{\mathrm{cont.}}}{2\pi}[h(\bk;\phi)]&=-\f{\mathscr{Z}^{\mathrm{cont.}}}{2\pi}[h(\bk;-\phi)],
\end{eqnarray}
\end{subequations}
where the square brackets indicate arguments of $\chi^\textrm{cont.}$ and of  $\mathscr{Z}^\textrm{cont.}/2\pi$, here treated as functionals.
To rationalize Eqs.~(\ref{inversionconstraint}), we equate the Hopf number and the Zak phase of two equal Hamiltonians $h(\bk;\phi)$ and $\sigma_z h(\bk_\perp,-k_z;-\phi)\sigma_z$ [related by Eq.~\eqref{genmirror}]. 
Interpreting the conjugation of the Hamiltonian with $\sigma_z$ matrices (and the associated reversal of sign of $k_z$) as the $z$-mirror crystallographic operation, we obtain Eqs.~\eqref{inversionconstraint}
by ($\textrm{a}$) the interpretation of the Zak phase as a positional $z$-polarization which is odd under $z$-mirror operation, and by ($\textrm{b}$) the fact (proven in Appendix~\ref{app:incompatibility}) that the Hopf invariant is odd under improper crystallographic spacetime operations -- of which $z$-mirror is a case in point.\footnote{Any crystallographic spacetime operation has a four-dimensional matrix action on the four-vector of spacetime, with the matrix $M$ a direct sum $O(3)\oplus O(1)$; $O(1)=\{1,-1\}$ simply distinguishes if the operation reverses time or not. We say the spacetime operation is proper if det$[M]=1$, and improper if det$[M]=-1$. } 
Coupling \q{inversionconstraint} with the guaranteed integer-quantization of \textit{changes} to the continuum topological numbers across the phase-transition point ($\phi=0$) (which were discussed in Sec.~\ref{sec:Berry-transitions}), one derives that the continuum topological numbers are indeed quantized to half-integer values as $\phi$ approaches zero from both positive and negative directions.

\subsection{Topological semimetals are `easier to reach'}\la{sec:generictransition}

To understand the generic phase transition between distinct crystalline Hopf insulators, note that the considered symmetry setting (Wigner-Dyson class $\textrm{A}$ with $\mathrm{P}n$ symmetry) imposes no local-in-$\bk$ constraint on the $\bk\cdot\bp$/tight-binding Hamiltonian $h(\bk)$ at generic $\bk$.\footnote{On the rotation-invariant $\bk$-line, the mutually-disjoint condition ensures that $h(\bk)$ and the $\bk$-independent rotation matrix are simultaneously diagonal, implying that only a single, tunable real parameter is needed for a band touching on this $\bk$-line. 
Such a one-parameter-tuned band touching is a (multi-)Weyl point, whose existence precludes a crystalline-Hopf-insulating phase.
See related discussion in the previous \s{sec:notgeneric}.} This implies~\cite{vonNeumann:1929,Bzdusek:2017} that the generic number of real parameters needed to tune to an energy degeneracy (or two-band touching) is three; the number ${\mathfrak{D}}=3$ is sometimes referred to as the \emph{codimension of an eigenvalue degeneracy}. One implication is that the generic band degeneracy in three momentum dimensions is a  Weyl point~\cite{wan2010}, i.e., a band-touching point in $\bk$-space acting as a monopole source of Berry curvature~\cite{berry_quantalphase}.

If the three-dimensional $\bk$-space is supplemented with a tuning parameter $\phi$, the generic band-degeneracy in  the four-dimensional $(\bk,\phi)$-space is a one-dimensional line.
Without additional symmetries and without fine-tuning, such a line generically forms a contractible loop with a nonzero projection $[\phi_1,\phi_2]$ on the $\phi$-axis; such loop can be interpreted as  the concatenation of world-lines  of at-least-two Weyl $\bk$-points (of opposite monopole charge) which are created at an earlier `time' $\phi_1$ and annihilated at a later $\phi_2$. 
In other words, for any $\phi$ in the (open) interval $(\phi_1,\phi_2)$, the $\bk$-space Hamiltonian corresponds to a Weyl semimetal (WSM). We will refer to the union of these world-lines as a \emph{Weyl-point trajectory} (WPT). 
By enclosing the trajectory with a momentum 3-sphere $S^{\! 3}$, the WPT can be ascribed a topological charge given by an element of the third homotopy group of the classifying space of two-band Hamiltonians: $\pi_3(S^{\! 2})=\mathbb{Z}$, i.e., the Hopf invariant on the enclosing sphere. 
If this invariant is nontrivial, then the two insulating phases for $\phi<\phi_1$ resp.~$\phi>\phi_2$ exhibit a different value of the Hopf invariant. 

Reading the above argument `backwards', one finds that two $\mathrm{P}n$-symmetric Hamiltonians with distinct Hopf invariants are generically intermediated by an extended WSM region. The exceptions require to either:\medskip \\
\noindent (\emph{i}) continuously shrink the WPT to a single point (a Berry dipole!)  on a rotation-invariant $\bk$-line, or \medskip \\
\noindent (\emph{ii}) continuously deform the WPT such that it lies entirely within the $\bk$-space at fixed `time' $\phi=\phi_0$.\medskip \\
\noindent Observe that a WPT occurring `instantaneously' is simply a topological nodal-line degeneracy. While both options generically require to fine-tune the $\mathrm{P}n$-symmetric Hamiltonian in Wigner-Dyson symmetry class A, option (\emph{ii}) can also arise in a non-generic/higher-symmetry scenario where the Hamiltonian is $z$-mirror-symmetric at $\phi=\phi_0$ but $z$-mirror-asymmetric at $\phi \neq \phi_0$; this is consistent with a previous observation that a non-trivial Hopf invariant requires breaking all improper crystallographic symmetries, including all mirror symmetries.

\subsection{Case study of intermediate topological semimetals }\la{sec:casestudyTSM}

We illustrate options (\emph{i}--\emph{ii}) with a model $\bk\cdot \bp$ Hamiltonian:
\begin{equation}\label{eqn:H-almost-dipole}
\begin{split}
h(\bk;\mathbb{m}) &= 2\alpha k_z \left( k_x \sigma_x - k_y \sigma_y\right) - 2\beta k_z \left( k_x \sigma_y + k_y \sigma_x \right) \\
&\phantom{=} + \left[\gamma(k_x^2 + k_y^2) - \delta k_z^2 - \mathbb{m}
\right]\sigma_z,
\end{split}
\end{equation}
where $\mathbb{m}$ is a real-valued tuning parameter which we will shortly relate to the parameter $\phi$ in the Berry-dipole Hamiltonian. 
The Hamiltonian in Eq.~(\ref{eqn:H-almost-dipole}) is invariant under mirror symmetry $M_z: (k_x,k_y,k_z)\mapsto (k_x,k_y,-k_z)$ represented by $\sigma_z$, and rotation symmetry $C_{4z}:(k_x,k_y,k_z)\mapsto (-k_y,k_x,k_z)$ represented by $R_{C_{4z}}=\diag(1,-i)$. 
The mirror symmetry implies that the $\sigma_{x,y}$ ($\sigma_z$) term is odd (even) in $k_z$, and the $C_{4z}$-symmetry then guarantees that (up to unimportant terms proportional to the identity matrix) the expression in Eq.~(\ref{eqn:H-almost-dipole}) with free coefficients $\alpha$, $\beta$, $\gamma$, $\delta$ and $\mathbb{m}$ 
is the most general symmetry-compatible Hamiltonian up to quadratic order in the momentum components. 

By employing a rotation of the in-plane coordinates $(k_x,k_y)$, and after rescaling $k_z$ relative to the in-plane coordinates, we can generally achieve $\alpha = \abs{\gamma} = \abs{\delta} = 1$ and $\beta =0$. We further assume $\gamma,\delta > 0$, leading to 
\begin{equation}
h_0(\bk;\mathbb{m}) = 2 k_z \left( k_x \sigma_x  - k_y \sigma_y\right)+ \left(k_x^2 + k_y^2 - k_z^2 - 
\mathbb{m}
\right)\sigma_z,\label{eqn:H-almost-dipole-3}
\end{equation}
in which case the model is a Weyl semimtal (WSM) with a pair of Weyl points located at $\bk = (0,0,\pm\sqrt{\abs{\mathbb{m}}})$ for $\mathbb{m}<0$ [Fig.~\ref{fig:semimetal}(a)], and a nodal-line semimetal (NLSM) with a nodal ring of radius $\sqrt{\mathbb{m}}$ located at the $k_z = 0$ plane for $\mathbb{m}>0$~\cite{Sun:2018} [Fig.~\ref{fig:semimetal}(b)]. 
Notably, the critical point $\mathbb{m}=0$ is a Berry dipole, described by the spinor $\zeta(\bk)^\top = (k_x + i k_y \, , \, k_z)$, suggesting that the outlined model is close to the Hopf-insulating phase.

We now consider breaking of the $M_z$ and $C_{4z}$ symmetry, and study the fate of the semi-metallic phases and of the Berry dipole. 
Multiple additional terms can be added to the Hamiltonian; for simplicity, we consider only two,
\begin{equation}
h_\textrm{perp.}(\bk;A,B) = + 2 A \left(k_x \sigma_y + k_y \sigma_x \right) + 2 B \sigma_x,\label{eqn:H-pert}
\end{equation}
which illustrate the generic behavior of the Berry dipole under perturbations. 
The $B$ term breaks both mirror and rotation symmetries, while the $A$ term breaks only the mirror symmetry. 
As a function of parameters $(\mathbb{m},A,B)$, the combined Hamiltonian $h_0(\bk;\mathbb{m})+h_\textrm{perp.}(\bk;A,B)$ has a rich phase diagram illustrated in Fig.~\ref{fig:Berry-dipole-perturbation}, which we now discuss in detail. 

\begin{figure}
    \centering
    \includegraphics[width=0.48\textwidth]{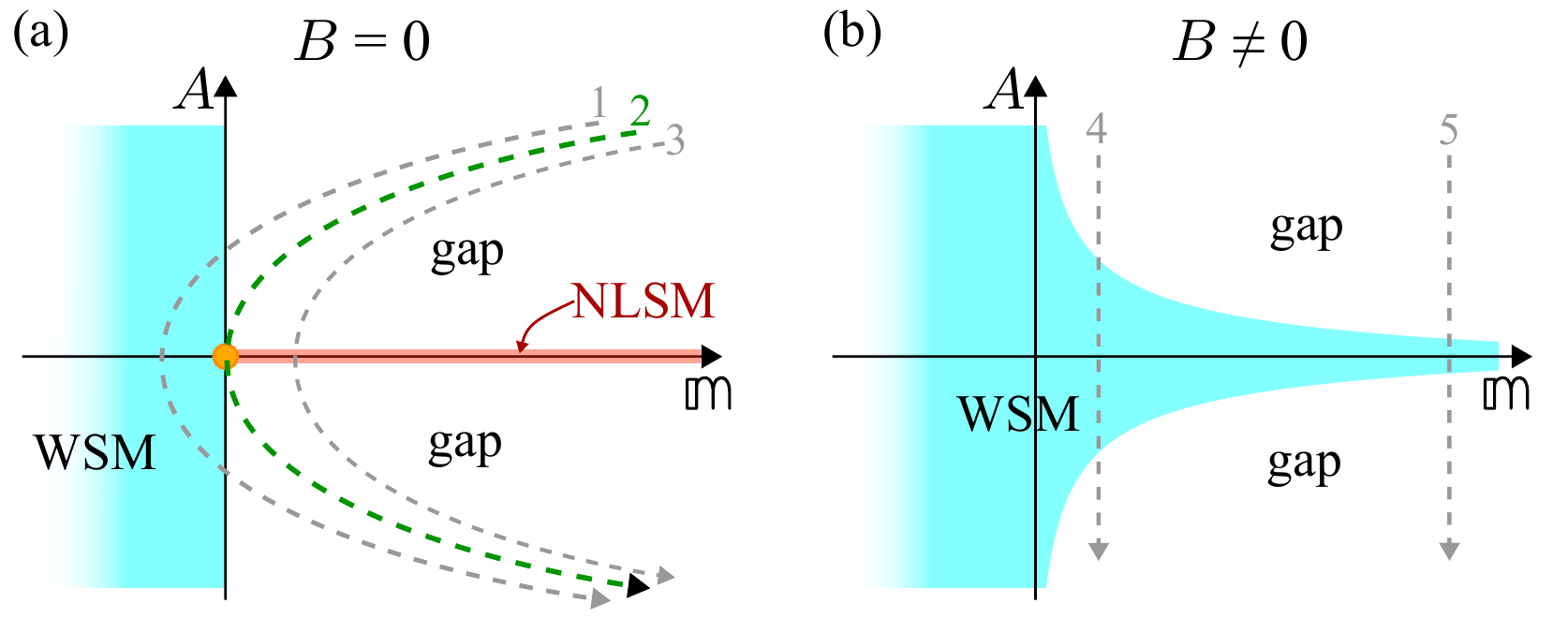}
    \caption{
    Phase diagram of the Hamiltonian in Eq.~(\ref{eqn:H-almost-dipole-3}) with the perturbations in Eq.~(\ref{eqn:H-pert}) that break both $M_z$ and $C_{4z}$ symmetry. 
    (a) 
    For $B=0$ (rotation symmetry preserved) there is a Weyl semimetal (WSM, cyan) phase at $\mathbb{m}<0$, and a nodal-line semimetal (NLSM, red) phase at $A=0,\mathbb{m}>0$. The gapless phases separate two gapped regions (white) that have different values of the Hopf invariant. 
    The trajectory `2' (dashed green line), described by $A = \phi$ and $\mathbb{m}=\phi^2$, reproduces the spinor-form Hamiltonian in Eqs.~(\ref{eq:zszs-hamiltonian}) and~(\ref{eq:kp-sym-spinor}). 
    However, a more generic trajectory connecting the two gapped regions either exhibits an extended WSM phase (path `1') or passes through the NLSM phase (path `3'). 
    (b) For $B\neq 0$, the NLSM phase expands into a Weyl semimetal. 
    In this case, arbitrary trajectory connecting the two gapped regions passes through an extended WSM phase. 
    Paths `4' and `5' are discussed in the text of Sec.~\ref{sec:casestudyTSM}. 
    The change in the Hopf invariant along the trajectories `1'--`5' amounts to $\delta\chi=-1$.}
    \label{fig:Berry-dipole-perturbation}
\end{figure}

First, for $B=0$ (when rotation symmetry is preserved) there is a WSM phase at $\mathbb{m}<0$, and a NLSM phase at $A=0,\mathbb{m}>0$. 
The union of both gapless phases separates two gapped regions, which we now argue to be topologically distinct. Observe that the trajectory marked by `2' (dashed green line), parametrized as $A = \phi$ and $\mathbb{m}=\phi^2$, corresponds \emph{exactly} to the Hamiltonian in Eq.~(\ref{eq:zszs-hamiltonian}) with spinor $\zeta(\bk)^\top = (k_x + i k_y\, , \, k_z + i \phi)$, which exhibits a single gapless point: a Berry dipole. 
It is stated in Eq.~(\ref{eq:Berry-dipole-hopf-change})  (and derived in Appendix~\ref{app:change-hopf}) that the Berry dipole carries a non-trivial Hopf charge; therefore, the two insulating phases are characterized by a different value of the Hopf invariant, which changes by $-1$ along the trajectory `2'. 
However, the Berry-dipole transition is fine-tuned: small deviations (indicated by path `1' resp.~`3') result in an intermediate semi-metallic phase that is either a WSM or NLSM. 
In these cases, the change of the Hopf invariant is facilitated by the non-trivial Hopf charge carried by the WPT in the four-dimensional $(\bk,\phi)$-space, representing an ``expanded'' analog of the Berry dipole.

The inclusion of finite $B\neq 0$ transforms the NLSM phase into a WSM region, bounded by lines $\abs{A}<\abs{B}/\!\sqrt{\mathbb{m}}$. 
Therefore, when moving from the gapped phase at $A>0$ to the gapped phase at $A<0$, one always needs to pass through an extended WSM phase where a pair of Weyl points are first created and later annihilated.
[One rationalization is that the mirror symmetry needed to protect the NLSM is broken throughout the phase diagram in \fig{fig:Berry-dipole-perturbation}(b).] 
The corresponding WPT is nearly vertical (i.e, along the $k_z$-axis) in the $\bk$-space for small $\mathbb{m}/B$ (path `4') and nearly horizontal [i.e., inside the $(k_x,k_y)$-plane] for large $\mathbb{m}/B$ (path `5'). 
Setting $\mathbb{m}=\cos t$ and $B=\sin t$ with $t\in[0,\tfrac{\pi}{2}]$ allows us to `rotate' the WPT trajectory in the $(\bk,A)$ space from lying in $(k_x,k_y)$-plane (phase transition via NLSM) to lying in $(k_z,A)$-plane (phase transition via extended WSM region with Weyl points along the rotation axis). 
We have thus shown on a concrete model how the Berry-dipole transition is generically altered into a NLSM or an extended WSM phase.

\subsection{Intermediate nodal-line semimetal is delicate-topological}\la{sec:delicateSM}

\begin{figure*}
    \centering
    \includegraphics[width=0.87\textwidth]{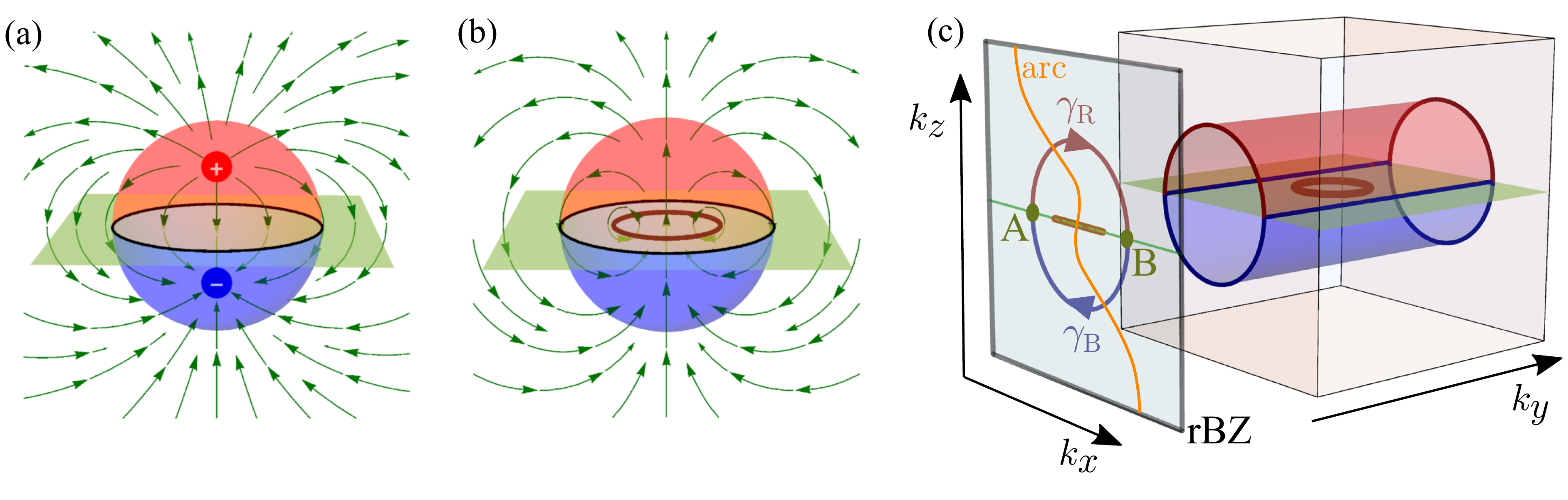}
    \caption{
    (a,b) The band degeneracy of the model in Eq.~(\ref{eqn:H-almost-dipole-3}) for (a) $\mathbb{m}<0$, and (b) $\mathbb{m}>0$. 
    The red/blue dots indicate a Weyl node with positive/negative chirality, and the brown loop represents a nodal-line ring. 
    The green arrows display the Berry curvature field, whereas the bright green sheet indicates the $z$-mirror-invariant plane. 
    The colored hemispheres each exhibit a non-trivial and quantized value of the Chern number. 
    This Chern number is a symmetry-protected delicate topological invariant, which obstructs the annihilation of the Weyl points when they collide for $\mathbb{m}=0$. 
    (c) For $\mathbb{m}>0$ (when nodal-line ring is present), we enclose the band degeneracy with a $z$-mirror-symmetric cylinder that projects onto a circular path $\gamma_\textrm{R}{\cup}\gamma_\textrm{B}$ in the reduced Brillouin zone (rBZ). 
    The valence band has the same orbital character at both points $A$ and $B$; however the polarization increases by $1$ along the red path $\gamma_\textrm{R}$ (projection of the semi-cylinder with positive Chern number) and drops by $1$ along $\gamma_\textrm{B}$, thus exhibiting a returning Thouless pump (RTP). 
    The mirror-eigenvalue anomaly on the circular path $\gamma_\textrm{R}{\cup}\gamma_\textrm{B}$ (with mirror-invariant points $\textrm{A}$ and $\textrm{B}$) in the presence of sharp boundary condition enforces the existence of two surface Fermi arcs (orange lines) which both terminate at the rBZ projection of the nodal-line ring (brown segment located between points $\textrm{A}$ and $\textrm{B}$.
    The Fermi arcs terminate at projections of band nodes elsewhere in rBZ (of which only a part in the $(k_x,k_z)$-plane is shown).
    }
    \label{fig:semimetal}
\end{figure*}

With few exceptions~\cite{Sun:2018,Lim:2017}, previous analyses of the robustness of Weyl points has been based on topological invariants which are stable topological, i.e., characteristic of equivalence classes defined in $K$ theory~\cite{Horava_stabilityofFSandKtheory}. 
One may therefore expect that two Weyl points (with opposite chirality) in our model Hamiltonian [Eq.~(\ref{eqn:H-almost-dipole-3}) with $\mathbb{m}<0$, illustrated in Fig.~\ref{fig:semimetal}(a)] to pairwise annihilate upon collision at $\mathbb{m}=0$; after all, the stable-topological invariant associated to the union of the two Weyl points, which one may view as the Chern number of a Gaussian surface enclosing both Weyl points, vanishes.

In spite of this vanishing Chern number, the two Weyl points do not annihilate but instead `convert' into a nodal-line ring for $\mathbb{m}>0$ [Fig.~\ref{fig:semimetal}(b), see also the phase diagram in \fig{fig:Berry-dipole-perturbation}(a)]
This is suggestive that the two Weyl points/nodal loop are associated to a \textit{non-stable} topology which prevents bands from untouching. 
A priori, logic permits this topology  to be either fragile or delicate; we confirm here that the topology is delicate.
What we will present has some overlap with an older work (\ocite{Sun:2018}) by one of us, where the model Hamiltonian in Eq.~(\ref{eqn:H-almost-dipole-3}) was first investigated in detail. 
The novelty of the following exposition lies in our embedding of a known model into the unifying framework of delicate topological (semi)metals, while also demonstrating that the returning Thouless pump, angular-momentum anomaly, conditionally-robust surface states, and Zak-phase anomaly are generalizable concepts with a wider applicability beyond rotation-symmetric Hamiltonians.

\subsubsection{Mirror-protected delicate topological invariant in the bulk}

To identify the delicate-topological invariant associated to the band touching of Eq.~(\ref{eqn:H-almost-dipole-3}), it is worth recognizing that both the conduction and the valence band carry opposite eigenvalues of the $z$-mirror symmetry (represented by $\hat{M_z} = \sigma_z$) at the reflection-invariant plane ($k_z=0$) -- this may be viewed as the reflection-symmetric analog of the mutually-disjoint condition [Eq.~(\ref{eq:mut-disj})], which we previously formulated for rotation-invariant $\bk$-lines. 
One can then  consider a 2-sphere ($S^{\! 2}$) that encloses the nodes and is symmetric under $M_z$, and argue as follows: (1) because of the mirror symmetry, the information about the continuous Hamiltonian on $S^{\! 2}$ is fully encoded on one [e.g., the upper ($k_z >0 $)] hemisphere, and (2) due to the mutually-disjoint condition, the spectrally-flattened Hamiltonian [cf.~Eq.~\eqref{eqn:flatten-nonspinor}] 
over the sphere's equator (a circle in the $k_z=0$ plane) is simply $\sigma_z$; this implies that the equator maps to a single point on the Bloch sphere and allows to treat the hemisphere as if it were a closed $2$-sphere for the purpose of topological classification. 
The Hamiltonian on this abstract $2$-sphere exhibits an integer `wrapping' number around the Bloch sphere, which translates into an integer-valued Chern number on the upper hemisphere. 
One easily verifies that for the situation with a pair of $M_z$-related Weyl points, this wrapping number simply equals the net Berry-monopole charge of Weyl points in the upper hemisphere [cf.~Fig.~1(a) in Ref.~\cite{Sun:2018}]. 
As a consequence, the colliding Weyl points cannot annihilate: opening of an energy gap inside the $2$-sphere would contradict the existence of a non-trivial topological invariant on the $2$-sphere.

It was further proven in the Supplemental material of Ref.~\cite{Sun:2018}, using relative homotopy groups~\cite{Hatcher:2002}, that the quantization of the Chern number persists in $({>}2)$-band Hamiltonians -- if and only if the conduction/valence bands obey the mutually-disjoint condition of $M_z$ eigenvalues.\footnote{The term `mutually-disjoint' was not used in the previous work~\cite{Sun:2018}.}
In the language we have formulated here, the Chern number on the upper hemisphere, which enforces the conversion of the Weyl-point pair into a nodal-line ring, is a symmetry-protected delicate topological invariant. 
One can thus interpret such $({>}2)$-band gapless models as a (symmetry-protected) \emph{delicate topological (semi)metal}.\footnote{In the case of a metal, the Gaussian two-spherical surface used to define the topological invariant is continuously deformable to the Fermi surface, with anticipated implications for topo-fermiology~\cite{topofermiology}.} 

The bulk Chern number over the hemisphere is equivalent to a reflection-symmetric analog of the returning Thouless pump, as we now demonstrate. 
For this purpose, instead of enclosing the band degeneracy with an $M_z$-symmetric 2-sphere, here we enclose it with an $M_z$-symmetric cylinder with the cylindrical axis in a direction (say, $y$) orthogonal to $z$.
One may view the cylinder as an $M_z$-symmetric `inflation' of the 2-sphere, as illustrated in \fig{fig:semimetal}(c). 
Because of the mutually-disjoint condition, the bulk valence band (restricted to the mirror-invariant $\bk$-plane) has overlap with one and only one basis orbital. 
This implies that an analog of the iso-orbital condition [Eq.~(\ref{isoorbital})] is satisfied for pair of mirror-invariant $\bk$-lines within this mirror-invariant $\bk$-plane -- including those two $\bk$-lines that lie at the intersection of the cylinder with the $(k_z=0)$-plane, and project onto mirror-invariant points [$\textrm{A}$ and $\textrm{B}$ in \fig{fig:semimetal}(c)] in the 2D $\textrm{rBZ}$. 
The projection of the cylinder forms a circular path $\gamma_\textrm{R}{\cup}\gamma_\textrm{B}$ in rBZ, which we decompose into two $z$-mirror-related semicircular segments: $\gamma_\textrm{R}$ and $\gamma_\textrm{B}$.

Because of the non-trivial first Chern number on the red semi-cylinder (equalling the net Berry-monopole charge of all Weyl points in the upper hemisphere), the $\bkp$-dependent polarization grows by one primitive lattice period (in the direction parallel to the surface normal) as $\bkp\in\textrm{rBZ}$ is continuously advanced along $\gamma_\textrm{R}$.
The polarization subsequently reverts to the original value after further moving along $\gamma_\textrm{B}$ -- manifesting a returning Thouless pump (RTP) along the closed path $\gamma_\textrm{R}{\cup}\gamma_\textrm{B}$.

\subsubsection{Bulk-boundary correspondence 
of mirror-protected returning Thouless pump}

Let us explain the boundary correspondents of the bulk delicate topological invariant associated with these band degeneracies. 
For this purpose, it is convenient to interpret the cylinder as the Brillouin zone ($\textrm{BZ}'$) of a 2D-translation-invariant, mirror-symmetric Hamiltonian $h'$; $h'$ is obtained by restricting the 3D Hamiltonian [\q{eq:zszs-hamiltonian-normalized}] to the cylinder, and interpreting the azimuthal coordinate of the cylinder as a linear momentum coordinate of $\textrm{BZ}'$; $\textrm{BZ}'$ projects in the $y$-direction to a 1D reduced Brillouin zone denoted $\textrm{rBZ}'$. 
Much like how the angular-momentum anomaly [Eq.~(\ref{amanomaly})] is the boundary correspondent of the bulk rotation-protected RTP, we claim that {the restricted Hamiltonian} $h'$ exhibits an analogous \textit{mirror-eigenvalue anomaly} as the boundary correspondent to the mirror-protected RTP, under open boundary conditions that are $z$-mirror-symmetric. 

For concreteness, we assume a half-infinite geometry with facet normal in the $y$-direction -- which is also the direction of the cylindrical axis. 
What the mirror-eigenvalue anomaly means is that for at least one faceted band, the mirror eigenvalues at mirror-invariant points of $\textrm{rBZ}'$ [${\textrm{A}}$ and ${\textrm{B}}$ in Fig.~\ref{fig:semimetal}(c)] are opposite in sign; this can be proven with essentially the same techniques as those in \app{app:ang-mom-anomaly}. 
Applying a Zak-phase theorem proven in \ocite{AA_wilsonloopinversion}, this alternation of mirror eigenvalues guarantees the existence of at least one symmetry-quantized $\pi$ Berry-Zak phase in the spectrum of the (possibly non-Abelian) Wilson loop, for the (possibly multi-band) faceted subspace -- contrasting to the vanishing Berry-Zak phase of the bulk conduction and bulk valence band. 
This is the mirror-symmetric analog of the Zak-phase anomaly [Eqs.~(\ref{thesame}) and \eqref{eq:zak-bulk-surf}] that we have previously formulated for rotation symmetry.

A further implication of the mirror-eigenvalue anomaly is the existence of conditionally-robust edge states over $\textrm{rBZ}'$; this means that edge-state energies of $h'$ necessarily interpolate across the bulk energy gap of $h'$ under sharp boundary conditions; a proof of this would utilize spectral properties of Toeplitz matrices [cf.~Appendix~\ref{app:Toeplitz-bounds}], in close analogy with the demonstration of rotation-symmetric conditionally-robust surface states presented in \s{sec:BBC_sharp}.

Conditionally-robust edge states of $h'$ imply conditionally-robust \emph{surface Fermi arcs} [orange lines in~Fig.~\ref{fig:semimetal}(c)] of the 3D Hamiltonian subject to sharp boundary conditions, with one arc crossing each of the semicircles $\gamma_\textrm{R}$ and $\gamma_\textrm{B}$ in the 2D $\textrm{rBZ}$; this holds for both the Weyl semimetal and the nodal-ring semimetal phase of the Hamiltonian in Eq.~(\ref{eqn:H-almost-dipole-3}).
This clarifies the origin of Fermi arcs that were observed numerically (but remained unexplained) in Fig.~2 of Ref.~\cite{Sun:2018} under sharp boundary conditions.  
It is worth emphasizing that the surface Fermi arcs are not topologically guaranteed under non-sharp boundary conditions -- a point not articulated in \ocite{Sun:2018}.

\section{Discussion, material applications, and outlook for delicate topology}\label{sec:conclusion}

If we ask, what is the minimal dimension of a tight-binding Hamiltonian matrix (at each momentum) that models a topological insulator with vanishing first Chern numbers, the answer is four in the case of stable topology, and three in the case of fragile topology~\cite{nogo_AAJH}.
In contrast, delicate topological insulators are the \textit{simplest} topological insulators, in the sense that they are realizable by two-band Hamiltonians. 

In this work, we study two-band Hamiltonians  whose  $n$-fold rotation symmetry protects a new topological invariant -- the returning Thouless pump (RTP) -- in addition to the celebrated Hopf invariant that requires no point-group symmetry. 
The RTP describes how the Zak phase advances by an integer multiple of $2\pi$ across half a reciprocal-lattice vector, then retracts by the same quantity over the next half of the same reciprocal-lattice vector. 
Such a Zak-phase evolution could conceivably be measured in interferometric experiments~\cite{atala_measureszak} with ultra-cold atoms/molecules~\cite{schuster_realizinghopf, schuster_blueprint}.  

The RTP invariant extends Pontrjagin's seminal classification of two-band insulating Hamiltonians in three spatial dimensions~\cite{pontrjagin_classification,kennedy_hopfchern}, which, in the case of trivial first Chern class, is given by the Hopf invariant.
As proof of principle, we have found a tight-binding model with a trivial Hopf invariant but a nontrivial RTP invariant.
We have exhaustively established mod-$n$ relations between the Hopf and RTP invariants for all possible $n$-fold symmetric Hamiltonians $(n=2,3,4,6)$; such relations allow to determine the Hopf invariant modulo $n$ by a measurement of the Zak phase 
given the \textit{a priori} knowledge of the itinerant angular momentum of bulk-conduction and bulk-valence bands. 
Presently we know of no other method, even existing in principle, to measure the Hopf invariant in condensed-matter systems. 

We have proposed a bulk-boundary correspondence for RTP insulators: on rotation-symmetric crystal facets, the surface-localized states of RTP insulators carry anomalous values of the itinerant angular momenta and of the Berry-Zak phase, which are non-identical to the itinerant angular momenta resp.~to the 
Berry-Zak phase of the bulk-conduction states, and also non-identical to corresponding quantities of the bulk-valence states.
This anomaly of the itinerant angular momentum guarantees the existence of surface states at any energy within the bulk energy gap, if the bulk Hamiltonian matrix elements are sharply terminated at a rotation-symmetric facet. Such surface states are robust against continuous deformations of the \textit{bulk} Hamiltonian that preserve both the bulk energy gap and $\mathrm{P}n$ symmetry; however the same surface states can in principle be removed from any energy within the bulk energy gap, if the sharp boundary condition is relaxed. 
We expect the sharp boundary to be a good approximation to the surface termination of layered materials with weak, inter-layer, van-der-Waals coupling, such as $\mathrm{Bi}_2\mathrm{Se}_3$ \cite{zhang2009,zhang2009a}. 

With the goal of materializations (or experimental simulations) of crystalline Hopf insulators, one presently-achieved advancement is an elucidation of the topological-semimetallic phases that generically intermediate the transition from trivial to crystalline Hopf insulators. 
In particular, we have found that the crystalline-Hopf phase lies in proximity to a mirror-symmetry-protected, nodal-line semimetal previously studied in Ref.~\cite{Sun:2018}; this mirror-protected semimetal is the only delicate topological semimetal that we know presently, and demonstrates that the returning Thouless pump (and its corresponding boundary anomalies) are generalizable concepts with a wider applicability beyond rotation-symmetric Hamiltonians.\footnote{While not directly applicable to metals, a three-band touching in the phonon bandstructure was recently argued 
to be delicate topological~\cite{park_acoustictriplepoint}.} 
One may then conceive the strategy of beginning from an existing Weyl or nodal-line semimetal, then inducing an energy gap by material engineering to  enter the Hopf/RTP insulator phase. 
In this regard, we remark that a delicate-topological semimetallic Hamiltonian similar to our case study in Eq.~(\ref{eqn:H-almost-dipole-3}) has been previously applied to model the ferromagnetic compound \ce{HgCr2Se4}~\cite{Xu:2011}. 
Additionally, strain-induced conversions of mirror-related Weyl points to mirror-protected nodal loops were recently predicted~\cite{Bouhon:2020b} for a family of two-element hexagonal compounds, including \ce{ZrTe}, \ce{MoC} and \ce{WC}, with the initial motivation of materializing band degeneracies with non-Abelian topological charges~\cite{Wu:2019}; it would be interesting to revisit these materials with a focus on mirror-protected delicate topology.

A possible complication in materializing Hopf/RTP insulators is that in most known tight-binding models, the next-nearest-neighbour hoppings have larger magnitude than the nearest-neighbour hoppings. 
We do not know if this is required by principle or a present reflection of our ignorance. If it is ignorance, then the theoretical development of tight-binding Hamiltonians with larger nearest-neighbor hoppings would expand the material options for experimental realization. 
However, it is worth remarking that having larger next-nearest-neighbour hoppings is not physically unrealistic -- one would look for materials which lie further from the tight-binding limit and closer to the nearly-free-electron regime~\cite{ashcroft_mermin}; it may also be necessary to look for basis Wannier orbitals which are anisotropic hybridizations of multiple chemical orbitals. 
Beyond solid-state realizations, a recent proposal~\cite{schuster_realizinghopf, schuster_blueprint} suggests a way to realize long-ranged hoppings in periodically driven ultra-cold dipolar molecules trapped in an optical lattice. 

A second difficulty in Hopf/RTP materializations is that realistic materials often require more than a two-band description. 
A generalization of the Hopf invariant to arbitrarily many (say $N$) bands exists, assuming that all $N$ bands are energetically detached from one another~\cite{nband-hopf}. 
Whether or not such generalization is easier to materialize than the two-band Hopf insulator remains to be seen. 
The RTP invariant also allows generalization to more-than-two-band Hamiltonians, assuming that the bulk valence and conduction bands transform differently under rotation, as discussed in Sec.~\ref{sec:rtpbeyond2}. 
Such higher-band generalization with a symmetry constraint is a hallmark of  \textit{symmetry-protected delicate topology}.

Going beyond our assumption of rotational symmetry and a trivial Chern class, one can ask whether the discussed results can be extended to other crystallographic symmetries and/or to nontrivial Chern classes. 
Although these questions are beyond the scope of this work, we would like to mention two possible directions. 
Firstly, one can study delicate topological invariants protected by other crystalline symmetries.
For instance, mirror symmetry can also protect an RTP invariant (as demonstrated by our case study in \s{sec:delicateSM}), although it forces the Hopf invariant to be zero for 3D-translation-invariant Hamiltonians; a systematic study of mirror-protected delicate topology in all spatial dimensions is still lacking and would be desirable. 
Secondly, the extension of the Pontrjagin classification in the presence of rotation symmetry could be generalized to models with a non-trivial first Chern class. 
In this case, the two-band models in the absence of point-group symmetries are classified by four invariants: three Chern numbers computed on the two-dimensional cuts of the BZ and the fourth invariant that takes integer values modulo twice the greatest common divisor of the three Chern numbers~\cite{Hopf:1931,kennedy_hopfchern}. 
It is not inconceivable that some analog of the topological invariants discussed here will likewise acquire an ambiguity depending on the Chern class.

\section{Acknowledgments}
We thank Bastien Lapierre and Luka Trifunovic for valuable comments on the topological boundary signatures at open boundary conditions with smoothly terminated hoppings. 
We further acknowledge suggestions by Frank Schindler concerning the concept of graded multicellularity, and an advice from Zhihui Zhu on the spectral properties of block Toeplitz matrices. 
We extend our gratitude to Nicholas Read for a helpful clarification of the notion of Wannier-$\mathpzc{type}$ functions. 
The paper has benefited from on-going discussions with Penghao Zhu and Taylor Hughes on delicate topology in other symmetry classes.  
A.~N.~was supported by the Swiss National Science Foundation (SNSF) grant No.~176877, and by Forschungskredit of the University of Zurich, grant No.~FK-20-098. T.~N.~acknowledges support from the European Research Council (ERC) under the European Union’s Horizon 2020 research and innovation programm (ERC-StG-Neupert-757867-PARATOP) and from NCCR MARVEL funded by the SNSF. A.~A.~was initially supported by the Gordon and Betty Moore Foundation EPiQS Initiative through Grants No.~GBMF 4305 and GBMF~8691
at the University of Illinois. T.~B. was supported by NCCR MARVEL and by the SNSF Ambizione grant No.~185806. 


\appendix

\section{Glossary of symbols and abbreviations\label{app:glossary}}

All important symbols that are used in the manuscript are summarized in Table \ref{tab:symbols}.

    {\renewcommand{\arraystretch}{1.15}
    \begin{longtable}{p{2.5cm}l}
    \caption{Table of symbols. 
    Some employed abbreviations, listed alphabetically: 
    `ang.' -- angular,
    `inv.' -- \emph{invariant},
    `itin.' -- \emph{itinerant},
    `red' -- \emph{reduced},
    `repr.' -- \emph{representation},
    `rot.' -- \emph{rotation},
    `val.' -- \emph{valence}, 
    }\label{tab:symbols}  
    \\ \hline\hline
    symbol & meaning  \\ \hline
    $\simeq$ \dotand homeomorphism of topological spaces \\ 
    $\mathbbold{0},\mathbbold{1}$ \dotand zero matrix, identity matrix \\
    $a=_n b$ \dotand equivalence modulo $n$: $\!a\!\!\pmod{n}\!=\!b\!\!\pmod{n}$\!\! \\
    $\alpha$ \dotand label of tight-binding basis orbital \\
    $\mathcal{A} = \mathcal{A}_\alpha \,dk_\alpha$ \dotand Berry-connection $1$-form \\
    $\boldsymbol{\mathcal{A}}=(\mathcal{A}_x,\mathcal{A}_y,\mathcal{A}_z)$ \dotand Berry-connection vector \\
    $\mathcal{A}_\mu^{jj'}$ \dotand non-Abelian Berry connection \\
    $b,b'$ \dotand points on the Bloch sphere \\
    $\mathcal{B}^{r}_\kappa$ \dotand $r$-dimensional ball (with radius $\kappa$) \\
    $\beta(\bk)$, $e^{i\beta(\bk)}$ \dotand phase (factor), $U(1)$ gauge transformation \tabreak of a Bloch/continuum state \\
    BR$(\mathcal{L},\br_\perp)$ \dotand band repr. induced  from an $(\mathcal{L},\br_\perp)$-orbital \\
    BBR$[\varphi_\alpha]$ \dotand basis band repr. induced from an orbital $\varphi_\alpha$ \\
    $\abs{\textrm{BZ}}$ \dotand volume of Brillouin zone \\
    $c,v$ \dotand label for conduction and valence band \\
    $C_2,C_3,C_4,C_6,C_n$ \dotand rotation symmetry of order $2,3,4,6,n$ \\
    $\mathscr{C}$ \dotand Chern number \\
    $\textrm{ch}_2(\mathcal{F})$ \dotand second Chern character \\
    $\mathrm{CBR}$ \dotand composite band repr. \\
    $d$ \dotand differential \\
    $\mathbbm{d}$ \dotand order of Berry dipole \\
    $\delta_{\alpha\beta},\delta_{ij}$ \dotand Kronecker symbol \\
    $\updelta(\bk)$ \dotand Dirac delta function (in $\bk$-space) \\
    $\partial$ \dotand boundary operator \\
    $\mathcal{D}$ \dotand 2D domain of integration\\ 
    $\mathfrak{D}$ \dotand codimension of band node formation \\
    $\hat{\bm{e}}_j$ \dotand unit vector \\
    $E$ \dotand energy \\
    $\bar{E}$ \dotand reference energy \\
    $dE_j/dt$ \dotand a scalar velocity of a $j$-th Bloch state \\
    $\mathfrak{E}$ \dotand vector bundle \\
    $\varepsilon>0$ \dotand infinitesimally small real number \\
    $\epsilon_{\alpha\beta\gamma}$ \dotand fully antisymmetric Levi-Civita symbol \\
    $\mathcal{F} = \mathcal{F}_{\alpha\beta}\, dk_\alpha \wedge dk_\beta$ \dotand Berry-curvature 2-form \\
    $\boldsymbol{\mathcal{F}}=(\mathcal{F}_x,\mathcal{F}_y,\mathcal{F}_z)$ \dotand Berry-curvature vector \\
    $\varphi_{\boldsymbol{R,\alpha}}=\ket{\boldsymbol{R},\alpha}\rangle$ \dotand tight-binding-basis orbitals of $\mathscr{H}$\\ 
    $\varphi_{\alpha}=\ket{\alpha}$ \dotand tight-binding-basis orbitals of $\mathscr{H}_\textrm{cell}$\\ 
    $\ket{R_z,\alpha}\drangle$ \dotand tight-binding basis of $\mathscr{H}_{\textrm{column}}$ \\
    $\gamma$ \dotand path, 1D domain of integration\\
    $\gamma_\Pi$ \dotand rotation-invariant line in the BZ \\
    $\Gamma,\textrm{K},\textrm{K}',\textrm{M},\textrm{X}$ \dotand high-symmetry points in rBZ\\
    $\boldsymbol{G}$, $\bG_\Pi=C_m\Pi-\Pi$ \dotand reciprocal lattice vector \\
    $\mathcal{G}$ \dotand space group \\
    $\mathcal{G}_{\br}$, \dotand site-stabilizer group at $\br=(\br_\perp,z)$\\
    $\mathcal{G}_\Pi$ \dotand little group of $\Pi\in\textrm{rBZ}$ \\
    $g\in\mathcal{G}$ \dotand element of the space group \\
    $\check g$ \dotand matrix repr. of $g$ acting on spatial coordinates \\
    $\mathfrak{g}$\dotand grade of multicellular ($\mathpzc{og}$-)Wannier orbital
    \\
    $\phi$, $\mathbb{m}$ \dotand tuning parameter in continuum Hamiltonians \\
    $\Phi$ \dotand tuning parameter in lattice Hamiltonians \\
    $h, h(\bk)$ \dotand Hamiltonian (in momentum space) \\
    $h_\mathrm{flat}(\bk)$ \dotand spectrally flattened Hamiltonian \\
    $\tilde h, \tilde h(\bk)$ \dotand Hamiltonian in the periodic-in-$\bk$ convention \\
    $\boldsymbol{h} = (h_x,h_y,h_z)$ \dotand $2$-band Hamiltonian decomposed to  $\sigma_{x,y,z}$ \\
    $\boldsymbol{h}_\mathrm{flat} \in S^2$ \dotand $2$-band Hamiltonian with flattened spectrum \tabreak decomposed to  $\sigma_{x,y,z}$ \\
    $H\subset \mathcal{G}$ \dotand subgroup of a space group \\
    $\mathscr{H}$ \dotand tight-binding Hilbert space \\
    $\mathscr{H}_\textrm{cell}$ \dotand intra-cell Hilbert space\\
    $\mathscr{H}_\textrm{column}$ \dotand intra-column Hilbert space \\ 
    $\mathcal{H}_{\mathcal{N}}$ \dotand finite chain Hamiltonian with $\mathcal{N}$ unit cells\\
    $\mathfrak{H}$ \dotand homotopy function/deformation\\
    $J_\alpha$ \dotand size of a maximal iso-orbital subset \tabreak fulfilling conditions \eqref{mutualdisj} and \eqref{mutualdisj2}\\
    $\mathcal{J}$ \dotand antisymmetry of Pauli-matrix Hamiltonians \\
    $\chi$ \dotand Hopf invariant of lattice Hamiltonian  \\
    $\chi^\textrm{cont.}$ \dotand Hopf number of continuum Hamiltonian  \\
    $\bk=(k_x,k_y,k_z)$ \dotand (crystal) momentum vector \\
    $\bk_\perp=(k_x,k_y)$ \dotand reduced momentum\\
    $\boldsymbol{\mathbbm{k}}=(\bk;\phi)$ \dotand generalized four-momentum space \\
    $\boldsymbol{\kappa}=(\kappa_x,\kappa_y,\kappa_z)$ \dotand small momentum around band touching point \\
    $\kappa$ \dotand radius of considered spheres in $\bk$-space \\
        $\Delta \ell\in\mathbb{Z}$ \dotand Berry-dipole spin \\
    $\lambda$ \dotand matrix eigenvalue \\
    $\tinyloop$ \dotand infinitesimal loop to determine orientation \tabreak and multiplicity of a preimage \\
    $\mathcal{L},\Delta\mathcal{L}$ \dotand (difference of) on-site angular momentum\\
    $\widetilde{\mathcal{L}},\Delta\widetilde{\mathcal{L}}$ \dotand (difference of) itinerant angular momentum\\
    $\sharp_{b(f)}\widetilde{\mathcal{L}}_{v(c)}(\Pi)$ \dotand number of bulk (faceted) states at $\Pi$ with \tabreak val.-(~cond.-)band-like itin.~ang.~momentum \\
    $m\in\{2,3,4,6\}$ \dotand order of little-group rotation symmetry \\
    $m_\Pi$ \dotand order of little-group at red. momentum $\Pi$ \\
    $\mathcal{M}$ \dotand closed surface \\
    $M_z$ \dotand $z$-mirror symmetry \\
    $\mu(\gamma)$, $\mu(\Pi)$ \dotand multiplicity of preimage $\gamma$, $\gamma_\Pi$ \\
    $n\in\{2,3,4,6\}$ \dotand order of crystalline rotation symmetry\\
    $\hat{\bm{n}}$ \dotand unit vector normal to a surface \\
    $N_v,N_c$ \dotand number of conduction resp.~valence bands\\
    $\mathcal{N}_v(\Pi),\mathcal{N}_c(\Pi)$ \dotand number of faceted states \tabreak with $\widetilde{\mathcal{L}}_f(\Pi)=\widetilde{\mathcal{L}}_v(\Pi),\widetilde{\mathcal{L}}_c(\Pi)$ \\
    $\mathscr{N}_{j,\bR}$ \dotand norm of $\mathpzc{og}$-Wannier orbital \\
    $\upsilon\in\{+1,-1\}$   \dotand Berry-dipole helicity \\
    $\textrm{pt.}$ \dotand point (as a topological space) \\
    $\mathfrak{P}$ \dotand stereographic projection $S^{\! 3}\to \mathbb{R}^3\cup\infty$ \\
    $\mathscr{P}$ \dotand electric polarization \\
    $\Delta\mathscr{P}_{\Pi_1\Pi_2}$ \dotand RTP invariant, $\mathscr{P}_{\Pi_2}-\mathscr{P}_{\Pi_1}$ \\
    $\mathrm{P}n$, $n\in\{2,3,4,6\}$ \dotand considered space groups \\
    $\mathrm{P}n_{\br}$ \dotand presently considered site-stabilizer groups \\
    $\mathrm{p}n\subset\mathrm{P}n$ \dotand plane group generated by $C_n$ rotation \tabreak and two out of three translations \\
    $P$ \dotand projector onto filled (valence) states \\
    $P_b$ \dotand projector onto bulk valence bands \tabreak in the infinite geometry \\
    $P_{(b)}\hat{z}P_{(b)}$, $Q_{(b)}\hat{z}Q_{(b)}$ \dotand projected position operators in $z$-direction \\
    $P_{\mathrm{b}},(P_{\mathrm{t\&b}})$ \dotand faceted projector onto bottom facet \tabreak (onto top and bottom facets) \\
    $P_\downarrow$ \dotand projector onto lower half of a finite slab \\
    $P_{v}^{\mathrm{cell}}$ \dotand projector onto valence subspace in $\mathscr{H}_\mathrm{cell}$ \\
    $P_{v}^{\mathrm{column}}$ \dotand projector onto valence subspace in $\mathscr{H}_\mathrm{column}$ \\
    $\mathcal{P}$ \dotand $k_z=0$ punctured plane with removed $\bk=0$ \\
    $\Pi,\Xi,\Lambda,\Omega$ \dotand rotation-invariant reduced momenta in rBZ\\
    $\Psi$ \dotand Berry flux through a hemisphere \\
    $\ket{\psi_{j,\bk}}\!\Big\rangle$ \dotand Bloch (eigen)state \\
    $\psi_{j,\bk}(\bR,\alpha)$ \dotand tight-binding Bloch function \\ 
    $\psi^{(\mathcal{N})}$ \dotand eigenstate of an $\mathcal{N}$-unit-cell Hamiltonian \\
    $(\psi,\theta,\varphi)$ \dotand 3-spherical coordinates \\
    $Q$ \dotand projector onto empty (conduction) states \\
    $Q_b$ \dotand projector onto bulk conduction bands \tabreak in the infinite geometry \\
    $\boldsymbol{r}\!=\!(x,y,z)\!=\!(\boldsymbol{r}_\perp,z)$ \dotand position vector \\
    $R_\theta = e^{-i\theta\Delta\ell/2\cdot\sigma_z}$ \dotand  continuous rotation operator  \tabreak for Berry-dipole Hamiltonian\\ 
    $R_{C_m}$ \dotand discrete rot.~operator of crystalline $h(\bk)$ \\
    $\widetilde{R}_{C_m}(\Pi)$ \dotand itinerant rot.~operator of $h(\Pi+\bk)$ \\
    $\widetilde{\rho}_{m,\alpha}(\Pi)$ \dotand eigenvalues of $\widetilde{R}_{C_m}(\Pi)$ \\
    $\boldsymbol{R}$ \dotand Bravais vector \\
    $\boldsymbol{\mathcal{R}}=(\hat{x},\hat{y},\hat{z})$ \dotand discrete position operator  \\
    $Rev$ \dotand number and direction of revolutions \tabreak of a vector $\boldsymbol{h}(t)$ when $t$ advances by 1 \\
    $\textrm{rBZ}(\,\ni \!\bk_\perp)$ \dotand (momentum in) reduced Brillouin zone \\
    $\textrm{rRL}$ \dotand reduced reciprocal lattice \\
    RTP \dotand returning Thouless pump \\
    $\textrm{S}\sim -\sigma_z$ \dotand south pole of the Bloch sphere\\
    $S^{\!2}$, $S^{\!3}$, $S^{\!3}_\kappa$ \dotand 2-sphere, 3-sphere (with radius $\kappa$) \\
    $\mathcal{S}$ \dotand momentum-loop in rBZ \\
    $s_g=\pm1$ \dotand reversal of time component of symmetry $g$ \\
    $\boldsymbol{\sigma}=(\sigma_x,\sigma_y,\sigma_z)$ \dotand vector of Pauli matrices \\
    $\varsigma\in\mathfrak{S}_r$ \dotand permutation (group) of $r$ elements\\
    $\Sigma$ \dotand Seifert surface stretched over preimage line \\
    $t\in[0,1]$ \dotand parametrization of homotopy \& closed loops \\
    $\hat{\boldsymbol{t}}$ \dotand unit vector tangent to parametrized loop \\
    $\mathrm{t}\,{\in}\,\{1{,}{.}{.}{.}{,}T\}$ in $\bk^\mathrm{t},\Pi^\mathrm{t}$ \dotand label of consecutive Berry-dipole transitions \\
    $\hat{\mathrm{T}}$ \dotand path-ordering operator \\
    $T^3$ \dotand 3-torus \\
    $\mathcal{T}, \{\mathcal{T}_r\}_{r\in\mathbb{Z}}$ \dotand blocks within a block Toeplitz matrix \\
    $\mathscr{T}$ \dotand transfer matrix \\
    $\boldsymbol{\tau}(\bk_0)$ \dotand orientation of a preimage path 
    at $\bk_0$  \\
    $\boldsymbol{\tau}(\Pi)$ \dotand orientation of rot.-inv. line $\gamma_\Pi$ \\
    $\tau_z$ \dotand orientation along $k_z$-axis for\tabreak non-contractible along $k_z$ loops \\
    $\Theta$ \dotand time-reversal symmetry \\
    $\ket{u(\bk)}$, $\ket{u^\mathpzc{on}(\bk)}$ \dotand (normalized) intra-cell part of Bloch state \\
    $\ket{u^\mathpzc{og}(\bk)}$ \dotand orthogonal but not orthonormal \tabreak intra-cell part of Bloch state \\
    $v^{(\mathcal{N})}(k)$ \dotand Fourier transform of $\psi^{(\mathcal{N})}$ \\
    $\big|\mathcal{V}_{j,\bk_\perp,R_z}\big\rangle\big\rangle\big\rangle$ \dotand intra-column state in $\mathscr{H}_{\mathrm{column}}$\\
    $[V(\bk)]_{\alpha\beta}=e^{i\bk\br_\alpha}\delta_{\alpha\beta}$ \dotand matrix relating periodic and non-periodic\tabreak tight-binding Hamiltonians \\
    VBR \dotand valence band repr. \\
    $\ket{W_{j,\bR}}\!\Big\rangle, \ket{W_{j,\bR}^\mathpzc{on}}\!\Big\rangle$ \dotand conventional (orthonormal) Wannier
    orbital
    \\
    $\ket{W_{j,\bR}^\mathpzc{og}}\!\Big\rangle$ \dotand orthogonal ($\mathpzc{og}$-)Wannier orbital \\
    $\ket{W_{j,\bR}^\mathpzc{type}}\!\Big\rangle$ \dotand Wannier-$\mathpzc{type}$ orbital \\
    $\mathcal{W}_\mathcal{S}$ \dotand Wilson loop computed along $\mathcal{S}$  \\
    WPT \dotand Weyl point trajectory \\
    $\mathscr{Z}$ \dotand Berry-Zak phase \\
    $\mathscr{Z}^\textrm{cont.},\delta\mathscr{Z}^\textrm{cont.}$ \dotand (change of) continuum Zak phase \\
    $\bar{z}_j$, $\bar{z}_j(\bk)$ \dotand eigenvalues of projected position operator \\
    $\zeta = (\zeta_1 , \zeta_2)^\top$ \dotand two-component spinor \\ \hline\hline 
    \end{longtable}
    }

\section{Summary of key terms and notions}
\label{app:index-of-notions}

{\parindent0pt

\setlength{\parskip}{0.6em}

\emph{Angular-momentum anomaly} [cf.~Sec.~\ref{sec:anomaly}, in particular Eq.~\eqref{amanomaly}] means that itinerant angular momenta of faceted bands are non-identical to the itinerant angular momenta of bulk-conduction bands, as well as of bulk-valence bands.

\emph{Balanced orientation rule} [cf.~Eq.~\eqref{eq:app:balanced-pairing}] ensures that the number of upward oriented preimage lines is equal to the number of downwards oriented preimage lines for a given image point on a 2-sphere.

\emph{Band representation} is a representation of a space group $\mathcal{G}$ induced from a representation of a site-stabilizer group $\mathcal{G}_{\br}$ at site $\br$. 
Throughout this work, $\mathcal{G}=\mathrm{P}n$ with $n\in\{2,3,4,6\}$.   

\emph{Band representation induced from an ($\mathcal{L},\br^{\perp}$)-orbital} is a representation of the space group $\mathrm{P}n$ induced from a one-dimensional representation of the site-stabilizer group $\mathrm{P}n_{\br}$. 
The representation is specified by the on-site angular momentum ($\mathcal{L}$) and by the in-plane component of the Wyckoff position ($\br_\perp$). 
The $z$ component of the Wyckoff position remains arbitrary (and is continuously tunable). Such band representations for all values of $z$ form an equivalent class BR$(\mathcal{L},\br_{\perp})$. We note that orbitals generating  BR$(\mathcal{L},\br_{\perp})$ may or may not be one-site-localized; however, each basis band representation is an element of some BR$(\mathcal{L},\br_{\perp})$.

\emph{Basis band representation} BBR$[\varphi_\alpha]$ is a band representation of the space group P$n$ induced from a one-dimensional representation of the site-stabilizer group P$n_{\br}$, specified by a tight-binding-basis (i.e., one-site-localized) orbital $\varphi_\alpha$. 
The orbital is specified by its
on-site angular momentum $\mathcal{L}$ and (all components of) its Wyckoff position $\br$. Sometimes we use \emph{tightly-bound band representation} for the same notion.

\emph{Berry dipole} [cf.~Sec.~\ref{sec:Berry-definition}] is a point-like touching of two energy bands, which acts as a dipolar source of Berry curvature. 
It is characterized by a $2\pi$-quantized Berry flux $\Psi$ through a hemisphere centered at the touching point and with the bounding `equator' lying in a plane perpendicular to the dipole axis. (The compensating flux $-\Psi$ flows through the complementary hemisphere).

\emph{Berry-dipole Hamiltonian}, defined by Eqs.~(\ref{eq:kp-sym-spinor},\ref{eq:zszs-hamiltonian}) is a continuum Hamiltonian that exhibits a band touching point at $(\bk,\phi)=(\boldsymbol{0},0)$, and that facilitates a quantized change of the continuum Hopf number and of the continuum Zak phase given, respectively, by Eqs.~(\ref{eq:Berry-dipole-hopf-change}) and~(\ref{eq:Berry-dipole-pol-change}). 

\emph{Berry-dipole spin} $\Delta\ell$ determines how the Berry-dipole Hamiltonian transforms under rotations [cf.~Eq.~\eqref{eq:rot-sym-Berry-dipole-contin}].

\emph{Bulk-valence Zak phase} $\mathscr{Z}_v$ and \emph{bulk-conduction Zak phase} $\mathscr{Z}_c$ [cf.~Eq.~\eqref{eq:Zak-phase-bulk}] are Berry-Zak phases computed for the bulk valence resp.~the bulk conduction band along a path [$\mathcal{S}(k_z)$] in the BZ; the path has a fixed value of $k_z$ and projects onto a path ($\mathcal{S}$) in rBZ that is used for the computation of the corresponding faceted Zak phase.

\emph{Bulk-like} vs.~\emph{surface-like} polarization bands are bands formed by eigenstates of projected position operators ($P\hat{z}P$ and $Q\hat{z}Q$)
that lie above vs.~below a chosen cutoff band in a semi-infinite geometry with a bottom surface. 
The cutoff band is chosen such that all bulk-like polarization bands of a semi-infinite geometry are (up to exponentially small corrections) indistinguishable from the bulk polarization bands in infinite geometry.

\emph{$C_m$-related-two-segment loop} [cf.~Fig.~\ref{fig:zak-paths}] is a two-segment loop in rBZ, with two segments mapped to each other by a $C_m$ rotation (up to a translation by a reciprocal lattice vector). 
We employ such loops (with a few exceptions listed in Sec.~\ref{sec:zakanomalymain}) as the path $\mathcal{S}$ in rBZ to compute the faceted Zak phase (and to define the Zak-phase anomaly).

\emph{Conditionally robust surface states} [cf.~Sec.~\ref{sec:BBC_sharp}] are surface-localized states associated with delicate topological invariants, which energetically connect the bulk valence and the bulk conduction bands under the condition of open boundaries with sharply terminated hopping. 
The `conditional robustness' means that the energetic connection of the surface states to bulk bands can be lost if the sharp boundary condition is lifted.

\emph{Continuum Hopf number} [cf.~Eq.~\eqref{eq:hopfinvar-cont}]of a continuum model is the integral of the Chern-Simon 3-form for the valence band of a two-band insulating Hamiltonian. 
It is a (non-quantized) analog of the Hopf invariant for continuum Hamiltonians, i.e., with $\bk\in\mathbb{R}^3$.

\emph{Continuum Zak phase} [cf.~Eq.~\eqref{eq:pol-continuum}]is an analog of (electric) polarization of a 1D band defined for continuum 1D Hamiltonians, i.e., with $k\in \mathbb{R}$.

\emph{Crystalline Hopf insultors} are rotation-symmetry-enriched two-band insulating  Hamiltonians with trivial first Chern class. They are characterized by the Hopf invariant \eqref{eq:hopfinvar} and the RTP invariants \eqref{eq:RTP-def}.

\emph{Delicate topological insulators} exhibit bands with topological invariants that can be trivialized through the addition of 
a particular (potentially an arbitrary) basis band representation BBR$[\varphi_\alpha]$
to either the valence or the conduction subspace. Examples of delicate topological insulators include the Hopf and the RTP insulator. 

\emph{Facet} is a crystal surface that retains the discrete translational symmetry of a crystal in two directions parallel to the surface.

\emph{Faceted bands} is an umbrella term for surface-localized energy bands and surface-like polarization bands.

\emph{Faceted Chern number} is a Chern number computed for faceted bands. 
For Hopf insulators, bulk-boundary correspondence equates the faceted Chern number (for arbitrary facet) to the Hopf invariant.
For RTP insulators (with identical itinerant angular momenta for all valence  bands, and similarly for all the conduction bands), there is a mod-$n$ relation [cf.~Eq.~(\ref{eq:C-facet-via-ell-multi-band})] relating the faceted Chern number to the RTP invariant.

\emph{Faceted Zak phase} $\mathscr{Z}_f$ [cf.~Eq.~(\ref{eq:Zak-phase-rbz})] is the Berry-Zak phase computed along a path [$\mathcal{S}$] in reduced BZ for a faceted band.

\emph{Fermi lines} are one-dimensional Fermi surfaces contributed by surface states at crystal facets.

\emph{Grade $\mathfrak{g}$ of a Wannier orbital} [cf.~Sec.~\ref{sec:graded-multicell}] is the smallest number of primitive unit cells that a representative Wannier orbital can be confined to. 
The lower bound is defined given that one is allowed to continuously deform the tight-binding Hamiltonian of fixed matrix dimension (including the Wyckoff positions) while preserving the energy gap and the space group symmetry. The notion of \emph{grading} refines the notion of multicellularity.

\emph{Helicity of a Berry dipole $\upsilon$} is one of the discrete-valued characteristics of a Berry dipole [cf.~Eq.~(\ref{eq:kp-sym-spinor})]. 
Its physical meaning is related to the geometry of preimages of points on the Bloch sphere, as clarified in footnote~\ref{foot:helicity}.

\emph{Hopf-Chern relation} [cf.~Eq.~(\ref{eq:whitehead})] relates the Hopf invariant of a two-band, insulating Hamiltonian to the Chern number computed on an oriented surface stretched over the preimage of any point on the Bloch sphere. 

\emph{Hopf-RTP relation} [cf.~Sec.~\ref{sec:RTP-Hopf}, in particular Eq.~\eqref{eq:hopf-RTP}] is a mod-$n$ constraint on the admissible values of the Hopf and the RTP invariants that has to be obeyed by any two-band, insulating Hamiltonian with P$n$ symmetry.

\emph{Hybrid Bloch-Wannier states} are eigenstates of the projected position operator; they are extended as a Bloch wave in the $(x,y)$-plane and are exponentially-localized as a Wannier function in the $z$ direction. 

\emph{Intra-cell component $\ket{u_j^\alpha(\bk)}$ of a Bloch (eigen)state}  is a vector in the intra-cell Hilbert space per the parameterization of Bloch eigenstates in Eqs.~\eqref{eq:app:Bloch-eigenstate}, (\ref{eqn:Bloch-decompost}). 

\emph{Intra-cell Hilbert space} is spanned by basis vectors 
$\{\varphi_\alpha\equiv\ket{\alpha}\}_{\alpha=1,\ldots, N_c+N_v}$, where $\alpha=1,\ldots, N_v+N_c$ is an index distinguishing linearly-independent orbitals within a primitive unit cell.

\emph{Intra-column function} [cf.~Eq.~\eqref{eq:app:intra-column-orbitals}] is a 1D analogue of Wannier function, defined in the intra-column Hilbert space. It can be viewed as the Wannier component of the hybrid Bloch-Wannier functions.

\emph{Intra-column Hilbert space} is defined at each reduced momentum $\bk_\perp$.
It is spanned by basis vectors $\ket{R_z,\alpha}\drangle$, where $R_z$ runs over all Bravais lattice vectors projected onto the $z$ axis, and $\alpha=1,\ldots, N_v+N_c$ is an index distinguishing linearly-independent orbitals within a primitive unit cell.

\emph{Iso-orbital condition} [cf.~Eq.~\eqref{isoorbital}] holds for a pair of rotation-invariant reduced momenta $\Pi_1$ and $\Pi_2$, if there exist a single basis band representation BBR, such that the restriction of the valence band representation to the two rotation-invariant $\bk$-lines $\gamma_{\Pi_1}$ and $\gamma_{\Pi_2}$ equals the restriction of BBR to these lines.

\emph{Iso-orbital subset} $\{\Pi_j\}_{j=1\dots J}$ is the set of rotation-invariant reduced momenta in rBZ that all fulfill the iso-orbital condition with respect to the same basis band representation BBR$[\varphi_\alpha]$ and also the mutually-disjoint condition.
The set is described as \emph{maximal} if every rotation-invariant reduced momentum satisfying the displayed conditions is included in the subset.

\emph{Itinerant angular momentum} $\widetilde{\mathcal{L}}(\Pi)$ [cf.~Eq.~(\ref{eq:itin-am-from-basis})] determines an eigenvalue $\exp(i2\pi\widetilde{\mathcal{L}}(\Pi)/m)$ of the itinerant rotation matrix of a Bloch Hamiltonian at a $C_m$-rotation-invariant $\bk$-line $\gamma_\Pi$.

\emph{Itinerant rotation matrix} $\widetilde{R}_{C_m}(\Pi)$ [cf.~Eq.~\eqref{eq:itinerant_rotation}] commutes with a Bloch Hamiltonian at a $C_m$-rotation-invariant $\bk$-line~$\gamma_\Pi$.
 
\emph{Multicellularity} [cf.~Sec.~\ref{sec:CHI-multicel}] is an obstruction that forbids valence bands to be representable using Wannier orbitals with support on a single primitive unit cell.

\emph{Mutually-disjoint condition} [cf.~Eq.~(\ref{eqn:mutually-disjoint-multiband})] at rotation-invariant reduced momentum $\Pi$ is fulfilled when the set of itinerant angular momenta of bulk-valence states $\{\widetilde{\mathcal{L}}_{v,j}(\Pi)\}_{j=1}^{N_v}$ is disjoint with the set of itinerant angular momenta of bulk-conduction states $\{\widetilde{\mathcal{L}}_{c,j}(\Pi)\}_{j=1}^{N_c}$. 
For two-band Hamiltonians, $N_v=N_c=1$, the mutually-disjoint condition reduces to Eq.~\eqref{eq:mut-disj}.

$n/m$\emph{-fold multiplet} (or $n/m$\emph{-plet} in short) is a set of $n/m$ number of $C_m$-invariant $\bk$-lines related by $C_n$ rotations.

\emph{On-site angular momentum} $\mathcal{L}$ determines the one-dimensional representation $e^{i2\pi\mathcal{L}/n}$ of the $n$-fold rotation symmetry of a tight-binding-basis orbital.
Note that in this work we only consider placing the basis orbitals at Wyckoff positions with multiplicity one [cf.~Fig.~\ref{fig:wp-rotinvmom}(a)], such that their site-stabilized group $\mathrm{P}n_{\br}$ contains an $n$-fold rotation symmetry.

\emph{Polarization band} in rBZ is a collection of eigenstates of the projected position operator such that their eigenvalues $\bar{z}_j(\bkp)$ depend smoothly on the reduced momentum $\bk_\perp$.
We call the corresponding eigenstates as the \emph{(hybrid) Bloch-Wannier states} (for both the infinite and the faceted geometry).

\emph{Reduced BZ} (rBZ) is the projection of the 3D BZ onto the plane perpendicular to the rotation axis. The 2-component elements $\bk_\perp=(k_x,k_y)$ of rBZ are called \emph{reduced momenta}.

\emph{Representative tight-binding-basis orbitals} $\{\varphi_\alpha\}_{\alpha=1}^{N_v+N_c}$ specify the Hilbert space of a tight-binding Hamiltonian. The complete basis of the Hilbert space is given by applying all Bravais-lattice translations to the representative basis orbitals.

\emph{Returning Thouless pump} (RTP) is an adiabatic evolution of a single-momentum Hamiltonian, characterized by an integer increase of the electric polarization within one half of the adiabatic cycle, which is reverted on the second half of the cycle. 
In the present work, the role of the single momentum (adiabatic parameter) is played by the $k_z$ ($\bk_\perp$) component of the 3D momentum, and the integer quantization is enforced by the mutually-disjoint and the iso-orbital conditions [cf.~Eq.~(\ref{twoconds})].

\emph{RTP invariant} [cf.~Eq.~(\ref{eq:RTP-def})] is an invariant associated with the returning Thouless pump.
It is defined as the difference in the electric polarization between two rotation-invariant $\bk$-lines that obey  the mutually-disjoint and the iso-orbital condition. 

\emph{RTP insulator} is a topological insulators with two or more bands that is characaterized by
a non-zero value of at least one
RTP invariant.

\emph{Site-stabilizer group} $\mathcal{G}_{\br}$ is a subgroup of the space group $\mathcal{G}$ of the Hamiltonian, which preserves a given Wyckoff position~$\br$. 
Throughout this work, $\mathcal{G}=\mathrm{P}n$ with $n\in\{2,3,4,6\}$. 

\emph{Sharp boundary condition} is an open boundary condition with the additional property that all
the hopping matrix elements are either bulk-like (if the hopping process does not cross the boundary) or set to zero (if they cross the boundary).

\emph{Spectrally flattened Hamiltonian} is obtained from a given Hamiltonian by setting all energies above (below) the Fermi energy to $+1$ ($-1$) while keeping its eigenstates unchanged. 
For the case of a two-band insulating Hamiltonian, the spectral flattening is performed by Eqs.~(\ref{eqn:flatten-nonspinor}) or~(\ref{eq:zszs-hamiltonian-normalized}).

\emph{Spinor-form Hamiltonian} is defined via a two-component spinor function, $\zeta(\bk)=\left(\zeta_1(\bk), \zeta_2(\bk)\right)^\top \in \mathbb{C}^2$ per
Eq.~\eqref{eq:zszs-hamiltonian}.

\emph{Symmetry-equivalent band representations} are band representations that belong to the same equivalence class BR$(\mathcal{L},\br_{\perp})$. Note that two band representations that are symmetry-equivalent to each other may not be continuously deformable into each other without increasing the dimension of the Hilbert space. For example, the VBR for crystalline Hopf insulator is a band representation but is not deformable to a BBR of the same equivalence class.

\emph{Symmetry-protected delicate topological insulators} are delicate topological insulators whose topological invariant can be trivialized 
through the addition of some (but is robust against adding other) basis-band representations to either the valence or the conduction subspace. 
The presently discussed paradigm example corresponds to the RTP insulator (but excludes Hopf insulators without rotation symmetry). 

\emph{Tight-binding Hilbert space} is the Hilbert space of a tight-binding Hamiltonian spanned by tight-binding basis vectors $\varphi_{\boldsymbol{R,\alpha}}\equiv\ket{\boldsymbol{R},\alpha}\rangle$, with $\bR$ in the Bravais lattice, and $\alpha=1,\ldots,N_v+N_c$ an index distinguishing linearly-independent orbitals within a primitive unit cell.

\emph{Tightly-bound band representation} is synonymous with \emph{basis band representation} (cf.~corresponding entry above). This alternative  term emphasizes that an orbital, inducing this bands representation, is localized to a single site, in other words is \textit{one-site-localized}. 

\emph{Valence band representation} VBR is a band representation corresponding to
the valence band. More precisely, it is the band representation induced by Wannier orbitals placed at an appropriate Wyckoff position.

\emph{Wannier cut} [cf.~Appendix~\ref{app:wannier-cut}] is a tool to construct projector onto faceted bands.

\emph{Wannier orbitals}  $\big|{W_{j,\bR}}\big\rangle\big\rangle$ [cf.~Eq.~(\ref{blochwannierfourier})], are obtained by inverse Fourier transform of Bloch states $\big|{\psi_{j,\bk}(\bR,\alpha)}\big\rangle\big\rangle$, with the Bloch states depending smoothly and periodically
on $\bk$, and the intra-cell components of the Bloch states \Big($\big|{u_j(\bk)}\big\rangle_{j=1}^{N_v+N_c}$\Big) being orthonormal at each $\bk$.
They are also dubbed \emph{$\mathpzc{on}$-Wannier orbitals} \big(labelled $ \big|{W_{j,\bR}^\mathpzc{on}}\big\rangle\big\rangle$\big) in the context of Sec.~\ref{sec:CHI-multicel}, where we distinguish them from \emph{$\mathpzc{og}$-Wannier orbitals} $ \big|{W_{j,\bR}^\mathpzc{og}}\big\rangle\big\rangle$; in constructing the latter, one drops the normalization condition on the intra-cell components of the Bloch states [cf.~Eq.~(\ref{eqn:og-wannier})].

\emph{Wannier-$\mathpzc{type}$ orbitals} are obtained by inverse Fourier transform of a potentially overdetermined set of Bloch states. In contrast to $\mathpzc{on}$-Wannier orbitals, here one drops both the normalization and the orthogonality condition for the intra-cell components of the Bloch states; instead, we only require $\big|{u_j(\bk)}\big\rangle_{j=1}^{N}$ (with $N \geq N_v + N_c$) to be a complete basis of the intra-cell Hilbert space.

\emph{Wyckoff position} is a point in the unit cell at which the representative basis orbital is centered. While Wyckoff position may come in symmetry related multiplets (per primitive unit cell), in the present work we only need to consider Wyckoff positions with multiplicity one [cf.~Fig.~\ref{fig:wp-rotinvmom}(a)].

\emph{Zak-phase anomaly} [cf.~Sec.~\ref{sec:BBC_zak}] is a signature associated with a non-triviail RTP invariant at open boundaries, namely, there exist faceted states for which the Berry-Zak phase (quantized by rotation symmetry to certain rational multiples of $2\pi$) along a certain symmetrically-chosen loop ($\mathcal{S}$) is distinct from the corresponding Berry-Zak phase of both the bulk-valence and the bulk-conduction bands. 

}

\section{Preliminaries on strict tight-binding formalism and notations}\label{app:preliminaries}

We start a series of pedagogical appendices to familiarize the reader with basic band-theoretic concepts in the strict tight-binding formalism, as well as establish our band-theoretic notations. 
After introducing the formalism in Appendix~\ref{app:tb-formalism}, we comment in Appendix~\ref{app:nonperiodic-convention} on differences of the periodic vs.~non-periodic conventions for the $\bk$-dependent tight-binding Hamiltonian $h(\bk)$.
In Appendix~\ref{app:geom-theo-pol}, we derive the tight-binding analogue of the geometric theory of polarization, namely that the (geometric phase/$2\pi$) computed for a given band is equal, modulo a Bravais lattice vector, to the real-spatial positional center of the corresponding Wannier orbital within a unit cell.
In Appendix~\ref{app:Hopf-periodic-nonperiodic} we argue that the Hopf invariant is independent of whether $h(\bk)$ is expressed in the periodic or in the non-periodic convention.

\subsection{Strict tight-binding formalism\label{app:tb-formalism}}

In the tight-binding formalism, the \textit{tight-binding Hilbert space} $\mathscr{H}$ is spanned by tight-binding-basis orbitals 
$\varphi_{\boldsymbol{R,\alpha}}\equiv\ket{\boldsymbol{R},\alpha}\rangle$, with $\bR$ in the Bravais lattice, and $\alpha=1,\ldots,N_v+N_c$ an index distinguishing linearly-independent orbitals within a primitive unit cell. 
It is also useful to introduce a smaller, \textit{intra-cell Hilbert space} $\mathscr{H}_\textrm{cell}$ spanned by basis orbitals 
$\{\varphi_\alpha\equiv\ket{\alpha}\}_{\alpha=1,\ldots,N_c+N_v}$. 
To distinguish the states living in the larger vs.~the smaller Hilbert space, we denote them with different bra/ket symbols, i.e.,
\begin{equation}
\ket{A}\rangle \in \mathscr{H}\qquad\textrm{vs.}\qquad\ket{A}\in\mathscr{H}_\textrm{cell}
\end{equation}
[compare also to Eq.~(\ref{eqn:column-elements})].
The corresponding inner products are similarly denoted as $\langle\braket{A}{B}\rangle$ vs.~$\braket{A}{B}$.

To define the Bloch eigenstates in the tight-binding formalism, we first introduce a $\bk$-dependent basis
that spans the same Hilbert space $\mathscr{H}$:
\begin{equation}
    \ket{\bk,\alpha}\rangle=\sum_{\bR} e^{i\bk\cdot(\bR+\br_\alpha)}\ket{\bR,\alpha}\rangle,
    \label{eq:app:Bloch-basis} 
\end{equation}
with coefficients that input the spatial positions of the basis orbitals $\br_\alpha$ within a representative unit cell. 
We elaborate more on this choice of the $\bk$-dependent basis and on its consequences in a later Sec.~\ref{app:nonperiodic-convention} of this Appendix. 
In the introduced $\bk$-dependent basis we can define the Bloch eigenstates to be 
\begin{equation}
    \big|{\psi_{j,\bk}}\big\rangle\big\rangle=\sum_\alpha u_j^\alpha(\bk)\ket{\bk,\alpha}\rangle,
    \label{eq:app:Bloch-eigenstate}
\end{equation}
where the coefficients in front of the basis states form the eigenvectors of the Hamiltonian matrix 
\begin{equation}
    h(\bk)\big|{u_j(\bk)}\big\rangle=E_j(\bk) \big|{u_j(\bk)}\big\rangle.\label{eqn:Bloch-decompost}
\end{equation}
Note that both the tight-binding Hamiltonian and its eigenvectors $\big|{u_j(\bk)}\big\rangle$ are defined in the intra-cell Hilbert space $\mathscr{H}_\textrm{cell}$ and, as we discuss later in Appendix~\ref{app:nonperiodic-convention}, are $\bk$-aperiodic for generic values of $\br_\alpha$. 
When projecting the Bloch eigenstate onto a tight-binding-basis orbital, we get a tight-binding Bloch function
\begin{equation}
    \psi_{j,\bk}(\bR,\alpha)=\Big\langle\!\bra{\bR,\alpha}\ket{\psi_{j,\bk}}\!\Big\rangle= e^{i\bk\cdot(\bR+\br_\alpha)}u_j^\alpha(\bk), \label{eqn:tight-binding-Bloch-function}
\end{equation}
which is the product of  a plane wave with an intra-cell wave function $u_j^\alpha(\bk)$, in close analogy with the continuum Bloch function. 
The Bloch function is $\bk$-periodic, as the non-periodicity of the cell-periodic function and the plane wave cancel each other. 

It is also useful to define a \emph{Wannier function} (also dubbed \emph{Wannier orbital}, which we in particular adopt in Sec.~\ref{sec:CHI-multicel}), which is the inverse Fourier transform of a Bloch function: 
\begin{align}
    W_{j,\bar{\bR}}(\bR,\alpha)\eq\int\frac{d^3\bk}{|\textrm{BZ}|}e^{-i\bk\cdot\bar{\bR}}\psi_{j,\bk}(\bR,\alpha) \lin
    \eq \int\frac{d^3\bk}{|\textrm{BZ}|}e^{-i\bk\cdot(\bar{\bR}-\bR-\br_\alpha)}u^\alpha_j(\bk),\la{blochwannierfourier}
\end{align}
where $|\textrm{BZ}|$ denotes the volume of the Brillouin zone. 
We assume throughout that the first Chern class is trivial, hence the Bloch function can be chosen to be both analytic and periodic, and correspondingly the Wannier function would be exponentially-localized\footnote{
This is true whether or not the Hopf invariant is nontrivial in the particular case of two-band Hamiltonians.} in real space, with a positional center in the vicinity of unit cell labelled $\bar{\bR}$. 
With the normalization\footnote{
For the discussion of graded multicellularity in Sec.~\ref{sec:graded-multicell}, we found it useful to introduce several refined version of Wannier functions, with the ones introduced in the present paragraph dubbed therein as \emph{orthonormal (`$\mathpzc{on}$-') Wannier orbitals}, $\big|{W_{j,\bR}^\mathpzc{on}}\big\rangle\big\rangle$.
If one instead adopts intra-cell wave functions $\big|{u_j^\mathpzc{og}(\bk)}\big\rangle$ that are orthogonal but not normalized to unity, their inverse Fourier transform results in \emph{orthogonal (`$\mathpzc{og}$-') Wannier orbitals}, $\big|{W_{j,\bR}^\mathpzc{og}}\big\rangle\big\rangle$.
A further level of relaxation corresponds to taking a (potentially overdetermined) set $\left\{\big|{u_j^\mathpzc{type}(\bk)}\big\rangle\right\}_{j=1}^{N}$ with $N\geq N_v+N_c$ of intra-cell wave-functions that vary smoothly with $\bk$ and constitute a basis of $\mathscr{H}_\textrm{cell}$; their inverse Fourier transform generates \emph{Wannier-$\mathpzc{type}$ orbitals}, $\big|{W_{j,\bR}^\mathpzc{type}}\big\rangle\big\rangle$~\cite{read_compactwannier}.
} $\big\langle{u_j(\bk)}\big|{u_{j'}(\bk)}\big\rangle=\delta_{jj'}$ it follows that $\big\langle\big\langle{W_{j,\bR}}\big|{W_{j',\bR'}}\big\rangle\big\rangle=\delta_{jj'}\delta_{\bR,\bR'}$ (a product of Kronecker delta functions).

The Wannier functions are not uniquely defined and depend on the gauge choice of the Bloch function. 
The allowed gauge transformations are $u_j^\alpha(\bk)\to e^{i\beta_j(\bk)}u_j^\alpha(\bk)$, where $\beta_j(\bk)$ is a real-analytic function that transforms under reciprocal lattice translations as $\beta_j(\bk+\bG)=\beta_j(\bk)+2\pi l$ with $l\in\mathbb{Z}$. 
When $\beta(\bk)$ does not wind around the BZ, meaning $l=0$, the spread of the Wannier function may change, but its positional center remains fixed. 
If, on contrary, $\beta(\bk)$ winds in the direction of $\bG_1$ with $l\neq0$ (so-called \emph{large guage transformation}), the center of the Wannier function will shift in the direction of the corresponding lattice vector $\bR_1$ (with $\bG_1\cdot\bR_1=1$) by $l$ unit cells.

Sometimes, e.g. in Appendices \ref{app:pol-position} and \ref{app:ang-mom-anomaly}, it is convenient to consider reduced momenetum $\bk_\perp$ as an external parameter and work in
an \emph{intra-column}\footnote{The term `column' indicates that we restrict ourselves to a specific value of reduced momentum and define the basis indexed by all lattice vectors in the $z$ direction.} \emph{Hilbert space} 
$\mathscr{H}_{\mathrm{column}}$, which is defined at each reduced momentum $\bk_\perp$.
Elements of this space are indicated with a triple angled bracket, i.e.,
\begin{equation}
 \ket{A}\drangle \in \mathscr{H}_\textrm{column}, \label{eqn:column-elements}  
\end{equation} 
and we label the orthonormal basis states in the intra-column Hilbert space as $\ket{R_z,\alpha}\drangle$. 
At each $\bk_\perp$, the 1D analog of a Wannier function in the intra-column Hilbert space is (an \textit{intra-column function}, for brevity) is given by 1D Fourier transform of the Bloch function and subsequent stripping of the plane-wave factor $e^{ik_x(R_x+x_\alpha)+ik_y(R_y+y_{\alpha})}$:
\begin{align}
    \mathcal{V}_{j,\bk_\perp,\bar{R_z}}(R_z,\alpha)&=\Big\langle\Big\langle\!\braket{R_z,\alpha}{\mathcal{V}_{j,\bk_\perp,\bar{R_z}}}\!\Big\rangle\Big\rangle \lin
    &=\int\!\frac{dk_z}{2\pi}e^{-ik_z\bar{R}_z}e^{ik_z(R_z+z_\alpha)}u_j(\bk).
    \label{eq:app:intra-column-orbitals}
\end{align}
Because there is no obstruction to find a global smooth gauge on a base space that is topologically
equivalent to a circle, the Bloch function can always be chosen to be periodic and analytic in $k_z$, such that the intra-column function decays exponentially with $|R_z-\bar{R}_z|$. 
The intra-column function can be viewed as the Wannier component of the hybrid Bloch-Wannier functions (defined in the full Hilbert space $\mathscr{H}$) 
that is exponentially localized in one direction ($z$), but remains extended as a Bloch function in the other two directions.
To exemplify an application of the hybrid Bloch-Wannier functions, 2D-translation-invariant eigenstates of the projected position operators $P_b\hat{z}P_b$ (cf.~Sec.~\ref{sec:anomalyprecise}) have wave functions that are hybrid Bloch-Wannier functions.

\subsection{Periodic vs.~non-periodic convention for tight-binding Hamiltonian\label{app:nonperiodic-convention}}

In this part of the Appendix we comment on the choice of prefactor that we made to construct the $\bk$-dependent basis in Eq.~\eqref{eq:app:Bloch-basis} of the tight-binding Hilbert space. 
In the chosen convention the Hamiltonian is generally aperiodic-in-$\bk$ for generic basis orbitals positions $\br_\alpha$; in particular, it transforms under translation by a reciprocal lattice vector $\bG$ as
\e{ h(\bk+\bG) = V(\bG)^{-1}h(\bk)V(\bG),}
where 
\begin{equation}
[V(\bG)]_{\alpha\beta}=\exp(i\bG\cdot\br_\alpha)\delta_{\alpha\beta}.
\end{equation} 
From this we conclude aperiodicity of the corresponding eigenvectors
\e{ u_j^\alpha(\bk+\bG) =e^{-i\bG\cdot \br_{\alpha}}u_j^\alpha(\bk),
\label{eq:app:Bloch-aperiodicity} }
assuming a periodic gauge of the Bloch functions $\psi_{j,\bk}$.
As we show in the next Appendix~\ref{app:geom-theo-pol}, only in this convention the positions of the basis orbitals are properly taken into account.

An alternative convention deals with strictly $\bk$-periodic Hamiltonians and intra-cell wave functions by choosing the $\bk$-dependent basis to be
\begin{equation}
    |\widetilde{\bk,\alpha}\rangle\rangle=\sum_{\bR} e^{i\bk\cdot\bR}\ket{\bR,\alpha}\rangle.
    \label{eq:app:Bloch-basis-periodic}
\end{equation}
In this case the Hamiltonian and the intra-cell wave functions are periodic in $\bk$:
\e{ &\tilde{h}(\bk)\tilde{u}_j(\bk)=E_{j}(\bk)\tilde{u}_j(\bk), \lin
&\tilde{h}(\bk+\bG)=\tilde{h}(\bk), \as \tilde{u}_j(\bk+\bG)=\tilde{u}_j(\bk). 
\label{eq:app:Hamilt-periodic}}
The Hamiltonians and the intra-cell wave functions in the two conventions are related by
\begin{align}
    h(\bk)\eq V(\bk)^{-1}\tilde{h}(\bk)V(\bk), \label{eq:app:conv-rel-ham} \\
    u_j(\bk)\eq V(\bk)^{-1} \tilde{u}_j(\bk), \label{eq:app:conv-rel-fun}
\end{align}
where in Eq.~\eqref{eq:app:conv-rel-fun} we assumed a periodic gauge of the Bloch functions $\psi_{j,\bk}$.

\subsection{{Tight-binding analog of the geometric theory of polarization}\label{app:geom-theo-pol}}

We show here that the normalized, Brillouin-zone-integral of the Berry connection is related, modulo a Bravais-lattice vector, to the expectation value of the discrete position operator: 
\e{\int \f{d^3\bk}{|\textrm{BZ}|} \braket{u(\bk)}{i\nabk u(\bk)}=_{\bR}
\big\langle\braopket{W_{\bze}}{\boldsymbol{\mathcal{R}}}{W_{\bze}}\big\rangle. \la{berrypol}}
where `$=_{\bR}$' means `equal modulo Bravais lattice vector.
In the tight-binding basis, the discrete position operator can be decomposed as: 
\e{ \boldsymbol{\mathcal{R}}=\sum_{\bR,\alpha}|\bR,\alpha\rangle\rangle(\bR+\br_{\alpha})\langle\langle\bR,\alpha|, }
with $\br_{\alpha}$ an intra-cell position coordinate of the orbital $\alpha$. 
To simplify the notation, we dropped the band index $j$ from both the Bloch and the Wannier state.

The analog of \q{berrypol} for Bloch/Wannier functions defined over continuous real space is well-known \cite{Zak_bandcenter,kingsmith_polarization}, but \q{berrypol} itself carries certain subtleties inherited from the tight-binding formalism that we want to spell out in detail. 
In particular, it is not well-appreciated that \q{berrypol} only holds with $u^\alpha(\bk)$ satisfying the transformation relation given by \eqref{eq:app:Bloch-aperiodicity},
which implies that $u$ is aperiodic in reciprocal-lattice translations for generic values of $\br_{\alpha}$. 

It is instructive to compare our result with the common practice of working  with strictly $\bk$-periodic Hamiltonians, defined in Eq.~\eqref{eq:app:Hamilt-periodic}.
Then the analog of \q{berrypol} 
is 
\e{\int \f{d^3\bk}{|\textrm{BZ}|} \braket{\tilde{u}_{\bk}}{i\nabk \tilde{u}}=_{\bR}
\big\langle\braopket{W_{\bze}}{\tilde{\boldsymbol{\mathcal{R}}}}{W_{\bze}}\big\rangle \la{berrypol2}}
with a modified version of the position operator,
\e{ \tilde{\boldsymbol{\mathcal{R}}}=\sum_{\bR,\alpha}\ket{\bR,\alpha}\rangle\bR\langle\bra{\bR,\alpha}, }
The crucial difference is that 
$\tilde{\boldsymbol{\mathcal{R}}}$ carries no information of where a basis Wannier orbital is located within a unit cell. 
Alternatively stated, to develop a tight-binding analog of the geometric theory of polarization [as encapsulated by its key equation \q{berrypol}] that properly accounts for the (possibly distinct) spatial embedding of the orbitals within a unit cell, it is necessary to utilize the (generically) $\bk$-aperiodic tight-binding Hamiltonian.

Plugging in the definition of the Wannier function [Eq.~\eqref{blochwannierfourier}] and explicitly allowing for a gauge transformation $u^\alpha(\bk)\to e^{i\beta(\bk)}u^\alpha(\bk)$, the proof of \q{berrypol} follows:
\begin{align} 
&\phantom{=}\big\langle\braopket{W_{\bze}}{\boldsymbol{\mathcal{R}}}{W_{\bze}}\big\rangle
= \sum_{\bR,\alpha}\overline{W_{\boldsymbol{0}}(\bR,\alpha)}(\bR+\br_\alpha) W_{\boldsymbol{0}}(\bR,\alpha) = \lin
\eq  \int \f{d^3\bk d^3\bk'}{|\textrm{BZ}|^2}\sum_{\bR,\alpha}
\overline{u^{\alpha}(\bk)}e^{-i\beta(\bk)-i\bk(\bR+\br_\alpha)}(\bR+\br_{\alpha}) \lin
& \qquad\qquad\qquad\times e^{i\bk'(\bR+\br_\alpha)+i\beta(\bk')}u^\alpha(\bk')
\lin \eq  \int \f{d^3\bk d^3\bk'}{|\textrm{BZ}|^2}\sum_{\bR,\alpha}\overline{u^\alpha(\bk)}e^{-i\beta(\bk)-i\bk \cdot(\bR+\br_{\alpha})}[-i\boldsymbol{\nabla}_{\bk'}e^{i\bk' \cdot(\bR+\br_{\alpha})}] \lin
& \qquad\qquad\qquad\times e^{i\beta(\bk')}u^\alpha(\bk').\label{eqn:long-derivation-1}
\end{align}
Because $\psi_{\bk'}$ and any gauge function $e^{i\beta(\bk')}$ are periodic-in-$\bk'$ by construction, one may integrate by parts without acquiring a boundary term:
\e{(\mathrm{\ref{eqn:long-derivation-1}})&=\int \f{d^3\bk d^3\bk'}{|\textrm{BZ}|^2}\sum_{\bR,\alpha}\overline{u^\alpha(\bk)}e^{-i\bk(\bR+\br_{\alpha})-i\beta(\bk)}e^{i\bk' (\bR+\br_{\alpha})+i\beta(\bk')} \lin 
& \qquad\qquad\quad\times\left[i\boldsymbol{\nabla}_{\bk'}u^\alpha(\bk')-u^\alpha(\bk')\boldsymbol{\nabla}_{\bk'}\beta(\bk')\right].\label{eqn:long-derivation-still-goes-on}
}
After carrying out the sum over $\bR$ with the Dirac-delta-function identity:
\e{ \sum_{\bR}e^{i\bk\cdot \bR} = |\textrm{BZ}|\,\updelta(\bk),}
we get 
\begin{align}
    (\mathrm{\ref{eqn:long-derivation-still-goes-on}}) =\int \f{d^3\bk}{|\textrm{BZ}|}\left[\braket{u(\bk)}{i\boldsymbol{\nabla}_\bk u(\bk)}-\boldsymbol{\nabla}_\bk\beta(\bk)\right],
\end{align}
where the first term gives the desired relation \eqref{berrypol} and the second term introduces a modulo-$\bR$ ambiguity, as the gauge phase $\beta(\bk)$ can (in principle) wind nontrivially over the Brillouin torus.
[The proof of \q{berrypol2} is closely 
analogous to what we have just presented.]

\subsection{Equivalence of the Hopf invariant in periodic and non-periodic tight-binding Hamiltonian\label{app:Hopf-periodic-nonperiodic}}

In this Appendix we show that the Hopf invariant [Eq.~\eqref{eq:hopfinvar} in the main text] is 
quantized in both the periodic and the non-periodic tight-binding Hamiltonian convention, and that it takes the same value in both conventions: $\chi[u(\bk)]=\chi[\tilde{u}(\bk)]$. 

As was shown by Pontryagin~\cite{pontrjagin_classification}, the Hopf invariant classifies continuous maps from a three-torus $T^3$ to a two-sphere $S^2$ when the Chern numbers computed on all 2D sub-tori are zero. 
As was later shown by Moore \emph{et al.}~in~\cite{Hopfinsulator_Moore}, such a map can be naturally realized as a two-band insulating Hamiltonian in three dimensions and with a trivial Chern class. 
To ensure the continuity and periodicity of the map over the BZ, the Hamiltonian was considered in the periodic Bloch convention. 
In this case, the authors of Ref.~\cite{Hopfinsulator_Moore} proposed a formula for the Hopf invariant as integral of the Chern-Simons three-form of the Berry gauge field [cf.~Eq.~\eqref{eq:hopfinvar}]. 

In order to appreciate that the same formula is valid in the non-periodic convention, we consider a continuous interpolation between Hamiltonians in the periodic and non-periodic conventions and show that the Hopf invariant is quantized throughout the interpolation, and therefore does not change. 
The desired interpolation parametrized by a real variable $t\in[0,1]$ is given by 
\begin{equation}
    h(\bk,t)=V(t\bk)^{-1}\tilde{h}(\bk)V(t\bk),    \label{eq:app:per-nonper-interpol}
\end{equation}
which satisfies $h(\bk,0)=\tilde{h}(\bk)$ (periodic) and  $h(\bk,1)=h(\bk)$ (non-periodic) with $[V(t\bk)]_{\alpha\beta}=\exp(it\bk\cdot\br_\alpha)\delta_{\alpha\beta}$. 
One can view this interpolation as a continuous shift of the basis orbitals from the spatial origin within the unit cell to their actual positions $\br_{\alpha}$. 
Throughout this interpolation, both the energy spectrum and the translational symmetry are invariant; the spectral invariance implies that the assumed gap does not~close.

Crucially, the Chern-Simons form $\boldsymbol{\mathcal{F}}\cdot\boldsymbol{\mathcal{A}} \,d^3\bk$ remains $\bk$-periodic for any $t\in [0,1]$.
To see this, take into account the relation between intra-cell wave functions throughout the interpolation
\begin{equation}
    u(\bk,t) =  V(t\bk)^{-1} \tilde{u}(\bk).
\end{equation}
This implies that the Berry connections are related as
\begin{eqnarray}
\boldsymbol{\mathcal{A}}[u(\bk,t)] &=&  \braket{u(\bk,t)}{i\nabk u(\bk,t)} \nonumber \\
&=& \boldsymbol{\mathcal{A}}[\tilde{u}(\bk)]+ t\braopket{\tilde{u}(\bk)}{\hat{\br}}{\tilde{u}(\bk)}, \end{eqnarray}
where $[\hat{\br}]_{\alpha\beta}=\br_\alpha\delta_{\alpha\beta}$ is the matrix with positions of the basis orbitals on the diagonal. The periodicity of $\ket{\tilde{u}(\bk)}$ implies that $\boldsymbol{\mathcal{A}}[u(\bk,t)]$ is periodic-in-$\bk$ for all $t\in[0,1]$, and hence also its curl $\boldsymbol{\mathcal{F}}$ (the Berry curvature). 
Therefore the Chern-Simons form is periodic throughout interpolation.

To derive the difference in the Hopf invariant (computed as an integral of the Chern-Simons form) between the two conventions, we adopt the notation of differential forms:
we take
$\mathcal{A}=\boldsymbol{\mathcal{A}}\cdot d\bk$ to be the Berry connection one-form, and its exterior derivative $\mathcal{F}=d\mathcal{A}$ to be the Berry curvature two-form. 
The Chern-Simons three-form in these notations can be simply written as $\mathcal{F}\!\wedge\! \mathcal{A}$ [see Eq.~(\ref{eqn:Hopf-on-S3}) and footnote~\ref{foot:Berry-forms} for a componentwise definition of the wedge product]. 
Now we can write the difference in the Hopf invariants as
\begin{align}
    &\chi[u(\bk)]-\chi[\tilde{u}(\bk)]\lin 
    &\propto\int\limits_{\textrm{BZ}} \mathcal{F}(\bk,1)\!\wedge\! \mathcal{A}(\bk,1)- \int\limits_{\textrm{BZ}} \mathcal{F}(\bk,0)\!\wedge\! \mathcal{A}(\bk,0) \lin 
    &= \int\limits_{\partial \left(\textrm{BZ}\times[0,1]\right)}\mathcal{F}(\bk,t)\!\wedge\! \mathcal{A}(\bk,t) \lin 
    &= \int\limits_{\textrm{BZ}\times[0,1]}\mathcal{F}(\bk,t)\!\wedge\! \mathcal{F}(\bk,t),\la{justproven}
\end{align}
where in the second row we used the just-proven fact that the Chern-Simons form is periodic-in-$\bk$, and hence its integration over two BZs at $t=0$ and $t=1$ taken with opposite orientations can be enlarged to integration over the boundary of $\textrm{BZ}\times[0,1]$. 
In the third row we applied the Stokes' theorem and expressed the external derivative of the Chern-Simons form $d(\mathcal{F}\wedge\mathcal{A})=\mathcal{F}\wedge\mathcal{F}$, which is proportional to
the \emph{second Chern character} $\textrm{ch}_2(\mathcal{F})=(1/2!)(i/2\pi)^2 \mathcal{F}\wedge \mathcal{F}$~\cite{stone_mathematicsforphysics}. Importantly, for two-band Hamiltonians the second Chern character vanishes (the proof of this statement is postponed to Appendix~\ref{sec:secondchern}), which allows us to conclude that the Hopf invariant does not change and hence remains quantized throughout the interpolation. 
We conclude that
Eq.~\eqref{eq:hopfinvar} can be used to compute the Hopf invariant in both Bloch conventions.

\section{Spinor function \texorpdfstring{$\zeta(\bk)$}{z(k))} in 
\texorpdfstring{$\bk{\cdot}\bp$}{kp} 
expansion around a rotation-invariant momentum}\label{app:kp-spinor}

In this Appendix, we derive a rotation-symmetric $\bk{\cdot}\bp$ Hamiltonian of the spinor form specified by Eq.~\eqref{eq:zszs-hamiltonian}. 
This is achieved in two ways.

First, we derive the $\bk{\cdot}\bp$ Hamiltonian assuming the continuous rotation symmetry, which restricts the Hamiltonian as specified in Eq.~\eqref{eq:rot-sym-Berry-dipole-contin}. 
This discussion is organized into three parts. 
First, in Appendix~\ref{app:sym-cond-spinor} we translate the symmetry requirement of the Hamiltonian matrix $h^{\bk\cdot\bp}(\bk)$ to a condition on the spinor $\zeta(\bk)$. 
In Sec.~\ref{sec:general-spinor-form} we derive the general symmetry-compatible form of the spinor $\zeta(\bk)$ to the lowest order in each momentum component and in the tunable parameter $\phi$. 
Then, in Appendix~\ref{app:spinor-cont-deform} we show how the obtained general form can always be continuously deformed into the simple form shown in Eq.~(\ref{eq:kp-sym-spinor}) of the main text. 

Second, we derive the symmetry-allowed $\bk{\cdot}\bp$ Hamiltonian as a Taylor expansion of a lattice Hamiltonian around a rotation-invariant momentum. 
For this, in Appendix~\ref{app:kp-sym-nonuniax}, we show that the $\bk{\cdot}\bp$ Hamiltonian is symmetric under discrete rotation symmetry, specified by the itinerant rotation matrix of the lattice Hamiltonian at the considered rotation-invariant momentum.
Finally, in Appendix~\ref{app:spinor-lattice-sym}, we derive the general form of the spinor with discrete rotation symmetry.

\subsection{Equivalent symmetry condition on the spinor function\label{app:sym-cond-spinor}}

We first show that if the symmetry condition in Eq.~\eqref{eq:rot-sym-Berry-dipole-contin} is applied to spinor-form Hamiltonians [Eq.~\eqref{eq:zszs-hamiltonian}], then this is equivalent to the following condition on the spinor function:
\begin{equation}
    R_\theta\zeta(\bk)=e^{i\beta(\bk)}\zeta(\theta\bk),\quad R_\theta=e^{-i(\theta\Delta\ell/2)\sigma_z},
    \label{eq:app:sym-condition-spinor}
\end{equation}
where $\beta$ is an arbitrary $\bk$-dependent phase. 
To show this, recall that conjugation of a Pauli matrix $\sigma_i$ with $\exp(i\omega\sigma_z)$ gives
\begin{align}
    &\exp(i\omega\sigma_z)\sigma_i\exp(-i\omega\sigma_z)= T(2\omega)_{ij}\sigma_j \lin 
    &\textrm{with}\quad T(\omega)=
    \begin{pmatrix}
        \cos\omega & -\sin\omega & 0 \\
        \sin\omega & \cos\omega & 0 \\
        0 & 0 & 1
    \end{pmatrix}. \label{eq:app:rotation-on-sigma-0}
    \end{align}
By choosing the particulate value $\omega = -\theta\Delta\ell/2$ we obtain $R_\theta\sigma_i R_\theta^{-1}=T(-\theta \Delta\ell)_{ij}\sigma_j$. 
The application of this relation to the spinor-form Hamiltonian in Eq.~(\ref{eq:zszs-hamiltonian}) gives
\begin{align}
    R_\theta h(\bk)R_\theta^{-1} \eq [\zeta^\dagger\sigma_i\zeta]T(-\theta\Delta\ell
    )_{ij}\sigma_j \lin
    \eq [\zeta^\dagger  T(\theta\Delta\ell
    )_{ji}\sigma_i\zeta]\sigma_j \lin
    \eq [\zeta^\dagger R_\theta^{-1}\boldsymbol{\sigma}R_\theta\zeta]\cdot\boldsymbol{\sigma}\Big\rvert_{\bk}, 
    \label{eq:app:ham-condition-rewrite}
\end{align}
where the last line reminds us that the function $\zeta$ (in all of the consecutive expressions) is to be evaluated at $\bk$. 
Separately, the symmetry condition in Eq.~\eqref{eq:rot-sym-Berry-dipole-contin} applied to the spinor-form Hamiltonian in Eq.~(\ref{eq:zszs-hamiltonian}) implies 
\begin{equation}
R_\theta h(\bk)R_\theta^{-1} = [\zeta^\dagger\boldsymbol{\sigma}\zeta]\cdot\boldsymbol{\sigma}\Big\rvert_{\theta\bk},\label{eq:rewrite-split}
\end{equation}
i.e., evaluated at $\theta\bk$.
The relation in Eq.~\eqref{eq:app:sym-condition-spinor} follows from comparing Eqs.~(\ref{eq:app:ham-condition-rewrite}) and~(\ref{eq:rewrite-split}) after considering that 
\begin{equation}
    \zeta^\dagger\bm{\sigma}\zeta=\xi^\dagger\bm{\sigma}\xi\;\Leftrightarrow\;\zeta=\xi e^{i\beta}
    \label{eq:app:ham-to-spinor}
\end{equation}
with $\beta$ an arbitrary phase factor (compare also to the first part of footnote~\ref{foot:rescale-spinor}). 
The equivalence in Eq.~(\ref{eq:app:ham-to-spinor}) follows because $\zeta^\dagger \bm{\sigma}\zeta$ can be interpreted as the expectation value of spin for spinor $\zeta$, and the expectation value of all spin components can agree for a pair of spinors only if  they differ by an unobservable phase factor.

\subsection{General form of the spinor function}\label{sec:general-spinor-form}

To ensure that $h(\bk)$ is analytic in $\bk$, we impose that $\zeta(\bk)$ is also analytic and hence admits convergent Taylor expansions in $\bk$.
Given the symmetry condition \eqref{eq:app:sym-condition-spinor} on the spinor function $\zeta(\bk)$, we aim to find the most general symmetry-compatible choice of the spinor up to linear order in $(k_x,k_y,k_z)$, with the motivation that $\bk=\bze$ marks a prospective
band-touching point. 
The zeroth-order-in-$\bk$ term is parametrized by an additional real parameter $\phi$, such that  $\zeta(\bk=\bze,\phi=0)=0$:
\begin{equation}
    \zeta^{(0)}(\bk,\phi)\!=\!\begin{pmatrix}
		A_1k_+ + B_1k_- + C_1(k_z+i\phi)+ D_1(k_z-i\phi) \\
		A_2k_+ + B_2k_- + C_2(k_z+i\phi)+ D_2(k_z-i\phi)
	\end{pmatrix}\!,\;\hspace{-1.5mm}
	\label{eq:app:spinor-ansatz-linear}
\end{equation}
where $k_\pm=k_x\pm ik_y$, and $A_j$, $B_j$, $C_j$, $D_j$ ($j=1,2$) are complex coefficients. We derive constraints on these coefficients by plugging the ansatz into Eq.~\eqref{eq:app:sym-condition-spinor}. Note that one side of the equation is multiplied by the phase $e^{i\beta(\bk)}$, which is required to be analytic for $\zeta(\bk)$ to be also analytic. In the following, we assume it to be $\bk$-independent, $\beta(\bk)=\beta$, as all $\bk$-dependent terms will contribute to higher orders in $(\bk,\phi)$. Since the variables $k_\pm$, $k_z\pm i\phi$ are independent of each other, we proceed by equating the coefficients that multiply them on the left vs.~the right side of the equation.\bigskip  \\
We proceed in steps as follows.\medskip \\
\noindent{(\emph{i})} Equate the coefficients in front of $k_z+i\phi$ and $k_z-i\phi$, get:
\begin{align}
	& C_1e^{-i\theta\Delta\ell/2}=C_1e^{i\beta},
	& C_2e^{i\theta\Delta\ell/2}=C_2e^{i\beta}. \label{eq:kzm_cond} \\
	& D_1e^{-i\theta\Delta\ell/2}=D_1e^{i\beta},
	& D_2e^{i\theta\Delta\ell/2}=D_2e^{i\beta}. \label{eq:kz-m_cond}
\end{align}
If we assume that
$\beta\neq\pm\theta\Delta\ell/2$, then $C_{1,2} = D_{1,2} = 0$,  
implying that the spinor dependence on $k_z$ and $\phi$ is of higher order. We ignore such a possibility as it requires fine tuning,
leaving us with the following two possibilities: either $\beta=\theta\Delta\ell/2$ or $\beta=-\theta\Delta\ell/2$. We opt for the first of the two options, and comment on the other choice at the end of this section. In that case, we find that
\begin{equation}
    D_1=C_1=0,
    \label{eq:app:CD-result}
\end{equation}
while $C_2$ and $D_2$ are arbitrary.\medskip
\\
\noindent{(\emph{ii})} Equate the coefficients in front of $k_+$:
\begin{align}
	& A_1\left[e^{-i\theta\Delta\ell/2}-e^{i\theta\Delta\ell/2}e^{i\theta}\right]=0, \label{eq:k+_cond_1}\\
	& A_2\,e^{i\theta\Delta\ell/2}\left[1-e^{i\theta}\right]=0. \label{eq:k+_cond_2}
\end{align}
From this we conclude that 
\begin{equation}\label{eq:app:A-result}
\begin{split}
    A_2&=0, \\
    A_1&\neq 0, \quad\mathrm{if}\; \Delta\ell=-1.
\end{split}
\end{equation}
Both coefficients vanish if $\Delta \ell \neq -1$, (cf.~implications below).\medskip \\
\noindent{(\emph{iii})} Similarly equate the coefficients in front of $k_-$ to get:
\begin{equation}\label{eq:app:B-result}
\begin{split}
    B_2&=0, \\
    B_1&\neq 0, \quad\mathrm{if}\; \Delta\ell=1.
\end{split}
\end{equation}
It follows from Eqs.~(\ref{eq:app:A-result}) and~(\ref{eq:app:B-result}) that if $|\Delta\ell|\neq1$ then all linear terms in $k_\pm$
vanish. 
In that case, we are led to include higher-order terms in the ansatz \eqref{eq:app:spinor-ansatz-linear}.
\bigskip

\indent The 
possible second-order contributions in $k_\pm$ are 
\begin{equation}
    \zeta^{(1)}(k_+,k_-)=\begin{pmatrix}
		L_1k_+^2 + M_1k_-^2 + N_1k_+k_- \\
		L_2k_+^2 + M_2k_-^2 + N_2k_+k_-
	\end{pmatrix}.
	\label{eq:app:spinor-ansatz-quadratic}
\end{equation}
with six complex coefficients. We deduce the coefficients in front of the quadratic terms in the same way as for the linear terms.\medskip \\
\noindent{(\emph{i})} First, equate coefficients in front of $k_+k_-$:
\begin{align}
	& N_1e^{-i\theta\Delta\ell/2}=N_1e^{i\theta\Delta\ell/2}, \label{eq:k+k-_cond_1}\\
	& N_2e^{i\theta\Delta\ell/2}=N_2e^{i\theta\Delta\ell/2}. \label{eq:k-k-_cond_2}
\end{align}
From this we conclude that $N_1=0$, while $N_2$ can take arbitrary values. Therefore, independent of the Berry-dipole spin $\Delta\ell$, the second component of the spinor generically depends on the quadratic form $k_+k_- = k_x^2+k_y^2$.\medskip \\ 
\noindent{(\emph{ii})} Equate coefficients in front of $k_+^2$:
\begin{align}
	& L_1\left[e^{-i\theta\Delta\ell/2}-e^{i\theta\Delta\ell/2}e^{i2\theta}\right]=0, \label{eq:k+2_cond_1}\\
	& L_2\,e^{i\theta\Delta\ell/2}\left[1-e^{i2\theta}\right]=0. \label{eq:k+2_cond_2}
\end{align}
This gives
\begin{equation}\label{eq:app:L-result}
\begin{split}
    L_2&=0, \\
    L_1&\neq 0, \quad\mathrm{if}\; \Delta\ell=-2.
\end{split}
\end{equation}
\smallskip \\
\noindent{(\emph{iii})} Similarly for coefficients in front of $k_-^2$ we get:
\begin{equation}\label{eq:app:M-result} 
\begin{split}
    M_2&=0, \\
    M_1&\neq 0, \quad\mathrm{if}\; \Delta\ell=2.
\end{split}\
\end{equation}
For $|\Delta\ell|= 2$ Eqs.~\eqref{eq:app:L-result} and \eqref{eq:app:M-result} define quadratic dependence on $k_\pm$ in the first component of the spinor. 
However, for $|\Delta\ell|> 2$ these contributions vanish and we need to consider still higher orders in $k_\pm$.\bigskip 

\indent In the following we derive the lowest-in-$k_\pm$ term that is allowed by a given Berry-dipole spin $\Delta\ell$ in the first component of the spinor. 
To achieve this, we first derive all allowed terms of the form $k_+^pk_-^q$ with $p,q$ being non-negative integers (the set $\mathbb{N}\cup\{0\}$), and then among these terms find the one with the lowest order $p+q$. \medskip \\
\noindent{(\emph{i})}
Equate coefficients in front of $k_+^pk_-^q$ in the first component of the spinor:
\begin{equation}
    e^{-i\theta\Delta\ell/2}=e^{i\theta\Delta\ell/2}e^{i\theta(p-q)}.
\end{equation}
This equation is 
fulfilled if $q-p=\Delta\ell$, and hence, the allowed terms have the form $k_+^pk_-^{p+\Delta\ell}$ with both $p,p+\Delta\ell\in\mathbb{N}\cup\{0\}$. 
This pair of conditions can be expressed compactly as $p \geq \textrm{max}\{0,-\Delta\ell\}$. \medskip \\
\noindent{(\emph{ii})} To find among all allowed terms the one with the lowest order,  
we need to find
\begin{equation}
p_*=\underset{p\geq \textrm{max}\{0,-\Delta \ell\}}{\mathrm{arg \,min}} (2p+\Delta\ell),
\end{equation}
which, depending on the sign of $\Delta\ell$, is
\begin{equation}
    p_*=\begin{cases}
    0, & \Delta\ell>0 \\
    -\Delta\ell, & \Delta\ell<0
    \end{cases}.
\end{equation}
Therefore, given the value $\Delta\ell$, the first component of the spinor to the lowest order in $k_\pm$ is $\left(k_{-\sign(\Delta\ell)}\right)^{|\Delta\ell|}$.\bigskip 

By combining all the constraints derived for a given Berry-dipole spin $\Delta\ell$, we get the general form of the rotation-symmetric spinor:
\begin{equation}
    \zeta(\bk,\phi)=\begin{pmatrix}
		Z_{\Delta\ell}\left(k_{-\sign(\Delta\ell)}\right)^{|\Delta\ell|} \\
		C_2(k_z+i\phi)+ D_2(k_z-i\phi) + N_2 k_+k_-
	\end{pmatrix},
	\label{eq:app:general-sym-spinor}
\end{equation} 
where $Z_{\Delta\ell}$, $C_2$, $D_2$ and $N_2$ are arbitrary complex coefficients.\bigskip

Let us conclude the discussion by analyzing how the result changes if we instead adopt the second option of the phase, 
$\beta=-\theta\Delta\ell/2$. 
Then, following the steps that were described in this section, we derive that the spinor function has the form 
\begin{equation}
    \zeta(\bk,\phi)=\begin{pmatrix}
		C_1(k_z+i\phi)+ D_1(k_z-i\phi) + N_1 k_+k_- \\
		Z_{\Delta\ell}'\left(k_{\sign(\Delta\ell)}\right)^{|\Delta\ell|}
	\end{pmatrix}.
	\label{eq:app:general-sym-spinor-2}
\end{equation}
To understand the difference between the two Hamiltonians, respectively defined through spinors \eqref{eq:app:general-sym-spinor} and \eqref{eq:app:general-sym-spinor-2}, we note that the conduction band of the spinor-form Hamiltonian [Eq.~\eqref{eq:zszs-hamiltonian}] is proportional to the spinor $\ket{u_c(\bk)}\propto\zeta(\bk)$, while the valence band can be expressed as $\ket{u_v(\bk)}\propto\sigma_y\zeta^*(\bk)$.
We can define an angular momentum $\ell_{v(c)}$ of the valence (conduction) eigenstate at the rotation axis as
\begin{equation}
R_\theta\ket{u_{v(c)}(0,0,k_z)}=e^{i\theta\ell_{v(c)}}\ket{u_{v(c)}(0,0,k_z)}.
\end{equation}
Both the Hamiltonian $h(\bk)$ and the rotation matrix $R_\theta$ are expressed in a basis given by the orbitals $\varphi_1$, $\varphi_2$; since the rotation matrix is diagonal in this basis, these orbitals get a factor $e^{-i\theta\Delta\ell/2}$ resp.~$e^{i\theta\Delta\ell/2}$ under rotation by an angle $\theta$ and therefore have angular momenta $\ell_{1(2)}=-(+)\Delta\ell/2$.
Since valence and conduction bands at rotation axis diagonalize the rotation matrix, the set $\{\ell_v, \ell_c\}$ coincides with the set $\{\pm\Delta\ell/2\}$, meaning that at rotation axis the valence and conduction bands are of a single orbital character. 
Computing the valence and conduction angular momenta for the spinor \eqref{eq:app:general-sym-spinor}, we observe that a valence (resp.~conduction) band has angular momentum $\ell_v=-\Delta\ell/2$ (resp.~$\ell_c=\Delta\ell/2$), meaning that it is of the $\varphi_1$ (resp.~$\varphi_2$) character. For the spinor \eqref{eq:app:general-sym-spinor-2}, the valence (resp.~conduction) angular momentum has opposite sign and the corresponding band is of the $\varphi_2$ (resp.~$\varphi_1$) character.
 
To relate the valence (resp.~conduction) band given by spinor \eqref{eq:app:general-sym-spinor-2} to the valence (resp.~conduction) band given by spinor \eqref{eq:app:general-sym-spinor}, it is enough to flip the order of the basis orbitals and change the sign of the Berry-dipole spin in the rotation matrix written in the new basis. This deformation allows us to extend the results presented for the spinor~\eqref{eq:kp-sym-spinor} in Sec.~\ref{sec:berry-dipole} to the spinor \eqref{eq:kp-sym-spinor-2}, which are the simplified forms of the spinors \eqref{eq:app:general-sym-spinor} and \eqref{eq:app:general-sym-spinor-2}, respectively.

\subsection{Continuous deformation of the spinor to simplified form\label{app:spinor-cont-deform}}

Consider a Hamiltonian $h(\bk,\phi)$ [cf.\  Eq.~\eqref{eq:zszs-hamiltonian}] encoded by
the spinor in Eq.~\eqref{eq:app:general-sym-spinor}. Let us first argue what are the allowed continuous deformations of this Hamiltonian that preserve its rotation symmetry and `topology', in a sense that will be made increasingly precise.

Let us assume the energy bands of $h(\bk,\phi)$ touch at $(\bk,\phi)=(\bze,0)$ and nowhere else in the $(\bk,\phi)$-space, which we take to be isomorphic to $\mathbb{R}^4$. 
For any constant-$\phi$ slice of  $(\bk,\phi)$-space, one can define a three-parameter Hamiltonian $h(\bk)$ and its associated continuum Hopf and Zak numbers [cf.~Eqs.~\eqref{eq:hopfinvar-cont} and \eqref{eq:pol-continuum}]. 
These numbers are the $\bk\cdot \bp$ analogs of the Hopf and RTP-Zak invariants for tight-binding Hamiltonians. Assuming that $h(\bk)$ is a truncated Taylor expansion of a tight-binding Hamiltonian, changes in the continuum Hopf/Zak numbers (as $\phi$ is tuned through zero) equal changes in the lattice  Hopf/RTP invariants (of the tight-binding Hamiltonian which regularizes the $\bk\cdot \bp$ Hamiltonian). 
In particular, this implies 
\begin{equation}
\delta\chi^\textrm{cont.} = \chi^{\textrm{cont.}}(\phi=0^+)-\chi^{\textrm{cont.}}(\phi=0^-)\in \Z.
\end{equation}
This integer quantization holds under continuous deformation of the $\bk\cdot\bp$ Hamiltonian $h(\bk,\phi;t)$ parametrized by $t\in [0,1]$, if we impose for all $t$ that the energy gap closes only at~$(\bk,\phi)=(\bze,0)$. 

This freedom given to us by the $t$ parameter allows us to bring the spinor derived in Eq.~(\ref{eq:app:general-sym-spinor}) to a simpler form. 
The condition that the gap closes only at $(\bk,\phi)=(\bze,0)$ can be enforced by insisting that $|Z_{\Delta\ell}|\neq0$ and $|C_2|\neq|D_2|$.
Indeed, the gap closes if and only if $\zeta=0$.
Imposing that the first component of $\zeta$ vanishes implies $k_{-\sign(\Delta\ell)}=0$, and hence also $N_2k_+k_-=0$. 
Then the second component of $\zeta$ equal to zero is equivalent to $C_2(k_z+i\phi)=-D_2(k_z-i\phi)$, which implies that $(|C_2|-|D_2|)\sqrt{k_z^2+\phi^2}=0$. This means that when $|C_2|\neq|D_2|$, the spinor turns to zero only at $(\bk,\phi)=(\bze,0)$.

Hence, we are allowed to continuously adjust the coefficients to certain preferred values as long as we avoid  $|Z_{\Delta\ell}| = 0$ (and also avoid $|C_2|=|D_2|$) throughout the interpolation. 
Since this constraint on our interpolation does not involve $N_2$, we can always set this parameter to zero. 
We may as well take $Z_{\Delta\ell}=1$. 
The selected values of $C_2$ and $D_2$ depend on the ratio between their magnitudes. 
We choose $C_2=1$, $D_2=0$ if $|C_2|>|D_2|$ and $C_2=0$ $D_2=-1$ if $|C_2|<|D_2|$.
There exists a continuous deformation of the spinor-form Hamiltonian from arbitrary initial values of the coefficients to the specified ones without closing the energy gap and without breaking the rotation symmetry (and thus without altering the topological invariants of the corresponding Berry dipole). 
The final form of the spinor is
\begin{equation}
    \zeta(\bk,\phi)=\begin{pmatrix}
		\left(k_{-\sign(\Delta\ell)}\right)^{|\Delta\ell|} \\
		\upsilon\,k_z+i\phi
	\end{pmatrix},
	\label{eq:app:general-sym-simple-spinor}
\end{equation}
where 
\begin{equation}
\upsilon=\sign(|C_2|-|D_2|)\in\{-1, 1\}.
\end{equation}
This is the form of the spinor utilized throughout the main text, beginning from Eq.~\eqref{eq:kp-sym-spinor}.
A similar deformation performed on the spinor in Eq.~\eqref{eq:app:general-sym-spinor-2} leads to
\begin{equation}
    \zeta(\bk,\phi)=\begin{pmatrix}
		\upsilon\,k_z+i\phi \\
		\left(k_{\sign(\Delta\ell)}\right)^{|\Delta\ell|}
	\end{pmatrix}.
	\label{eq:app:general-sym-simple-spinor-2}
\end{equation}

\subsection{Symmetry of \texorpdfstring{$\bk\cdot\boldsymbol{p}$}{k.p}-expansion of a lattice Hamiltonian}\label{app:kp-sym-nonuniax}

In this Appendix we derive the rotation-symmetry constraint of a lattice Hamiltonian expanded for small $\bk$ around a rotation-symmetric point in the BZ. 
A lattice Hamiltonian obeys the following symmetry condition:
\e{
R_{C_m}h(\bk)R_{C_m}^{-1}=h(C_m\bk).
}
We now expand the Hamiltonian around point $\bk_\Pi=(\Pi,k_z)$ where the reduced momentum $\Pi$ is shifted by a reciprocal lattice vector after $m$-fold rotation, $C_m\Pi=\Pi+\bG_\Pi$. 
Adopting the local momentum $\bkappa=\bk-\bk_\Pi$, we obtain
\e{
R_{C_m}h(\bk_\Pi+\bkappa)R_{C_m}^{-1}\eq h(C_m\bk_\Pi+C_m\bkappa)\lin
\eq h(\bk_\Pi+\bG_\Pi+C_m\bkappa).
\label{eq:app:kp-sym-derivation}
}
We use that the Hamiltonians at momenta shifted by a reciprocal lattice vector are related by
\begin{equation}
    h(\bk+\bG)=V_{\bG}^{-1}h(\bk)V_{\bG}^{\phantom{-1}},
\end{equation}
with $[V_{\bG}]_{\alpha\beta}=\exp(i\br_{\alpha}\cdot\bG)\delta_{\alpha\beta}$.
Applying this relation to Eq.~\eqref{eq:app:kp-sym-derivation} we get
\e{
V_{\bG_\Pi}R_{C_m}h(\bk_\Pi+\bkappa)(V_{\bG_\Pi}R_{C_m})^{-1}\eq h(\bk_\Pi+C_m\bkappa).
}
We now expand the Hamiltonian at small $\bkappa$ around $\bk_\Pi$ and denote this expansion as $h^{\bk\cdot\bp}(\bkappa)$. 
Using the itinerant rotation matrix defined in Eq.~\eqref{eq:itinerant_rotation}, we get the symmetry condition on the $\bk\cdot\bp$-Hamiltonian
\begin{equation}
    \widetilde{R}_{C_m}(\Pi)h^{\bk\cdot\bp}(\bkappa)\widetilde{R}_{C_m}^{-1}(\Pi)=h^{\bk\cdot\bp}(C_m\bkappa).
    \label{eq:app:kp-sym-condition}
\end{equation}
The itinerant rotation matrix can be written as
\begin{equation}
    \widetilde{R}_{C_m}(\Pi)=\exp(-i\pi\frac{\widetilde{\mathcal{L}}_2(\Pi)-\widetilde{\mathcal{L}}_1(\Pi)}{m}\sigma_z).\label{eqn:widetilde-R-C-m}
\end{equation}
Therefore, the Berry-dipole spin $\Delta\ell$ in the $\bk\cdot\bp$ expansion around 
$\bk_\Pi$ is given modulo the order of rotation symmetry $m$ by the itinerant angular momentum difference $\widetilde{\mathcal{L}}_2(\Pi)-\widetilde{\mathcal{L}}_1(\Pi)$ between the basis Bloch states.

\subsection{Spinor function of \texorpdfstring{$\bk{\cdot}\boldsymbol{p}$}{k.p}-expansion of a lattice Hamiltonian} \label{app:spinor-lattice-sym}

In this Appendix we derive a spinor in the lowest order in $(\bkappa, \phi)$ that is allowed by the discrete rotation symmetry \eqref{eq:app:kp-sym-condition} of a $\bk\cdot\boldsymbol{p}$ Hamiltonian around momentum $\bk_\Pi$ and of the spinor form~\eqref{eq:zszs-hamiltonian}. To fix the notations, we assume that at rotation-invariant line $\gamma_\Pi$ [line projected onto $\Pi$ in the $(k_x,k_y)$ plane] the valence band coincides with a band representation induced from the first basis orbital $\varphi_1$. (We will comment on the opposite choice, when the orbital $\varphi_2$ induces the valence band representation, at the end of this Appendix.) This means that the valence (conduction) itinerant angular momentum $\widetilde{\mathcal{L}}_{v(c)}(\Pi)$ is equal to the itinerant angular momentum of the first (second) basis orbital $\widetilde{\mathcal{L}}_{1(2)}(\Pi)$, and the itinerant rotation matrix in Eq.~(\ref{eqn:widetilde-R-C-m}) can be rewritten as
\begin{equation}
    \widetilde{R}_{C_m}(\Pi)=\exp(-i\pi\frac{\Delta\widetilde{\mathcal{L}}(\Pi)}{m}\sigma_z),
\end{equation}
where $\Delta\widetilde{\mathcal{L}}(\Pi)=\widetilde{\mathcal{L}}_c(\Pi)-\widetilde{\mathcal{L}}_v(\Pi) \pmod{m}$. The sought spinor is then continuously deformable to
\begin{equation}
    \zeta(\boldsymbol{\kappa},\phi)=\begin{pmatrix}
        A\kappa_-^{\Delta\widetilde{\mathcal{L}}(\Pi)}+B\kappa_+^{m-\Delta\widetilde{\mathcal{L}}(\Pi)} \\
        \upsilon \kappa_z+i\phi
    \end{pmatrix},
    \label{eq:app:spinor-sym-lat}
\end{equation}
where $\upsilon\in\{-1,1\}$ is defined as in Sec.~\ref{app:spinor-cont-deform}. Depending on the value of $\Delta\widetilde{\mathcal{L}}(\Pi)$, we can further deform the first component of the spinor \eqref{eq:app:spinor-sym-lat} to
\begin{align}
    \zeta_1(\boldsymbol{\kappa})=\begin{cases}
    \kappa_-^{\Delta\widetilde{\mathcal{L}}(\Pi)} & \textrm{if}\;\; \Delta\widetilde{\mathcal{L}}(\Pi)<m/2  \as\textrm{or}\;\; \left\{\begin{array}{l}
         \Delta\widetilde{\mathcal{L}}(\Pi)=m/2  \\
         |A|>|B| 
    \end{array}\right. \\
    \kappa_+^{m-\Delta\widetilde{\mathcal{L}}(\Pi)} & \textrm{if}\;\; \Delta\widetilde{\mathcal{L}}(\Pi)>m/2  \as\textrm{or}\;\; \left\{\begin{array}{l}
         \Delta\widetilde{\mathcal{L}}(\Pi)=m/2  \\
         |A|<|B| 
    \end{array}\right.
    \end{cases}.
    \label{eq:app:spinor1-sym-lat}
\end{align}
Throughout these deformations the gap does not close and therefore the value of the continuum Hopf and Zak phase numbers do not change. By identifying the power of the $\kappa_\pm$ in \q{eq:app:spinor1-sym-lat} with the absolute value of the Berry-dipole spin $|\Delta\ell|$ [as in \q{eq:app:general-sym-simple-spinor}] and the sign of $\kappa_\pm$ with minus the sign of the Berry-dipole spin, we conclude that to the lowest order in $\bkappa$, the Berry-dipole spin $\Delta\ell$ obtained from a Taylor expansion of a lattice model is the lowest-in-absolute-value integer that obeys the constraint
\begin{equation}
    \Delta\ell\mod m=\Delta\widetilde{\mathcal{L}}(\Pi).
\end{equation}

\noindent In the following we proceed to prove the deformability of a symmetry-constrained spinor 
\begin{equation}
    \widetilde{R}_{C_m}(\Pi)\zeta(\bkappa,\phi)=e^{i\beta(\bkappa)}\zeta(C_m\bkappa,\phi)
\end{equation}
to the form given in Eqs.~(\ref{eq:app:spinor-sym-lat}) and (\ref{eq:app:spinor1-sym-lat}). The main difference from the proof given in Appendix~\ref{sec:general-spinor-form} is that the symmetry is lowered from SO(2) to $C_m$ and as a consequence the Berry-dipole spin is changed from being an integer $\Delta\ell\in\mathbb{Z}$ to $\Delta\widetilde{\mathcal{L}}(\Pi)\in\mathbb{Z}_m$. As we show in the following, this modulo $m$ ambiguity is a source of a modified expression for the first component of the spinor in Eq.~\eqref{eq:app:spinor-sym-lat}.\medskip

\noindent (\emph{i}) As the first step, we determine the dependence of the spinor on $\kappa_z$ and $\phi$.
We start as in Appendix~\ref{sec:general-spinor-form}, by assuming the linear dependence on $(\kappa_z\pm i\phi)$ and equating the coefficients in front of independent variables. 
We again conclude, that to have a non-constant dependence on $\kappa_z$ and $\phi$, the phase $\beta$ must be equal to $\pm \pi\Delta\widetilde{\mathcal{L}}(\Pi)/m$. 
The initial assumption that the valence band of the lattice model at rotation-invariant line $\gamma_\Pi$ is purely of the $\varphi_1$ character trivially implies 
that the valence (conduction) band of the $\bk\cdot\bp$ expansion at $\bkappa=\boldsymbol{0}$ is of a $\varphi_1$ ($\varphi_2$) character. The conduction band of a spinor-form Hamiltonian is proportional to $\zeta$, therefore, to ensure the aforementioned orbital characters, the dependence on $\kappa_z$ and $\phi$ must be present only in the second component of the spinor. This is true when $\beta=\pi\Delta\widetilde{\mathcal{L}}(\Pi)/m$, and then the second spinor component is $\zeta_2(\bkappa,\phi)=C\cdot(\kappa_z+i\phi)+D\cdot(\kappa_z-i\phi)+f(\kappa_+,\kappa_-)$. \\

\noindent (\emph{ii}) We proceed to determine the spinor dependence on $\kappa_\pm$. 
First, with the chosen $\beta$, the second component of the spinor must not change when the momentum is rotated, 
\begin{equation}
    \zeta_2(\bkappa,\phi)=\zeta_2(C_m\bkappa,\phi).
\end{equation} 
Therefore, to the lowest order in $\bkappa$, $\zeta_2(\bkappa, \phi)=C(\kappa_z+i\phi)+D(\kappa_z-i\phi)+N\kappa_+\kappa_-$.\\

\noindent(\emph{iii}) The first component of the spinor, which to the lowest order in $(\bkappa,\phi)$ depends only on $\kappa_\pm$, transforms under rotation~as~
\begin{equation}
    \zeta_1(\kappa_+,\kappa_-)=e^{i2\pi\Delta\widetilde{\mathcal{L}}(\Pi)/m}\zeta_1(C_m\kappa_+,C_m\kappa_-).    
    \label{eq:app:zeta1-sym-lat}
\end{equation}
We want to find the lowest powers of $\kappa_+^p$ and $\kappa_-^q$ which are allowed by the symmetry to constitute $\zeta_1$. For the former term, Eq.~\eqref{eq:app:zeta1-sym-lat} has the form
$\kappa_+^p=\kappa_+^p\exp(i2\pi[\Delta\widetilde{\mathcal{L}}(\Pi)+p]/m)$.
By solving it for the lowest possible $p>0$
we get $p=m-\Delta\widetilde{\mathcal{L}}(\Pi)$. Similarly, the lowest possible $q=\Delta\widetilde{\mathcal{L}}(\Pi)$.\\

\noindent From the steps (\emph{i}-\emph{iii}), we get the following spinor
\begin{equation}
    \zeta(\boldsymbol{\kappa},\phi)=\begin{pmatrix}
        A\kappa_-^{\Delta\widetilde{\mathcal{L}}(\Pi)}+B\kappa_+^{m-\Delta\widetilde{\mathcal{L}}(\Pi)} \\
        C(\kappa_z+i\phi) + D(\kappa_z-i\phi) + N\kappa_+\kappa_-
    \end{pmatrix}.
\end{equation}

\noindent{(\emph{iv})} With the same argument as presented in Appendix~\ref{app:spinor-cont-deform}, we can continuously deform this spinor to the form given in Eq.~\eqref{eq:app:spinor-sym-lat}, without altering the Hopf and Zak phase numbers.\\

\noindent{(\emph{v})} Let us finally show, that the spinor \eqref{eq:app:spinor-sym-lat} can be further deformed to have the first component given by \q{eq:app:spinor1-sym-lat}. We first consider $\Delta\widetilde{\mathcal{L}}(\Pi)\neq m/2$, and see that one of the terms $A\kappa_-^{\Delta\widetilde{\mathcal{L}}(\Pi)}$ and $B\kappa_+^{m-\Delta\widetilde{\mathcal{L}}(\Pi)}$ is of higher
order than another, and therefore can be neglected.
After that, the coefficient in front of the remaining term can be deformed to be $A=1$ resp.~to $B=1$, without closing the gap. \\

\noindent{(\emph{vi})} Alternatively,
when $\Delta\widetilde{\mathcal{L}}(\Pi)= m/2$, the two just discussed terms are of the same order $m/2$ and none of them can be neglected. 
However, we can continuously deform these terms to a simpler form without closing the gap. 
Importantly, if $|A|\neq|B|$, the gap closes only at $\kappa_+=\kappa_-=\kappa_z=\phi=0$. 
Therefore, we can deform the coefficients $A$ and $B$, while keeping the sign of their relative magnitude $\sgn(|A|-|B|)$ constant. 
This means that for $|A|>|B|$ we can deform the coefficients to be $A=1$ and $B=0$, while for $|A|<|B|$ the resulting coefficients are $A=0$ and $B=1$. 
Thus, we arrive to the first component of the spinor given in Eq.~\eqref{eq:app:spinor1-sym-lat}. \medskip

Let us finally discuss what happens if we choose the valence band to coincide at rotation-invariant line $\gamma_\Pi$ with the band representation induced from the second orbital $\varphi_2$. In this case the itinerant rotation matrix can be written as
\begin{equation}
    \widetilde{R}_{C_m}(\Pi)=\exp(i\pi\frac{\Delta\widetilde{\mathcal{L}}(\Pi)}{m}\sigma_z),
\end{equation}
and the Berry-dipole spin fulfills the constraint
\begin{equation}
    \Delta\ell\mod m=m-\Delta\widetilde{\mathcal{L}}(\Pi).
\end{equation}
To ensure the correct orbital characters of the valence and conduction bands, the dependence on $\kappa_z$ and $\phi$ must enter only the first component of the spinor, for which the phase $\beta$ must again be fixed to $\pi\Delta\widetilde{\mathcal{L}}(\Pi)/m$. Similar derivation as presented above shows that the spinor which defines a $\bk\cdot\bp$ Hamiltonian can be deformed to $\sigma_x\zeta(\bkappa, \phi)$ with $\zeta(\bkappa, \phi)$ given in Eqs.~\eqref{eq:app:spinor-sym-lat} and \eqref{eq:app:spinor1-sym-lat}.

\section{Analytic results for spinor-form Berry dipole}\label{app:spinor-form}

In this Appendix we derive and list several analytic results related to the Berry-dipole Hamiltonian specified by Eqs.~\eqref{eq:kp-sym-spinor} and \eqref{eq:zszs-hamiltonian} of the main text. 
Our discussion is organized into several parts as follows. 
First, in Appendix~\ref{app:berry-flux} we analyze the flux through a hemisphere associated with the Berry-dipole Hamiltonian, deriving in particular the result in Eq.~(\ref{eqn:Berry-dipole-flux}) of Sec.~\ref{sec:Berry-definition}. 
We continue in Appendix~\ref{app:Berry-curv-formulas} with listing for the same class of Hamiltonians the expressions for the Berry connection, Berry curvature, and Chern-Simons form. 
This discussion is followed in Appendix~\ref{app:change-hopf} with a derivation 
of the change of the continuum Hopf number across a Berry-dipole gap closing, which depends on the Berry-dipole spin as captured by Eq.~(\ref{eq:Berry-dipole-hopf-change}) in Sec.~\ref{sec:Berry-transitions}. 
In the course of the latter derivation, we utilize that the second Chern character of arbitrary two-band Hamiltonians necessarily vanishes. 
This property, which we also used in the previous Appendix~\ref{app:Hopf-periodic-nonperiodic}, is finally derived in Appendix~\ref{sec:secondchern}.

\subsection{Berry flux associated with a Berry dipole}\label{app:berry-flux}

In this Appendix we derive that, for a rotation-symmetric Berry dipole given by the spinor function in Eq.~\eqref{eq:kp-sym-spinor} with $\phi=0$, the Berry flux $\Psi$ through the upper hemisphere is quantized, and given by Eq.~(\ref{eqn:Berry-dipole-flux}), i.e., it is negatively proportional to the Berry-dipole spin $\Delta\ell$. 
The discussion is organized into two parts: we first explain why the flux through the hemisphere is quantized for $\phi=0$ at all, and afterwards we proceed with the actual integration.

To understand the quantization of the flux through the hemisphere, note that the second component of the spinor in Eq.~(\ref{eq:kp-sym-spinor}) [which is the same as Eq.~\eqref{eq:app:general-sym-simple-spinor} in the previous Appendix] \emph{vanishes} for $k_z = \phi = 0$. Setting $\phi$ to $0$ [which is assumed throughout the present discussion], 
we find that the valence-band state is, up to an unimportant phase factor, constant throughout the  
plane $k_z=0$ with the momentum origin $\bk=\boldsymbol{0}$ removed (we refer to this plane as the \emph{punctured equatorial plane} $\mathcal{P}$):
\begin{equation}
\ket{u_v(k_x,k_y,0)} = \frac{i\sigma_y\zeta^*(k_x,k_y,0)}{\norm{\zeta(k_x,k_y,0)}} \propto \left(\begin{array}{c} 0 \\ 1\end{array}\right).\label{eqn:u_v-gauge}
\end{equation}
It follows that the spectrally flattened 
Hamiltonian
\begin{align}
\frac{h(k_x,k_y,0)}{\sqrt{-\!\det h(k_x,k_y,0) }}
&=\mathbbold{1} -2 \ket{u_v(k_x,k_y,0)}\!\bra{u_v(k_x,k_y,0)}\lin 
&= \sigma_z 
\end{align}
is constant inside $\mathcal{P}$~\cite{Sun:2018}. This implies that any two-dimensional sheet with boundary in $\mathcal{P}$ wraps around the Bloch sphere integer number of times.\footnote{
Mathematically, one defines an equivalence `$\sim$' that identifies all the points on the boundary of the sheet as the same point; this operation preserves the continuity of the Hamiltonian $h$. 
The quotient space ``$\textrm{hemisphere}/{\sim}$'' is homeomorphic to $S^{\! 2}$. 
The wrapping of sphere (the compactified sheet) around another (the Bloch) sphere is known to be characterized by the homotopy group $\pi(S^{\! 2}) = \mathbb{Z}$.}
The integer-valued `wrapping number' around the Bloch sphere is known to be in a one-to-one correspondence with the quantized Berry-flux through the sheet~\cite{Xiao:2010}; it thus follows that the considered Berry-flux through the sheet with boundary in $\mathcal{P}$ must also be quantized.

We next show that the quantized flux is specifically equal to $-2\pi \Delta\ell$ if the Berry-dipole spin equals $\Delta\ell$. 
A particularly simple choice of a sheet with boundary at the equatorial plane, that we considered in the main text [cf.~Eq.~(\ref{eq:Berry-dipole-flux})], is the \emph{upper hemisphere}.  
However, for the purpose of analytic integration, another choice is more convenient, namely the closed surface on which the spinor function $\zeta(\bk)$ has norm equal to $1$, i.e., 
\begin{equation}
\mathcal{M} = \left\{\bk \;\Big|\; \left(k_x^2 + k_y^2\right)^\abs{\Delta\ell} + k_z^2 = 1, k_z > 0\right \}.
\end{equation}
We choose again the gauge of the the valence state $\ket{u_v(\bk)}$ to be governed by the left-side of the proportionality symbol in Eq.~(\ref{eqn:u_v-gauge}), where the norm $\norm{\zeta(\bk)}=1$ now drops from the expressions.
We parametrize the points on the surface $\mathcal{M}$ with spherical coordinates 
$\theta\in[0,\pi/2]$ and $\varphi\in[0,2\pi]$ as
\begin{equation}
k_x=(\sin\theta)^{\frac{1}{|\Delta\ell|}}\cos\varphi,\;\; k_y=(\sin\theta)^{\frac{1}{|\Delta\ell|}}\sin\varphi,\;\;
k_z=\cos\theta.
\end{equation}
The normalized valence state on $\mathcal{M}$ in the specified gauge is
\begin{equation}
    \ket{u_v(\theta,\varphi)}=\begin{pmatrix}
        \upsilon\, \cos\theta \\
        -\sin\theta e^{i\Delta\ell\varphi}
    \end{pmatrix}.
\end{equation}
The flux through the surface $\mathcal{M}$ [which we know to be equal to the flux through the hemisphere due to the vanishing divergence $\boldsymbol{\nabla}_\bk\cdot \boldsymbol{\mathcal{F}} =0$ for the momenta (with gapped spectrum) between $\mathcal{M}$ and the hemisphere], is computed as
\e{ \Psi\left(\substack{\textrm{rotation-sym.}\\\textrm{Berry dipole}}\right)\eq 
\int_{\{\bk\in\mathbb{R}^3|k_z>0\}} 
\boldsymbol{\mathcal{F}}\cdot \nabk \varrho \,\delta(\varrho-1) d^3\bk }
with the shorthand $\varrho=||\zeta(\bk)||$, and the delta function constraining the integral to the desired surface. 
Transforming the integrand to the right-handed coordinate system $(\varrho,\theta,\phi)$, then integrating over $\varrho$, one obtains
\e{ \Psi\left(\substack{\textrm{rotation-sym.}\\\textrm{Berry dipole}}\right)\eq 
\int_{\mathcal{M}}  \boldsymbol{\mathcal{F}}\cdot \nabk \varrho \f{d\theta d\phi}{\nabk\varrho \cdot \nabk \theta \times \nabk \phi}.\la{Phi}}
The Berry-curvature vector is given by
\e{ \ket{\nabk u} \eq \nabk \varrho \ket{\partial_{\varrho} u}+\nabk \theta \ket{\partial_{\theta} u}+\nabk \phi \ket{\partial_{\phi}u},\\ 
\boldsymbol{\mathcal{F}}(\bk)\eq i\braopket{\nabk u}{\times}{\nabk u} =-2\text{Im}\braket{\partial_{\theta} u}{\partial_{\phi}u}\nabk \theta \times \nabk \phi+\ldots \nonumber} 
where ``$\ldots$'' denotes terms involving a cross product with $\nabk \varrho$. Substituting this expression into \eqref{Phi}, we find that the Jacobian determinant cancels out,
\begin{align}
    \Psi\left(\substack{\textrm{rotation-sym.}\\\textrm{Berry dipole}}\right)
    \eq \int\limits_0^{2\pi}d\varphi\int\limits_0^{\pi/2}d\theta \left[-2\Im\bra{\partial_\theta u_v}\ket{\partial_\varphi u_v}\right] \lin
    \eq 2\pi \int\limits_0^{\pi/2}d\theta \left[-\Delta\ell\sin 2\theta\right] 
    =-2\pi\Delta\ell,
\end{align}
The final results is precisely Eq.~(\ref{eqn:Berry-dipole-flux}) of the main text.

\subsection{Expressions for Berry connection, Berry curvature, \texorpdfstring{\\}{} 
and Chern-Simons 3-form}\label{app:Berry-curv-formulas}

Without a detailed derivation, we proceed to list several analytical results for rotation-symmetric Berry dipole with spin $\Delta \ell\in\mathbb{Z}$. 
We assume the gauge in which the normalized eigenstate corresponding to the valence band is given by
\begin{align}
    u_v(\bk)
    &=i\sigma_y\zeta^*(\bk)/\|\zeta(\bk)\| \label{eq:app:gauge}\\
    &= \frac{1}{\left[(k_x^2+k_y^2)^\abs{\Delta\ell}+k_z^2 + \phi^2\right]^{\frac{1}{2}}}\left(\begin{array}{c}
    \upsilon k_z - i\phi \\
    \!\!\!-\left[k_x + \sign(\Delta\ell) \, i k_y\right]^{\abs{\Delta\ell}}\!\!\!
    \end{array}\right) \nonumber
\end{align}
Substituting this spinor function into the definition of the Berry connection $\boldsymbol{\mathcal{A}}(\bk)=\bra{u_v(\bk)}\ket{i\boldsymbol{\nabla}_\bk u_v(\bk)}$ we get the following components:
\begin{align}
\mathcal{A}_x &= +\frac{({\Delta \ell})(k_x^2 + k_y^2)^{\abs{\Delta \ell}-1}}{(k_x^2 + k_y^2)^\abs{\Delta\ell}+k_z^2 + \phi^2} k_y, \label{eqn:analytical-A2}\\
\mathcal{A}_y &= -\frac{({\Delta \ell)}(k_x^2 + k_y^2)^{\abs{\Delta \ell}-1}}{(k_x^2 + k_y^2)^\abs{\Delta\ell}+k_z^2 + \phi^2} k_x, \\
\mathcal{A}_z &= -\frac{\upsilon \phi}{(k_x^2 + k_y^2)^\abs{\Delta\ell}+k_z^2 + \phi^2},
\end{align}
where we used $\sign{(\Delta \ell)}\abs{\Delta \ell} = \Delta \ell$. 
For components of the gauge-invariant Berry curvature $\boldsymbol{\mathcal{F}}(\bk) = \boldsymbol{\nabla}_{\bk}\times \boldsymbol{\mathcal{A}}(\bk)$ we find
\begin{align}
\mathcal{F}_x &= -\frac{2 (\Delta \ell)\left[ k_x k_z-\upsilon \,\sign{(\Delta \ell)} k_y \phi\right](k_x^2+k_y^2)^{\abs{\Delta \ell}-1}}{\left[(k_x^2 + k_y^2)^\abs{\Delta\ell}+k_z^2 + \phi^2\right]^2},  \\
\mathcal{F}_y &= -\frac{2({\Delta\ell})\left[k_y k_z+\upsilon \, \sign{(\Delta\ell)} k_x \phi\right](k_x^2+k_y^2)^{\abs{\Delta \ell}-1}}{\left[(k_x^2 + k_y^2)^\abs{\Delta\ell}+k_z^2 + \phi^2\right]^2}, \\
\mathcal{F}_z &= 
-\frac{2(\Delta \ell)\abs{\Delta \ell}(k_z^2+\phi^2)(k_x^2+k_y^2)^{\abs{\Delta\ell}-1}}{\left[(k_x^2 + k_y^2)^\abs{\Delta\ell}+k_z^2 + \phi^2\right]^2}.
\end{align}
Finally, for the Chern-Simons form we obtain
\begin{equation}
\boldsymbol{\mathcal{F}}\cdot\boldsymbol{\mathcal{A}} = \frac{2\upsilon \, (\Delta \ell) \, \abs{\Delta \ell} \, \phi \, (k_x^2 + k_y^2)^{\abs{\Delta \ell}-1}}{\left[(k_x^2 + k_y^2)^\abs{\Delta\ell}+k_z^2 + \phi^2\right]^2}.
\end{equation}
For the purpose of the next subsection, it is convenient to switch to the standard cylindrical coordinates $\bk\sim(k_\rho,k_\varphi,k_z)$, leading to
\begin{equation}
\boldsymbol{\mathcal{F}}\cdot\boldsymbol{\mathcal{A}}\, d^3\bk = \frac{2\upsilon \,\Delta \ell\,\abs{\Delta \ell}\, k_\rho^{2\abs{\Delta \ell}-1}}{\left(k_\rho^{2\abs{\Delta\ell}}+k_z^2 + \phi^2\right)^2}\phi \,dk_\rho dk_\varphi dk_z \label{eqn:analytical-F.A-cylindrical}
\end{equation}
where $d^3\bk = dk_x dk_y dk_z = k_\rho \, dk_\rho dk_\varphi dk_z$.

\subsection{Change of Hopf invariant across Berry-dipole transition} \label{app:change-hopf}

We now derive Eq.~(\ref{eq:Berry-dipole-hopf-change}) of the main text. 
We achieve this in two ways: first, we follow the analytic results of the previous subsection to carry out the integration explicitly; subsequently we present an alternative
proof that does not rely on the knowledge of the complicated expressions in Eqs.~(\ref{eqn:analytical-A2}--\ref{eqn:analytical-F.A-cylindrical}). The latter derivation requires certain familiarity with the language of differential forms, in particular with the $2$-form formulation of the Berry curvature. 

We depart from Eq.~\eqref{eqn:analytical-F.A-cylindrical}, expressed in the cylindrical coordinates, and plug it into Eq.~\eqref{eq:hopfinvar-cont} which defines the Hopf number of a continuum model. 
Noting that the integration over $d k_\varphi$ yields a factor $2\pi$, we proceed to integrate over $k_\rho$ to find
\begin{align}
\chi^\textrm{cont.} &= -\frac{1}{4 \pi^2} \int _{\mathbb{R}^3} d^3 \bk \boldsymbol{\mathcal{F}}\cdot\boldsymbol{\mathcal{A}} \nonumber \\
&= - \frac{\upsilon \phi \Delta \ell}{2\pi}\int_{-\infty}^{+\infty} d k_z \int_0^{+\infty} d k_\rho \frac{2\abs{\Delta \ell} \, k_\rho^{2\abs{\Delta \ell}-1}}{\left(k_\rho^{2\abs{\Delta\ell}}+k_z^2 + \phi^2\right)^2} \nonumber \\
&= \frac{\upsilon \phi \Delta \ell}{2\pi}\int_{-\infty}^{+\infty} d k_z  \left[\frac{1}{k_\rho^{2\abs{\Delta\ell}}+k_z^2 + \phi^2}\right|_{k_\rho=0}^{+\infty} \nonumber \\ &
=-\frac{\upsilon \Delta \ell \phi }{2\pi}\int_{-\infty}^{+\infty} d k_z \frac{1}{k_z^2+\phi^2}.
\end{align}
The antiderivative of the integrated function is $\frac{1}{\abs{\phi}}\arctan\left(\frac{k_z}{\abs{\phi}}\right)$, leading to
\begin{equation}
\chi^\textrm{cont.} = -\frac{1}{2}\upsilon \Delta \ell \,\sign(\phi).\label{eqn:Hopf-integrated-value}
\end{equation}
We find that the change in the continuum Hopf number as $\phi$ changes sign from negative to positive is $\delta \chi^\textrm{cont.} = - \upsilon \Delta \ell$, as advertised in Eq.~(\ref{eq:Berry-dipole-hopf-change}) of the main text.

Alternatively, we can derive the desired result as follows. Note that the change $\delta\chi^\textrm{cont.}$ is defined as a difference in the Chern-Simons form integrals [Eq.~\eqref{eq:hopfinvar-cont}] at two fixed values of the parameter $\phi>0$ and $\phi<0$. 
This change can be equivalently computed as an integral of the Chern-Simons form over the 3-sphere enclosing the Berry-dipole degeneracy in the 
four-dimensional $(\bk,\phi)$-space, i.e.,\footnote{
Our convention to denote objects related to the Berry curvature is as follows: the components of the Berry curvature $2$-form are $\mathcal{F}_{\alpha\beta}=\partial_\alpha \mathcal{A}_\beta - \partial_\beta \mathcal{A}_\alpha$ (with two indices), from which one obtains components of the Berry pseudovector in 3D as $\mathcal{F}_\alpha = \frac{1}{2}\epsilon_{\alpha\beta\gamma}\mathcal{F}_{\beta\gamma}$ (with a single index); here, $\epsilon$ is the fully antisymmetirc Levi-Civita symbol, and summation over repeated indices is implied. We further use the plain calligraphic symbol $\mathcal{F}$ to indicate the complete Berry curvature $2$-form, and the bold version $\boldsymbol{\mathcal{F}}$ to denote the complete Berry curvature pseudovector. Components of the exterior product of a $p$-form $a$ and $q$-form $b$ are given by~\cite{Fecko:2006}
\begin{equation}
(a\wedge b)_{\alpha\ldots \beta \gamma\ldots \delta}=\frac{(p+q)!}{p!\,q!}a_{[\alpha\ldots\beta}b_{\gamma\ldots \delta]}    
\end{equation}
Square brackets around indicates indicate the fully antisymmetric component; in particular, we have
\begin{equation}
\mathcal{F}_{[\alpha\beta}\mathcal{A}_{\gamma]}{=} \frac{1}{3!}\left(\mathcal{F}_{\alpha\beta}\mathcal{A}_\gamma {-} \mathcal{F}_{\alpha\gamma}\mathcal{A}_\beta{+} \mathcal{F}_{\beta\gamma}\mathcal{A}_\alpha{-} \mathcal{F}_{\beta\alpha}\mathcal{A}_\gamma{+} \mathcal{F}_{\gamma\alpha}\mathcal{A}_\beta{-} \mathcal{F}_{\gamma\beta}\mathcal{A}_\alpha \! \right)\!\!\label{eqn:CS3-form-antisymm}
\end{equation}
in the case of Eq.~(\ref{eqn:Hopf-on-S3}). 
Note that we do not distinguish between covariant and contravariant indices, and place both in the subscript.\label{foot:Berry-forms}} 
\begin{align}
\delta\chi ^\textrm{cont.} &= -\frac{1}{4\pi^2} \oint_{S^3} \mathcal{F} \wedge \mathcal{A} \nonumber \\ 
&= -\frac{1}{4\pi^2}\oint_{S^{\!3}}
\,\frac{3!}{2!\;1!}{\mathcal{F}_{[\alpha\beta}}\,{\mathcal{A}_{\gamma]}} \; 
d x_\alpha \, d x_\beta \, d x_\gamma,\label{eqn:Hopf-on-S3}
\end{align}
where $\{x_\alpha\}_{\alpha=1}^3$ are right-handed coordinates on the $3$-sphere.
Such deformation of the integration domain is possible without changing the invariant by virtue of the Stokes' theorem, because the differential of the Chern-Simons 3-form (this differential is proportional to the \emph{second Chern character}~\cite{nakahara1990})
\begin{equation}
\textrm{ch}_2(\mathcal{F})\propto d(d\mathcal{A} \wedge \mathcal{A}) = \mathcal{F} \wedge \mathcal{F}
\end{equation}
is \emph{zero for two-band Hamiltonians} (the proof of this statement is postponed to Appendix~\ref{sec:secondchern}), and therefore the integral of $\mathcal{F}\wedge\mathcal{F}$ over a 4-dimensional space without singularities also vanishes. 

For the same reason, to simplify the calculation, we can slightly modify the surface of integration from a 3-sphere to a diffeomorphic 3-manifold on which the spinor function $\zeta(\bk,\phi)$ is normalized to unity: $\norm{\zeta(\bk,\phi)}=1$. 
We use 3-spherical coordinates to parametrize this manifold as $\phi=\cos\psi$, $k_z=\sin\psi\cos\theta$, $k_y=(\sin\psi\sin\theta)^{1/|\Delta\ell|}\sin\varphi$, and $k_x=(\sin\psi\sin\theta)^{1/|\Delta\ell|}\cos\varphi$, with the 3-spherical angles running through intervals $\psi\in[0,\pi]$, $\theta\in[0,\pi]$, $\varphi\in[0,2\pi]$. 
The spinor function on the defined 3-manifold 
\begin{equation}
    \zeta(\psi,\theta,\varphi)=\left(\begin{array}{c}\sin\psi\sin\theta e^{-i\Delta\ell\varphi} \\ \upsilon\sin\psi\cos\theta+i\cos\psi\end{array}\right),
    \label{eq:app:spinor-parametrization}
\end{equation}
is indeed normalized, $\norm{\zeta(\psi,\theta,\varphi)}=1$. 
We adopt the gauge choice of \q{eq:app:gauge},
in which the conduction eigenstate is $u_c(\psi,\theta,\varphi)=\zeta(\psi,\theta,\varphi)$. 
We opt to explicitly analyze the change in the Hopf invariant only for the conduction band, and then extend the result to the valence band, as both bands of a two-band Hamiltonian possess the same Hopf invariant, owing to  a symmetry $\mathcal{J} \boldsymbol{h}(\bk) \mathcal{J}^{-1} = -\boldsymbol{h}(\bk)$ with $\mathcal{J}= i \sigma_y \mathcal{K}$ of any Hamiltonian that is a sum of Pauli matrices.

The Chern-Simons $3$-form on this $3$-manifold is found
\begin{align}
\mathcal{F}\wedge\mathcal{A} &= 3 \, \mathcal{F}_{[\alpha\beta}\mathcal{A}_{\gamma]} dx_\alpha dx_\beta dx_\gamma  \nonumber \\
&= 3 \left(i \partial_{[\alpha}\zeta^\dagger \partial_\beta \zeta\right) \left(i \zeta^\dagger \partial_{\gamma]}\zeta^\dagger \right) \, dx_\alpha \, dx_\beta \, dx_\gamma,\label{eqn:CS3-from-spinor}
\end{align}
where $\{x_\alpha\}_{\alpha=1}^3 = (\psi,\theta,\varphi)$ are coordinates on the $3$-manifold. We find
\begin{align}
i \zeta^\dagger \partial_\psi \zeta &= \upsilon \cos\theta \nonumber, \\
i \zeta^\dagger \partial_\theta \zeta &=- \upsilon \sin \psi \cos\psi \sin\theta, \\ 
i \zeta^\dagger \partial_\varphi \zeta &= \Delta\ell \sin^2\psi \sin^2\theta \nonumber 
\end{align}
for the components of the Berry connection, and
\begin{align}
i\partial_\psi\zeta^\dagger \partial_\theta \zeta = - i\partial_\theta \zeta^\dagger \partial_\psi \zeta &= \upsilon \sin^2\psi\sin\theta \nonumber, \\
i\partial_\psi \zeta^\dagger \partial_\varphi \zeta = -i\partial_\varphi \zeta^\dagger \partial_\psi \zeta &= \Delta \ell \sin\psi\cos\psi\sin^2\theta,  \\
i\partial_\theta \zeta^\dagger \partial_\varphi \zeta = -i\partial_\varphi \zeta^\dagger \partial_\theta \zeta &=\Delta \ell \sin^2\psi\sin\theta\cos\theta \nonumber
\end{align}
for the components of the Berry curvature. After we insert these components to Eq.~(\ref{eqn:CS3-from-spinor}) and perform the antisymmetrization in Eq.~(\ref{eqn:CS3-form-antisymm}), we finally obtain the Chern-Simons $3$-form
\begin{equation}
\mathcal{F}\wedge\mathcal{A} = 2 \upsilon \Delta\ell \sin^2\psi\sin\theta \, d\psi \, d\theta \,d\varphi
\end{equation}
on the considered $3$-manifold. The integration over the $3$-manifold is straightforward, resulting in the change of the continuum Hopf number

\begin{align}
    \delta\chi^\textrm{cont.} 
    &= -\frac{1}{4\pi^2}\int\limits_0^\pi d\psi\int\limits_0^\pi d\theta \int\limits_0^{2\pi} d\varphi\: 2\upsilon\Delta\ell\sin^2\psi\sin\theta \\
    &= -\upsilon\Delta\ell,
    \label{eq:app:delta-chi-derivation}
\end{align}
which is again the result given in Eq.~(\ref{eq:Berry-dipole-hopf-change}).

\subsection{Vanishing of the 
second Chern character for two-band Hamiltonians\label{sec:secondchern}}

In this part of the present Appendix, we move beyond Berry-dipole Hamiltonians, and review a particular property that applies to \emph{arbitrary} smooth two-band Hamiltonian $h(\bk,\phi)$.
Namely, while the second Chern class necessarily vanishes for each fibre of a line bundle~\cite{Hatcher:2003} (i.e., for any single non-degenerate energy band), this is not generally true of the second Chern \emph{character}.  
Nevertheless, we show here that the second Chern character vanishes for each fibre of a line bundle corresponding to the valence band of a \emph{two-band} Hamiltonian $h(\bk,\phi) = \boldsymbol{h}(\bk,\phi)\cdot\boldsymbol{\sigma}$, with $\boldsymbol{h}\neq\boldsymbol{0}$. 
Note that the vanishing of the second Chern character is a \emph{local} property of the $(\bk,\phi)$-space Hamiltonian, and is (for the case of crystalline models) valid irrespective of whether the Hamiltonian
is periodic over the Brillouin zone. We have applied this result in a proof of the quantization of the Hopf invariant in Appendix~\ref{app:Hopf-periodic-nonperiodic} and in the analytic considerations presented in Appendix~\ref{app:change-hopf}. 

To reveal this local property, we normalize $\boldsymbol{h}/\norm{\boldsymbol{h}}\equiv \boldsymbol{\eta}$. 
It is well-known that Berry curvature plays the role of the skyrmion density, $\mathcal{F}_{\alpha\beta} \propto \epsilon_{ijk} \eta^i \partial_\alpha \eta^j \partial_\beta \eta^k$~\cite{stone_mathematicsforphysics}.
Then the second Chern character $\propto \mathcal{F}\!\wedge\! \mathcal{F}$~\cite{nakahara1990} has components 
\begin{align}
&(\mathcal{F} \wedge \mathcal{F})_{\alpha\beta\gamma\delta} \propto \\ 
&\qquad\sum_{\varsigma\in\mathfrak{S}_4}\sign(\varsigma) \epsilon_{ijk}\epsilon_{lmn} \eta^i (\partial_{\alpha'} \eta^j) (\partial_{\beta'} \eta^k) \eta^l (\partial_{\gamma'} \eta^m) (\partial_{\delta'} \eta^n)\nonumber
\end{align}
where $\mathfrak{S}_4$ is the permutation group of four elements, $\varsigma$ is any of the permutations of the four Greek indices, and we introduced primed notation $\alpha'=\varsigma(\alpha)$ (and similar for $\beta,\gamma,\delta$) to achieve brevity.

The product of the two Levi-Civita symbols can be written~\cite{Fecko:2006} as a linear combination of products of three Kronecker symbols ``$\pm\delta^{ad}\delta^{be}\delta^{cf}$'' where $\{a,b,c\}$ (resp.~$\{d,e,f\}$) is a permutation of indices $\{i,j,k\}$ (resp.~$\{l,m,n\}$) used in the expression above. 
It is easily seen that all the resulting terms will contain at least one product of the type ``$(\partial_\mu \eta^g)(\partial_\nu \eta^g)$'' with $\mu,\nu$ being distinct elements of $\{\alpha,\beta,\gamma,\delta\}$. 
Since such an expression for product of derivatives is symmetric in $\mu\leftrightarrow \nu$, while the $4$-form $\mathcal{F}\!\wedge\! \mathcal{F}$ is antisymmetric in those same indices due to the antisymmetrization over the permutations in $\mathfrak{S}_4$, the corresponding term is cancelled.
As this argument can be repeated for \emph{all} the terms obtained by replacing the Levi-Civita symbols by Kronecker symbols, one concludes that $\mathcal{F} \!\wedge\! \mathcal{F} = 0$ for the valence band of a strictly two-band model.

\section{Berry dipole as a representative phase transition}\label{app:phase-transition-general}

Several arguments presented in the main text, notably in the early Secs.~\ref{sec:berry-dipole} and~\ref{sec:lattice-RTP},
assume that the topological phase transition in which the RTP or the Hopf invariant change value is facilitated by a band touching which occurs at a single isolated value of the tuning parameter. 
Such a `gapless' point on the phase diagram, whose Hamiltonian is given by Eqs.~(\ref{eq:kp-sym-spinor}) and~(\ref{eq:zszs-hamiltonian}), has been referred to as the Berry dipole. 

Although we showed in Sec.~\ref{sec:semi-metallic-transition-region} that transitions which alter topological invariants of crystalline Hopf insulators upon tuning one parameter ($\phi$) are generically \emph{not} point-like, we claim that the Berry dipole still constitutes a \emph{representative} phase transition in the sense that the generic phase transition altering the value of RTP or Hopf invariant can be continuously deformed into a Berry dipole. 
The goal of this Appendix is to formalize this `deformability' into Theorem~\ref{theorem:homotopy}. The discussion is organized into preliminaries (\app{app:prelimthmdeform}), statement of the theorem (\app{app:thmdeform}), proof of a lemma [Eq.~(\ref{eqn:hopfrtpctm})] relating the Hopf invariant and the Zak phase invariant (\app{sec:hopfzak}), and finally the proof of the theorem (\app{app:proofthmdeform}).

\subsection{Preliminaries}\la{app:prelimthmdeform}

For a $\textrm{P}n$-symmetric tight-binding Hamiltonian $h(\bk)$, we consider a topology-changing band touching occurring at (or near) an $m$-fold rotation-symmetric line $\gamma_{\Pi}$, with $m$ a nontrivial divisor of $n$. 
Let the band touching (and subsequent untouching) be tuned by a real Hamiltonian parameter $\phi$, such that $h(\bk,\phi)$ is $\textrm{P}n$-symmetric at each $\phi$. 
For notational convenience, we (1)~redefine the origin of $(k_x,k_y)$ to lie on $\gamma_{\Pi}$, (2)~depending on the context we sometimes utilize the complex coordinates $k_\pm=k_x \pm ik_y$ in place of $(k_x,k_y)$, and (3)~we combine the three-dimensional momentum ($\bk$) and the tuning parameter ($\phi$) into a single \emph{generalized four-momentum space}, $\boldsymbol{\mathbbm{k}}=(\bk;\phi)=(k_x,k_y,k_z;\phi)$.

Then the $\mathrm{P}n$-symmetric Hamiltonian at the $m$-fold rotation-symmetric line $\gamma_\Pi$ satisfies the constraint [cf.~Eq.~(\ref{eq:app:kp-sym-condition})]:
\begin{equation}
\forall  \boldsymbol{\mathbbm{k}}: R_{C_m}h_0(k_\pm,k_z,\phi)R_{C_m}^{-1}=h_0(e^{\pm 2\pi i/m} k_\pm,k_z,\phi),\label{eqn:assumed-rot-sym}
\end{equation}
with the itinerant rotation matrix
\e{ R_{C_m} = e^{-i\frac{\pi}{m}\Delta\widetilde{\mathcal{L}}\, \sigma_z}\label{eq:app:rot-matr}}
parametrized by  $\Delta\widetilde{\mathcal{L}}$, i.e., the mod-$m$-valued difference in itinerant angular momenta (of the two basis band representations) at $\Pi$. 
We further assume that the topology-changing band touching is bounded by a 3-spherical $\boldsymbol{\mathbbm{k}}$-region with radius $\kappa$:
\begin{equation}
S^{\! 3}_\kappa = \{ \boldsymbol{\mathbbm{k}}\,|\,k_x^2+k_y^2+k_z^2+\phi^2 = \kappa^2\}. \label{eqn:3-sphere-kappa}
\end{equation}
We explain below why and how the band touching can be continuously deformed inside $S^{\! 3}_\kappa$ so as to achieve the Berry-dipolar form [Eqs.~(\ref{eq:kp-sym-spinor}) and~(\ref{eq:zszs-hamiltonian})] throughout the $\boldsymbol{\mathbbm{k}}$-region bounded by a smaller three-sphere: $S^{\! 3}_{2\kappa/3}$, and with the sole gapless point enclosed in $S^{\! 3}_\kappa$ being the Berry dipole at the center of coordinates.

It is convenient to reformulate the integer-valued change of polarization at $\Pi$ [cf.~Eq.~(\ref{eq:pol-def})] as
\begin{align}
\delta \mathscr{P} &= {\frac{1}{2\pi}} \int_{-\pi}^{+\pi}\left[\mathcal{A}_z(0,0,k_z;\phi=\kappa) - \mathcal{A}_z(0,0,k_z;\phi=-\kappa) \right] dk_z \nonumber \\
&= {\frac{1}{2\pi}}\oint_{\boldsymbol{\mathbbm{k}}\,\in\,\gamma} \mathcal{A} = {\frac{1}{2\pi}} \int_{\boldsymbol{\mathbbm{k}}\,\in\, \mathcal{D}} \mathcal{F}, 
\label{eqn:rephrased-Polarizaion-change}
\end{align}
where $\gamma$ is a closed path that is the concatenation of (1) a nontrivial 1-cycle of the Brillouin torus extending in the positive $k_z$ direction at $k_x=k_y=0$ and $\phi=\kappa>0$, (2) nontrivial 1-cycle of the Brillouin torus extending in the opposite $k_z$-direction at $k_x=k_y=0$ and $\phi=-\kappa<0$, and (3) two finite segments connecting the two previous lines at $k_z = \pm \pi$, within the BZ boundary. 
In the penultimate expression of \q{eqn:rephrased-Polarizaion-change}, Stokes' theorem allows to equate the line integral of the Berry connection $1$-form $\mathcal{A}$ to an area integral of the Berry curvature $2$-form $\mathcal{F}$, with the integral domain $\mathcal{D}$ being an arbitrary two-dimensional open sheet (or `Gaussian surface') bounded by $\gamma\equiv \partial \mathcal{D}$, with the two-band Hamiltonian being gapped everywhere on $\mathcal{D}$. 
Although not explicitly indicated by Eq.~(\ref{eqn:rephrased-Polarizaion-change}), it is understood that components of $\mathcal{A}$ and $\mathcal{F}$ can be computed if one defines coordinates on $\gamma$ and~$\mathcal{D}$. 

Note that the quantization of the last expression in Eq.~(\ref{eqn:rephrased-Polarizaion-change}) [as well as in Eq.~(\ref{eqn:rephrased-Zak-phase-change}) in footnote~\ref{foot:Zak-on-S3}] to integers follows easily from the fact that the (spectrally flattened) Hamiltonian at all points of $\gamma$ is constant, $h_\textrm{flat} = -\sigma_z$, 
due to the decoupling of the two basis orbitals along the rotation-invariant manifold $k_x=k_y=0$ (note that this defines a \emph{rotation-invariant plane} inside the four-momentum space).  
As a consequence, the two-dimensional sheet $\mathcal{D}$ must wrap around the Bloch sphere an integer number of times.

For the purpose of the following discussion, we now consider a continuous deformation:  $\mathcal{D}\ri \mathcal{D}'_\kappa$, such that the boundary of the sheet remains 
pinned to the plane $k_x = k_y= 0$ throughout the deformation (thus ensuring the quantized value of the integral remains unchanged); specifically,
\begin{equation}
    \partial     \mathcal{D}'_\kappa= \{\boldsymbol{\mathbbm{k}}\,|\,k_x=0,k_y=0,k_z^2+\phi^2 = \kappa^2\},\la{partialdprime}
\end{equation}
where $\kappa>0$ is the same radius as in the definition of the $3$-sphere in Eq.~(\ref{eqn:3-sphere-kappa}). We can therefore express the change of polarization~as\footnote{We remark that an analogous reformulation also applies to the continuum Zak phase of $\bk\cdot\bp$ models with the continuous rotation symmetry. Departing from Eq.~(\ref{eq:pol-continuum}), we find it necessary to modify the integration bounds in Eq.~(\ref{eqn:rephrased-Polarizaion-change}) as $\frac{1}{2\pi}\int_{-\pi}^{+\pi}\mapsto \int_{-\infty}^{+\infty}$ 
i.e.,
\begin{equation}
\delta \mathscr{Z}^\textrm{cont.} 
= \int_{-\infty}^{+\infty} \left[\mathcal{A}_z(0,0,k_z;\phi=\kappa) - \mathcal{A}_z(0,0,k_z;\phi=-\kappa) \right] dk_z. 
\label{eqn:rephrased-Zak-phase-change}
\end{equation}
By repeating the same steps as for the lattice model, we derive Eq.~(\ref{eqn:rephrased-Polarizaion-change-2}) with the polarization on the left-hand side replaced by $\delta\mathscr{Z}^\textrm{cont.}/2\pi$.\label{foot:Zak-on-S3}
} 
\begin{equation}
\delta \mathscr{P}
= {\frac{1}{2\pi}}\int_{\boldsymbol{\mathbbm{k}}\,\in\,\mathcal{D'}}\mathcal{F}. \label{eqn:rephrased-Polarizaion-change-2}
\end{equation}
The advantage gained by the deformation is that the topological invariant is expressed as an integration over a compact submanifold of $\boldsymbol{\mathbbm{k}}$-space; namely on a two-dimensional slice 
\begin{equation}
\mathcal{D}'_\kappa = \{ \boldsymbol{\mathbbm{k}}\,|\,k_x\geq 0,k_y=0, k_x^2+k_z^2+\phi^2 = \kappa^2\} \label{eqn:2-slice-kappa}
\end{equation}
of $S^{\! 3}_\kappa$ defined in \q{eqn:3-sphere-kappa}.

Furthermore, note that the change in the continuum Hopf number, $\delta\chi^{\textrm{cont.}}$, has been similarly reformulated as an integral of the Chern-Simons form on $S^{\! 3}_\kappa$ in Eq.~(\ref{eqn:Hopf-on-S3}). 
By employing the Hopf-Chern relation for the Hopf invariant (introduced above in \s{sec:linking-RTP-Hopf} and detailed below in \app{app:hopf-rtp-whitehead}), we derive that the two invariants on $S^3_\kappa$ are constrained via the
\begin{equation}
{\textrm{\emph{Hopf-Zak lemma}}:\quad} \delta \chi =_m \delta\mathscr{P}\,\Delta\widetilde{\mathcal{L}}
\label{eqn:hopfrtpctm}
\end{equation}
where $\delta\widetilde{\mathcal{L}}$  is the mod-$m$-valued difference of the itinerant angular momenta of the two basis band representations at $\Pi$.\footnote{In the continuum model, cf.~footnote~\ref{foot:Zak-on-S3}, an analogous constraint holds with $\delta\widetilde{\mathcal{L}}$ replaced by the Berry-dipole spin $\Delta \ell$ and with $=_m$ replaced by a strict equality.} {To maximize coherence in presentation, we prove the Hopf-Zak lemma separately in Appendix~\ref{sec:hopfzak}.}

\subsection{Theorem on deformability to a Berry dipole }\la{app:thmdeform} 

Our proof of the theorem relies on the following assumption that we believe holds but are presently unable to justify by algebraic-topological methods:
\medskip  

\noindent \emph{\emph{Assumption:} The topological classification of a gapped Hamiltonian on $3$-sphere $S^{\! 3}_\kappa$ [defined in Eq.~(\ref{eqn:3-sphere-kappa})] with rotation symmetry $C_m: k_x + i k_y \mapsto e^{2\pi i/m}(k_x + i k_y)$ is exhausted by the Hopf invariant on $S^{\! 3}_\kappa$ [defined in Eq.~(\ref{eqn:Hopf-on-S3})] and by the polarization (Zak phase) invariant on the two-dimensional slice $\mathcal{D'}$ of the $3$-sphere [defined in Eqs.~(\ref{eqn:rephrased-Polarizaion-change-2},\ref{eqn:2-slice-kappa})], with the two invariants constrained by Eq.~(\ref{eqn:hopfrtpctm}}).\medskip

\noindent The nontrivial assumption lies in the \emph{completeness} of the classification; in contrast, Eq.~(\ref{eqn:hopfrtpctm}) (proven in Appendix~\ref{sec:hopfzak}) is incontrovertible.

\begin{theorem} \label{theorem:homotopy}
We assume a continuous two-band Hamiltonian $h_0(\boldsymbol{\mathbbm{k}})$ that:
\begin{itemize} 
\item[(1)] respects $C_{m}$ rotation symmetry as in \q{eqn:assumed-rot-sym},
\item[(2)] preserves the spectral gap on 3-sphere ${S}^{\! 3}_\kappa$, and 
\item[(3)] carries non-vanishing Hopf invariant $\delta \chi$ on ${S}^{\! 3}_\kappa$, and a unit-change polarization invariant ($\delta\mathscr{P} \!=\! \pm 1$) on $\mathcal{D}'_\kappa$. 
\end{itemize}
\noindent Then there exists a continuous homotopy $\mathfrak{H}(\boldsymbol{\mathbbm{k}};t)$ that:
\begin{itemize}
\item[(\emph{i})] fulfills properties (1--3)~stated above for all $t\in[0,1]$, 
\item[(\emph{ii})] coincides with the Hamiltonian $h_0(\boldsymbol{\mathbbm{k}})$ at $t=0$ for all $\boldsymbol{\mathbbm{k}}$, as well as at distances $\norm{\boldsymbol{\mathbbm{k}}} > \kappa$ for all $t$, and
\item[(\emph{iii})] at $t=1$ it is equal to the spinor-form Hamiltonian given by Eqs.~(\ref{eq:kp-sym-spinor}) and~(\ref{eq:zszs-hamiltonian}) with
\begin{equation}
    \qquad\;\textrm{spin} \;\Delta\ell^\mathrm{t} = \delta\chi\,\delta \mathscr{P}\quad \textrm{and}\;\;\textrm{helicity} \;\upsilon^\mathrm{t} = -\delta\mathscr{P}\!\!
    \label{eqn:deformed-dipole-invariants}
\end{equation}
for distances $\norm{\boldsymbol{\mathbbm{k}}} \leq \tfrac{2}{3}\kappa$, and is gapped for $0<\norm{\boldsymbol{\mathbbm{k}}}\leq \kappa$. 
\end{itemize} 
\noindent We hereafter label the final Hamiltonian $\mathfrak{H}(\boldsymbol{\mathbbm{k}};t{=}1)$ as $h_3(\boldsymbol{\mathbbm{k}})$. 
The condition (\emph{iii}) implies that $h_3(\boldsymbol{\mathbbm{k}})$ exhibits a Berry dipole at $\boldsymbol{\mathbbm{k}}=\boldsymbol{0}$.
\medskip  
\end{theorem}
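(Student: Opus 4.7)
The plan is to construct the target Berry-dipole Hamiltonian $h_3$ explicitly, verify that it carries the same values of the topological invariants as $h_0$, and then invoke the stated completeness assumption to deduce the existence of a gap- and symmetry-preserving homotopy between them. Specifically, I would define $h_3(\boldsymbol{\mathbbm{k}})$ piecewise in the radial coordinate $\norm{\boldsymbol{\mathbbm{k}}}$: inside the smaller ball $\norm{\boldsymbol{\mathbbm{k}}}\leq 2\kappa/3$ it is set equal to the spinor-form Hamiltonian of Eqs.~(\ref{eq:kp-sym-spinor},\ref{eq:zszs-hamiltonian}) with spin $\Delta\ell^{\mathrm{t}}=\delta\chi\,\delta\mathscr{P}$ and helicity $\upsilon^{\mathrm{t}}=-\delta\mathscr{P}$, in the annular region $2\kappa/3\leq \norm{\boldsymbol{\mathbbm{k}}}\leq \kappa$ it is set to a $C_m$-symmetric gapped interpolation constructed below, and outside $S^{\!3}_\kappa$ it coincides with $h_0$.

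The first key step is to check that the candidate $h_3$ carries matching invariants. Using $\delta\chi^{\mathrm{cont.}}=-\upsilon\,\Delta\ell$ from Eq.~(\ref{eq:Berry-dipole-hopf-change}) together with $\delta\ell^{\mathrm{t}}=\delta\chi\,\delta\mathscr{P}$ and $\upsilon^{\mathrm{t}}=-\delta\mathscr{P}$, the Hopf contribution of the dipole evaluates to $-(-\delta\mathscr{P})(\delta\chi\,\delta\mathscr{P})=\delta\chi\,(\delta\mathscr{P})^2=\delta\chi$ since $\delta\mathscr{P}=\pm 1$; likewise Eq.~(\ref{eq:Berry-dipole-pol-change}) gives polarization change $-\upsilon^{\mathrm{t}}=\delta\mathscr{P}$. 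Furthermore, since $\Delta\ell^{\mathrm{t}}\mod m=\Delta\widetilde{\mathcal{L}}$ by Eq.~(\ref{eqn:l-L-mod-m}), both Hamiltonians are consistent with the Hopf--Zak lemma~(\ref{eqn:hopfrtpctm}). Finally, for the rotation-matrix compatibility, $\Delta\ell^{\mathrm{t}}\mod m=\Delta\widetilde{\mathcal{L}}$ ensures the dipole spinor transforms under the same itinerant rotation matrix $R_{C_m}$ of Eq.~(\ref{eq:app:rot-matr}) as $h_0$.

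Next, to construct the interpolation in the annular shell, I would deform $h_0$ restricted to $S^{\!3}_\kappa$ radially inward to $h_3$ restricted to $S^{\!3}_{2\kappa/3}$. Because the continuum Hopf and Zak invariants are deformation invariants of gapped $C_m$-symmetric Hamiltonians on such 3-spheres, and both Hamiltonians carry identical $(\delta\chi,\delta\mathscr{P})$, the stated completeness assumption yields a $C_m$-equivariant, gap-preserving homotopy relating the two 3-sphere restrictions; parameterizing this homotopy by the radius $\norm{\boldsymbol{\mathbbm{k}}}\in[2\kappa/3,\kappa]$ defines the annular Hamiltonian and ensures continuity at both inner and outer boundaries of the shell. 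The full homotopy $\mathfrak{H}(\boldsymbol{\mathbbm{k}};t)$ from $h_0$ to $h_3$ is then obtained by a second application of the completeness assumption inside the 4-ball of radius $\kappa$ (with boundary data $h_0|_{S^{\!3}_\kappa}$ fixed throughout), combined with the trivial (constant-in-$t$) extension $\mathfrak{H}(\boldsymbol{\mathbbm{k}};t)=h_0(\boldsymbol{\mathbbm{k}})$ for $\norm{\boldsymbol{\mathbbm{k}}}>\kappa$.

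The main obstacle, which cannot be bypassed without further algebraic-topological input, is the completeness assumption itself: one must establish that Hopf and Zak invariants exhaust the homotopy classification of gapped $C_m$-symmetric two-band Hamiltonians on $S^{\!3}_\kappa$. Granting this, the only remaining technical care is the gluing across $\norm{\boldsymbol{\mathbbm{k}}}=2\kappa/3$ and $\norm{\boldsymbol{\mathbbm{k}}}=\kappa$, which is continuous by construction, and the preservation of gap and $C_m$-symmetry of the annular interpolation, guaranteed because the completeness assumption produces $C_m$-equivariant gapped homotopies. A secondary subtlety is that the spinor-form Berry dipole is singular at $\boldsymbol{\mathbbm{k}}=\boldsymbol{0}$, but this is harmless: the theorem demands a gap only at $0<\norm{\boldsymbol{\mathbbm{k}}}\leq\kappa$, and the singularity at the origin is precisely what carries the topological charge being transported by the homotopy.
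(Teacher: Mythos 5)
Your proposal is correct and follows essentially the same route as the paper's proof: you verify via Eqs.~(\ref{eq:Berry-dipole-hopf-change}) and~(\ref{eq:Berry-dipole-pol-change}) that the dipole with $\Delta\ell^{\mathrm{t}}=\delta\chi\,\delta\mathscr{P}$, $\upsilon^{\mathrm{t}}=-\delta\mathscr{P}$ reproduces $(\delta\chi,\delta\mathscr{P})$, invoke the completeness assumption once to obtain a $C_m$-symmetric gapped homotopy between $h_0|_{S^{\!3}_\kappa}$ and the dipole on $S^{\!3}_{2\kappa/3}$, and realize it radially in the annulus — this is exactly the paper's ``wedging'' step. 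The one point to fix is your final step: the completeness assumption classifies \emph{gapped} Hamiltonians on a $3$-sphere and does not supply a rel-boundary homotopy of (necessarily gapless) Hamiltonians on the $4$-ball, so its ``second application'' is not legitimate as stated. Fortunately it is also unnecessary: since your $h_3$ agrees with $h_0$ on and outside $S^{\!3}_\kappa$, the straight-line interpolation $(1-t)h_0+t\,h_3$ inside the ball is continuous, $C_m$-symmetric, constant on $S^{\!3}_\kappa$ (hence preserving the gap there and the invariants for all $t$), and any accidental nodes it creates strictly inside the ball are harmless because conditions (1--3) only constrain the Hamiltonian on $S^{\!3}_\kappa$ — this is precisely the linear-interpolation third step of the paper's construction.
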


\noindent \emph{Remark:} Note that due to Eq.~(\ref{eq:Berry-dipole-pol-change}) and per the correspondence between $\delta\mathscr{P}$ and $\delta\mathscr{Z}^\textrm{cont.}/(2\pi)$, only phase transitions with $\delta\mathscr{P}=\pm 1$ can potentially be deformed into a Berry dipole. 
If the nodal $\boldsymbol{\mathbbm{k}}$-manifold mediating the phase transition exhibits a larger value of $\delta\mathscr{P}$, it is necessary to first split the nodal $\boldsymbol{\mathbbm{k}}$-manifold into several pieces, each satisfying $\delta\mathscr{P}=\pm 1$, and then to deform each piece to a \emph{distinct} Berry dipole.\medskip

\subsection{Proof of Hopf-Zak lemma}\la{sec:hopfzak}

Before proving the Theorem~\ref{theorem:homotopy} (task deferred to Appendix~\ref{app:proofthmdeform}), our first goal here is to prove the Hopf-Zak lemma stated in \q{eqn:hopfrtpctm}. 
Our proof is based on applying the Hopf-Chern relation introduced in Sec.~\ref{sec:linking-RTP-Hopf} [see also Eq.~(\ref{eq:whitehead}) and the discussion in that appendix].
To recapitulate, for Hamiltonian maps from $S^{\! 3}$ to the Bloch sphere $S^{\! 2}$, the Hopf-Chern relation equates the Hopf invariant over $S^{\! 3}$ with the product of (\emph{i}) the Chern number of a Gaussian surface (within $S^{\! 3}$) bounded by an oriented, path-connected preimage loop $\gamma(b)$ of an arbitrary point $b$ on the Bloch sphere, and (\emph{ii}) the positive-integer-valued multiplicity of the preimage loop, which was previously defined as $\mu[\gamma(b)]$ in \s{sec:linking-RTP-Hopf}. 
If the preimage of $b$ is a disjoint union of several path-connected loops, ${h_\textrm{flat}^{-1}(b) = } \cup_i \gamma_i(b)$, then the Hopf invariant equals the sum of (\emph{i}) $\times$ (\emph{ii}) over all the preimage loops. 

In our application, we choose $S^{\! 3}$ to be the $3$-sphere $S^{\! 3}_{\kappa}$ defined in \q{eqn:3-sphere-kappa} and $b$ to be the `south pole' {($b=-\sigma_z$)} of the Bloch sphere, such that the oriented preimage of $b$ consists of (a) $\gamma(b)=\pm \partial\mathcal{D}'_\kappa$, with $\partial\mathcal{D}'_\kappa$ defined in \q{partialdprime} and the $\pm$ prefactor reflecting that $\gamma(b)$ and $\partial\mathcal{D}'_\kappa$ may have equal or opposite orientations, plus (b) any number (possibly zero) of rotation-symmetric $m$-plets of preimage loops, plus (c) any number  of individually-$C_m$-invariant preimage loops that link with the rotation-invariant manifold $\partial\mathcal{D}'_\kappa$. That $\pm \partial     \mathcal{D}'_\kappa$ belongs to the preimage of the south pole follows from the spectrally-flattened Hamiltonian in the rotation-invariant manifold being~simply $-\sigma_z$. 
\medskip

\noindent Two facts follow from our setting.  
\medskip

\noindent  \textit{Fact one:} the Chern number over an oriented Gaussian surface  bounded by $\gamma(b)$ equals the polarization invariant $\delta \mathscr{P}$ if $\gamma(b)= \partial\mathcal{D}'_\kappa$, and otherwise equals $-\delta \mathscr{P}$ if $\gamma(b)=- \partial\mathcal{D}'_\kappa$. This follows from Eq.~(\ref{eqn:rephrased-Polarizaion-change-2}), with the integration domain $\mathcal{D}'_\kappa$ specified in Eq.~(\ref{eqn:2-slice-kappa}). 
\medskip

\begin{figure}
    \centering
    \includegraphics[width=0.3\textwidth]{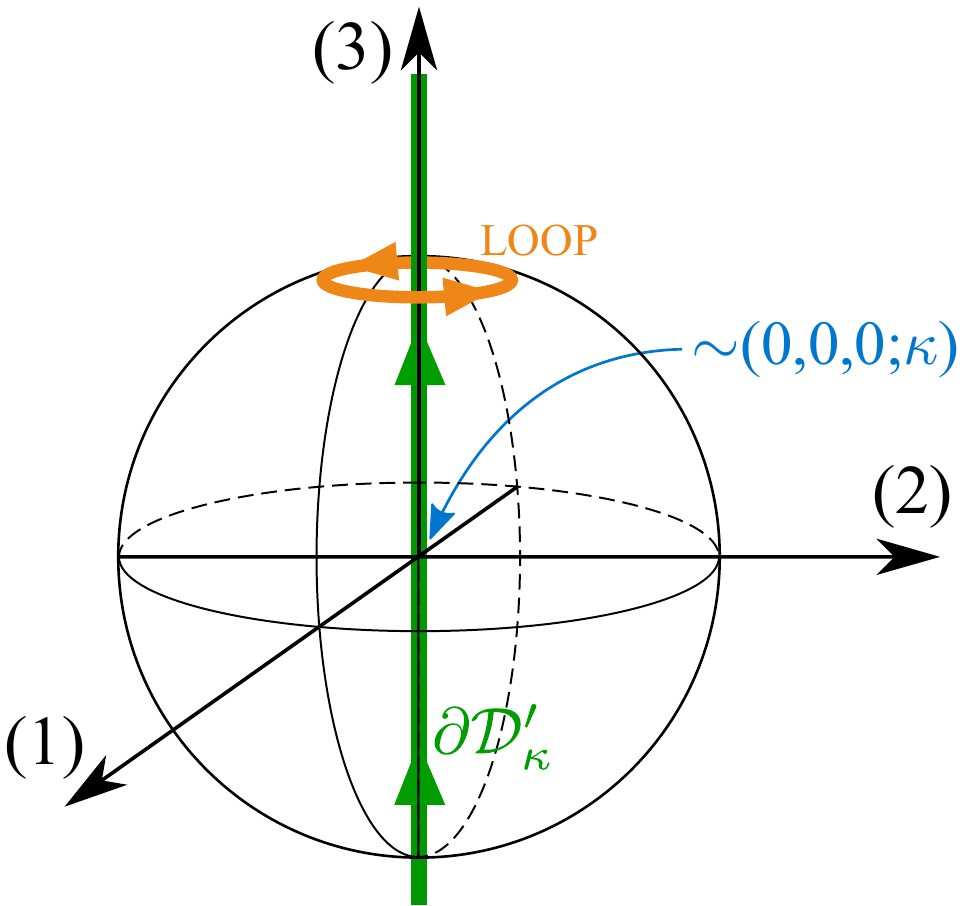}
    \caption{Linking of $\tinyloop$ with $\partial\mathcal{D}'_\kappa$ in the stereographic projection defined by Eq.~(\ref{eqn:steregraphic}). The projection represents $S^{\! 3}_\kappa$ [Eq.~(\ref{eqn:3-sphere-kappa})] as the Euclidean space $\mathbb{R}^3\cup \infty$. The central point (indicated with blue arrow) corresponds to the point $(0,0,0;\kappa)$ on the $3$-sphere, whereas `$\infty$'  corresponds to the point $(0,0,0;-\kappa)$. The axis numbering indicates the components of the projection space in Eq.~(\ref{eqn:steregraphic}). The green oriented line (compactified at $\infty$) corresponds to $\partial\mathcal{D}'_\kappa$ [Eq.~(\ref{eqn:stereo-partial-D})], and the orange oriented loop represents $\tinyloop$ [Eq.~(\ref{eqn:lemma-tiny-loop})]. The indicated $2$-spherical shell corresponds to the cross-section of $S^{\! 3}_\kappa$ with $\phi=0$. Stereographic projection of $\mathcal{D}'_\kappa$ [Eq.~(\ref{eqn:2-slice-kappa}), not illustrated] spans the semi-infinite plane in the positive (1)-direction.
    }
    \label{fig:stereographic}
\end{figure}

\begin{figure*}
    \centering
    \includegraphics[width=0.84\textwidth]{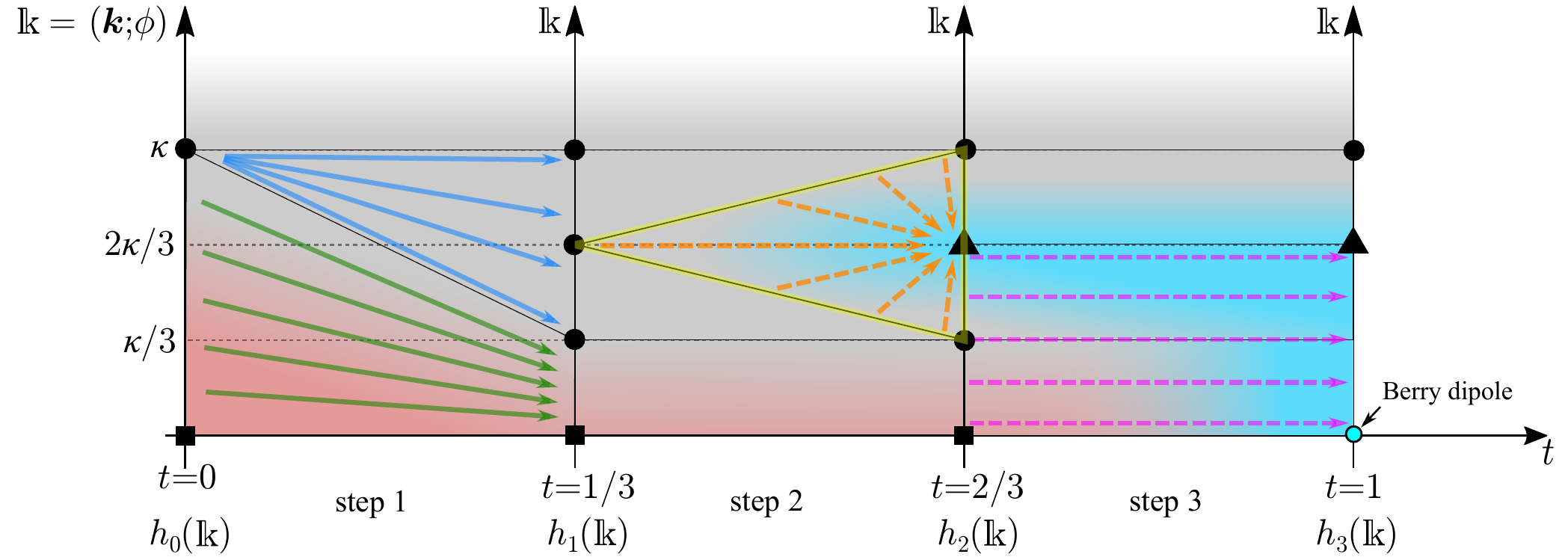}
    \caption{
    Schematic illustration of the continuous homotopy $\mathfrak{H}(\boldsymbol{\mathbbm{k}};t)$ from the original model $h_0(\boldsymbol{\mathbbm{k}})$ at $t=0$, which we leave unconstrained except for respecting the rotation symmetry and spectral gap [cf.~conditions (1--3) in Appendix~\ref{app:phase-transition-general}],  
    into the spinor-form Hamiltonian given by Eqs.~(\ref{eq:kp-sym-spinor}) and~(\ref{eq:zszs-hamiltonian}) where the topological transition occurs via a single gapless point -- the Berry dipole.
    The vertical axis represents the four directions, $(\bk;\phi)\equiv\boldsymbol{\mathbbm{k}}$, and each point in the $(\boldsymbol{\mathbbm{k}};t)$-plane represents a Hamiltonian on a $3$-sphere with radius given by the vertical coordinate of the point. 
    In step~1, we compress the Hamiltonian inside the ball $\norm{\boldsymbol{\mathbbm{k}}} < \kappa$ to a ball with smaller radius $\norm{\boldsymbol{\mathbbm{k}}}<\kappa/3$ [solid green arrows], while making the Hamiltonian constant in the radial direction within the annulus $\kappa/3 < \norm{\boldsymbol{\mathbbm{k}}} < \kappa$ [solid blue arrows]. 
    In step~2, we perform a rotation-symmetry-preserving deformation from the obtained Hamiltonian into the spinor-form Hamiltonian on a $3$-sphere with radius $\tfrac{2}{3}\kappa$. 
    The continuous deformation is ``wedged'' (yellow triangle) into the annulus such the Hamiltonian evolution is identical along every dashed orange arrow. 
    Finally, in step~3, we perform a linear interpolation of the Hamiltonian inside ball $\norm{\boldsymbol{\mathbbm{k}}}<\tfrac{2}{3}\kappa$ [dashed magenta arrows], which results in the desired spinor-form Hamiltonian with a Berry-dipole transition at $\boldsymbol{\mathbbm{k}}=\boldsymbol{0}$ (cyan dot). 
    To help visualize the deformations, we color the background of the diagram according to the following rules: (\emph{i}) The initial Hamiltonian at $\norm{\boldsymbol{\mathbbm{k}}}=\kappa$ (where spectral gap is guaranteed by assumptions) is indicated with gray background. 
    (\emph{ii}) The initial Hamiltonian for $\norm{\boldsymbol{\mathbbm{k}}}<\kappa$ (which may exhibit band nodes) is shown in shades of red.
    (\emph{iii}) The spinor-form Hamiltonian (with single gapless point -- the Berry dipole) is highlighted in cyan. The homotopy function  $\mathfrak{H}(\boldsymbol{\mathbbm{k}};t)$ may exhibit accidental nodes \emph{inside} the region marked with magenta arrows, but these do not alter the considered topological invariants and are thereby inconsequential for the validity of the proof. 
    In regions where no arrow is indicated, the homotopy function is constant along $t$. 
    The black dot ($\mathlarger{\mathlarger{\mathlarger{\bullet}}}$), black square ($\blacksquare$), resp.~black triangle ($\mathlarger{\mathlarger{\blacktriangle}}$) indicate selected points in the $(\boldsymbol{\mathbbm{k}};t)$ diagram where the Hamiltonian is the same on the 3-spherical shell around the origin.
    For details of the individual steps, see the text of Appendix~\ref{app:phase-transition-general}. To simplify the detailed specification of the homotopy, we denote the Hamiltonians $\mathfrak{H}(\boldsymbol{\mathbbm{k}};t)$ at times $t\in\{\tfrac{1}{3},\tfrac{2}{3},1\}$ respectively as $h_{1,2,3}(\boldsymbol{\mathbbm{k}})$.
    }
    \label{fig:homotopy}
\end{figure*}

\noindent \textit{Fact two}: the multiplicity  $\mu[\gamma(b)]=\pm (\Delta\widetilde{\mathcal{L}}+{r}
\,m)>0$ for some integer  
{$r$} if $\gamma(b)=\pm \partial\mathcal{D}'_\kappa$.
Recall here that $\Delta\widetilde{\mathcal{L}}$ is the mod-$m$ parameter entering through the itinerant rotation matrix $R_{C_m}$ in \q{eqn:assumed-rot-sym}. 
To justify this statement, consider an infinitesimal rotation-invariant $\boldsymbol{\mathbbm{k}}$-loop ${\tinyloop}\subset S^{\!3}_\kappa$  linking with $\partial \mathcal{D}'_\kappa$ and {centered at $\boldsymbol{\mathbbm{k}}=(0,0,k_z=\kappa;\phi=0)$, namely}
\begin{eqnarray}
    \tinyloop:  [0,1] &\to& S^{\! 3}_\kappa \nonumber \\
    t &\mapsto& \left(\varrho \cos(2\pi t),\varrho \sin(2\pi t),\!\!\sqrt{\kappa^2{-}\varrho^2};0\right),\quad \label{eqn:lemma-tiny-loop}
\end{eqnarray}
with the loop radius $\varrho$ approaching zero. By describing the loop as `rotation-invariant', we mean that the parametrization is chosen such that $m$-fold rotation $C_m$ preserves the loop as a whole, but advances $t \mapsto t+1/m$.
Our convention is that $C_m$ produces a clockwise rotation in the $(k_x,k_y)$-plane [assuming that $(k_x,k_y,k_z)$ form a right-handed coordinate system]. 

We claim that $\tinyloop$ [Eq.~(\ref{eqn:lemma-tiny-loop})] is linked with $\partial\mathcal{D}'_\kappa$ [Eq.~(\ref{partialdprime})] inside the $3$-manifold $S^{\! 3}_\kappa$, and that [per the discussion in Sec.~\ref{sec:linking-RTP-Hopf}] we can use it to determine the orientation and the multiplicity of $\partial\mathcal{D}'_\kappa\subset h_\textrm{flat}^{-1}(b)$. To visualize this linking, we  represent $S^{\!3}$ as $\mathbb{R}^3 \cup \infty$ in a \emph{stereographic projection}, choosing $(0,0,0;\kappa)$ to be mapped to the origin of $\mathbb{R}^3$ and $(0,0,0;-\kappa)$ to be mapped to infinity: 
\begin{eqnarray}
\mathfrak{P}: S^{\! 3}_\kappa &\to& \mathbb{R}^3 \cup \infty \nonumber \\
(k_x,k_y,k_z;\phi) &\mapsto& \frac{\kappa}{\kappa+\phi}(k_x,k_y,k_z).\; \label{eqn:steregraphic}
\end{eqnarray}
Observe that this transforms $\partial \mathcal{D}'_\kappa$ (parametrized\footnote{Our convention is that $\varpi$ increases along $\partial\mathcal{D}'_\kappa$ in the direction parallel to the orientation of $\partial\mathcal{D}'_\kappa$; the orientation of $\partial\mathcal{D}'_\kappa$ is in turn determined by continuous deformation of $\gamma=\partial \mathcal{D}$, which is oriented 
according to Eq.~(\ref{eqn:rephrased-Polarizaion-change}): note that the $k_z$-component of $\partial\mathcal{D}$ increases (decreases) for $\phi=+\kappa$ ($\phi=-\kappa$), as is consistent with Eq.~(\ref{eqn:stereo-partial-D}).} 
by $\varpi\in[-\pi,\pi]$) as
\begin{eqnarray}
\mathfrak{P}: (0,0,\kappa \sin\varpi;\kappa\cos\varpi) \mapsto \left(0,0,\kappa \tan\frac{\varpi}{2}\right), \label{eqn:stereo-partial-D}
\end{eqnarray}
i.e., into a straight vertical line [compactified at `$\infty$', which corresponds to the point $(0,0,0;-\kappa)\in S^{\! 3}_\kappa$] oriented in the `upward' direction, i.e., along the positive direction of the third axis of the stereographic projection space ($\mathbb{R}^3$). Furthermore, since $\tinyloop$ is located within $\phi=0$, it follows from Eq.~(\ref{eqn:steregraphic}) that its $(\bk;\phi)$-coordinates in $\mathbb{R}^4$ [defined in Eq.~(\ref{eqn:lemma-tiny-loop})] 
are identical to its coordinates in $\mathbb{R}^3\cup \infty$, 
meaning that $\tinyloop$ encircles the vertical coordinate axis. The situation is illustrated in Fig.~\ref{fig:stereographic}.

{Setting the radius $\varrho$ of $\tinyloop$ to be arbitrarily small},
the spectrally-flattened Hamiltonian $h_\textrm{flat}(\boldsymbol{\mathbbm{k}})$ on $\tinyloop$ has form
\begin{equation}
{h_\textrm{flat}}(t)={-\sigma_z} + \delta h(t),
\quad \textrm{where}\quad 
\delta h=\delta h_x \sigma_x+\delta h_y\sigma_y
\end{equation} 
is a correction linear in ${\varrho}$ 
to the leading order. 
The symmetry constraint in \q{eqn:assumed-rot-sym} then implies $R_{C_m}\delta h(t)R_{C_m}^{-1}=\delta h(t+1/m)$ for any $t\in [0,1]$. In combination with Eq.~(\ref{eq:app:rotation-on-sigma-0}), we derive
\e{ \vectwo{\delta h_x}{\delta h_y}\bigg|_{t+1/m}=\matrixtwo{\cos \tfrac{2\pi \Delta\widetilde{\mathcal{L}}}{m}}{-\sin \tfrac{2\pi \Delta\widetilde{\mathcal{L}}}{m}}{+\sin \tfrac{2\pi \Delta\widetilde{\mathcal{L}}}{m}}{\cos \tfrac{2\pi \Delta\widetilde{\mathcal{L}}}{m}} \vectwo{\delta h_x}{\delta h_y}\bigg|_{t}.\la{homo}}
Viewing the normalized  $(\delta h_x,\delta h_y)$ as a unit vector on the equator of the Bloch sphere, \q{homo} implies that the unit vector revolves ($Rev=\Delta\widetilde{\mathcal{L}} + mr$) times in the equatorial plane as $t$ is continuously advanced by unity, with $r$ an integer. 
The sign of $Rev$ determines the relative orientation of $\gamma(b)=\text{sgn}[Rev]\partial \mathcal{D}'_\kappa$, according to our definition of preimage orientation in \s{sec:linking-RTP-Hopf}. 
Equivalently, one may say that the preimage multiplicity $\mu[\gamma(b)]=\pm Rev$ if $\gamma(b)=\pm \partial \mathcal{D}'_\kappa$.
\medskip

By combining \textit{Facts 1} and \textit{2}, the Hopf-Chern relation implies that the preimage loop $\partial     \mathcal{D}'_\kappa$ `contributes' $(\pm \delta\mathscr{P})[\pm (\Delta\widetilde{\mathcal{L}}+m r 
)] =_m \delta\mathscr{P}\Delta\widetilde{\mathcal{L}}$ 
to the Hopf invariant, if we view the Hopf invariant as a sum of contributions from each preimage loop $\gamma_i(b=-\sigma_z)$ in ${h_\textrm{flat}^{-1}(b) = } \cup_i \gamma_i(b)$. 
The remaining contributions stem [cf.~second paragraph of Sec.~\ref{sec:hopfzak}] from: 
(b) the possibility of $m$-plets of preimage loops, which contribute $m\mathbb{Z}$ to the Hopf invariant (since Berry curvature transforms as a vector under rotations, each loop in the $m$-plet contributes the same amount), and from 
(c) the possibility of $C_m$-invariant singlets of preimage loops, each of which necessarily encloses $m\mathbb{Z}$ quanta of Berry flux, according to an argument presented in Sec~V. of the Supplemental Material of \ocite{Nelson:2021}. Thus we arrive at the Hopf-Zak lemma in \q{eqn:hopfrtpctm}.

\subsection{Proof of Theorem~\ref{theorem:homotopy}}\la{app:proofthmdeform}

Let us describe how to construct the homotopy function $\mathfrak{H}(\boldsymbol{\mathbbm{k}};t)$ introduced in Theorem~\ref{theorem:homotopy}. This is achieved in three steps that can be carried out for any initial Hamiltonian $h_0(\boldsymbol{\mathbbm{k}})$.

First, in the interval $t\in[0,\tfrac{1}{3}]$ we consider a continuous deformation that ``inflates'' the 3-spherical shell $S_\kappa^{\! 3}$ into an annulus with outer and inner radii $\kappa$ resp.~$\kappa/3$ [solid blue arrows in Fig.~\ref{fig:homotopy}], while ``compressing'' the interior of the 3-sphere [solid green arrows in Fig.~\ref{fig:homotopy}]. The Hamiltonian values are assumed to follow this continuous deformation. To describe this process explicitly: 
\begin{itemize}
\item[(1)] for $\norm{\boldsymbol{\mathbbm{k}}}>\kappa$ we keep the Hamiltonian unchanged, $\mathfrak{H}(\boldsymbol{\mathbbm{k}};t) = h_0(\boldsymbol{\mathbbm{k}})$,
\item[(2)] for $\norm{\boldsymbol{\mathbbm{k}}}<\kappa$ we translate the Hamiltonian function in the radial direction by two thirds of the distance towards the origin, $\mathfrak{H}\left(\boldsymbol{\mathbbm{k}}(1-2t);t\right) = h_0(\boldsymbol{\mathbbm{k}})$, and 
\item[(3)] for momenta $\kappa>\norm{\boldsymbol{\mathbbm{k}}}>\kappa(1-2t)$ located within the annulus we extend the value from the sphere with radius $\kappa$, $\mathfrak{H}\left(\boldsymbol{\mathbbm{k}};t\right) = h_0(\kappa\, \boldsymbol{\mathbbm{k}} /\norm{\boldsymbol{\mathbbm{k}}})$. 
\end{itemize}
One can easily check that this simple geometric manipulation produces a continuous homotopy function on $t\in[0,\tfrac{1}{3}]$ that is compatible with conditions (\emph{i}) and (\emph{ii}) above. For brevity, we denote $\mathfrak{H}(\boldsymbol{\mathbbm{k}},t{=}\tfrac{1}{3})\equiv h_1(\boldsymbol{\mathbbm{k}})$.

For the purpose of the second step, $t\in[\tfrac{1}{3},\tfrac{2}{3}]$, we first construct a continuous homotopy deformation on a $3$-sphere, $\overline{\mathfrak{H}}(\boldsymbol{\mathbbm{k}}/\norm{\boldsymbol{\mathbbm{k}}};t')$ where $t'\in[0,1]$. (Note that $\overline{\mathfrak{H}}$ depends only on the angular $3$-spherical coordinates and \emph{not} on the magnitude of $\boldsymbol{\mathbbm{k}}$; furthermore, parameter $t'$ is, at this stage, unrelated to $t$). 
The particular homotopy function we need connects the initial Hamiltonian $h_0(\boldsymbol{\mathbbm{k}})$ on $S^{\! 3}_\kappa$ at $t'=0$ to the final spinor-form Hamiltonian $h_3(\boldsymbol{\mathbbm{k}})$ [defined by Eqs.~(\ref{eq:kp-sym-spinor},\ref{eq:zszs-hamiltonian})] on $3$-sphere $S^{\! 3}_{2\kappa/3}$ with radius $2\kappa/3$ at $t'=1$. 
Although we do not provide an explicit expression for such a homotopy function, the existence of the homotopy is guaranteed (per the Assumption formulated in the beginning of Appendix~\ref{app:thmdeform}) if the Hamiltonians exhibit the same value of both the Hopf and the polarization (Zak phase) invariant. To that end, notice that [from Eqs.~(\ref{eq:Berry-dipole-hopf-change}) and~(\ref{eqn:deformed-dipole-invariants})] the change in the Hopf invariant 
\begin{equation}
\chi^\mathrm{t} = -\upsilon^\mathrm{t} \, \Delta\ell^\mathrm{t} = \delta \chi \, (\delta \mathscr{P})^2 = \delta \chi \label{eqn:compare-hopf}
\end{equation}
(cf.~the Remark appearing under the statement of Theorem~\ref{theorem:homotopy} in Appendix~\ref{app:thmdeform} to clarify the last step), and [from Eqs.~(\ref{eq:Berry-dipole-pol-change}) and~(\ref{eqn:deformed-dipole-invariants})] the change of the polarization (Zak phase) invariant is 
\begin{equation}
\delta \mathscr{P}^\mathrm{t} = -\upsilon^\mathrm{t} = \delta\mathscr{P}.\label{eqn:compare-pol}
\end{equation}
Equations~(\ref{eqn:compare-hopf}) and~(\ref{eqn:compare-pol}) confirm that the target Berry-dipole Hamiltonian $h_3(\boldsymbol{\mathbbm{k}})$ at $\norm{\boldsymbol{\mathbbm{k}}}=2\kappa/3$ indeed exhibits the same pair of topological invariants on $S^{\! 3}_{2\kappa/3}$ as the initial Hamiltonian $h_0(\boldsymbol{\mathbbm{k}})$ on $S^{\! 3}_{\kappa}$; thus, the required homotopy function $\overline{\mathfrak{H}}(\boldsymbol{\mathbbm{k}}/\norm{\boldsymbol{\mathbbm{k}}};t')$ exists. For completeness, let us remark that
\begin{equation}
\Delta \ell 
\,\,\stackrel{\textrm{Eq.~(\ref{eqn:deformed-dipole-invariants})}}{=} \,\,
\delta \chi \,\delta\mathscr{P} 
\,\, \underset{\textrm{(Appendix~\ref{app:thmdeform})}}{\overset{\textrm{Rem.~1}}{=}}\,\,
\frac{\delta\chi }{\mathscr{P}}
\,\, \stackrel{\textrm{Eq.~(\ref{eqn:hopfrtpctm})}}{{\phantom{_m}}=_m} \Delta \widetilde{\mathcal{L}}
\end{equation}
is consistent with the $C_m$ rotation symmetry of the crystalline model [compare to Eq.~(\ref{eqn:l-L-mod-m})].

Knowing that the required homotopy $\overline{\mathfrak{H}}(\boldsymbol{\mathbbm{k}}/\norm{\boldsymbol{\mathbbm{k}}};t')$ exists, we ``wedge'' it
into the annulus as illustrated with dashed orange arrows in Fig.~\ref{fig:homotopy}, i.e., such that the homotopy function $\overline{\mathfrak{H}}$ with range $t'\in[0,1]$ is realized along each straight line connecting the outer boundary of the wedge (yellow triangle in Fig.~\ref{fig:homotopy}) to the point $\norm{\boldsymbol{\mathbbm{k}}}=\tfrac{2}{3}\kappa$, $t=\tfrac{2}{3}$. 
Since we preserve the rotation symmetry for all $(\phi,t)$ and keep the energy gap on $S^{\! 3}_\kappa$, it follows that we did not change the topological invariants of the insulating Hamiltonian outside $S^{\! 3}_\kappa$, and the validity of points (1--3) and (i--iii) follows easily. 
We denote the resulting Hamiltonian after step $2$ as $\mathfrak{H}(\mathbbm{k},t=\tfrac{2}{3})\equiv h_2(\mathbbm{k})$. 

Finally, in the third step,  $t\in[\tfrac{2}{3},1]$, we perform inside the four-dimensional ball
\begin{equation}
\mathcal{B}_{2\kappa/3}^{\! 4}=\{\boldsymbol{\mathbbm{k}}\;| \; \norm{\boldsymbol{\mathbbm{k}}}\leq \tfrac{2}{3}\kappa\}
\end{equation}
a linear interpolation of the Hamiltonian $h_2(\boldsymbol{\mathbbm{k}})$ to the spinor-form Hamiltonian defined by Eqs.~(\ref{eq:kp-sym-spinor}) and~(\ref{eq:zszs-hamiltonian}) [dashed magenta arrows in Fig.~\ref{fig:homotopy}].
Explicitly,
\begin{equation}
\mathfrak{H}(\boldsymbol{\mathbbm{k}};t) = (3-3t)h_2(\boldsymbol{\mathbbm{k}}) + (3t-2)h_3(\boldsymbol{\mathbbm{k}}).
\end{equation}
Since $h_2(\boldsymbol{\mathbbm{k}})$ and $h_3(\boldsymbol{\mathbbm{k}})$ readily coincide on the 3-sphere $S^{\! 3}_{2\kappa/3} = \partial \mathcal{B}_{2\kappa/3}^4$, the deformation is clearly continuous.
Furthermore, while the linear interpolation may in principle create accidental gapless points for intermediate values of $t$, these do not interfere with the energy gap on $S^{\! 3}_\kappa$; in particular, all such accidental band nodes are absent at the final time $t=1$. Since the described linear interpolation is manifestly compatible with the rotation symmetry, it follows that the third step is also compatible with all the rules (1--3) and (i--iii), and especially that the Hamiltonian $\mathfrak{H}(\boldsymbol{\mathbbm{k}};t=1)\equiv h_3(\mathbbm{k})$ realizes the Hopf transition via rotation-symmetric Berry dipole.

By combining the three steps in sequence, we obtain a concrete strategy to continuously deform a general Hopf-invariant-altering Hamiltonian to the prescribed Berry-dipolar form while not breaking the rotation symmetry and while preserving the energy gap outside a specified distance from the point $\boldsymbol{\mathbbm{k}}=\boldsymbol{0}$. This completes the proof of Theorem~\ref{theorem:homotopy}.

\section{Relation between the Zak phase and the positional center of the Wannier orbital}\label{app:pol-position}

In this Appendix we show for the (possibly) aperiodic, two-band, fully-gapped Hamiltonian [cf.\ Eq.~(\ref{eq:ham_non-periodic})] that the geometric phase defined by Eq.~\eqref{eq:pol-def} at a rotation-invariant line $\gamma_\Pi$ is equal modulo integer to the position $z_{\alpha'}$ of the basis orbital $\varphi_{\alpha'}$, assuming that at $\gamma_\Pi$-line (\emph{i}) the itinerant angular momenta of the valence and conduction bands are distinct and (\emph{ii}) that the valence band representation coincides with a BBR$[\varphi_{\alpha'}]$ at $\gamma_\Pi$. This is a particular example of the more general relation in \q{berrypol}, proven in Appendix~\ref{app:geom-theo-pol}, restricted to the case when the Fourier transform is performed only in one direction, namely along the $z$-axis, while the reduced momenta $\bk_\perp$ are kept as external parameters. In this case the geometric theory of polarization gives the following relation
\begin{align}
    \int \frac{dk_z}{2\pi}&\bra{u_j(\bk_\perp,k_z)}\ket{i\partial_{k_z}u_j(\bk_\perp,k_z)}\lin 
    &=_1\Big\langle\Big\langle\!\bra{\mathcal{V}_{j,\bk_\perp,0}}\hat{z} 
    \ket{\mathcal{V}_{j,\bk_\perp,0}}\!\Big\rangle\Big\rangle,
    \label{eq:app:berrypol-1D}
\end{align}
where the Bravais vector in $z$ direction assumed to be unit, $\big|{\mathcal{V}_{j,\bk_\perp,0}}\big\rangle\big\rangle\big\rangle$ are the intra-column states, with the corresponding functions defined in Eq.~\eqref{eq:app:intra-column-orbitals} 
and the position operator in the $z$ direction can be decomposed in the intra-column basis as
\begin{equation}
    \hat{z} = \sum_{R_z, \alpha}\ket{R_z,\alpha}\rangle\rangle(R_z+z_\alpha)\langle\langle\bra{R_z,\alpha}.
\end{equation}
The relation in Eq.~\eqref{eq:app:berrypol-1D} can be proven in analogous way as relation in Eq.~\eqref{berrypol}. 

Along the rotation-invariant lines $\gamma_\Pi$ with distinct itinerant angular momenta, the two basis orbitals do not hybridize, and hence, the valence Bloch state is simply a basis state $\ket{\psi_{v,\Pi,k_z}}\!\Big\rangle=\ket{\Pi,k_z,{\alpha'}}\rangle$, where the valence band representation, restricted to reduced momentum $\Pi$ coincides with the basis band representation induced from the basis orbital $\varphi_{\alpha'}$, and also restricted to~$\Pi$. From the valence Bloch state we can compute the intra-column state at reduced momentum $\Pi$ to be $\ket{\mathcal{V}_{v,\Pi,R_z}}\!\Big\rangle\Big\rangle=\ket{R_z, \alpha'}\rangle\rangle$. Plugging this state at $R_z=0$ into Eq.~\eqref{eq:app:berrypol-1D} we can straightforwardly compute that the right-side of the equation is equal to the $z$ position of the basis orbital $\varphi_{\alpha'}$ and we get
\begin{equation}
    \mathscr{P}(\Pi)=\frac{1}{2\pi}\int dk_z \bra{u_v(\Pi,k_z)}\ket{i\partial_{k_z}u_v(\Pi,k_z)} =_1 z_{\alpha'}.
    \label{eq:app:pol-position}
\end{equation}

If on the contrary we choose to work in the periodic convention for the tight-binding Hamiltonian (introduced in Appendix~\ref{app:nonperiodic-convention}), the position operator in $z$ direction will loose the information about the spacial positions of the basis orbitals:
\begin{equation}
    \hat{z} 
    = \sum_{R_z, \alpha}\ket{R_z,\alpha}\rangle\rangle R_z\langle\langle\bra{R_z,\alpha}.
\end{equation}
In this case the geometric phase defined by Eq.~\eqref{eq:pol-def} is not equal to the position of the basis orbital $z_{\alpha'}$, but instead acquires
an integer value.

\section{Chern number from itinerant angular momentum }\label{app:Chern-via-ell}

In this Appendix we show that in the presence of $n$-fold rotational symmetry, given itinerant angular momenta of a set of bands, $\widetilde{\mathcal{L}}_j(\Pi)$, 
their total Chern number on a symmetry-preserving two-dimensional cut is constrained by 
\begin{equation}
\mathscr{C}_{xy}=_n\sum_j\sum_{\Pi}\widetilde{\mathcal{L}}_j(\Pi),
\label{eq:app:Chern-via-ell}
\end{equation} 
where $\Pi$ runs over all rotation-invariant momenta in rBZ and $j$ is an index for all occupied bands.

To see that this is true we utilize the result of Ref.~\cite{Chen_bulktopologicalinvariants} which relates the Chern number of a rotation-symmetric model to the rotation eigenvalues at all high-symmetry momenta: 
\e{
&C_2:
\quad e^{i2\pi{\mathscr{C}_{xy}}/2   } =\prod_{j}\widetilde{\rho}_{2,j}(\Gamma)\;\widetilde{\rho}_{2,j}(\tx)\;\widetilde{\rho}_{2,j}(\mathrm{Y})\;\widetilde{\rho}_{2,j}(\tm), \label{eqn:Chern-fomrula-C2}\\
&C_3:\quad e^{i2\pi{\mathscr{C}_{xy}}/3}=\prod_j\widetilde{\rho}_{3,j}(\Gamma)\;\widetilde{\rho}_{3,j}(\tk)\;\widetilde{\rho}_{3,j}(\tkpr), \label{eqn:Chern-fomrula-C3}\\
&C_4: 
\quad e^{i2\pi{\mathscr{C}_{xy}}/4} =\prod_{j}\widetilde{\rho}_{4,j}(\Gamma)\;\widetilde{\rho}_{2,j}(\tx)\;\widetilde{\rho}_{4,j}(\tm), \\
&C_6:\quad e^{i\pi{\mathscr{C}_{xy}}/3}=\prod_j\widetilde{\rho}_{6,j}(\Gamma)\;\widetilde{\rho}_{3,j}(\tk)\;\widetilde{\rho}_{2,j}(\tm).\label{eqn:Chern-fomrula-C6}
}
The rotation eigenvalue of a band can be expressed in terms of its itinerant angular momentum as
\begin{equation}
 \widetilde{\rho}_{m,j}(\Pi) = \exp(i2\pi\widetilde{\mathcal{L}}_j(\Pi)/m).   \label{eqn:eigenvalue-to-ang-momentum}
\end{equation}
Taking the logarithm on both sides of \q{eqn:Chern-fomrula-C2} [resp.\ \q{eqn:Chern-fomrula-C3}] gives immediately the desired result: \q{eq:app:Chern-via-ell} for $n=2$ [resp.~for $n=3$]. 
The  generalization to $n=4,6$ is less obvious because the right-hand sides of Eqs.~\eqref{eqn:Chern-fomrula-C3}--\eqref{eqn:Chern-fomrula-C6} involve $\widetilde{\rho}_{m,j}$ with $m\neq n$. 
Here, we utilize  that $n/m$ copies of a $C_m$-invariant reduced momentum appear in rBZ of a $C_n$-symmetric model, and therefore such a momentum contributes $n/m$ times to the sum in Eq.~(\ref{eq:app:Chern-via-ell}). 
In contrast, it only contributes once to the products in Eqs.~\eqref{eqn:Chern-fomrula-C2}--\eqref{eqn:Chern-fomrula-C6}. 
By combining the formulas \eqref{eqn:Chern-fomrula-C2}--\eqref{eqn:eigenvalue-to-ang-momentum} and after taking care of the multiplicity of the high-symmetry momenta, comparison of the exponents leads us to the conclusion summarized by Eq.~\eqref{eq:app:Chern-via-ell}. 

Let us remark that when discussing the faceted Chern number in Sec.~\ref{sec:BBC-Hopf-RTP-BBR}, we introduce an additional minus sign: $\mathscr{C}_{f}=-\mathscr{C}_{xy}$. This is because the faceted Chern number is computed with respect to the outward surface normal ($\hat{\bm{n}}$), while $\mathscr{C}_{xy}$ is computed with respect to the positive $z$-direction ($\hat{\bm{e}}_z$), and at the bottom facet the two are antiparallel ($\hat{\bm{n}} = -\hat{\bm{e}}_z$).

\stepcounter{section}

\section{One-to-one correspondence of equivalence class for band representations and itinerant angular momenta}\la{app:onetoone}

We are here to prove a useful fact presented in \s{sec:relaxdipole}, namely that \textit{two rank-one band representations (BRs) of $\mathrm{P}n$ are symmetry-equivalent as band representations if and only if they have exactly the same itinerant angular momenta.} 

Without loss of generality, such an equivalence class of rank-one BRs  is specified by ($\mathcal{L},\br_{\perp})$, with $\br_{\perp}$  the $(x,y)$-projection of a $C_n$-invariant Wyckoff position $\br$; this condition of $C_n$-invariance is necessary~\cite{TBO_JHAA} for the BR to be of unit rank. 
What $\mathcal{L}$ specifies is that for any $m$-fold rotation ($C_m$) in the site-stabilizer group $\mathrm{P}n_{\br}$, the corresponding operator acts on the (exponentially-localized) representative Wannier orbital localized at Wyckoff position $\br$ and translated by a Bravais-lattice vector $\bR$ as 
\e{\hat{C}_m W_{\bR}= e^{i2\pi \mathcal{L}/m} W_{C_m\bR+(C_m\br-\br)}.\la{repwan}
}
The Fourier transform of $W_{\bR}$ defines a Bloch function $\psi_{\bk}$ that is periodic and analytic over the Brillouin torus.
Equation~\eqref{repwan} then implies a corresponding symmetry action on the Bloch function:  
\e{\hat{C}_m \psi_{\bk}= \rho_m(\bk) \psi_{C_m\bk}, \as \rho_m(\bk)=e^{i2\pi \mathcal{L}/m}e^{i(C_m\bk-\bk)\cdot \br}.\la{definerho}} 
[cf.~discussion below Eq.~\eqref{eq:itin-am-from-basis} for the meaning of the two exponents in Eq.~\eqref{definerho}]
For any rotation-invariant reduced momentum $\Pi_j$ with an $m_j$-order little group, 
\e{ C_{m_j}\text{-invariant} \; \Pi_j: \as \rho_{m_j}(\Pi_j,k_z)=e^{i2\pi \widetilde{\mathcal{L}}_j/m_j}\la{cminv}} 
defines the mod-$m_j$ itinerant angular momentum $\widetilde{\mathcal{L}}_j$. All itinerant angular momenta may be collected as $(\Pi_1,\widetilde{\mathcal{L}}_1,\ldots,\Pi_J,\widetilde{\mathcal{L}}_J)$, with the total number ($J$) of rotation-invariant momenta uniquely determined by the $\mathrm{P}n$ space group. 

Let us then consider all possible equivalence classes of unit-rank BRs labelled by ($\mathcal{L},\br_{\perp})$, and all possible itinerant angular momenta (allowed for unit-rank BRs) labelled by  $(\Pi_1,\widetilde{\mathcal{L}}_1,\ldots,\Pi_J,\widetilde{\mathcal{L}}_J)$. 
Proving the italicized statement above reduces to proving that  (\ref{definerho}--\ref{cminv}) gives a bijective map between
\begin{equation}\label{eqn:bijectiveness-BBR}
(\mathcal{L},\br_{\perp})\mapsto (\Pi_1,\widetilde{\mathcal{L}}_1,\ldots,\Pi_J,\widetilde{\mathcal{L}}_J).
\end{equation}
The map~(\ref{eqn:bijectiveness-BBR}) is surjective, because the allowed itinerant angular momenta of a unit-rank BR of $\mathrm{P}n$ are, by construction, given by (\ref{definerho}--\ref{cminv}) for some value of $(\mathcal{L},\br_{\perp})$.

To further show that the map (\ref{eqn:bijectiveness-BBR}) is injective, we would like to prove that  $(\mathcal{L}_a,\br_{a,\perp})$ and $(\mathcal{L}_b,\br_{b,\perp})$ map to distinct itinerant angular momenta if $(\mathcal{L}_a,\br_{a,\perp})\neq (\mathcal{L}_b,\br_{b,\perp})$. Indeed, suppose 
\begin{equation}\label{labels}
\begin{split}
&(\mathcal{L}_a,\br_{a,\perp})\mapsto (\Pi^a_1,\widetilde{\mathcal{L}}^a_1,\ldots,\Pi^a_J,\widetilde{\mathcal{L}}^a_J),\\
&(\mathcal{L}_b,\br_{b,\perp})\mapsto (\Pi^b_1,\widetilde{\mathcal{L}}^b_1,\ldots,\Pi^b_J,\widetilde{\mathcal{L}}^b_J).
\end{split}
\end{equation}
If $\mathcal{L}_a\neq \mathcal{L}_b$, \q{definerho} guarantees that $\widetilde{\mathcal{L}}^a\neq \widetilde{\mathcal{L}}^b$ for the rotation-invariant reduced momentum at the center of the rBZ. 
To complete the proof of injectivity, we just need to show that in the case $\mathcal{L}_a=\mathcal{L}_b$ but $\br_{a,\perp}\neq \br_{b,\perp}$, the two labels for itinerant angular momenta in \q{labels} are distinct. 
This amounts to showing that $e^{i(C_m\bk-\bk)\cdot \br_a}{\neq}e^{i(C_m\bk-\bk)\cdot \br_b}$ for at least one $C_m$-invariant $\bk$, according to \q{definerho}. 
Equivalently, we want to find one $C_m$-invariant reduced momentum $\Pi$ such that 
\e{ (C_m\Pi-\Pi)\cdot (\br_{a,\perp}-\br_{b,\perp}) \neq_{2\pi} 0,\la{want}}
with `$\cdot$' understood here to be a dot product  of two-component vectors. 
\\

\noindent Fact 1:  $(C_m\Pi-\Pi)$ is necessarily a reciprocal vector of the reduced reciprocal lattice (rRL), by our assumption that $\bk=(\Pi,k_z)$ is $C_m$-invariant.\\

\noindent Fact 2: By our assumption that the Wyckoff positions are distinct, $\br_{a,\perp}-\br_{b,\perp}$ is not a vector of the reduced Bravais lattice that is dual to the rRL. \\

\noindent Fact 3: For any $\mathrm{P}n$ space group, one can always find a nontrivial divisor $m$ ($1<m\leq n$) of $n$ together with 
two nonzero, $C_m$-invariant momenta $\{\Pi_1,\Pi_2\}$, namely
\e{ &\mathrm{P}2: \as m=2, \as \Pi_1=\tx, \as \Pi_2=\ty, \lin
&\mathrm{P}3: \as m=3, \as \Pi_1=\tk, \as \Pi_2=\tk', \lin
&\mathrm{P}4: \as m=2, \as \Pi_1=\tx, \as \Pi_2=\ty, \lin
&\mathrm{P}6: \as m=3, \as \Pi_1=\tk, \as \Pi_2=\tk',
}
[with $\{\tx,\ty,\tk,\tk'\}$ illustrated in \fig{fig:lin-indepG}],
such that $C_m\Pi_j-\Pi_j=\bG_j$ gives two linearly-independent, primitive rRL vectors $\{\bG_1,\bG_2\}$, with corresponding primitive vectors $\{\bR_1,\bR_2\}$ that satisfy the duality condition $\bR_i\cdot \bG_j=2\pi \delta_{ij}$ and span the reduced Bravais lattice.\\

\begin{figure}
    \centering
    \includegraphics{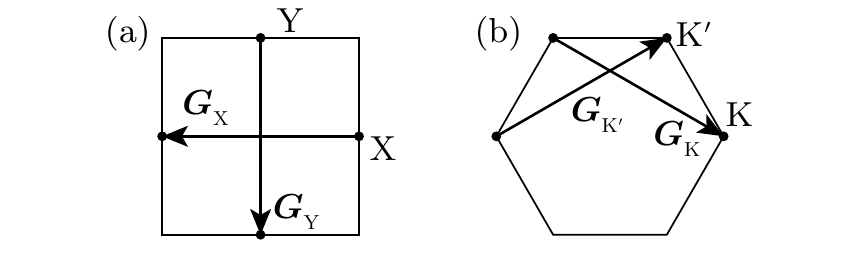}
    \caption{$C_m$-invariant momenta in rBZ for (a) $m=2$ and (b) $m=3$. 
    Two reciprocal lattice vectors $\bG_j=C_m\Pi_j-\Pi_j$, computed for these momenta are linearly-independent.
    }    
    \label{fig:lin-indepG}
\end{figure}

\noindent The above facts imply that one can always find $\Pi$ such that \q{want} holds true. 
Indeed, suppose we pick $\Pi=\Pi_1$, then either \q{want} holds true, or  $(C_m\Pi-\Pi)=\bG_1$ is orthogonal to $(\br_{a,\perp}-\br_{b,\perp})$. The latter case means that $(\br_{a,\perp}-\br_{b,\perp})$ is proportional to $\bR_2$ with a non-integer-valued proportionality constant. 
It would then follow that a different choice for $\Pi=\Pi_2$ would render  \q{want} true. This completes the proof of injectivity, and hence also of bijectivity of the map (\ref{eqn:bijectiveness-BBR}).

\section{A relation between the first Chern class and basis band representations}\label{app:no-band-inv-from-Chern}

In \app{sec:symmetryequiv345}, we prove that \textit{for any two-band, $\mathrm{P}n$-symmetric ($n=3,4,6$), insulating Hamiltonian  with  trivial first Chern class, its valence band (and also its conduction band) is a band representation that is symmetry-equivalent to one of the two basis band representations.}
To remind the reader, a basis band representation BBR$[\varphi_\alpha]$ is induced from one of the two tight-binding-basis orbitals $\varphi_1$ or $\varphi_2$, and to be symmetry-equivalent as band representations means there exists a space-group-equivariant isomorphism between the two corresponding vector bundles. 
In \app{sec:symmetryequiv2}, we explain why the above italicized statement does not hold for $\mathrm{P}2$ symmetry.

\subsection{\texorpdfstring{$\mathrm{P}n$}{Pn} symmetry \texorpdfstring{$(n>2)$}{(n>2)}: Trivial Chern class implies symmetry equivalence to basis band representation}\la{sec:symmetryequiv345}

Because two unit-rank band representations of $\mathrm{P}n$ are symmetry-equivalent if and only if their itinerant angular momenta coincide at all rotation-invariant $\bk$ [as proven in \app{app:onetoone}], the above italicized statement is equivalent to: for any two-band, $\mathrm{P}n$-symmetric ($n=3,4,6$), insulating Hamiltonian with trivial first Chern class, its valence band is characterized by the same set of itinerant angular momenta as one of $\{\mathrm{BBR}[\varphi_\alpha]\}_{\alpha=1,2}$. 
The last statement on itinerant angular momenta may be translated symbolically as\footnote{Although we have phrased (\ref{eq:app:basisBR-statement}) using an `exclusive or', the fact that BBR$[\varphi_1]$ and BBR$[\varphi_2]$ are not symmetry-equivalent implies that the two statements combined in the proposition (\ref{eq:app:basisBR-statement}) can never be true simultaneously. Therefore, one can equivalently treat the proposition as involving the usual `inclusive or', the negation of which is the proposition~(\ref{eq:app:invertion}).\label{foot:ORvsXOR}}
\begin{equation}\label{eq:app:basisBR-statement}
\begin{split}
\mathrm{either}\;\; \forall\,\Pi &:\;\widetilde{\mathcal{L}}_v(\Pi)=\widetilde{\mathcal{L}}_1(\Pi), \\ 
\mathrm{or}\;\; \forall\, \Pi &:\;\widetilde{\mathcal{L}}_v(\Pi)=\widetilde{\mathcal{L}}_2(\Pi).
\end{split}
\end{equation}
To remind the reader of the above symbolic notations: the itinerant angular momenta of the valence (conduction) band are denoted as $\widetilde{\mathcal{L}}_{v(c)}(\Pi)$, while the itinerant angular momenta of BBR$[\varphi_{1(2)}]$ are denoted as $\widetilde{\mathcal{L}}_{1(2)}(\Pi)$. 
If $\Pi$ is $C_m$-invariant (with $m$ the maximal rotational order of the little group), then both $\widetilde{\mathcal{L}}_{v(c)}(\Pi)$ and $\widetilde{\mathcal{L}}_{1(2)}(\Pi)$ are mod-$m$ quantities ($\in\{0,\dots,m-1\}$); at each $\Pi$, $\{\widetilde{\mathcal{L}}_{1}(\Pi),\widetilde{\mathcal{L}}_{2}(\Pi)\}=\{\widetilde{\mathcal{L}}_{v}(\Pi),\widetilde{\mathcal{L}}_{c}(\Pi)\}$. 

Our proof will utilize the mod-$n$ relation (proven in Appendix~\ref{app:Chern-via-ell}) between the Chern number and the itinerant angular momenta [cf.~Eq.~\eqref{eq:app:Chern-via-ell}]. 
We proceed to prove proposition (\ref{eq:app:basisBR-statement}) by contradiction, assuming on the contrary that (cf.~footnote~\ref{foot:ORvsXOR})
\begin{equation}\label{eq:app:invertion}
\begin{split}
    \exists \,\Pi\phantom{'} &:\; \widetilde{\mathcal{L}}_v(\Pi)=\widetilde{\mathcal{L}}_2(\Pi)\neq \widetilde{\mathcal{L}}_1(\Pi)  \\
    \mathrm{and} \as \exists \,\Pi'& :\; \widetilde{\mathcal{L}}_v(\Pi')=\widetilde{\mathcal{L}}_1(\Pi')\neq \widetilde{\mathcal{L}}_2(\Pi').
\end{split}
\end{equation}
In the rBZ of a P$n$ symmetric model with $n\in\{3,4,6\}$, one can always find three and only three rotation-inequivalent rotation-invariant reduced momenta $\{\Pi_j\}_{j=1,2,3}$, such that $\Pi_j$ is $C_{m_j}$-invariant, and all rotation-invariant momenta in the rBZ can be obtained from $\{\Pi_j\}_{j=1,\dots 3}$ by a $C_n$ rotation. 
Momentum $\Pi_j$ belongs to an $(n/m_j)$-multiplet of reduced momenta that is obtained by applying $n$-fold rotation(s) to $\Pi_j$; because the orbital inducing the valence band representation is centered at a $C_n$-invariant Wyckoff position, the itinerant angular momenta at all members of the multiplet are identical. 
Therefore, our assumption \eqref{eq:app:invertion} can be rewritten, modulo trivial relabelling of numeric subscripts, as
\e{
    &\widetilde{\mathcal{L}}_v(\Pi_1)=\widetilde{\mathcal{L}}_2(\Pi_1)\neq \widetilde{\mathcal{L}}_1(\Pi_1),\label{eq:app:one-inversion1}\\
    &\widetilde{\mathcal{L}}_v(\Pi_2)=\widetilde{\mathcal{L}}_1(\Pi_2)\neq \widetilde{\mathcal{L}}_2(\Pi_2),\label{eq:app:one-inversion2}\\
    &\widetilde{\mathcal{L}}_v(\Pi_3)=\widetilde{\mathcal{L}}_1(\Pi_3) \as \textrm{or} \as \widetilde{\mathcal{L}}_v(\Pi_3)=\widetilde{\mathcal{L}}_2(\Pi_3).
    \label{eq:app:one-inversion3}
}

First, assuming the first option in Eq.~\eqref{eq:app:one-inversion3}, we compute the Chern number of the valence band in the rotation-preserving 2D cut by means of Eq.~\eqref{eq:app:Chern-via-ell},
\begin{align}
    \mathscr{C}_{xy}^{v} & =_n\!\sum_{\Pi}\widetilde{\mathcal{L}}_{v}(\Pi)=_n\!\!\sum_{\Pi}\widetilde{\mathcal{L}}_1(\Pi) \!+\! \frac{n}{m_1}[\widetilde{\mathcal{L}}_2(\Pi_1)-\widetilde{\mathcal{L}}_1(\Pi_1)] \\
    & =_n \frac{n}{m_1}\left[\widetilde{\mathcal{L}}_2(\Pi_1)-\widetilde{\mathcal{L}}_1(\Pi_1)\right], \label{eq:non-basis-BR-Chern} 
\end{align}
where the summation goes over all rotation-invariant momenta in the rBZ. 
In the first row we used that momentum $\Pi_1$ is a member of an $n/m_1$-plet, and in the second row we used that the basis band representation has a vanishing Chern number~\cite{nogo_AAJH}. 
Because $\widetilde{\mathcal{L}}_2(\Pi_1)\neq \widetilde{\mathcal{L}}_1(\Pi_1)$ [cf.~Eq.~(\ref{eq:app:one-inversion1})] and $\widetilde{\mathcal{L}}_{1(2)}(\Pi_1)\in\{0,\dots,m_1-1\}$, their difference is nonzero and is never a multiple of $m_1$; therefore, we conclude that the Chern number of the valence band is necessarily non-zero, which contradicts our initial assumption. 
If, instead, we opt for the second option in Eq.~\eqref{eq:app:one-inversion3}, similar steps as in Eq.~\eqref{eq:non-basis-BR-Chern} and the fact that $\widetilde{\mathcal{L}}_2(\Pi_2)\neq \widetilde{\mathcal{L}}_1(\Pi_2)$ [cf.~Eq.~(\ref{eq:app:one-inversion2})] and $\widetilde{\mathcal{L}}_{1(2)}(\Pi_2)\in\{0,\dots,m_2-1\}$ lead to a Chern number
\e{
\mathscr{C}_{xy}^{v}=_n \frac{n}{m_2}\left[\widetilde{\mathcal{L}}_1(\Pi_2)-\widetilde{\mathcal{L}}_2(\Pi_2)\right]\neq_n 0,}
which again contradicts the initial assumption.
This together proves the proposition
\eqref{eq:app:basisBR-statement}.

\subsection{\texorpdfstring{$\mathrm{P}2$}{P2} symmetry: trivial Chern class does not imply symmetry equivalence to a basis band representation}\la{sec:symmetryequiv2}

In the case of $\mathrm{P}2$-symmetric Hamiltonians, the rBZ contains \emph{four} (rather than three, as above) rotation-inequivalent $C_2$-invariant reduced momenta. 
We remark that the presented calculation \eqref{eq:non-basis-BR-Chern}, which results in a non-zero value of the Chern number, also applies to the case of space group P$2$ when the valence itinerant angular momentum is given by Eqs.~\eqref{eq:app:one-inversion1}--\eqref{eq:app:one-inversion3} together with $\widetilde{\mathcal{L}}_v(\Pi_4)=\widetilde{\mathcal{L}}_j(\Pi_4)$ given that $\widetilde{\mathcal{L}}_v(\Pi_3)=\widetilde{\mathcal{L}}_j(\Pi_3)$. 
However a \emph{vanishing} 
Chern number is compatible with the following choice of the valence itinerant angular momenta
\begin{align}
    &\widetilde{\mathcal{L}}_v(\Pi_1)=\widetilde{\mathcal{L}}_2(\Pi_1)\neq \widetilde{\mathcal{L}}_1(\Pi_1),\label{eq:app:two-inversion1}\\
    &\widetilde{\mathcal{L}}_v(\Pi_2)=\widetilde{\mathcal{L}}_2(\Pi_2)\neq \widetilde{\mathcal{L}}_1(\Pi_2),\label{eq:app:two-inversion2}\\
    &\widetilde{\mathcal{L}}_v(\Pi_3)=\widetilde{\mathcal{L}}_1(\Pi_3)\neq \widetilde{\mathcal{L}}_2(\Pi_3),
    \label{eq:app:two-inversion3} \\
    &\widetilde{\mathcal{L}}_v(\Pi_4)=\widetilde{\mathcal{L}}_1(\Pi_4)\neq \widetilde{\mathcal{L}}_2(\Pi_4).
    \label{eq:app:two-inversion4}
\end{align}
The Chern number in this case can be computed as
\begin{align}
    \mathscr{C}_{xy}^v
    =_2\!\sum_{j=1}^4\widetilde{\mathcal{L}}_v(\Pi_j)
    =_2\!\sum_{j=1}^4\widetilde{\mathcal{L}}_1(\Pi_j)+\!\sum_{j=1}^2\Delta\widetilde{\mathcal{L}}(\Pi_j)
    =_2 0,
\end{align}
where we used that by assumption $\Delta\widetilde{\mathcal{L}}(\Pi_j)\neq0$, and therefore $\Delta\widetilde{\mathcal{L}}(\Pi_j)=1$, $j=1,2$.
This scenario (dubbed `case P2-II') 
is therefore studied separately when analyzing 
the relation between the Hopf and the RTP invariants in Sec.~\ref{sec:RTP-Hopf}.

\section{Time reversal of a crystalline Hopf Hamiltonian} \label{app:time-rev-ham}

In the main text we stated that a crystalline Hopf Hamiltonian, with the difference in angular momentum between two basis orbitals equal to $-\Delta\mathcal{L}$, can be obtained from a Hamiltonian with this difference being $\Delta\mathcal{L}$ through the application of time reversal. 
This means that the two Hamiltonians are related by \e{h_{-\Delta\mathcal{L}}(\bk)= h_{\Delta\mathcal{L}}^*(-\bk).\la{timereverse2}}
In this Appendix we prove this relation.

The crystalline-Hopf-insulating Hamiltonian $h_{\Delta\mathcal{L}}(\bk)$ is generally not time-reversal symmetric, because a nontrivial Hopf invariant is incompatible with time-reversal symmetry [cf.~Appendix~\ref{app:incompatibility}].
Nevertheless, to derive \q{timereverse2}, it is useful to consider $h_{\Delta\mathcal{L}}(\bk)$ as a submatrix of a larger-dimensional matrix Hamiltonian $H(\bk)$ which \textit{is} time-reversal-symmetric.

Let us label a Bloch-state basis for $H(\bk)_{\ab}$  by $\{\ket{\bk,\alpha}\rangle\}_{\alpha=1,\ldots, N_{T}}$ [cf.\ analogous \q{eq:app:Bloch-basis}], with $N_T>2$ the total number of basis states; and with $\{\ket{\bk,\alpha}\rangle\}_{\alpha=1,2}$ coinciding with the $\bk$-dependent basis of $h_{\Delta\mathcal{L}}(\bk)$; this implies that at each $\bk$ we have  $[h_{\Delta\mathcal{L}}]_{\ab}=H_{\ab}$ if $\alpha$ and $\beta$ are restricted to $\{1,2\}.$ 
Because the tight-binding Hilbert space given by $\{\ket{\bk,\alpha}\rangle\}_{\alpha=1,\ldots, N_{T}}$ is a band representation, the crystallographic splitting theorem~\cite{crystalsplit_AAJHWCLL} guarantees it is always possible to decompose the Hilbert space into $N_T$ line bundles which each has trivial first Chern class, such that both point-group generators (the $n$-fold rotation $C_n$ and time-reversal $\Theta$) act on the projectors
\e{ P_{\alpha}:= \sum_{\bk} \ket{\bk,\alpha}\rangle\langle\bra{\bk,\alpha}}
as a permutation on the basis labels:
\e{ \hat{C}_n P_{\alpha} \hat{C}_n^{-1} \eq P_{C_n\circ \alpha}  \lin
    \hat{\Theta} P_{\alpha} \hat{\Theta}^{-1} \eq P_{\Theta\circ \alpha}. \la{timereverseP}}
Assuming all $N_T$ tight-binding-basis orbitals are centered on $C_n$-invariant Wyckoff positions, and each orbital gives a one-dimensional representation of the order-$n$ cyclic subgroup of the corresponding  site-stabilizers, then $C_n$ acts as the trivial permutation: $C_n\circ \alpha=\alpha$ for all $\alpha\in \{1\ldots N_T\}$. 
By our assumption that at least one of the on-site angular momenta is nontrivial, time reversal will then act as a nontrivial permutation, \aac{meaning $\Theta\circ \alpha$ is not equal to $\alpha$ for all $\alpha\in \{1\ldots N_T\}.$}
For integer-spin representations (where $\Theta^2$ equals the identity operator), it is possible (for tight-binding-basis orbitals with trivial on-site angular momenta) that $\Theta\circ {\alpha}=\alpha$;  for half-integer-spin representations, it is guaranteed that $\Theta\circ {\alpha}\neq \alpha$. 
In all cases, \q{timereverseP} implies the following constraint on  the Bloch states:
\e{\hat{\Theta}\ket{\bk,\alpha}\rangle =\rho_\alpha(\bk)\ket{\bk,\Theta\circ \alpha}\rangle.}
One can always redefine $\{\ket{\bk,\alpha}\rangle\}_{\alpha=1,2}$ by a multiplicative phase factor such that  $\rho_{\alpha}(\bk)=1$ for $\alpha\in \{1,2\}$. 

We next apply the time-reversal-symmetry constraint to derive a relation between $h_{\Delta\mathcal{L}}$ and a different submatrix of $H$. Letting $\hat{H}$ denote the second-quantized form of the tight-binding Hamiltonian $H$, we apply that  $\hat{H}$ is self-adjoint, time-reversal symmetric ($[\hat{H},\Theta]=0$), and apply also the anti-unitary nature of time reversal ($\braket{\Theta\psi}{\Theta\phi}=\braket{\phi}{\psi}$), to derive \e{ [h_{\Delta\mathcal{L}}]_{\ab}(\bk) \eq \Big\langle\!\braket{\bk,\alpha }{\hat{H} (\bk,\beta)}\!\Big\rangle = \Big\langle\!\braket{\Theta\hat{H}(\bk,\beta) }{\Theta(\bk,\alpha)}\!\Big\rangle \lin 
\eq \big\langle\braopket{\Theta(\bk,\beta) }{\hat{H}}{\Theta(\bk,\alpha)}\big\rangle=\big\langle\braopket{-\bk,\bar{\beta} }{\hat{H}}{-\bk,\bar{\alpha}}\big\rangle\lin
\eq H(-\bk)_{\bar{\beta}\bar{\alpha}}= H(-\bk)^*_{\bar{\alpha}\bar{\beta}}.\la{antiunitary}}
What can be said about this submatrix obtained from restricting $H$ to row/column indices: $\bar{1}$ and $\bar{2}$?
Given that the difference in on-site angular momenta (of the tight-binding-basis orbitals of $h_{\Delta\mathcal{L}}$) is positive: $\mathcal{L}_2-\mathcal{L}_1=\Delta \mathcal{L}>0$, then the corresponding quantity (for the time-reversed basis orbitals) must be opposite in sign:  $\mathcal{L}_{\bar{2}}-\mathcal{L}_{\bar{1}}=-\Delta \mathcal{L}<0$; time reversal being a local operation in real space also implies that the Wyckoff positions of a basis orbital is invariant under time-reversal, hence: $\br_{2,\perp}-\br_{1,\perp}=\br_{\bar{2},\perp}-\br_{\bar{1},\perp}$. To recapitulate, the restriction of $H(\bk)$ to the submatrix with row/column indices $\bar{1}$ and $\bar{2}$ defines a two-band Hamiltonian, whose tight-binding-basis orbitals differ in on-site angular momentum by $-\Delta \mathcal{L}$; hence one may identify $h_{-\Delta \mathcal{L}}(\bk)_{\ab}=H(\bk)_{\bar{\alpha},\bar{\beta}}$, with $\alpha,\beta\in \{1,2\}$. This combines with \q{antiunitary} to give \q{timereverse2}.

\section{Proof of the angular-momentum anomaly of insulators with RTP invariant}\label{app:ang-mom-anomaly}

In this Appendix we prove that for any model Hamiltonian with nonzero RTP invariant $\Delta\mathscr{P}_{\Lambda\Xi}$ between two rotation-invariant reduced momenta, the same model with a surface termination manifests an angular-momentum anomaly. 
We shall concern ourselves with surface states that are localized to the surface but are extended as Bloch waves in the two independent directions parallel to the facet; in particular, we will focus on surface states whose reduced momenta are rotation-invariant. 

We consider the generic case of multi-band models with $N_v$ valence and $N_c$ conduction bulk bands, assuming that the former are enumerated by the index $j\in\{1,\dots N_v\}$ and the latter by the index $j\in\{N_v+1,\dots N_v+N_c\}$. For the RTP invariant $\Delta\mathscr{P}_{\Lambda\Xi}$ to be well-defined, we have assumed that (\emph{i}) bulk states 
along rotation-invariant lines $\gamma_\Lambda$ and $\gamma_\Xi$ 
fulfill the mutually-disjoint condition, meaning that the sets of itinerant angular momenta of all valence bands and of all conduction bands are disjoint: $\{\widetilde{\mathcal{L}}_j(\Pi)\}_{j=1}^{N_v}\cap\{\widetilde{\mathcal{L}}_j(\Pi)\}_{j=N_v+1}^{N_v+N_c}=\varnothing$, for $\Pi\in\{\Lambda,\Xi\}$, and (\emph{ii}) there exist a composite band representation CBR such that the valence band representation, restricted to reduced momenta $\Lambda$ and $\Xi$ equals to the restriction of CBR to these momenta.  
The particular case of a two-band model corresponds to $N_v=N_c=1$; the mutually-disjoint condition is fulfilled when $\widetilde{\mathcal{L}}_v(\Pi)\neq\widetilde{\mathcal{L}}_c(\Pi)$ for $\Pi\in{\Lambda, \Xi}$ and the composite band representation reduces to a basis band representation induced from one of the basis orbitals BBR$[\varphi_\alpha]$.
If the valence subspace comprises of multiple bands, the electric polarization at a reduced momentum $\bk_\perp$ is given by a sum of electric polarization of all valence bands:
\begin{equation}
    \mathscr{P}(\bk_\perp)=\sum_{j=1}^{N_v}\frac{i}{2\pi}\int\limits_{-\pi}^{\pi} dk_z\bra{u_j(\bk)}\ket{\partial_{k_z}u_j(\bk)},
    \label{eq:app:polarization-many-band}
\end{equation}
where $\ket{u_j(\bk)}$, $j=1,\dots N_v$ is a set of differentiable intra-cell functions that span the valence subspace of a tight-binding Hamiltonian at each $\bk$.
The RTP invariant is a difference in electric polarization given by Eq.~\eqref{eq:app:polarization-many-band} between reduced momenta $\Lambda$ and $\Xi$.

The existence of the angular-momentum anomaly implies that for all states (with reduced, rotation-invariant momentum $\Pi$) localized to the bottom facet, the number of states ($\sharp_{f(\mathrm{b})}\widetilde{\mathcal{L}}_v(\Pi)$) whose itinerant angular momentum belongs to the valence set $\{\widetilde{\mathcal{L}}_j(\Pi)\}_{j=1}^{N_v}$ differs between points $\Lambda$ and $\Xi$ by $\Delta\mathscr{P}_{\Lambda\Xi}$:
\begin{equation}
    \sharp_{f(\mathrm{b})}\widetilde{\mathcal{L}}_v(\Xi)-\sharp_{f(\mathrm{b})}\widetilde{\mathcal{L}}_v(\Lambda)=\Delta\mathscr{P}_{\Lambda\Xi}.
    \label{eq:app:ang-mom-anomaly-statement}
\end{equation}
The subscript $(\mathrm{b})$ reminds us that the above relation holds for the bottom facet; for the opposing `top facet' of a finite but long slab, the above relation holds with $\Delta\mathscr{P}_{\Lambda\Xi}\ri -\Delta\mathscr{P}_{\Lambda\Xi}$.

To prove the statement in Eq.~\eqref{eq:app:ang-mom-anomaly-statement}, we need to formalize the notion of states that are localized at the bottom facet. We do this using the Wannier-cut method reviewed in Appendix~\ref{app:wannier-cut}. 
In the next Appendix~\ref{app:AMA-Wannier} we express the two sides of Eq.~\eqref{eq:app:ang-mom-anomaly-statement} in terms of intra-column states, and show that the expressions are equivalent.

\subsection{Wannier cut method to define faceted projector\label{app:wannier-cut}}

The Wannier-cut method~\cite{Trifunovic:2020} allows to define a \textit{faceted projector} onto $\mathcal{N}_f$ polarization bands, closest to a chosen facet. We choose $\mathcal{N}_f$ large enough such that the faceted projector includes all surface-like polarization bands, that were defined in Sec.~\ref{sec:anomalyprecise}.

In the main text we formulated the angular-momentum anomaly for a semi-infinite geometry. However, here, for convenience, we consider a finite in $z$-direction and infinite in $(x,y)$-plane slab, with unit cells in $z$ directions labeled by $R_z=1,\dots \mathcal{N}_0$, and in the following referred to as layers. For large enough $\mathcal{N}_0$ the top and bottom facets are separated by a large enough distance, such that the angular-momentum anomaly derived for the bottom facet of a finite geometry holds as well on the bottom facet of a semi-infinite geometry. First, we define the projector onto $2\mathcal{N}_f<\mathcal{N}_0$ states at the top and bottom facets. For this we utilise the 
intra-column functions defined for an infinite geometry in Eq.~\eqref{eq:app:intra-column-orbitals}. In a finite slab with periodic boundary conditions, one can define bulk-like intra-column functions $\bar{\mathcal{V}}_{j,\bk_\perp,R_z}(\bar{R}_z,\alpha)$ by replacing an integral over momentum $k_z$ by a finite sum in Eq.~\eqref{eq:app:intra-column-orbitals}. For large enough $\mathcal{N}_0$, bulk-like intra-column functions in finite geometry and bulk intra-column functions in infinite geometry converge to each other. Moreover, the intra-column functions of a finite slab with open boundary conditions, computed far enough from the facets, can be described by the bulk-like intra-column functions $\bar{\mathcal{V}}_{j,\bk_\perp,R_z}(\bar{R}_z,\alpha)$ up to exponentially small corrections. Then the projector onto top and bottom facets can be expressed as:
\begin{equation}
    P_{\mathrm{t\&b}}(\bk_\perp)=\mathbbold{1}-\sum_{j=1}^{N_v+N_c}\sum_{R_z=\mathcal{N}_f+1}^{\mathcal{N}_0-\mathcal{N}_f}\ket{\bar{\mathcal{V}}_{j,\bk_\perp,R_z}}\!\Big\rangle\Big\rangle\Big\langle\Big\langle\!\bra{\bar{\mathcal{V}}_{j,\bk_\perp,R_z}}.
    \label{eq:app:wannier-cut}
\end{equation}
The first term on the right-hand side is the projector onto all $\mathcal{N}_0\times(N_v+N_c)$  states of a finite slab with reduced momentum $\bk_{\perp}$, and is given by the identity matrix in an intra-column Hilbert space with basis vectors indexed by $(R_z,\alpha)$. From this term, we subtract projectors onto a set of bulk-like intra-column states (in both the valence and conduction subspaces), that are localized to the bulk of the finite slab. 
Because of the exponential localization of the intra-column states in the $z$~direction, we approximate $\bar{\mathcal{V}}_{j,\bk_\perp,R_z}(\bar{R}_z,\alpha)\approx 0$ for $|\bar{R}_z-R_z|>R_0$, which holds up to exponentially small corrections for a large enough cut-off scale $R_0$. Then for large enough $\mathcal{N}_f>R_0$ and $\mathcal{N}_0-2\mathcal{N}_f>R_0$,
$P_{\mathrm{t\&b}}(\bk_\perp)$ has finite support on two regions, located respectively near the bottom and the top facet. Note $P_{\mathrm{t\&b}}(\bk_\perp)$, being the difference between an identity matrix and a projection matrix, is itself a projector in the intra-column Hilbert space indexed by ($R_z,\alpha)$.

To define the projector onto the bottom facet we additionally `sandwich' Eq.~\eqref{eq:app:wannier-cut} by a projector onto the bottom half of the slab:
\begin{equation}
P_\mathrm{b}(\bk_\perp) = P_\downarrow P_\mathrm{t\&b}(\bk_\perp)P_\downarrow\qquad\textrm{where}\quad P_\downarrow = \begin{pmatrix}
        \mathbbold{1} & \mathbbold{0} \\
        \mathbbold{0} & \mathbbold{0}
    \end{pmatrix}, \label{eq:app:wannier-cut-bottom}
\end{equation}
and $\mathbbold{1}$ and $\mathbbold{0}$ respectively denote the identity and the zero matrices of the size $\mathcal{N}_0/2\times(N_v+N_c)$, where for simplicity we assumed that $\mathcal{N}_0$ is even. Then $P_{\downarrow}$ defines the projector onto first $\mathcal{N}_0/2$ layers of the slab.

\subsection{Proof of angular-momentum anomaly using intra-column functions\label{app:AMA-Wannier}}

In this section we continue proving the angular-momentum anomaly, expressed by Eq.~(\ref{eq:app:ang-mom-anomaly-statement}).
First, using the notations introduced in Appendix~\ref{app:tb-formalism}, we express the RTP invariant in terms of the intra-column functions. For this,
we express the intra-cell function $u_j(\bk)$ in terms of intra-column functions
\begin{equation}
    u_j^\alpha(\bk_\perp,k_z)=\sum_{q=1}^{N_v}[U^\dagger(\bk_\perp,k_z)]_{jq}\sum_{R_z=-\infty}^{\infty}e^{-ik_z(R_z+z_\alpha)}\mathcal{V}_{q\bk_\perp 0}(R_z,\alpha),
    \label{eq:app:u-via-hwf}
\end{equation}
by inverting Eq.~\eqref{eq:app:intra-column-orbitals} and additionally assuming a gauge transformation $U_{jq}(\bk)$ that mixes bands within the valence subspace. Unitarity of the gauge transformation implies that $\sum_{q=1}^{N_v}[U^\dagger(\bk)]_{jq}[U(\bk)]_{qp}=\delta_{jp}$. After plugging Eq.~\eqref{eq:app:u-via-hwf} into the definition of the electric polarization \eqref{eq:app:polarization-many-band} and performing some algebraic transformations, we get the following expression for the polarization:
\begin{align}
    \mathscr{P}(\bk_\perp) &= \sum_{j,q=1}^{N_v}\int\limits_{-\pi}^\pi\frac{dk_z}{2\pi}[U(\bk_\perp,k_z)]_{jq}\partial_{k_z}[U^\dagger(\bk_\perp,k_z)]_{qj} \lin
    & + \sum_{j=1}^{N_v} \sum_{R_z=-\infty}^{\infty} \sum_{\alpha=1}^{N_v+N_c} \left|\mathcal{V}_{j\bk_\perp 0}(R_z,\alpha)\right|^2R_z \lin 
    & + \sum_{j=1}^{N_v}\sum_{\alpha=1}^{N_v+N_c}\int\limits_{-\pi}^{\pi}\frac{dk_z}{2\pi} \left|u_j^\alpha(\bk_\perp,k_z)\right|^2z_\alpha,
\label{eq:app:pol-via-hwf}
\end{align}
where the first term, responsible for the gauge freedom, is quantized to integers. Moreover, the requirement of a global smooth gauge ensures that this integer is independent of $\bk_\perp$, and therefore, this term drops in the difference in polarization between two reduced momenta:
\begin{align}
    \Delta\mathscr{P}_{\Lambda\Xi}=\sum_{j=1}^{N_v}\sum_{\alpha=1}^{N_v+N_c} &\left[\sum_{R_z=-R_0}^{R_0}\left|\mathcal{V}_{j\bk_\perp 0}(R_z,\alpha)\right|^2R_z\right. \lin 
    &\left.+ \int\limits_{-\pi}^{\pi}\frac{dk_z}{2\pi}\left|u_j^\alpha(\bk_\perp,k_z)\right|^2z_\alpha\right]\Bigg|_\Lambda^\Xi,
    \label{eq:app:pol-diff-hwf}
\end{align}
where we used the assumed truncation of the intra-column functions in unit cells $|R_z|>R_0$. In the following we show that the second term in this expression vanish. Note, that it encodes the dependence of electric polarization on the position of the spatial origin and on the choice of the unit cell via the values of the basis orbitals positions $z_\alpha$. We can rewrite the summation over all valence states in terms of a projector onto valence subspace:
\begin{align}
        &\sum_{j=1}^{N_v}\sum_{\alpha=1}^{N_v+N_c}\int\limits_{-\pi}^{\pi}\frac{dk_z}{2\pi}\left|u_j^\alpha(\Pi,k_z)\right|^2z_\alpha \lin 
        =& \int\limits_{-\pi}^{\pi}\frac{dk_z}{2\pi}\Tr_{\alpha}\left[\hat{z}_\mathrm{cell}\sum_{j=1}^{N_v}\ket{u_j(\Pi,k_z)}\bra{u_j(\Pi,k_z)}\right] \lin 
        =& \int\limits_{-\pi}^{\pi}\frac{dk_z}{2\pi}\Tr_{\alpha}\left[\hat{z}_\mathrm{cell}P^{\mathrm{cell}}_{\mathrm{v}}(\Pi,k_z)\right]=\sum_{\alpha=1}^{N_v}z_\alpha,
        \label{eq:app:pol-origin-dependence}
\end{align}
where $\hat{z}_\mathrm{cell}=\sum_\alpha\ket{\alpha}z_\alpha\bra{\alpha}$ and in the last row we denoted the projector onto the valence subspace in the intra-cell Hilbert space as $P^{\mathrm{cell}}_v$. In the last step, we used the fact that at reduced momenta $\Pi\in\{\Lambda,\Xi\}$ with fulfilled mutually-disjoint condition the valence bands form a basis of the valence subspace. Hence, the projector is $k_z$-independent and is equal to
\begin{equation}
    P^{\mathrm{cell}}_v=\begin{pmatrix}
        \mathbbold{1}_{N_v} & \mathbbold{0} \\
        \mathbbold{0} & \mathbbold{0}
    \end{pmatrix},
    \label{eq:app:val-projector}
\end{equation}
where $\mathbbold{1}_{N_v}$ is an identity matrix of the size $N_v$ and $\mathbbold{0}$ denote zero matrices of complementary sizes, such that the whole projector has the size $N_v+N_c$. We see that the origin-dependent term \eqref{eq:app:pol-origin-dependence} does not depend on the concrete value of $\Pi$, and hence it vanishes in the Eq.~\eqref{eq:app:pol-diff-hwf}. Then the RTP invariant expressed in terms of intra-column functions is
\begin{equation}
    \Delta\mathscr{P}_{\Lambda\Xi}=\sum_{j=1}^{N_v}\sum_{\alpha=1}^{N_v+N_c}\sum_{R_z=-R_0}^{R_0}R_z\left[\left|\mathcal{V}_{j\Xi 0}(R_z,\alpha)\right|^2-\left|\mathcal{V}_{j\Lambda 0}(R_z,\alpha)\right|^2\right].
    \label{eq:app:RTP-via-hwf}
\end{equation}

As the second step of the proof, we express the number of states, localized on the bottom facet, with valence-like itinerant angular momentum at a given reduced momentum $\Pi$, in terms of the bulk-like intra-column functions. 
Given the projector onto the bottom facet, this number can be computed as
\begin{equation}
    \sharp_{f(\mathrm{b})} \widetilde{\mathcal{L}}_v(\Pi)=\Tr\left[P^{\mathrm{column}}_v(\Pi)P_\mathrm{b}(\Pi)\right],
\end{equation}
where 
\begin{equation}
    P^{\mathrm{column}}_v(\bk_\perp)=\sum_{j=1}^{N_v}\sum_{R_z=-\infty}^{\infty}\ket{\bar{\mathcal{V}}_{j\bk_\perp R_z}}\!\Big\rangle\Big\rangle\Big\langle\Big\langle\!\bra{\bar{\mathcal{V}}_{j\bk_\perp R_z}}
\end{equation} 
is the projector onto the valence subspace in the intra-column Hilbert space. 
Taking the difference between two rotation-invariant momenta and substituting the projector onto the bottom-facet-localized states as defined using the Wannier cut in Eqs.~\eqref{eq:app:wannier-cut} and \eqref{eq:app:wannier-cut-bottom}, we find
\begin{align}
    &\sharp_{f(\mathrm{b})} \widetilde{\mathcal{L}}_v(\Xi)-\sharp_{f(\mathrm{b})} \widetilde{\mathcal{L}}_v(\Lambda) \lin 
    &=\Tr\bigg[P_\downarrow\sum_{j=1}^{N_v}\sum_{R_z=\mathcal{N}_f+1}^{\mathcal{N}_0-\mathcal{N}_f}\ket{\bar{\mathcal{V}}_{j,\bk_\perp,R_z}}\!\Big\rangle\Big\rangle\Big\langle\Big\langle\!\bra{\bar{\mathcal{V}}_{j,\bk_\perp,R_z}}\bigg]\Bigg|^\Lambda_\Xi 
    \label{eq:app:AMA-wannier-cut}
\end{align}
where in the second line we used that the identity operators computed for $\Lambda$ and $\Xi$ cancel each other and the projector on the valence subspace $P^{\mathrm{column}}_v$ sets the sum over conduction intra-column states to zero. In the last step of the derivation we use the explicit form of the projector $P_\downarrow$ which sets to zero the values of the intra-column function at layers $z>\mathcal{N}_0/2$. We compute the trace in Eq.~\eqref{eq:app:AMA-wannier-cut} and get:
\begin{align}
    &\sharp_{f(\mathrm{b})} \widetilde{\mathcal{L}}_v(\Xi)-\sharp_{f(\mathrm{b})} \widetilde{\mathcal{L}}_v(\Lambda) \lin 
    & = \sum_{j=1}^{N_v}\sum_{\alpha=1}^{N_v+N_c}\sum_{R_z=\mathcal{N}_f+1}^{\mathcal{N}_0-\mathcal{N}_f}\sum_{R_z'=0}^{\mathcal{N}_0/2}\left|\bar{\mathcal{V}}_{j\bk_\perp R_z}(R_z',\alpha)\right|^2 \Bigg|^\Lambda_\Xi  \lin 
    & = \sum_{j=1}^{N_v}\sum_{\alpha=1}^{N_v+N_c}\sum_{R_z=\mathcal{N}_f+1}^{\mathcal{N}_0-\mathcal{N}_f}\sum_{R_z'=R_z-R_0}^{\min(\mathcal{N}_0/2,R_z+R_0)}\left|\bar{\mathcal{V}}_{j\bk_\perp R_z}(R_z',\alpha)\right|^2 \Bigg|^\Lambda_\Xi  \lin
    & = \sum_{j=1}^{N_v}\sum_{\alpha=1}^{N_v+N_c}\sum_{R_z=\mathcal{N}_f+1}^{\mathcal{N}_0-\mathcal{N}_f}\sum_{\bar{R}_z=-R_0}^{\min(\mathcal{N}_0/2-R_z,R_0)}\left|\bar{\mathcal{V}}_{j\bk_\perp 0}(\bar{R}_z,\alpha)\right|^2 \Bigg|^\Lambda_\Xi.
    \label{eq:app:AMA-hwv}
\end{align}
The summation over $R_z$ in Eq.~\eqref{eq:app:AMA-hwv} can be split into three contributions. (\emph{i}) For $R_z>\mathcal{N}_0/2+R_0$ the sum over $\bar{R}_z$ has no elements as the upper limit of the sum is lower than the lower limit. (\emph{ii}) For $R_z\leq \mathcal{N}_0/2-R_0$ the sum over $\bar{R}_z$ runs from $-R_0$ to $R_0$ and together with sum over orbital index $\alpha$ results in the normalization of the intra-column functions which is equal to one. These terms are equal at both rotation-invariant momenta $\Lambda$ and $\Xi$ and therefore are canceled after subtraction. (\emph{iii}) For intermediate $\mathcal{N}_0/2-R_0<R_z\leq \mathcal{N}_0/2+R_0$ the sum is truncated at $\mathcal{N}_0/2-R_z$ and since the intra-column functions depend only on $\bar{R}_z$ we can rewrite summations over $R_z$ and $\bar{R}_z$ as
\begin{align}
    \sum_{R_z=\mathcal{N}_0/2-R_0+1}^{\mathcal{N}_0/2+R_0}&\sum_{\bar{R}_z=-R_0}^{\mathcal{N}_0/2-R_z}\left|\bar{\mathcal{V}}_{j\bk_\perp 0}(\bar{R}_z,\alpha)\right|^2 \lin 
    &= \sum_{\bar{R}_z=-R_0}^{R_0} (R_0-\bar{R}_z)\left|\bar{\mathcal{V}}_{j\bk_\perp 0}(\bar{R}_z,\alpha)\right|^2.
\end{align}
The term proportional to $R_0$ in this sum, when summed over the orbital index $\alpha$, is constant for the same reason as described above, and therefore it vanishes when taking the difference between rotation-invariant momenta $\Lambda$ and $\Xi$. We arrive to the following expression for the angular momentum difference:
\begin{align}
    &\sharp_{f(\mathrm{b})} \widetilde{\mathcal{L}}_v(\Xi)-\sharp_{f(\mathrm{b})} \widetilde{\mathcal{L}}_v(\Lambda) \lin
    & = \sum_{j=1}^{N_v}\sum_{\alpha=1}^{N_v+N_c}\sum_{\bar{R}_z=-R_0}^{R_0}\bar{R}_z\left[\left|\bar{\mathcal{V}}_{j\Xi 0}(\bar{R}_z,\alpha)\right|^2-\left|\bar{\mathcal{V}}_{j\Lambda 0}(\bar{R}_z,\alpha)\right|^2\right].
    \label{eq:app:AMA-final}
\end{align}
{Finally, recall that for large enough slabs the bulk-like intra-column functions $\bar{\mathcal{V}}_{j\Pi 0}(\bar{R}_z,\alpha)$ converge to the bulk intra-column functions of infinite geometry $\mathcal{V}_{j\Pi 0}(\bar{R}_z,\alpha)$. In the large-size limit, the right-side of Eq.~\eqref{eq:app:AMA-final} becomes identical to the expression for the RTP invariant \eqref{eq:app:RTP-via-hwf}.} This concludes the proof of Eq.~\eqref{eq:app:ang-mom-anomaly-statement}

\section{Proof of Zak-phase anomaly\label{app:Zak-phase-anomaly}}

In this Appendix we provide necessary proves in order to compete our discussion of the Zak-phase anomaly presented in Sec.~\ref{sec:BBC_zak} of the main text. As a reminder, this anomaly manifests the faceted Zak phases computed along certain symmetrically-chosen rBZ loops, which have distinct values from the bulk Zak phases (both conduction and valence). We subdivide this Appendix into two parts: in Appendix~\ref{app:two-segm} we prove the Zak-phase anomaly when the loop can be chosen to be two-segment, and in Appendix~\ref{app:four-segm} we generalize this proof to the cases when a four-segment loop is necessary to detect the Zak-phase anomaly.

\subsection{Zak-phase anomaly for \texorpdfstring{$C_m$}{Cm}-related-two-segment loops}\label{app:two-segm}

We assume that the RTP invariant $\Delta\mathscr{P}_{\Lambda\Xi}>0$ for two rotation-invariant reduced momenta $\Lambda$ and $\Xi$ and that there exists an integer $m>1$ such that at both reduced momenta $\Pi\in\{\Lambda,\Xi\}$ the mutually-disjoint condition is fulfilled modulo $m$: $\widetilde{\mathcal{L}}_c(\Pi) \neq_m \widetilde{\mathcal{L}}_v(\Pi)$. Then the loop, on which one can detect the Zak-phase anomaly is $\mathcal{S}=\Xi\Lambda (\Xi+\bG)$ [cf.~Fig.~\ref{fig:Zak-path-notations}(a) for an example]. This loop consists of two segments, $\Xi\Lambda$ and $\Lambda(\Xi+\bG)$, which are related to each other by a $C_m$ rotation, reversal of orientation and possibly shift by a reciprocal lattice vector in rBZ, $\Lambda (\Xi+\bG)=C_m \Lambda \Xi + \bG'$. 

The techniques to evaluate the Zak phase for a $C_m$-related-two-segment loop, given the itinerant angular momenta at the two $C_m$-invariant momenta in said loop, have been developed in \ocite{AA_wilsonloopinversion} and \ocite{TBO_JHAA}. We review these techniques here as a warm-up for the more complicated case of the four-segment loops, which require a generalization of the known techniques. Much of the intuition behind these techniques can be conveyed by calculating the Zak phase of a single faceted band.

\begin{figure*}
    \centering
    \includegraphics{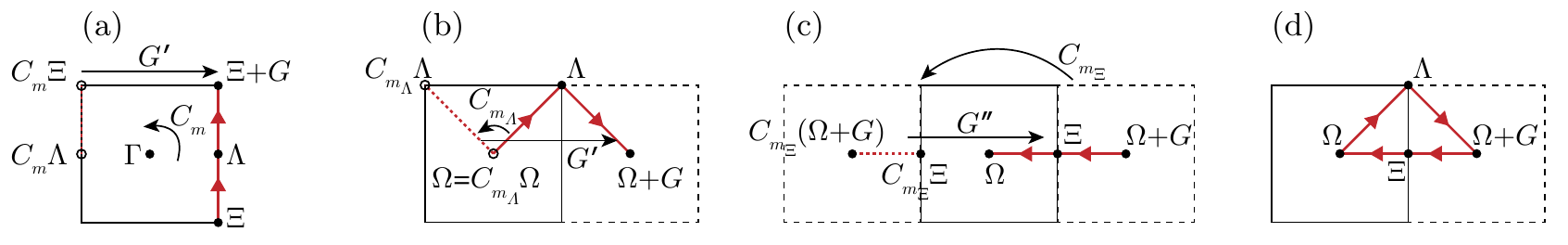}
    \caption{Illustration of notations introduced in Appendix~\ref{app:Zak-phase-anomaly} to compute the Zak phase over a symmetrically chosen loop in rBZ. (a) Notations for a $C_m$-related-two-segment loop. (b--d) Notations for a four-segment loop. In (b) we introduce notations for the first two segments, in (c) for the second two segments. The whole loop is illustrated in (d).}
    \label{fig:Zak-path-notations}
\end{figure*}

\subsubsection{Single isolated band}

To compute the Zak phase of an isolated band we express it as a phase of a corresponding Wilson loop
\begin{equation}
    e^{i\mathscr{Z}_j}=\mathcal{W}^j_{\mathcal{S}}= \hat{\mathrm{T}}\exp\left[i\int\limits_{\mathcal{S}}d\bk_\perp\cdot\boldsymbol{\mathcal{A}}_\perp[u_j(\bk_\perp)]\right],
    \label{eq:app:wilson-loop-rbz}
\end{equation}
\e{
\mathscr{Z}_{\Xi\Lambda (\Xi+\bG),j}
&=_{2\pi} \Im\log \mathcal{W}^j_{(\Xi+\bG)\leftarrow\Lambda\leftarrow\Xi} \lin
&=_{2\pi}\Im\log \mathcal{W}^j_{(\Xi+\bG)\leftarrow\Lambda} \mathcal{W}^j_{\Lambda\leftarrow\Xi} \lin
&=_{2\pi} \Im\log\widetilde{\rho}_{m,j}(\Xi)\left[\mathcal{W}^j_{\Lambda\leftarrow\Xi}\right]^\dagger\,\widetilde{\rho}_{m,j}(\Lambda)^{-1}\mathcal{W}^j_{\Lambda\leftarrow\Xi} \lin
&=_{2\pi} \frac{2\pi}{m}\left[\widetilde{\mathcal{L}}_j(\Xi)-\widetilde{\mathcal{L}}_j(\Lambda)\right]
\label{eq:app:zak-wilson}
}
where in the second row we decomposed a Wilson loop into two Wilson lines. In the third row we related two Wilson lines, computed along symmetry related paths, via  eigenvalues $\widetilde{\rho}_{m,j}(\Pi)$ of the itinerant rotation matrix \eqref{eq:itinerant_rotation}, and also took into account that Wilson lines computed on two paths, which are related to each other by a reciprocal lattice translation, are equal to each other. In the fourth row we applied the definition of itinerant symmetry eigenvalues in terms of itinerant angular momenta $\widetilde{\rho}_{m,j}(\Pi)=\exp(i2\pi\widetilde{\mathcal{L}}_j(\Pi)/m)$ and used that two Wilson lines computed along the same path in opposite directions cancelled out. Note, that a single-band Wilson loop is a complex number and hence all terms in Eq.~\eqref{eq:app:zak-wilson} commute with each other. Substituting the values of the itinerant angular momentum for the faceted, bulk-valence and bulk-conduction bands, we arrive to Eqs.~\eqref{thesame} and \eqref{eq:zak-bulk-surf}.

\subsubsection{Several faceted bands}

\begingroup
\renewcommand{\arraystretch}{1.3}
\setlength{\tabcolsep}{6pt}
\begin{table}
    \centering
    \begin{tabular}{ccc}
    \hline\hline
     & $\Lambda$ & $\Xi$ \\
     \hline
     $\#\widetilde{\mathcal{L}}_v$ & $\mathcal{N}_v^\mathrm{na}$ & $\mathcal{N}_v^\mathrm{na}+\mathcal{N}^{\mathrm{a}}$  \\
     $\#\widetilde{\mathcal{L}}_c$ & $\mathcal{N}_c^\mathrm{na}+\mathcal{N}^{\mathrm{a}}$ & $\mathcal{N}_c^\mathrm{na}$ \\
     \hline\hline
    \end{tabular}
    \caption{Number of bands with fixed itinerant angular momenta at a fixed rotation-invariant point for a set of surface localized bands with $\mathcal{N}_v^\mathrm{na}$ valence-bulk-like $\mathcal{N}_c^\mathrm{na}$ conduction-bulk-like and $\mathcal{N}^{\mathrm{a}}$ anomalous bands.}
    \label{tab:surf_ell}
\end{table}
\endgroup

In the case, when there are several faceted bands we need to apply the theorem, proven in Ref.~\cite{TBO_JHAA}, in order to compute the Zak phases. Let us consider a set of $\mathcal{N}_f$ faceted bands. 
The RTP invariant ensures that there will be $\mathcal{N}^{\mathrm{a}}=|\Delta\mathscr{P}_{\Lambda\Xi}|$ anomalous bands with itinerant angular momenta that are bulk-conduction-like at $\Lambda$ and bulk-valence-like at $\Xi$ [cf.\  Eq.~\eqref{eq:surface_ell}]. 
The amount of non-anomalous faceted bands with bulk-valence-like (bulk-conduction-like) itinerant angular momenta at both $\Lambda$ and $\Xi$, we denote as $\mathcal{N}_v^\mathrm{na}$ ($\mathcal{N}_c^\mathrm{na}$). 
Then the total number of bands with each value of itinerant angular momentum at rotation-invariant $\bk_\perp$-points is summarized in Table \ref{tab:surf_ell}. 
The theorem of Ref.~\cite{TBO_JHAA}, applied to these values of the itinerant angular momenta, guarantees that the given set of faceted bands will have at least $\mathcal{N}^{\mathrm{a}}=\Delta\mathscr{P}_{\Lambda\Xi}$ Zak phases with quantized values
\begin{equation}
    \frac{\mathscr{Z}_f}{2\pi}=_1\frac{\widetilde{\mathcal{L}}_v(\Xi)-\widetilde{\mathcal{L}}_c(\Lambda)}{m}.
\end{equation}

\subsection{Zak-phase anomaly for four-segment loops}\label{app:four-segm}

In this Appendix we discuss in more details the bulk-boundary correspondence of the RTP between $\Lambda$ and $\Xi$ reduced momenta, when 
\begin{align}
    \widetilde{\mathcal{L}}_v(\Lambda)&\neq_{m_\Lambda}\widetilde{\mathcal{L}}_c(\Lambda), \\ 
    \widetilde{\mathcal{L}}_v(\Xi)&\neq_{m_\Xi}\widetilde{\mathcal{L}}_c(\Xi), \\
    m_\Lambda&\neq m_\Xi.
\end{align}

This happens only in two distinct $\mathrm{P}4$-symmetric Hilbert spaces, that were described in the main text. In both cases, one can construct a four-segment loop which is a concatentation of two two-segment rBZ-loops: 
the first loop $\Omega\Lambda(\Omega+\bG)$ consists of two segments related by $C_{m_\Lambda}$ symmetry $\Lambda(\Omega+\bG)=C_{m_\Lambda}(\Lambda\Omega) + \bG'$ [cf.~Fig.~\ref{fig:Zak-path-notations}(b) for an example], 
and second loop $(\Omega+\bG)\Xi\Omega$ consists of two segments related by $C_{m_\Xi}$ symmetry $\Xi\Omega={m_\Xi}(\Xi(\Omega+\bG))+\bG''$ [cf.~Fig.~\ref{fig:Zak-path-notations}(c) for an example]. An example of the entire four-segment loop $\mathcal{S}_4=\Omega\Lambda(\Omega+\bG)\Xi\Omega$ is presented in Fig.~\ref{fig:Zak-path-notations}(d). Here we introduced another rotation-invariant momentum $\Omega$ at which the mutually-disjoint condition is violated $\widetilde{\mathcal{L}}_v(\Omega)=\widetilde{\mathcal{L}}_c(\Omega)=\widetilde{\mathcal{L}}(\Omega)$.

We again consider a set of faceted bands with $\mathcal{N}^{\mathrm{a}}$ anomalous bands and $\mathcal{N}_v^\mathrm{na}+\mathcal{N}_c^\mathrm{na}$ non-anomalous bands. All the values of the itinerant angular momenta are summarized in Table~\ref{tab:itinerant_ell}. Then the Zak phases of these faceted bands are computed as phases of the Wilson loop eigenvalues. The Wilson loop along our chosen loop $\mathcal{S}_4$ is
\e{
&\mathcal{W}_{\Omega\leftarrow\Xi\leftarrow(\Omega+\bG)\leftarrow\Lambda\leftarrow\Omega} \lin
&=\mathcal{W}_{\Omega\leftarrow\Xi}\mathcal{W}_{\Xi\leftarrow(\Omega+\bG)}\mathcal{W}_{(\Omega+\bG)\leftarrow\Lambda}\mathcal{W}_{\Lambda\leftarrow\Omega}  \lin
&= \left[\widetilde{R}_{C_{m_\Xi}}(\Omega)\mathcal{W}^\dagger_{\Xi\leftarrow(\Omega+\bG)}\widetilde{R}_{C_{m_\Xi}}^{-1}(\Xi)\mathcal{W}_{\Xi\leftarrow(\Omega+\bG)}\right. \lin
&\as\cdot\left.\widetilde{R}_{C_{m_\Lambda}}(\Omega)\mathcal{W}^\dagger_{\Lambda\leftarrow\Omega}\widetilde{R}_{C_{m_\Lambda}}^{-1}(\Lambda)\mathcal{W}_{\Lambda\leftarrow\Omega}\right].
\label{eq:app:zak_manybands_closedloop}
}
Here we decomposed the Wilson loop into four Wilson lines, and related every pair of Wilson lines, computed along the symmetry related paths, via the itinerant rotation matrices \eqref{eq:itinerant_rotation}.
Contrary to the derivation of Appendix~\ref{app:two-segm} the Wilson loops and the rotation matrices are square matrices of size $\mathcal{N}_f=\mathcal{N}^\mathrm{a}+\mathcal{N}_v^{\mathrm{na}}+\mathcal{N}_c^{\mathrm{na}}$ and therefore do not commute. To analyze the eigenvalues of this matrix, first, observe that $\widetilde{R}_{C_m}(\Omega)$ is proportional to identity matrix; this is because an eigenvalue of  $\widetilde{R}_{C_m}(\Omega)$ can only take one of two values -- the rotational representation of the bulk-conduction band: $e^{i2\pi \widetilde{\mathcal{L}}_c(\Omega)/m}$ or the rotational representation of the bulk-valence band: $e^{i2\pi \widetilde{\mathcal{L}}_v(\Omega)/m}$; but by assumption, the mutually-disjoint condition is violated at $\Omega$: $\widetilde{\mathcal{L}}_c(\Omega)=\widetilde{\mathcal{L}}_v(\Omega)\equiv \widetilde{\mathcal{L}}(\Omega)$. Thus $\widetilde{R}_{C_m}(\Omega)$ can only modify the Wilson loop  by a constant, multiplicative  factor
\e{
&\mathcal{W}_{\Omega\leftarrow\Xi\leftarrow(\Omega+\bG)\leftarrow\Lambda\leftarrow\Omega} \lin
&= \left[\mathcal{W}^\dagger_{\Xi\leftarrow(\Omega+\bG)}\widetilde{R}_{C_{m_\Xi}}^{-1}(\Xi)\mathcal{W}_{\Xi\leftarrow(\Omega+\bG)}\mathcal{W}^\dagger_{\Omega\Lambda}\widetilde{R}_{C_{m_\Lambda}}^{-1}(\Lambda)\mathcal{W}_{\Lambda\leftarrow\Omega}\right] \lin
&\as*\exp\left( i2\pi\left[\widetilde{\mathcal{L}}(\Omega)/m_\Lambda+\widetilde{\mathcal{L}}(\Omega)/m_\Xi\right]\right)
\label{eq:app:zak-D-contribution}
}
In the following denote the Wilson lines as $Z_1=\mathcal{W}_{\Lambda\leftarrow\Omega}$ and $Z_2=\mathcal{W}_{\Xi\leftarrow(\Omega+\bG)}$ which are unitary matrices. Thus eigenvalues of the matrices $\widetilde{R}^{-1}_{C_{m_\Lambda}}(\Lambda)$ (resp.~$\widetilde{R}^{-1}_{C_{m_\Xi}}(\Xi)$) and  $Z_1^\dagger\widetilde{R}^{-1}_{C_{m_\Lambda}}(\Lambda)Z_1$ (resp.~$Z_2^\dagger\widetilde{R}^{-1}_{C_{m_\Xi}}(\Xi)Z^2$) are the same. 
One can always find a projector $\mathscr{B}_1$ (resp.~$\mathscr{B}_2$) onto the eigenspace of $Z_1^\dagger\widetilde{R}^{-1}_{C_{m_\Lambda}}(\Lambda)Z_1$ (resp.~$Z_2^\dagger\widetilde{R}^{-1}_{C_{m_\Xi}}(\Xi)Z_2$) corresponding to the eigenvalue $\varrho_1=\exp(-i2\pi\widetilde{\mathcal{L}}_c(\Lambda)/{m_\Lambda})$  (resp.~$\varrho_2=\exp(-i2\pi\widetilde{\mathcal{L}}_v(\Xi)/m_\Xi)$). 
Denote projectors onto the orthogonal complements as $\mathscr{D}_j=I_{\mathcal{N}_f} - \mathscr{B}_j$, $j=1,2$. Then the Wilson loop can be rewritten as
\e{
&\mathcal{W}_{\Omega\leftarrow\Xi\leftarrow(\Omega+\bG)\leftarrow\Lambda\leftarrow\Omega} \lin
&=\left[\mathscr{B}_2\mathscr{B}_1\varrho_2\varrho_1 + \mathscr{B}_2\varrho_2\mathscr{D}_1Z_1^\dagger\widetilde{R}^{-1}_{C_{m_\Lambda}}(\Lambda)Z_1\mathscr{D}_1 \right.\lin
&\as  
+ \mathscr{D}_2 Z_2^\dagger\widetilde{R}^{-1}_{C_{m_\Xi}}(\Xi)Z_2\mathscr{D}_2\mathscr{B}_1\varrho_1 \lin
&\as\left. 
+ \mathscr{D}_2Z_2^\dagger\widetilde{R}^{-1}_{C_{m_\Xi}}(\Xi)Z_2\mathscr{D}_2\mathscr{D}_1Z_1^\dagger\widetilde{R}^{-1}_{C_{m_\Lambda}}(\Lambda)Z_1\mathscr{D}_1
\right] \lin
& \as*\exp\left( i2\pi\left[\widetilde{\mathcal{L}}(\Omega)/m_\Lambda+\widetilde{\mathcal{L}}(\Omega)/m_\Xi\right]\right).
\label{eq:app:zak-BD-projectors}
}
Knowing the itinerant angular momenta of the faceted bands at rotation-invariant points $\Lambda$ and $\Xi$ [cf.~Table~\ref{tab:surf_ell}], we find the ranks of the projectors $\rank(\mathscr{B}_1)=\mathcal{N}_c^\mathrm{na}+\mathcal{N}^\mathrm{a}$ and $\rank(\mathscr{B}_2)=\mathcal{N}_v^\mathrm{na}+\mathcal{N}^\mathrm{a}$.
Because the dimensions of all matrices considered are identically equal to ($\mathcal{N}_v^\mathrm{na}+\mathcal{N}_c^\mathrm{na}+\mathcal{N}^\mathrm{a})$, it follows from dimension counting that
the two vector spaces projected by $\mathscr{B}_1$ and $\mathscr{B}_2$ have a nonzero intersection with dimension at least equal to $\mathcal{N}^\mathrm{a}$. In other words, any vector $\ket{v}$ in this at-least-$\mathcal{N}^\mathrm{a}$-dimensional subspace satisfies
\e{ \mathscr{B}_1\ket{v}=\mathscr{B}_2\ket{v}=\ket{v}, \as \mathscr{D}_1\ket{v}=\mathscr{D}_2\ket{v}=0.}
It follows from the decomposition of the Wilson-loop matrix in \q{eq:app:zak-BD-projectors} that \e{\mathcal{W}\ket{v}=\varrho_2\varrho_1\exp\left( i2\pi\left[\widetilde{\mathcal{L}}(\Omega)/m_\Lambda+\widetilde{\mathcal{L}}(\Omega)/m_\Xi\right]\right)\ket{v},}
with an eigenvalue that does not depend on $Z_j$. Thus we conclude that there are at least $\mathcal{N}^\mathrm{a}$-number of Zak phases quantized to
\e{
\mathscr{Z}=2\pi\left[\frac{\widetilde{\mathcal{L}}(\Omega)-\widetilde{\mathcal{L}}_c(\Lambda)}{m_\Lambda} + \frac{\widetilde{\mathcal{L}}(\Omega)-\widetilde{\mathcal{L}}_v(\Xi)}{m_\Xi}\right].
\label{eq:app:zak-final}
}
As discussed in Sec.~\ref{sec:zakanomalymain} this phase is distinct from both bulk-valence and bulk-conduction Zak phases. This completes the proof of the Zak-phase anomaly.

\section{Hopf-RTP relation proof via Hopf-Chern relation}\label{app:hopf-rtp-whitehead}

In this Appendix we present an alternative proof of the relation between the Hopf and the RTP invariants. {We focus only on the case when the valence band representation is symmetry-equivalent to a basis band representation, as detailed in Sec.~\ref{sec:mainresult} of the main text. In this case the invariants are related as }
\begin{equation}
    \chi=_n\sum_{j=2}^{J}\Delta\widetilde{\mathcal{L}}(\Pi_j)\Delta\mathscr{P}_{\Pi_1\Pi_j},
    \label{eq:app:Hopf-RTP}
\end{equation}
{where $\{\Pi_j\}_{j=1\dots J}$ denotes a maximal subset of rotation-invariant reduced momenta which each satisfies the mutually-disjoint condition: $\Delta\widetilde{\mathcal{L}}(\Pi_j)\neq0$.}
This proof is based {on the Hopf-Chern relation that was originally introduced for the Hopf map by Whitehead in Ref.~\cite{Whitehead:1947}.} 
We first review this original relation in Sec.~\ref{sec:whitehead}. We then extend {its} applicability 
to $\bk$-periodic tight-binding Hamiltonians in  Sec.~\ref{sec:whiteheadthree} and to $\bk$-nonperiodic Hamiltonians in Sec.~\ref{sec:whiteheadnon}, then finally we apply {the Hopf-Chern relation} to derive the Hopf-RTP relation \eqref{eq:app:Hopf-RTP} in Sec.~\ref{sec:whitehead_rotation}.

\subsection{Hopf-Chern relation for the Hopf map \label{sec:whitehead}}

Let us review Whitehead's equivalent formulation\cite{Whitehead:1947} of the Hopf invariant of the Hopf map. It says that the Hopf invariant of a map 
\begin{align}
\eta: \begin{cases} S^3\to S^2,\\
        \gamma(b) \mapsto b                      \end{cases}
\end{align}
is given by the Chern number $\mathscr{C}_{\Sigma}$ computed on the oriented surface $\Sigma$ stretched over an inverse image $(\partial\Sigma = \gamma)$ of a given base point $b$ on the 2-sphere
\begin{align}
    \chi = \mathscr{C}_{\Sigma(b)},\as  \partial\Sigma(b) = \gamma(b).
    \label{eq:whitehead}
\end{align}
We call this relation between the invariants the \emph{Hopf-Chern relation}.
As the boundary $\partial [\Sigma (b)]$ is mapped to the fixed base point $b$ on the 2-sphere, it can be treated as a closed surface with  a well-defined Chern number.

\subsection{Hopf-Chern relation for \texorpdfstring{$\bk$}{k}-periodic Hamiltonians \label{sec:whiteheadthree}}

The Hopf-Chern relation can be directly applied to compute the Hopf invariant of a two-band Hamiltonian $\tilde{h}(\bk)$ defined over the Brillouin torus (meaning $\tilde{h}(\bk)=\tilde{h}(\bk+\bG)$ for any reciprocal lattice vector $\bG$), assuming that $\tilde{h}(\bk)$ is in the trivial first Chern class.

To see this, consider a flattened Hamiltonian $\tilde{h}_\mathrm{flat}(\bk)=\tilde{h}(\bk)/E_+(\bk)=\tilde{\boldsymbol{h}}_\mathrm{flat}(\bk)\cdot\bsigma$ which defines a map $\tilde{\boldsymbol{h}}_\mathrm{flat}:T^3\to S^2$ with the 3-torus given by momenta in the first BZ and the 2-sphere given by a normalized vector of the coefficients in a Pauli matrix decomposition of a flattened Hamiltonian.
Given the vanishing Chern numbers $\mathscr{C}_i =0$ on the three BZ subtori, there exists a continuous deformation that deforms $\tilde{\boldsymbol{h}}_\mathrm{flat}(\bm{k})$ on the boundary of the first BZ to a constant. 
From the topological perspective, this allows us to identify the whole boundary as a single point~\cite{Hatcher:2002}, turning the first BZ into a $3$-sphere. {Such a deformation does not modify nor the Hopf invariant, neither the Chern number computed over the surface bounded by a preimage. Therefore, the Hopf-Chern relation, that is valid for a map from a 3-sphere is also valid for a map from a 3-torus, given by the Hopf insulator Hamiltonian.}

To understand how the Hopf-Chern relation \eqref{eq:whitehead} can be applied to the Hopf insulator, consider some base point on a 2-sphere $b\in S^2$ [green point in Fig.~\ref{fig:whitehead}(b)]. The inverse image $\tilde{\boldsymbol{h}}_\mathrm{flat}^{-1}(b)\subset T^3$ [green path in Fig.~\ref{fig:whitehead}(a)] is not necessarily path-connected, but it can be decomposed into path-connected components, $\tilde{\boldsymbol{h}}_\mathrm{flat}^{-1}(b)=\bigcup_i\gamma_i$, where $\gamma_i$ are generically one-dimensional (possibly non-contractible) closed loops~\cite{kennedy_hopfchern}.
Generically, each loop $\gamma_i$ can be uniquely assigned an \emph{orientation} by the following procedure. Consider an arbitrary reference point $\bm{k}_0\in\gamma_i$ and define a plane passing through this point and normal to $\gamma_i$ at $\bm{k}_0$. Choose two orthogonal basis vectors in this plane, labelled $\hat{\bm{e}}_1$ and $\hat{\bm{e}}_2$. Then, in this plane, a circle infinitesimally close to $\gamma_i$ [blue line in Fig.~\ref{fig:whitehead}(a)] is parametrized by $\bm{k}({\theta})=\bm{k}_0+\varepsilon [\hat{\bm{e}}_1\cos(\theta)+\hat{\bm{e}}_2\sin(\theta)]$ with $\theta\in[0,2\pi]$ and $\varepsilon\to 0$. By considering the generic linear-order expansion of $\tilde{\boldsymbol{h}}_\mathrm{flat}(\bm{k})$ around the reference point, it follows that the image of $\bm{k}({\theta})$ under the Hamiltonian map $\tilde{\boldsymbol{h}}_\mathrm{flat}$ is necessarily a closed loop that winds around the base point $b$ [blue line in Fig.~\ref{fig:whitehead}(b)].
The orientation of $\gamma_i$ at $\bm k_0$ is defined as $\bm{\tau} (\bm k_0)= +(-)\hat{\bm{e}}_1\cross\hat{\bm{e}}_2$ for a clockwise (counterclockwise) winding of $\tilde{\boldsymbol{h}}_\mathrm{flat}[\bm{k}(\theta)]$. {Note that this definition does not depend on the choice of the basis vectors $\hat{\bm{e}}_1$ and $\hat{\bm{e}}_2$.}\footnote{Let us assume that we have picked another pair of basis vectors $\hat{\bm{e}}'_1=\hat{\bm{e}}_1$ and $\hat{\bm{e}}'_2=-\hat{\bm{e}}_2$. Then for the same $\theta$ parametrization, momentum $\bk(\theta)$ will rotate in the opposite direction, and therefore the image $\tilde{\boldsymbol{h}}_\mathrm{flat}[\bm{k}(\theta)]$ will also rotate in the opposite direction. At the same time the cross product $\hat{\bm{e}}'_1\times\hat{\bm{e}}'_2=-\hat{\bm{e}}_1\times\hat{\bm{e}}_2$. Overall, the value of the preimage orientation will not change.} In Fig.~\ref{fig:whitehead}(a) the orientations of $\gamma_i$ are shown by green arrows.  
\medskip

Let us make three remarks on the presented definition of the orientation of the inverse images $\gamma_i$:\medskip 

\noindent (\emph{a}) The orientation cannot discontinuously flip along $\gamma_i$, as that would require $\tilde{\boldsymbol{h}}_\mathrm{flat}[\bm{k}({\theta})]$ to pass through $b$ while adjusting the reference point $\bm{k}_0$. However, this cannot happen because the circle $\bm{k}({\theta})$ never crosses $\gamma_i$ (the inverse image of $b$). \medskip

\noindent (\emph{b}) Recall that a solid angle enclosed by a path on the Bloch sphere is equal to twice the Berry phase picked by the corresponding state~\cite{berry_quantalphase}. 
In the present situation, the clockwise (counterclockwise) winding of $\tilde{\boldsymbol{h}}_\mathrm{flat}[\bm k(\theta)]$ implies a negative (positive) Berry phase on path $\bm k(\theta)$. Therefore, the orientation $\bm{\tau}$ indicates the direction in which the component of Berry curvature $\bm{\mathcal{F}}$ parallel with $\gamma_i$ is negative.\medskip 

\noindent (\emph{c}) Finally, while the connected component $\gamma_i$ might form a non-contractible loop in the BZ along one momentum direction, the number of lines that are oriented along this $\bk$-direction minus the number of lines oriented in the opposite direction is equal to the Chern number of the corresponding perpendicular sheet\footnote{To understand this, recall that a Chern number of a 2-band Hamiltonian defined over this 2-torus sheet equals to the number of times that the 2-torus covers the 2-sphere. When a preimage line of some point on a 2-sphere crosses the 2-torus with one orientation, the torus covers this point. When a preimage line crosses the 2-torus with another orientation, the 2-torus uncovers the sphere at this point. Therefore, in order to have no covering of the sphere, all preimage lines must cross the 2-torus with zero net-winding.} (cf.~SM of Ref.~\cite{Nelson:2021}). 
Since we assume Hopf insulators with $C_i = 0$, each non-contractible path $\gamma_i$ that is oriented along one of $\bk$-axes can be paired with another path(s) oriented in the opposite direction, as illustrated in \fig{fig:whitehead}(a). \medskip

Having fixed the orientation of $\gamma_i$, we can define an oriented surface $\Sigma_i$ (the \emph{Seifert surface}) that has the oriented loop $\gamma_i$ as a boundary [light green area in Fig.~\ref{fig:whitehead}(a)]. The surface $\Sigma_i$ always exists for contractible paths~\cite{Baez:1994}, while for non-contractible paths it exists if one considers a collections of several paths with opposite orientation [cf.~remark (\emph{c}) above]. The orientation of the Seifert surface $ \Sigma_i$ is inherited from the orientation $\bm{\tau}(\bm k_0)$ at $\bm k_0 \in \gamma_i$ by applying the right-hand rule near the boundary. Conveniently, for the \emph{union} of the inscribed surfaces, $\bigcup_i \Sigma_i \equiv \Sigma(b)$, the boundary fulfills $\partial [\Sigma(b)] = \tilde{\boldsymbol{h}}_\mathrm{flat}^{-1}(b)$ with matching orientation. The orientation of the surface fixes the overall sign of the Chern number in Eq.~\eqref{eq:whitehead}.

There is one additional subtlety in Hopf-Chern relation when one accounts for the crystallographic point group of a $\bk$-periodic Hamiltonian. For generic Hamiltonians without point-group symmetry, the preimage lines typically avoid crossing each other, as illustrated in \fig{fig:whitehead}(a). In other words, Hamiltonians where preimage lines cross or overlap require fine-tuning of Hamiltonian parameters. On the other hand, with point-group symmetry, it is possible for two or more preimage lines to cross and even overlap/merge completely, without fine-tuning of Hamiltonian parameters. In such `degenerate' cases, multiple Seifert surfaces become path-connected. The Hopf-Chern relation still holds by summing Chern numbers over all Seifert surfaces. To distinguish a completely-degenerate preimage line from a nondegenerate one, we characterize each line by a positive-integer-valued \textit{multiplicity}, which is given by the (unoriented) winding number of the map $\tilde{\boldsymbol{h}}_\mathrm{flat}[\bm{k}(\theta)]$. Explicit examples of nontrivial multiplicity in rotation-symmetric Hamiltonians will be discussed in Sec. \ref{sec:whitehead_rotation}.

\begin{figure}
    \centering
    \includegraphics{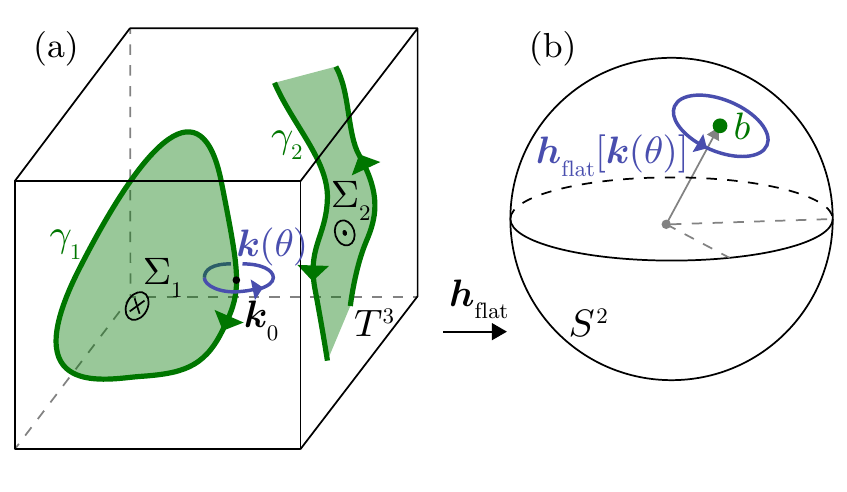}
    \caption{Characterizing a continuous map (the Hamiltonian) $\boldsymbol{h}_\mathrm{flat}$ from the (Brillouin zone) torus $T^3$ to the (Bloch) sphere $S^2$. For a chosen base point $b\in S^2$ [green dot in (b)], the inverse image $\boldsymbol{h}_\mathrm{flat}^{-1}(b)$ is a collection of closed loops $\gamma_i$ [green lines in (a)] with orientation ${\bm \tau}$ defined by the arrows. The orientation is defined by considering an infinitesimal circle $\bm{k}(\theta)$ around a reference point $\bm{k}_0\in \gamma_i$, and by checking whether the image $\boldsymbol{h}_\mathrm{flat}[\bk (\theta)]$ winds (counter)clockwise around $b$ [blue line in (b)]. The paths $\gamma_i$ can be inscribed a (Seifert) surface $\Sigma_i$ [light green sheet in (a)] with an orientation [$\odot$ vs.~$\otimes$ in (a)] inherited from the orientation of the corresponding boundary loop via the right-hand rule.}
    \label{fig:whitehead}
\end{figure}

\subsection{Hopf-Chern relation for \texorpdfstring{$\bk$}{k}-nonperiodic Hamiltonians \label{sec:whiteheadnon}}

So far we showed that the Hopf-Chern relation can be used to compute the Hopf invariant of a $\bk$-periodic Hamiltonian $\tilde{h}$. Here we extend our discussion to $\bk$-nonperiodic Hamiltonians $h(\bk)$ which relate to $\tilde{h}(\bk)$
\begin{align}
   & \tilde{h}(\bk)=V(\bk)h(\bk)V(\bk)^{-1} \lin
   & \tilde{h}(\bk+\bG)=\tilde{h}(\bk).
   \label{eq:periodic_ham}
\end{align}
by conjugation with a diagonal unitary matrix with diagonal elements:
$V_{\alpha\beta}(\bk)=e^{i\bk\br_\alpha}\delta_{\alpha\beta}$ encoding the spatial positioning of basis orbitals within each primitive unit cell. (In cases when the orbitals are spatially separated within each unit cell, the advantage of $\bk$-nonperiodicity has been espoused in Sec.~\ref{app:geom-theo-pol}.)

In Appendix~\ref{app:Hopf-periodic-nonperiodic} we demonstrated that a generalization of the Hopf invariant exists for the $\bk$-nonperiodic Hamiltonian, and can be computed by integrating the Chern-simons three-form of the Berry gauge field [cf.\ \q{eq:hopfinvar}].
Here we show that it can equivalently be computed via a restricted form of the Hopf-Chern relation, where we consider only the preimages of the poles of the Bloch 2-sphere.

To show this, consider a continuous interpolation between the periodic and non-periodic Hamiltonians (that was already introduced in Eq.~\eqref{eq:app:per-nonper-interpol}) $h_f(\bk)=V(t\bk)^{-1}\tilde{h}(\bk)V(t\bk)$ which at $t=0$ gives $h_0(\bk)=\tilde{h}(\bk)$ and at $t=1$ gives $h_1(\bk)=h(\bk)$. Translational symmetry and the energy gap persist throughout this interpolation.

If $h_t(\bk)$ were diagonal at a particular $\bk$ and $t$, it would remain diagonal at that $\bk$ and for any $t\in [0,1]$, since the matrix $V(t\bk)$ is also diagonal. Conversely, if $h_t(\bk)$ had off-diagonal elements at a particular $\bk$ and $t$, its off-diagonal elements can never vanish at that $\bk$ and for any $t$. (The reason is that the off-diagonal component of  the Hamiltonian is of the form $d_x\sigma_x+d_y\sigma_y$ with $(d_x,d_y)$ a real two-vector, and conjugation by  $V=e^{i\theta\sigma_3/2}$ merely rotates the two-vector by the angle $\theta$.) Restriction of the base point in the Hopf-Chern relation to the north and south pole, ensures that the corresponding Hamiltonian is diagonal. Thus the inverse image of either pole remains the same for all $t$. This and the fact that the Berry curvature is a periodic quantity in any convention, $\bm{\mathcal{F}}(\bk+\bG)=\bm{\mathcal{F}}(\bk)$, leads to a well defined and $t$-independent Chern number computed on the corresponding Seifert surface, meaning that $\tilde{C}_{\Sigma(b)}=C_{\Sigma(b)}$ when $b$ is the north or the south pole of the 2-sphere. From this we conclude that the Hopf-Chern relation can be applied to compute the Hopf invariant in both conventions.

\subsection{Proving the Hopf-RTP relation \label{sec:whitehead_rotation}}

\subsubsection{Symmetry defines preimage lines of the south pole}

Now let us apply the Hopf-Chern relation to a Hopf insulator with rotation symmetry.  We assume that the model has space group symmetry P$n$ and the valence (conduction) band is symmetry-equivalent to a basis band representation induced from a  basis orbital $\varphi_1$ ($\varphi_2$) centered at position $\br_1$ ($\br_2$) with  on-site angular momentum $\mathcal{L}_1$ ($\mathcal{L}_2$). In this discussion we assume that the valence band representation is symmetry-equivalent to a basis band representation. On the one hand, lifting this assumption complicates the discussion without bringing further insights. On the other hand, the Hopf-RTP relation in the case of valence band representation non-symmetry-equivalent to a basis band representation was already proven using the boundary properties in Sec.~\ref{sec:hopf-RTP-proof-non-BBR}.

Our choice of the valence (conduction) band representation determines the itinerant angular momentum of the valence (conduction) band at a $C_m$-invariant momentum $\Pi$ to be $\widetilde{\mathcal{L}}_{v(c)}(\Pi)=\mathcal{L}_{1(2)}+\frac{m}{2\pi}\left(\br_{1(2),\perp}\cdot\bG_\Pi\right) \mod m$, where $\bG_\Pi=C_m\Pi-\Pi$. Then the Hamiltonian is diagonal at the rotation-invariant lines $\gamma_\Pi$ at which the mutually-disjoint condition is fulfilled $\Delta\widetilde{\mathcal{L}}(\Pi)=\widetilde{\mathcal{L}}_{c}(\Pi)-\widetilde{\mathcal{L}}_{v}(\Pi)\pmod{m}\neq 0$. For the Hamiltonian with flattened spectrum this means that it is $h_\mathrm{flat}(\Pi,k_z)=\pm\sigma_z$ which corresponds to a mapping to a north or south pole of a 2-sphere $\boldsymbol{h}_\mathrm{flat}(\Pi,k_z)=(0,0,\pm 1)$. The fact that the valence band representation is symmetry-equivalent to a basis band representation guarantees that all $\gamma_\Pi$ lines with fulfilled mutually-disjoint condition are mapped to the same point on the 2-sphere, despite the fact that $\Delta\widetilde{\mathcal{L}}(\Pi)$ need not be identical for all $\Pi$. This one image point is the south pole $h_\mathrm{flat}(\Pi,k_z)=-\sigma_z$ for our choice of valence band representation.
From this we conclude that if we choose the south pole ($\textrm{S}$) to be the base point for the Hopf-Chern relation, all lines $\gamma_\Pi$ with $\Delta\widetilde{\mathcal{L}}(\Pi)\neq0$ belong to the inverse image of $\textrm{S}$. In Fig.~\ref{fig:whitehead-c4-max}(a) we illustrate this for a P4 symmetric model with basis orbitals having on-site angular momenta $\mathcal{L}_1=0$, $\mathcal{L}_2=0$ positioned at $\br_{1,\perp}=(0,0)$, $\br_{2,\perp}=(1/2,1/2)$. The corresponding itinerant angular momenta differences are $\Delta\widetilde{\mathcal{L}}(\Gamma)=0$, $\Delta\widetilde{\mathcal{L}}(\tm)=2$, $\Delta\widetilde{\mathcal{L}}(\tx)=1$, which forces the lines $\gamma_\tx$ and $\gamma_\tm$ to map to the south pole.

\begin{figure*}
    \centering
    \includegraphics{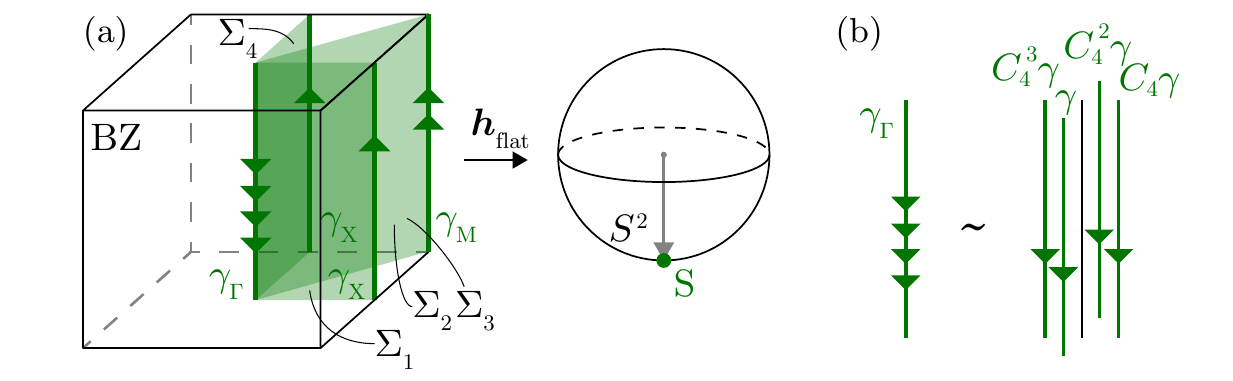}
    \caption{(a) Illustration of a Hopf-Chern relation in the case of a model with P4 space group and itinerant angular momenta differences $\Delta\widetilde{\mathcal{L}}(\Gamma)=0$, $\Delta\widetilde{\mathcal{L}}(\tm)=2$, $\Delta\widetilde{\mathcal{L}}(\tx)=1$. The base point is chosen to be the south pole $\textrm{S}\sim -\sigma_z$. The symmetry forces the lines $\gamma_{\tx}$ and $\gamma_{\tm}$ (denoted with green) to be mapped onto $\textrm{S}$. In addition there can be other non-contractible preimages of $\textrm{S}$ that appear in $4$ $C_4$ related copies. In the presented example these four lines are degenerate along $\gamma_\Gamma$ (shown in green). The green arrows show possible orientation of the lines that is compatible with symmetries and the no-net-winding rule. Multiple arrows show the multiplicity of the line. The oriented preimage lines are paired up to bound 4 Seifert surfaces $\Sigma_j$, $j=1,2,3,4$ (denoted by green shades). The lines with multiplicity $\mu$ form a boundary of $\mu$ distinct Seifert surfaces. (b) The four degenerate preimage lines at $\gamma_\Gamma$ are not fixed to the rotation-invariant line by symmetry and can be split into 4 $C_4$-related copies by a small perturbation of the Hamiltonian.}
    \label{fig:whitehead-c4-max}
\end{figure*}

\subsubsection{Orientation and multiplicity of \texorpdfstring{$\gamma_\Pi$}{gamma Pi} preimages}\label{app:orientation-Cm}

Before discussing other possible preimage lines of the south pole, let us determine the orientation of the $\gamma_\Pi$ lines. According to the comment (\textit{b}) of Sec.~\ref{sec:whitehead}, this can be done by computing the tangential-to-$\gamma_\Pi$ component of the Berry-curvature vector. As all $\gamma_\Pi$ lines are vertical, their orientation is $\bm{\tau}(\Pi)=-\sign(\mathcal{F}_z(\Pi,k_z))\hat{\bm{e}}_z$ where $\hat{\bm{e}}_z$ denotes the unit vector in the $z$ direction of momentum space and the Berry curvature can be computed at any $k_z$ momentum component. 
In the presence of rotation symmetry, as opposed to the generic case, the rotation-invariant preimage lines are characterized not only by orientation but also by a multiplicity introduced in Sec.\ \ref{sec:whiteheadthree}. 
The latter is defined as a number of times an image of a small loop around the rotation-invariant line winds around the south pole on the 2-sphere. 
Consider a $\bk$-loop around $\gamma_\Pi$ line parametrized as $\bk(\theta)=(\Pi,k_z)+\varepsilon [\hat{\bm{e}}_1\cos(\theta)+\hat{\bm{e}}_2\sin(\theta)]$, $\theta\in[0,2\pi]$; if the Hamiltonian in this parametrization obeys the condition $h(\theta+2\pi/\mu)=h(\theta)$ for the largest possible integer $\mu$, we conclude that the image of a $\bk$-loop winds around the south pole $\mu$ times and the multiplicity of the inverse image is $\mu(\Pi)$. 
The $\gamma_\Pi$ line can be viewed as a degenerate set of $\mu(\Pi)$ lines, with each line having unit multiplicity and carrying the same orientation $\bm{\tau}(\Pi)$. The degeneracy is protected by the rotation symmetry.

Let us relate, as best as we can, the orientations and multiplicities of the preimage lines $\gamma_\Pi$ to the itinerant angular-momentum difference at $\Pi$. For this purpose, we consider a Taylor expansion of the Hamiltonian around $\bk_0\in\gamma_\Pi$ up to the lowest order in $\bm{\kappa}=\bk-\bk_0$ in the presence of rotational symmetry. As shown in Appendix~\ref{app:kp-sym-nonuniax}, the expanded Hamiltonian must fulfill
\begin{equation}
    \widetilde{R}_{C_m}(\Pi)h^{\bk\cdot\boldsymbol{p}}(\bkappa)\widetilde{R}_{C_m}^{-1}(\Pi)=h^{\bk\cdot\boldsymbol{p}}(C_m\bkappa)
\end{equation}
with the itinerant rotation matrix $\widetilde{R}_{C_m}(\Pi)$ given in Eq.~\eqref{eqn:widetilde-R-C-m}. For our choice of valence (conduction) band representation we can replace $\widetilde{\mathcal{L}}_2(\Pi)-\widetilde{\mathcal{L}}_1(\Pi)$ by $\Delta\widetilde{\mathcal{L}}(\Pi)=\widetilde{\mathcal{L}}_c(\Pi)-\widetilde{\mathcal{L}}_v(\Pi) \pmod{m}$ in the expression for $\widetilde{R}_{C_m}(\Pi)$. For all possible rotation symmetries $C_m$ and differences in itinerant angular momenta $\Delta\widetilde{\mathcal{L}}(\Pi)$ the symmetry-allowed form of a flatted Hamiltonian $h_\mathrm{flat}(\boldsymbol{\kappa})$ is presented in the third column of Table~\ref{tab:sym_kp_ham} with introduced notations $\kappa_\pm=\kappa_x\pm i\kappa_y$ and $\sigma_\pm=\sigma_x\pm i\sigma_y$. The corresponding vector of Hamiltonian coefficients is normalized as the corrections to the vector $\boldsymbol{h}_\mathrm{flat}(\boldsymbol{\kappa}=0)=(0,0,-1)$ are linear in $\boldsymbol{\kappa}$ and therefore can be neglected in the norm $|\boldsymbol{h}_\mathrm{flat}(\boldsymbol{\kappa})|=1 +O(\boldsymbol{\kappa})$. The $z$-component of the Berry curvature in a 2-band Hamiltonian can be computed as
\begin{equation}
\mathcal{F}_z=\frac{1}{2}\epsilon_{abc}h_{\mathrm{flat},a}\,\partial_x h_{\mathrm{flat},b}\,\partial_y h_{\mathrm{flat},c},
\label{eq:Fz}
\end{equation}
which can be simplified to 
\begin{equation}
\mathcal{F}_z=-2h_{\mathrm{flat},z}(\partial_+ h_{\mathrm{flat},+}\,\partial_- h_{\mathrm{flat},-} - \partial_+ h_{\mathrm{flat},-}\,\partial_- h_{\mathrm{flat},+})
\label{eq:Fz_sym_constr}
\end{equation}
as the symmetry constraints the $h_{\mathrm{flat},z}$ component to not depend on $\kappa_x$, $\kappa_y$; the introduced notations are $\partial_\pm=(\partial_x\mp i\partial_y)/2$ and $h_{\mathrm{flat},\pm}=(h_{\mathrm{flat},x}\mp ih_{\mathrm{flat},y})/2$. Analyzing the results of Table~\ref{tab:sym_kp_ham} we see that whenever $\Delta\widetilde{\mathcal{L}}\neq m/2$ a coefficient in front of $\sigma_+$ depends only on one of $\kappa_\pm$ and the sign of $\mathcal{F}_z$ is determined by symmetry representation alone, it is presented in the second to last column in Table~\ref{tab:sym_kp_ham}. At the same time the power of a coefficient in front of $\sigma_+$ gives the multiplicity of the preimage line. Assume $h_{\mathrm{flat},0}=-\sigma_z+(a\kappa_+^\mu+b\kappa_-^\mu)\sigma_++h.c.$ The loop around $\bk_0$ can be parametrized with $\kappa_\pm=\varepsilon e^{\pm i\theta}$, $\theta\in[0,2\pi]$. We see that in this parametrization $h_{\mathrm{flat},0}(\theta+2\pi/\mu)=h_{\mathrm{flat},0}(\theta)$ and hence the multiplicity of $\gamma_\Pi$ line is $\mu$, which is listed in the last column of Table~\ref{tab:sym_kp_ham}. For the example case (outlined in the previous subsection) one choice of preimage line orientations is shown in Fig.~\ref{fig:whitehead-c4-max}(b) with green arrows, while the number of arrows denote the multiplicities of the lines.

\begingroup
\renewcommand{\arraystretch}{1.5}
\setlength{\tabcolsep}{8pt}
\begin{table*}
\centering
\begin{tabular}{llllll}
\hline\hline
$m$ & $\Delta\widetilde{\mathcal{L}}$ & $h_\mathrm{flat}(\bm{\kappa})+\sigma_z$ & $\mathcal{F}_z/2$ & $\tau_z$ & $\mu$ \\
\hline
$2$ & $1$ & $(a\kappa_++b\kappa_-)\sigma_+ + h.c.$ & $|a|^2-|b|^2$ & undetermined & 1 \\
$3$ & $\pm 1 \mod{3}$ & $a\kappa_\mp\sigma_+ + h.c.$ & $\mp |a|^2$ & $\pm1$ & 1 \\
$4$ & $\pm 1\mod{4}$ & $a\kappa_\mp\sigma_+ + h.c.$ & $\mp |a|^2$ & $\pm1$ & 1 \\
 & $2$ & $(a\kappa_+^2+b\kappa_-^2)\sigma_+ + h.c.$ & $4\left[|a\kappa_+|^2-|b\kappa_+|^2\right]$ & undetermined  & 2\\
$6$ & $\pm 1\mod{6}$ & $a\kappa_\mp\sigma_+ + h.c.$ & $\mp |a|^2$ & $\pm1$ & 1 \\
 & $\pm 2\mod{6}$ & $a\kappa_\mp^2\sigma_+ + h.c.$ & $\mp 4|a\kappa_+|^2$ & $\pm1$ & 2 \\
 & $3$ & $(a\kappa_+^3+b\kappa_-^3)\sigma_+ + h.c.$ & $9\left[|a\kappa_+^2|^2-|b\kappa_+^2|^2\right] $& undetermined & 3 \\
\hline\hline
\end{tabular}
\caption{ For all differences between itinerant angular momenta of conduction and valence bands at a given reduced momentum $\Pi$ $\Delta\widetilde{\mathcal{L}}(\Pi)=\widetilde{\mathcal{L}}_c(\Pi)-\widetilde{\mathcal{L}}_v(\Pi)\pmod m$, we present a form of flattened Hamiltonian in the vicinity of $\bk_0$ point allowed by the symmetry. For this Hamiltonian we compute the $\mathcal{F}_z$ component of the Berry curvature, and deduce the resulting orientation $\bm{\tau}=\tau_z\hat{\bm{e}}_z$ and multiplicity $\mu$ of a corresponding inverse image $\gamma_\Pi$. In some cases the orientation can not be determined solely from the itinerant angular momentum, however, as we will see, this does not impair the proof of the Hopf-RTP relation.}
\label{tab:sym_kp_ham}
\end{table*}
\endgroup

\subsubsection{Other non-contractible preimage lines \texorpdfstring{$\gamma_i$}{gamma i}}

Importantly, the BZ can contain other than $\gamma_\Pi$ lines that are preimages of the south pole. At this stage of the proof we focus only on the preimages that form non-contractible loops along the $k_z$ direction in the BZ and discuss how other preimages affect the result later in this section. 
In addition to $\gamma_\Pi$ lines with fulfilled mutually-disjoint condition, the preimage of the south pole can contain two types of non-contractible lines: (\emph{i}) lines that appear at generic positions in the BZ and (\emph{ii}) lines that appear at rotation-invariant momenta $\gamma_{\Pi'}$ with violated mutually-disjoint condition, $\Delta\widetilde{\mathcal{L}}(\Pi')=0$.

For both types of preimage lines we show, that the $C_n$ rotation symmetry forces them to appear in $n$ rotation-related copies. 
(\emph{i}) For the preimage lines at generic positions, this follows from the symmetry condition on the Hamiltonian $R_{C_n}h(\bk)R_{C_n}^{-1}=h(C_n\bk)$, which equates the Hamiltonians at $\bk$ and at $C_n\bk$ when the Hamiltonian is diagonal.
(\emph{ii}) If a preimage line happened to coincide with a $C_m$-invariant $\gamma_{\Pi'}$ line with violated mutually-disjoint condition, first there will be a $n/m$-multiplet of preimage lines at all $C_n$-related $\gamma_{\Pi'}$ lines. Second, without the mutually-disjoint condition, the south pole preimage is not forced to lie at a $C_m$-invariant line by symmetry. 
Therefore, a small perturbation of the Hamiltonian can shift this preimage line away from the rotation-invariant line. But as we already argued in the case (\emph{i}), the preimage lines at generic positions must appear in $n$ $C_n$-related copies. 
This means that the initial preimage line, at each $C_m$-invariant line of a $n/m$-multiplet, forms $m$ degenerate copies [cf.~Fig.~\ref{fig:whitehead-c4-max}(b)].

We denote these additional preimage lines as $\gamma_i$ and all $n$ copies as $C_n^j\gamma_i$, $j=0,\dots n-1$. We remark, that the preimages of the north pole ($h_\mathrm{flat}=\sigma_z$) also appear as $n$ rotation-related copies for the same reason as for the south pole; this property will be used at the end of this section. 

\subsubsection{Orientation of non-contractible preimage lines}

Having specified all non-contractible preimages of $\textrm{S}$, we proceed by defining their orientations. First, let us introduce a scalar orientation along $k_z$-axis as 
\begin{equation}
    \tau_z(\bk_0)=\sgn(\boldsymbol{\tau}(\bk_0)\cdot\hat{\bm{e}}_z)\in\{-1, 1\},
\end{equation}
where $\bk_0$ is a point on a considered line, and for simplicity we assume that the line crosses each plane at fixed $k_z$ only once. Then $\tau_z$ does not depend on the specific point $\bk_0$. The definition can be straightforwardly generalized if this assumption is relaxed. We proceed to define the $k_z$-orientations of non-contractible lines. 

In addition to the symmetry constraints defined in Sec.~\ref{app:orientation-Cm}, the orientations of the non-contractible lines are restricted by two more rules:\\

\noindent{(\emph{i})} Any two inverse images ($\gamma_1$ and $\gamma_2=C_n\gamma_1$) that are related by rotation have the same $k_z$-orientation $\tau_z(\gamma_1)=\tau_z(\gamma_2)$, because the Berry curvature transforms as a vector under rotation.\\

\noindent{(\emph{ii})} As discussed in the comment (\textit{c}) in Sec.~\ref{sec:whitehead} the number of upwards oriented lines must be equal to the number of downwards oriented lines, meaning that 
\begin{equation}
    \sum_{\gamma \textrm{, noncontractible along $\hat{\bm{e}}_z$}}\mu(\gamma)\tau_z(\gamma)=0.
    \label{eq:app:balanced-pairing}
\end{equation}
Crucially, each non-contractible preimage contribute to the sum number of times, equal to its multiplicity. We refer to this constraint as the \emph{balanced orientation rule}. 
 \\

For a set of all non-contractible preimage lines, $\gamma_\Pi$ and $C_n^j\gamma_i$, the symmetry conditions and the balanced orientation rule restrict possible orientations to a limited number of options. For several concrete examples of the preimage orientations, see Table \ref{tab:orient-example} and Fig.~\ref{fig:whitehead-c4-max}(a). Although in some cases, the outlined conditions are not enough to uniquely determine the orientations from the symmetry representations alone, we will see that the obtained information about orientations is enough to prove the Hopf-RTP relation. 

\begingroup
\renewcommand{\arraystretch}{2}
\setlength{\tabcolsep}{8pt}
\begin{table*}
    \centering
    \caption{Possible preimage orientations for some crystalline Hopf models given by the space group and the symmetry $\Delta\mathcal{L}=\mathcal{L}_2-\mathcal{L}_1$ and spatial positions $\Delta\br_\perp=\br_{2,\perp}-\br_{1,\perp}$ of the basis orbitals. Arrows $\uparrow$ and $\downarrow$ denote positive and negative orientations along $k_z$ axis and the number of arrows [$\upuparrows$] denote multiplicity. Value in brackets [$(2)$] specifies how many rotation-related lines of this type constitutes the south pole preimage.}
    \label{tab:orient-example}
    \begin{tabular}{llllll}
    \hline\hline
    SG & $\Delta\mathcal{L}$ & $\Delta\br_\perp$ & $\Delta\widetilde{\mathcal{L}}(\Pi)$ & orientations from symmetry & possible total orientations \\
    \hline
    P4 & 1 & $(0,0)$ & $\Delta\widetilde{\mathcal{L}}(\Gamma)=1$, $\Delta\widetilde{\mathcal{L}}(\tx)=1$, $\Delta\widetilde{\mathcal{L}}(\tm)=1$ & $\Gamma\uparrow$, $\tm\uparrow$ & 
    \renewcommand{\arraystretch}{1}
    \begin{tabular}[t]{l}
          $\Gamma\uparrow$, $\tm\uparrow$, $(2)\tx\downarrow$ \\
          or \\
          $\Gamma\uparrow$, $\tm\uparrow$, $(2)\tx\uparrow$, $(4)\gamma_1\downarrow$
    \end{tabular} \renewcommand{\arraystretch}{2} \\
    P4 & 0 & $(1/2,1/2)$ & $\Delta\widetilde{\mathcal{L}}(\Gamma)=0$, $\Delta\widetilde{\mathcal{L}}(\tx)=1$, $\Delta\widetilde{\mathcal{L}}(\tm)=2$ &  &
    \renewcommand{\arraystretch}{1}
    \begin{tabular}[t]{l}  $(2)\tx\uparrow$, $\tm\downdownarrows$ \\ or \\ $(2)\tx\uparrow$, $\tm\upuparrows$, $(4)\gamma_1\downarrow$ \end{tabular}
    \renewcommand{\arraystretch}{2}\\
     P4 & 1 & $(1/2,1/2)$ & $\Delta\widetilde{\mathcal{L}}(\Gamma)=1$, $\Delta\widetilde{\mathcal{L}}(\tx)=0$, $\Delta\widetilde{\mathcal{L}}(\tm)=3$ & $\Gamma\uparrow$, $\tm\downarrow$ &
      \renewcommand{\arraystretch}{1}
      \begin{tabular}[t]{l}$\Gamma\uparrow$, $\tm\downarrow$ \end{tabular}
       \renewcommand{\arraystretch}{2}\\
     P3 & 1 & $(0,0)$ & $\Delta\widetilde{\mathcal{L}}(\Gamma)=1$, $\Delta\widetilde{\mathcal{L}}(\tx)=1$, $\Delta\widetilde{\mathcal{L}}(\tm)=1$ & $\Gamma\uparrow$, $\tk\uparrow$, $\tkpr\uparrow$ & 
      \renewcommand{\arraystretch}{1}
      \begin{tabular}[t]{l} $\Gamma\uparrow$, $\tk\uparrow$, $\tkpr\uparrow$, $(3)\gamma_1\downarrow$ \end{tabular} 
       \renewcommand{\arraystretch}{2}\\
       \hline\hline
    \end{tabular}
\end{table*}
\endgroup

\subsubsection{Seifert surface and Hopf-Chern relation for \texorpdfstring{$h_\mathrm{flat}= -\sigma_z$}{hflat}
}

After determining the orientations and multiplicities of the non-contractible preimage lines, we proceed by defining the Seifert surface stretched over these lines. For this we arbitrarily pair up the preimage lines such that each pair contain upwards $\gamma_{\uparrow i}$ and downwards $\gamma_{\downarrow i}$ oriented lines. This is possible due to validity of the balanced orientation rule. Note, that a preimage line with multiplicity $\mu$ forms a boundary of $\mu$ distinct Seifert surfaces.  Denoting each distinct Seifert surface bounded by $\gamma_{\downarrow i}$ and $\gamma_{\uparrow i}$ as $\Sigma_i$, we express the Hopf invariant as a sum of Chern numbers $C_{\Sigma_i}$ over all Seifert surfaces:
\begin{equation}
    \chi=\sum_{i}C_{\Sigma_i},
    \label{eq:app:whitehead-non-contractible}
\end{equation}
(A nontrivial Chern number over a Seifert surface is consistent with the triviality of the first Chern class, meaning the Chern number on any 2D sub-torus of the Brillouin zone vanishes.) 
The Chern number of each Seifert surface can be computed as the difference in the Berry-Zak phases computed along the paths
\begin{equation}
    C_{\Sigma_i}=(\mathscr{Z}_{\gamma_{\uparrow i}}-\mathscr{Z}_{\gamma_{\downarrow i}})/2\pi.
    \label{eq:app:chern-via-Zak}
\end{equation}
Though $\gamma_{\uparrow i}$ and $\gamma_{\downarrow i}$ have opposite orientations, the corresponding two Zak phases are computed over paths of the same orientation, hence the minus sign in the above equation. By itself, $\mathscr{Z}_{\gamma_\uparrow}$ is a phase defined modulo $2\pi$; however, given any global, differential gauge whose existence is guaranteed by the trivial first Chern class, the difference $(\mathscr{Z}_{\gamma_\uparrow}-\mathscr{Z}_{\gamma_\downarrow})$ is uniquely defined. 
Plugging Eq.~\eqref{eq:app:chern-via-Zak} into Eq.~\eqref{eq:app:whitehead-non-contractible} and taking into account the orientation along $k_z$-axis $\tau_z$ of each line $\gamma$ we get
\begin{align}
    \chi\eq \sum_i \sum_{\gamma\in \{\gamma_{\uparrow i},\gamma_{\downarrow i}\}}\tau_z(\gamma)\mathscr{Z}_{\gamma}/2\pi \la{doublesum}
\end{align}  
From this sum, we isolate the contribution from  rotation-invariant lines $\gamma_{\Pi}$ with fulfilled mutually-disjoint condition $\Delta\widetilde{\mathcal{L}}(\Pi)\neq 0$.  For each $\gamma_{\Pi}$, we replace $\mathscr{Z}/2\pi\rightarrow  \mathscr{P}$ to remind ourselves that  the normalized Zak phase can be interpreted as the dimensionless polarization for a one-$k$-parameter Hamiltonian \eqref{eq:pol-def}:
\begin{align}    
    \chi \eq \smashoperator{\sum_{\gamma_\Pi\in\boldsymbol{h}_\mathrm{flat}^{-1}(\textrm{S});\Delta\widetilde{\mathcal{L}}(\Pi)\neq 0}}\mu(\Pi)\tau_z(\Pi)\mathscr{P}(\Pi) +
    \smashoperator[r]{\sum_{{\textrm{all other }\gamma\in\boldsymbol{h}_\mathrm{flat}^{-1}(\textrm{S})}}} 
    \tau_z(\gamma)\mathscr{Z}_{\gamma}/2\pi ,\la{secondsum}
\end{align}
The meaning of $\sum_{\gamma_\Pi}$ is to sum over all components of $\boldsymbol{h}_\mathrm{flat}^{-1}(\textrm{S})$ with the restriction: $\Delta\widetilde{\mathcal{L}}(\Pi)\neq 0$; note each term appearing once in $\sum_{\gamma_\Pi\in\boldsymbol{h}_\mathrm{flat}^{-1}(\textrm{S})}$ appears $\mu(\Pi)$  times in the double summation of \q{doublesum}. 
The second term in \q{secondsum} sums over all {other preimage lines},
not already included in the first sum.
Because $\gamma$ 
is not necessarily vertical, it does not necessarily have the interpretation of a  polarization. However, since along these lines the Hamiltonian is diagonal, the valence band does not hybridize with the conduction band and hence, $\mathscr{Z}_{\gamma
}/2\pi=_1 z_1$. {Using Eq.~\eqref{eq:app:balanced-pairing}, we can subtract zero which is the sum over all $k_z$-orientations (weighted with multiplicities) of the preimage lines, multiplied by the value of $z_1$} and get
\begin{align}
    \chi&=\sum_{\gamma_\Pi\in\boldsymbol{h}_\mathrm{flat}^{-1}(\textrm{S});\Delta\widetilde{\mathcal{L}}(\Pi)\neq 0}\mu(\Pi)\tau_z(\Pi)(\mathscr{P}(\Pi)-z_1) \lin
    & \as+ \sum_{\textrm{all other }\gamma\in \boldsymbol{h}_\mathrm{flat}^{-1}(\textrm{S})}\tau_z(\gamma)(\mathscr{Z}_{\gamma}/2\pi-z_1).
    \label{eq:app:Hopf-RTP-proof-1}
\end{align}
The sum in the second row vanishes modulo $n$, because the preimage lines $\gamma$, that do not coincide with rotation-invariant lines with mutually-disjoint condition, necessarily come in $n$ copies with identical $k_z$-orientations. $n$ copies of an integer-valued expression $(\mathscr{Z}_{\gamma}/2\pi-z_1)\in\mathbb{Z}$ sum up to a multiple of $n$. Let us further reorganize the first term in the sum by substituting the product of orientation and multiplicity of the rotation-invariant preimage lines $\mu(\Pi)\tau_z(\Pi)$ by the itinerant angular momentum difference $\Delta\widetilde{\mathcal{L}}(\Pi)$.
\begin{equation}
\chi=_n \sum_{\gamma_\Pi\in\boldsymbol{h}_\mathrm{flat}^{-1}(\textrm{S});\Delta\widetilde{\mathcal{L}}(\Pi)\neq 0}\Delta\widetilde{\mathcal{L}}(\Pi)(\mathscr{P}(\Pi)-z_1).
\label{eq:app:Hopf-RTP-proof-2}
\end{equation}
This transformation is allowed for the following reason. According to Table \ref{tab:sym_kp_ham}, $\Delta\widetilde{\mathcal{L}}=\tau_z\,\mu\mod m$. We also know that the BZ contains $n/m$ $C_n$-related copies of a $C_m$-invariant line, which are all part of the preimage of $\textrm{S}$. This means that in the sum over all preimages, $n/m\times \tau_z\,\mu=_nn/m\times\Delta\widetilde{\mathcal{L}}$, with $\Delta\widetilde{\mathcal{L}}\in \{0,1,\ldots, m-1\}$. In the obtained relation \eqref{eq:app:Hopf-RTP-proof-2}, the sum over preimage lines can be replaced by the sum over a maximal set of reduced momenta $\{\Pi_j\}_{j=1\dots J}$ with fulfilled mutually-disjoint and iso-orbital conditions. In the last step of the derivation we replace $\sum_{j=1}^J\Delta\widetilde{\mathcal{L}}(\Pi_j)z_1=_n\sum_{j=1}^J\Delta\widetilde{\mathcal{L}}(\Pi_j)\mathscr{P}(\Pi_1)$, as for any arbitrarily chosen $\Pi_1\in \{\Pi_j\}_{j=1\dots J}$, $\mathscr{P}(\Pi_1)=_1 z_1$ and $\sum_{j=1}^J\Delta\widetilde{\mathcal{L}}(\Pi_j)\in n\mathbb{Z}$ because of a trivial Chern class of the model [cf.~Appendix~\ref{app:Chern-via-ell}]. With this we obtain the desired relation \eqref{eq:app:Hopf-RTP}.

\subsubsection{Contractible south pole preimages do not modify relation modulo \texorpdfstring{$n$}{n}\label{sec:other_preimages}}
To finish the proof of the Hopf-RTP relation we need to take into account other possible preimages of the south pole, that form contractible loops. There are two types of such preimages: \\
\noindent{(\emph{i})} A closed loop at a generic position in the BZ that is repeated $n$ times due to $C_n$ symmetry [cf.~Fig.~\ref{fig:whitehead-mod4}(a) for a P4 illustration]. \\
\noindent{(\emph{ii})} A closed loop that encircles a $C_m$-invariant line in the BZ. This loop must appear in $n/m$ copies and has a $C_m$-symmetric shape [cf.~Fig.~\ref{fig:whitehead-mod4}(b) for a P4 illustration].

\begin{figure}
    \centering
    \includegraphics{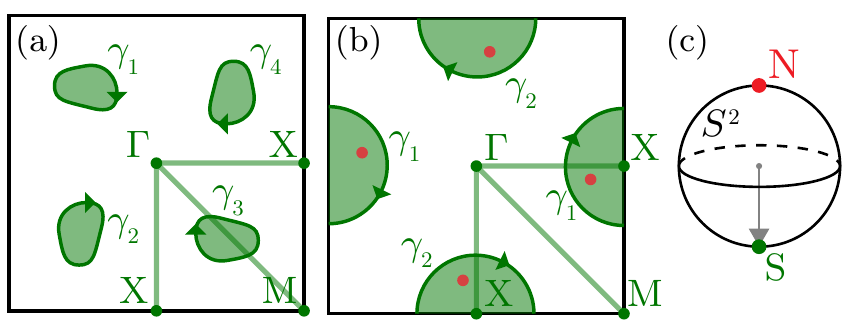}
    \caption{Illustration of contractible south pole preimages in a 2D horizontal cut of the BZ. (a) Closed loops at generic positions that form 4 $C_4$-related copies. Each of them carries an integer Chern number, therefore their contribution to the Hopf invariant is $4\mathbb{Z}$. (b) Closed loops that encircle a $C_2$-invariant line $\gamma_\tx$ form 2 $C_4$-related copies $\gamma_1$ and $\gamma_2$. The linking number of these preimage lines with the preimages of the north pole is forced to be a multiple of 4 by rotation symmetry. (c) A 2-sphere with the north and south pole, whose preimage lines are shown in panels (a) and (b).}
    \label{fig:whitehead-mod4}
\end{figure}

Both types of preimages modify the Hopf-RTP relation only by multiples of $n$:\\
\noindent{(\emph{i})} Assume that a closed loop carries a Chern number $C'$. Since the loops appear as $n$ copies in the BZ, each copy carries the same Chern number $C'$ and together they contribute to the Hopf invariant $nC'$, which does not modify the relation modulo $n$.\\
\noindent{(\emph{ii})} To determine the contribution to the Hopf invariant from each of the $n/m$ copies of loops we consider preimages of the north pole of the 2-sphere [cf.~Fig~\ref{fig:whitehead-mod4}(c)]. These preimages also come in $n$ $C_n$-related copies and if one of them happens to be linked with a component of the preimage of the south pole, the $m$ $C_m$-related copies of it will also be linked with the same component [cf.~Fig.~\ref{fig:whitehead-mod4}(b)]. Since the Hopf invariant can alternatively be calculated as the number of links between preimages of two points on the 2-sphere, we can conclude that each of $n/m$ loops contributes $m\mathbb{Z}$ value to the Hopf invariant. Therefore, all loops in total contribute $n\mathbb{Z}$ and do not modify the relation modulo $n$.

We conclude that we considered contributions of all possible preimages of the south pole to the Hopf invariant and it is given by Eq.~\eqref{eq:app:Hopf-RTP} up to multiples of $n$.

\section{Proof of conditionally-robust surface states}\label{app:Toeplitz-bounds}

In this Appendix we derive the constraint on energy eigenvalues, which we applied in Sec.~\ref{sec:anganomalymore2} to infer 
the metallic signatures at open boundaries with sharply terminated hoppings of RTP insulators with \emph{multiple} energy bands. 
The first proof we present in Sec.~\ref{app:block-Toeplitz} follows directly from spectral properties of finite-dimensional, block Toepliz matrices derived in Ref.~\cite{Miranda:2000}, and here translated to the language of the Bloch band theory. 
An alternative proof valid in the limit of an infinite-sized lattice is provided in  Sec.~\ref{app:transfermatrix} within the transfer-matrix formalism, with certain restrictions on the number of bands and the range of hopping.

\subsection{Spectral bounds for block Toeplitz matrices}\label{app:block-Toeplitz}

\begin{theorem} 
Consider a one-dimensional $N_\mathrm{b}$-band model with a momentum-space Bloch Hamiltonian $h(k)$. 
For the purpose of this Appendix we adopt the periodic convention, $h(k) = h(k+2\pi)$. 
We use $E_a(k)$  to denote the eigenvalues of the Hamiltonian, where $a\in\{1,\ldots,N_\mathrm{b}\}$ is the band index and $k\in[-\pi,+\pi]$ is the momentum in the first Brillouin zone. 
We define the spectral bounds of the bulk Hamiltonian as
\begin{align}
E_\mathrm{min}&=\sup\left\{E\in\mathbb{R}\;\big|\;\forall a,k: E_\mathrm{min}\leq E_a(k)\right\}, \label{eqn:E-min-sup}\\
E_\mathrm{max}&=\inf\left\{E\in\mathbb{R}\;\big|\;\forall a,k: E_\mathrm{min}\geq E_a(k)\right\}. \label{eqn:E-max-inf}
\end{align}
Let further $\lambda$ be an eigenvalue of a finite chain with open boundaries, with $\mathcal{N}$ unit cells, and with sharply terminated hopping. 
We claim that $\lambda$ is constrained by the same bounds,
\begin{equation}
E_\mathrm{min} \leq \lambda \leq E_\mathrm{max}.\label{eqn:sharp-spectrum-bound}
\end{equation}
In other words, in the presence of the sharp boundary condition there cannot be an edge state lying outside the energy range set by the bulk states.\medskip

\noindent  \emph{Remark:} 
In a physical setting, we usually assume $h(k)$ to be a smooth function that obeys the global spectral bound at every $k$.
However, for the purpose of the proof, we only need to assume that $h(k)$ is $L^1$-integrable (possibly discontinuous) and that it obeys the spectral bound in Eqs.~(\ref{eqn:E-min-sup}) and~(\ref{eqn:E-max-inf}) at \emph{almost} every momentum $k$.\medskip
\end{theorem}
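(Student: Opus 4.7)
The plan is to realize the finite-chain Hamiltonian $\mathcal{H}_{\mathcal{N}}$ as the $\mathcal{N}\times\mathcal{N}$ truncation of a doubly-infinite block Toeplitz matrix whose $(R_z,R_z')$ block is $\mathcal{T}_{R_z-R_z'}$, where the blocks are the Fourier coefficients $\mathcal{T}_r = \int_{-\pi}^{\pi}\frac{dk}{2\pi}\,e^{ikr}\,h(k)$ of the Bloch Hamiltonian. Sharp termination is precisely the statement that no matrix element with both endpoints inside $\{1,\ldots,\mathcal{N}\}$ is altered relative to the bulk, so this identification is exact. The $L^1$-integrability hypothesis in the Remark is what is actually needed to guarantee that the $\mathcal{T}_r$ exist and that the Toeplitz representation is meaningful.

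Second, to any eigenstate $\psi^{(\mathcal{N})}\in (\mathbb{C}^{N_\mathrm{b}})^{\otimes\mathcal{N}}$ of $\mathcal{H}_{\mathcal{N}}$ with eigenvalue $\lambda$, I would associate the $\mathbb{C}^{N_\mathrm{b}}$-valued trigonometric polynomial
\begin{equation}
v^{(\mathcal{N})}(k) \;=\; \sum_{R_z=1}^{\mathcal{N}} e^{-ikR_z}\,\psi^{(\mathcal{N})}(R_z).
\end{equation}
Parseval's identity gives $\int_{-\pi}^{\pi}\frac{dk}{2\pi}\,\|v^{(\mathcal{N})}(k)\|^{2}=\|\psi^{(\mathcal{N})}\|^{2}$, while substituting the integral representation of $\mathcal{T}_{R_z-R_z'}$ into the bilinear form and interchanging the finite sums with the momentum integral yields
\begin{equation}
\langle\psi^{(\mathcal{N})}|\mathcal{H}_{\mathcal{N}}|\psi^{(\mathcal{N})}\rangle \;=\; \int_{-\pi}^{\pi}\frac{dk}{2\pi}\, v^{(\mathcal{N})}(k)^{\dagger}\, h(k)\, v^{(\mathcal{N})}(k).
\end{equation}
Thus the Toeplitz Rayleigh quotient on the finite chain becomes an $L^{2}$-weighted average of the Rayleigh quotients of the symbol $h(k)$.

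The bound then follows at once: the spectral hypothesis implies the pointwise operator inequality $E_\mathrm{min}\,\mathbbold{1}\preceq h(k)\preceq E_\mathrm{max}\,\mathbbold{1}$ for almost every $k$, so that
\begin{equation}
E_\mathrm{min}\,\|v^{(\mathcal{N})}(k)\|^{2} \,\leq\, v^{(\mathcal{N})}(k)^{\dagger}h(k)v^{(\mathcal{N})}(k) \,\leq\, E_\mathrm{max}\,\|v^{(\mathcal{N})}(k)\|^{2}.
\end{equation}
Integrating over $k$ and dividing by $\|\psi^{(\mathcal{N})}\|^{2}>0$ gives $E_\mathrm{min}\leq\lambda\leq E_\mathrm{max}$ via $\lambda=\langle\psi^{(\mathcal{N})}|\mathcal{H}_{\mathcal{N}}|\psi^{(\mathcal{N})}\rangle/\|\psi^{(\mathcal{N})}\|^{2}$.

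I do not anticipate a genuine obstacle, since the proof is essentially a one-line Rayleigh-quotient argument once the Fourier correspondence is set up. The only care required is bookkeeping of Fourier-convention signs and conjugations, and ensuring that the operator inequality is invoked only at those $k$ where the spectral bound holds; the fact that the exceptional set has measure zero leaves the integral inequalities unaffected, which is exactly why the weakened hypothesis in the Remark is already sufficient.
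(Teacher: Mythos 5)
Your proposal is correct and follows essentially the same route as the paper's proof in this appendix: the block-Toeplitz identification of the sharply terminated chain, the auxiliary function $v^{(\mathcal{N})}(k)=\sum_j e^{-ikj}\psi^{(\mathcal{N})}_j$ with its Parseval identity, and the rewriting of the Rayleigh quotient as a $k$-weighted average of $v^\dagger h(k) v$. The only cosmetic difference is that you invoke the pointwise operator inequality $E_\mathrm{min}\mathbbold{1}\preceq h(k)\preceq E_\mathrm{max}\mathbbold{1}$ directly, whereas the paper spells out the same bound by expanding $v^{(\mathcal{N})}(k)$ in the eigenbasis of $h(k)$; both handle the almost-everywhere hypothesis identically.
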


\begin{proof}
For each $r\in\mathbb{Z}$ we define the $N_\mathrm{b}\times N_\mathrm{b}$ matrix 
\begin{equation}
\mathcal{T}_r = \frac{1}{2\pi} \int_{-\pi}^{+\pi} dk \, e^{-irk} h(k), \label{eqn:block-Toeplitz-elements}
\end{equation}
which in the physical setting encodes the hopping amplitude between each pair of orbitals separated by $r$ unit cells. 
More generally, the matrices $\{\mathcal{T}_r\}_{r\in\mathbb{Z}}$ are all well-defined if $h(k)$ is $L^1$-integrable. 
The Hamiltonian of the finite chain
with $\mathcal{N}$ unit cells and with the sharp boundary condition is then given by the block Toeplitz matrix
\begin{equation}
\mathcal{H}_\mathcal{N} = \left(\begin{array}{cccc}
\mathcal{T}_0 & \mathcal{T}_1 & \cdots & \mathcal{T}_{\mathcal{N}-1} \\
\mathcal{T}_{-1} & \mathcal{T}_0 & \ddots & \vdots \\
\vdots & \ddots & \ddots & \mathcal{T}_1 \\
\mathcal{T}_{-1+\mathcal{N}} & \cdots & \mathcal{T}_{-1} & \mathcal{T}_0
\end{array}\right),\label{eqn:block-Toeplitz-Hamiltonian}
\end{equation}
i.e., with the block structure $(\mathcal{H}_\mathcal{N})_{j,j'} = \mathcal{T}_{j'-j}$. 

Let further $\psi^{(\mathcal{N})} = \left(\psi_1^{(\mathcal{N})}, \psi_2^{(\mathcal{N})}, \ldots, \psi_\mathcal{N}^{(\mathcal{N})} \right)^\top$ be an arbitrary eigenstate of the Hamiltonian in Eq.~(\ref{eqn:block-Toeplitz-Hamiltonian}) with eigenvalue $\lambda$, i.e., $ \mathcal{H}_\mathcal{N} \psi^{(\mathcal{N})} = \lambda \psi^{(\mathcal{N})}$, where each of $\left\{\psi_j^{(\mathcal{N})}\right\}_{j=1}^\mathcal{N}$ is an $N_\mathrm{b}$-component vector of amplitudes of the eigenvector within the $j^\textrm{th}$ unit cell. 
We know that eigenstates of finite Hermitian matrices, such as the Hamiltonian in Eq.~(\ref{eqn:block-Toeplitz-Hamiltonian}), have finite $L^2$ norm that can be rescaled to unity, i.e.,
\begin{equation}
\norm{\psi^{(\mathcal{N})}}^2= \sum_{j=1}^\mathcal{N} \norm{\psi_j^{(\mathcal{N})}}^2 = \sum_{j=1}^\mathcal{N}\sum_{a=1}^{N_\mathrm{b}} \abs{\psi^{(\mathcal{N})}_{j,a}}^2 = 1,\label{eqn:L2-normalizable-eigenstate}
\end{equation}
where $\left\{\psi_{j,a}^{(\mathcal{N})}\right\}_{a=1}^{N_\mathrm{b}}$ are the components of the wave function within $j^\textrm{th}$ unit cell.

We introduce the function
\begin{equation}
v^{(\mathcal{N})}(k) = \sum_{j=1}^\mathcal{N} e^{-ijk} \psi_j^{(\mathcal{N})}. \label{eqn:block-Fourier-in-proof}
\end{equation}
The function $v^{(\mathcal{N})}(k)$ at momenta $k=\frac{2\pi\alpha}{\mathcal{N}},\alpha\in\mathbb{Z}$ can be interpreted as the Fourier transform of the wave function $\psi^{(\mathcal{N})}$.
Nevertheless, we here mean to define $v^{(\mathcal{N})}(k)$ for \emph{all} $k\in[-\pi,+\pi]$.
Since the sum in Eq.~(\ref{eqn:block-Fourier-in-proof}) is finite and the summands are bounded by the norm in Eq.~(\ref{eqn:L2-normalizable-eigenstate}), one finds that $v^{(\mathcal{N})}(k)$ is also a bounded function, thus one does not have to worry about potential divergences. 
In fact, one can explicitly derive that 
\begin{align}
\norm{v^{(\mathcal{N})}}^2 &\equiv  \frac{1}{2\pi}\int_{-\pi}^{+\pi} dk \, \left[v^{(\mathcal{N})}(k)\right]^\dagger \cdot v^{(\mathcal{N})}(k)  \nonumber \\
&= \frac{1}{2\pi} \sum_{j,j'=1}^\mathcal{N} \left[\psi^{(\mathcal{N})}_j\right]^\dagger \cdot \psi^{(\mathcal{N})}_{j'} \int_{-\pi}^{+\pi} dk \, e^{i(j-j')k}   \nonumber \\
&= \sum_{j=1}^\mathcal{N} \left[\psi_j^{(\mathcal{N})}\right]^\dagger \cdot \psi_{j}^{(\mathcal{N})} \equiv \norm{\psi^{(\mathcal{N})}}^2,\label{eqn:norm-v-N}
\end{align}
which is similar to the Parseval's identity (a remark on this is added below the proof). We limit our attention to normalized states, $\norm{\psi}^2 = \norm{v}^2 = 1$. 

We proceed to study the expression
\begin{align}
\lambda &= \left[\psi^{(\mathcal{N})}\right]^\dagger \mathcal{H}_\mathcal{N} \psi^{(\mathcal{N})} \stackrel{(\mathrm{\ref{eqn:block-Toeplitz-Hamiltonian}})}{=} \sum_{j,j'=1}^\mathcal{N} \left[\psi_j^{(\mathcal{N})}\right]^\dagger \cdot  (\mathcal{H}_\mathcal{N})_{j,j'} \cdot \psi_{j'}^{(\mathcal{N})} \nonumber \\
&\stackrel{(\mathrm{\ref{eqn:block-Toeplitz-elements}})}{=} \frac{1}{2\pi}\int_{-\pi}^{+\pi} dk \, \sum_{j,j'=1}^\mathcal{N} e^{i(j-j')k} \left[\psi_j^{(\mathcal{N})}\right]^\dagger \cdot h(k) \cdot \psi_{j'}^{(\mathcal{N})} \nonumber \\
&\stackrel{(\mathrm{\ref{eqn:block-Fourier-in-proof}})}{=} \frac{1}{2\pi} \int_{-\pi}^{+\pi} dk \, \left[v^{(\mathcal{N})}(k)\right]^\dagger\cdot h(k) \cdot v^{(\mathcal{N})}(k).\label{eqn:bound-on-norm}
\end{align}
Note that the last expression in Eq.~(\ref{eqn:bound-on-norm}) is a weighted (over~$k$) expectation value of operator $h(k)$ for the vector $v^{(\mathcal{N})}(k)$, which we expect it to be bounded by the values in Eqs.~(\ref{eqn:E-min-sup}) and~(\ref{eqn:E-max-inf}). 
To explicitly verify this, we decompose $v^{(\mathcal{N})}(k)$ into the complete basis of orthonormal eigenstates $\left\{u_a(k)\right\}_{a=1}^{N_\mathrm{b}}$ of $h(k)$ at each $k$, i.e.,
\begin{equation}
v^{(\mathcal{N})}(k) =  \sum_{a=1}^{N_\mathrm{b}} c_a^{(\mathcal{N})}(k) u_a(k).
\end{equation}
From Eq.~(\ref{eqn:norm-v-N}) we obtain a constraint on the coefficient functions $c_a(k)$, namely
\begin{align}
1 &= \norm{v^{(\mathcal{N})}}^2 = \frac{1}{2\pi}\int_{-\pi}^{+\pi} d k \sum_{a,a'=1}^{N_\mathrm{b}} [c_a^{(\mathcal{N})}(k)]^* c_{a'}^{(\mathcal{N})}(k) u_a^\dagger(k)\cdot u_{a'}(k)\nonumber \\
&= \frac{1}{2\pi} \int_{-\pi}^{+\pi} \sum_{a=1}^{N_\mathrm{b}}\abs{c_a^{(\mathcal{N})}(k)}^2.\label{eqn:v-state-decomp}
\end{align}
We therefore manipulate the last expression in Eq.~(\ref{eqn:bound-on-norm}) as
\begin{align}
(\mathrm{\ref{eqn:bound-on-norm}}) &= \frac{1}{2\pi} \int_{-\pi}^{+\pi}d k \sum_{a,a'=1}^{N_\mathrm{b}}  [c_a^{(\mathcal{N})}(k)]^* c_{a'}^{(\mathcal{N})}(k) u_a^\dagger(k)\cdot h(k) \cdot u_{a'}(k)\nonumber \\
&= \frac{1}{2\pi} \int_{-\pi}^{+\pi}d k \sum_{a=1}^{N_\mathrm{b}} \abs{c_a^{(\mathcal{N})}(k)}^2 E_a(k)\nonumber \\
&\stackrel{\mathrm{(\ref{eqn:E-min-sup})}}{\geq} \frac{1}{2\pi} \int_{-\pi}^{+\pi}d k \sum_{a=1}^{N_\mathrm{b}} \abs{c_a^{(\mathcal{N})}(k)}^2 E_\textrm{min} \stackrel{\mathrm{(\ref{eqn:v-state-decomp})}}{=} E_\textrm{min}.\label{eqn:lambda-bound}
\end{align}
Thus, Eqs.~(\ref{eqn:bound-on-norm}) and~(\ref{eqn:lambda-bound}) together constitute the first inequality in Eq.~(\ref{eqn:sharp-spectrum-bound}). 
The complementary part of Eq.~(\ref{eqn:sharp-spectrum-bound}) follows from utilizing in the last line of Eq.~(\ref{eqn:lambda-bound}) the inequality in Eq.~(\ref{eqn:E-max-inf}). 
This completes the proof of the spectral bound for any finite (but arbitrarily large) system size $\mathcal{N}$. 
\end{proof}
\smallskip

Before proceeding with alternative insights into the spectral bounds for block Toeplitz matrices, let us briefly clarify why the above proof does not generalize to a semi-infinite chain of unit cells.
The crucial complication is that for infinite-dimensional Hermitian operators there are eigenstates that do not have a well-defined $L^2$ norm. 
Examples include plane waves of an infinite chain of unit cells with no boundaries, and also the bulk states of a semi-infinite chain
which arise in the presently discussed context. 
As a consequence, the limit $\lim_{\mathcal{N}\to\infty} v^{(\mathcal{N})}(k)$ defined in Eq.~(\ref{eqn:block-Fourier-in-proof}) may be non-analytic, and the manipulations in Eqs.~(\ref{eqn:norm-v-N}--\ref{eqn:lambda-bound}) become ill-defined. 
However, if one assumes that $\psi_j$ in \q{eqn:block-Fourier-in-proof}   exponentially decays for large $|j|$, then one can take the limit ${\mathcal{N}\to\infty}$ while keeping $v(k)$ analytic, and the above proof does indeed carry forward to a semi-infinite chain. 
Exponential decay of wave functions in the `forbidden' energy range is best illustrated by the transfer-matrix formalism~\cite{Lee:1981}, which is elaborated next.

\subsection{Spectral bounds from transfer matrix formalism}\la{app:transfermatrix}

The following proof applies to a semi-infinite chain of unit cells with one edge, and manifests a new principle that is not present in the previous proof. 
Namely, suppose an energy eigenstate satisfies (\emph{i}) the sharp boundary condition, and (\emph{ii}) has energy outside the energy bounds of an infinite chain without edges [cf.~Eqs.~(\ref{eqn:E-min-sup}) and~(\ref{eqn:E-max-inf})]; then the eigenstate amplitude grows exponentially away from the chain edge, and is therefore not $L^2$-normalizable. 
We will demonstrate this principle, and also prove the spectral bound under sharp boundary conditions, with the simplifying assumption of a one-band Hamiltonian with nearest-neighbor hopping matrix elements.
  
Let us denote the wave function on a chain by $\psi_j$, with $j=1,2,\ldots$ taking any positive-integer values denoting the coordinate of each `site' on a chain.  
With only nearest-neighbor hoppings given by complex matrix element $b$, and an on-site potential with real matrix element $a$, the eigenproblem to solve is
\e{ \bar{b}\psi_{j-1}+b\psi_{j+1}= (E-a)\psi_j\la{seq}}
for $j \geq 2$, with the sharp boundary condition imposed by 
\e{ b\psi_2=(E-a)\psi_1.\la{sharp}}
Without loss of generality, one can (1) redefine the origin on the energy axis to remove $a$ from Eqs.~(\ref{seq}) and~(\ref{sharp}), then (2) perform a $j$-dependent phase transformation of $\psi_j$ such that $b$ becomes real, and finally (3) rescale the Hamiltonian by a multiplicative constant such that $b=1$. 
Then Eq.~(\ref{seq}) can be transformed to
\e{ \vectwo{\psi_{j+1}}{\psi_{j+2}}= 
\mathscr{T}\vectwo{\psi_{j-1}}{\psi_{j}}, \as \mathscr{T}(E) = \matrixtwo{-1}{E}{-E}{E^2-1},}
with $\mathscr{T}$ the \textit{transfer matrix} relating the wave functions across any four adjacent sites (not including the first two). 
The transfer matrix has eigenvalues
\e{ \lambda_{\pm}(E)=-1+\f{E^2}{2} \pm  \sqrt{ \f{E^4}{4}-E^2}, \as \lambda_+\lambda_-=1.}
The last equality follows from $\mathscr{T}$ having unit determinant, which is a general property of transfer matrices~\cite{Lee:1981}.

For $E^2<4$, $\lambda_{\pm}(E)=e^{\pm i 2k(E)}$ with $k(E)$ a real number determined by $E=2\cos(k)$. 
Generally, unimodular eigenvalues of the transfer matrix translate to Hamiltonian eigenstates whose wave functions (on two sites) are invariant (in amplitude) upon repeated application of the transfer matrix, meaning that these are extended, Bloch-wave states. 
For our model, $\lambda_{\pm}(E)$ is simply the phase factor acquired by translating a plane wave with wave number $k(E)$ by two sites.

For $E^2=4,$ the transfer matrix has an eigenvalue degeneracy: $\lambda_{\pm}=1$. 
Introducing a small parameter $\delta$ in  $E=\eta 2 +\delta/4$ with $\eta=\pm 1$, one finds $\lambda_{\pm}= 1\pm \sqrt{\eta \delta } +O(\delta).$ 
Thus tuning $\delta$ across zero (a branch point) changes $\lambda_+$ from a unimodular phase factor to a real number that is larger than one, and $\lambda_-$ into a real number less than one. 
In fact, since $(E^4/4-E^2)$ is a monotone-increasing function of $|E|$ for $E^2>4$, it follows that $\lambda_+$ is also  monotone-increasing in $|E|$ for $E^2>4$. Combining this fact with our constraint $\lambda_+\lambda_-=1$, one finds $\lambda_-$ is monotone-decreasing in $|E|$ but remains positive for any finite $E$. 
Thus it may be concluded for $E^2>4$ that $\lambda_+>1$ and $0\leq \lambda_-< 1$. 
For what follows, it is useful to record the left eigenvectors of the transfer matrix corresponding to $\lambda_\pm$:
\e{ \phi^L_{\pm}= \bigg(-\f{E^2}{2}\pm \sqrt{\frac{E^4}{4}-E^2},E \bigg),}
with the right eigenvectors $\phi^R_{\pm}$ satisfying the bi-orthonormality conditions: $\phi^L_{\pm}\cdot \phi^R_{\pm}=1$ and $\phi^L_{\mp}\cdot \phi^R_{\pm}=0$. 
(We use ``$\cdot$'' to denote the inner product of a two-element row vector with a two-element column vector.)

Let us study whether a wave function satisfying the sharp boundary condition can be normalized. 
The wave function on the first two sites is constrained by the sharp boundary condition of \q{sharp} as $\psi_2=E\psi_1$. 
Let us express the wave function on the first two sites as a linear combination of right eigenvectors of the transfer matrix:
\e{ \vectwo{\psi_1}{\psi_2} = c_+ \phi^R_++c_- \phi^R_-.}
The complex coefficient $c_+$ is determined by utilizing the bi-orthonormality condition:
\e{ c_+=\phi^L_+ \cdot \vectwo{\psi_1}{\psi_2} = \bigg(\f{E^2}{2}+ \f1{2}\sqrt{E^4-4E^2} \bigg)\,\psi_1 \neq 0.\la{overlapcoef}}
Since the bracketed term is strictly positive for $E^2>4$, one deduces $c_+ \neq 0$ ($c_-$ is obtained analogously, but its value is less important for our purposes.) 
It follows that applying the transfer matrix $r$ times to the wave function on the first two sites: 
\e{ \vectwo{\psi_{1+2r}}{\psi_{2+2r}}=\mathscr{T}^r\vectwo{\psi_1}{\psi_2}= c_+(\lambda_+)^r \phi_+^R+c_-(\lambda_-)^r \phi_-^R}
gives a wave function that is exponentially growing as $e^{r\text{ln}\lambda_+}$ as one moves away for the chain edge $(j=1)$. 
The desired conclusion, then, is that there exists no normalizable Hamiltonian eigenstate (under sharp boundary conditions) outside the energy interval $[-2,2]$, this interval being the energy band of extended Hamiltonian eigenstates for an infinite chain without edges. 

To briefly rationalize why this proof works, observe that the transfer-matrix eigenvalue $|\lambda_+|$ and coefficient $|c_+|$ [cf.~Eq.~\ref{overlapcoef}] being monotone-increasing functions of $|E|$ is special: it requires that there is a single dominant energy scale in the problem. We mean that the energy $E$ (where one considers a sharp boundary state) is dominant over all other relevant energy scales in the Hamiltonian; in this model, the hopping parameter $b=1$ gives the only other  (unit) energy scale to compare with $E$. (The on-site energy $a$ is irrelevant.) The existence of a single energy scale for sufficiently large $|E|$ is of course also true for Hamiltonians with more bands and longer-ranged hoppings, and for this reason we expect our transfer-matrix proof of spectral bounds will generalize. We leave this to future investigations.

Finally, it is worth remarking that with two or more energetically separated bulk bands, there are clearly multiple, competing energy scales when $E$ lies within a bulk band gap, and our argument breaks down, admitting the possibility of surface states under sharp boundary conditions.

\section{Incompatibility criterion for the Hopf invariant}\label{app:incompatibility}

We show that a nontrivial Hopf invariant is \textit{incompatible} with certain crystallographic spacetime symmetries, as summarized by the following criterion:\\

\noindent \textbf{Hopf incompatibility criterion:} Let $u(\bk)$ be a nondegenerate energy band of a tight-binding Hamiltonian with the symmetry of a symmorphic magnetic space group $\mathcal{G}$ in three spatial dimensions. Assuming $u(\bk)$ has trivial first Chern class on all planes, the Hopf invariant [cf.~\q{eq:hopfinvar}] of band $u(\bk)$ vanishes if $\mathcal{G}$ contains either of these elements:  (\emph{i}) an element that inverts the arrow of time but preserves the orientation of space, or (\emph{ii}) an element that preserves the arrow of time but inverts the orientation of space.\\ 

In the following we first provide an intuitive understanding of this criterion in Sec.~\ref{app:incompat-intuit}. After that we proceed to formally prove the criterion in Sec.~\ref{app:incomp-proof}.

\subsection{Intuitive understanding of the criterion}\label{app:incompat-intuit}

The incompatibility criterion for a nontrivial Hopf invariant can intuitively be understood in the following way. Since the Berry connection $\boldsymbol{\mathcal{A}}(\bk)$ forms part of the matrix elements of the position operator in the basis of Bloch functions \cite{Blount}, $\boldsymbol{\mathcal{A}}$ transforms under spatial symmetries like a vector, and is moreover even under time reversal. (There is a separate gauge-dependent component of $\boldsymbol{\mathcal{A}}$ that does not transform as a vector, but this component vanishes when integrated over the Brillouin zone, owing to the triviality of the first Chern class.) Likewise, the curvature $\bcalf(\bk)=\nabla_{\bk}\times \boldsymbol{\mathcal{A}}$ transforms like a pseudovector that is odd under time reversal~\cite{100page}. In combination, $ \bcalf \!\cdot\! \boldsymbol{\mathcal{A}}$ transforms  as a pseudoscalar that is odd under time reversal, and therefore its integral over the BZ vanishes. The Hopf incompatibility criterion is a strightforward consequence of these transformation properties.\\

In the following we provide some clarifying remarks:\medskip

\noindent (\emph{a})
We restrict the above criterion to \emph{symmorphic} magnetic space groups $\mathcal{G}$, because to our knowledge there exists no unit-rank band in a nonsymmorphic space group. (For a band whose projector is periodic and analytic in $\boldsymbol{k}$ throughout the BZ, the rank of a band    is the number of linearly independent Bloch functions at every $\boldsymbol{k}$, e.g., the Hopf-insulating conduction and valence bands each has unit rank.)  For nonsymmorphic space groups with a nonsymmorphic symmetry element (screw or glide), the non-existence of unit-rank bands has been proven.\footnote{For band representations, this has been proven by Michel and Zak \cite{connectivityMichelZak}; more generally, this has been proven in the Supplementary Material of Ref.~\cite{TBO_JHAA}.} 
For nonsymmorphic space groups without a nonsymmorphic element, the non-existence of unit-rank bands has not been generally proven, though this seems empirically to be true for a wide variety of space groups~\cite{watanabe2018, nonsymmsid, pofillingenforced}. \medskip

\noindent (\emph{b}) Possible magnetic space groups $\mathcal{G}$ which are not ruled out by the Hopf incompatibility criterion are restricted to Type I (colorless groups) derived from chiral non-magnetic space groups, and to Type III (black-and-white groups with ordinary Bravais lattice) with symmetry elements that include space inversion and time reversal only in combination. \medskip

\noindent (\emph{c}) The criterion is valid for both integer-spin and half-integer-spin energy bands. This suggests that a Hopf insulator can be sought in electronic systems with and without spin-orbit coupling as well as in bosonic metamaterials. \medskip

\noindent (\emph{d}) Finally, let us note that the presented proof of the incompatibility criterion explicitly assumes vanishing Cher numbers, and does not generalize to the Kennedy invariant of Hopf-Chern insulators. 

\subsection{Formal proof of the Hopf incompatibility criterion}\label{app:incomp-proof}

In order to proof the incompatibility criterion formulated at the beginning of this Appendix we assume that the tight-binding Hamiltonian has exponentially-decaying matrix elements in real space, and therefore the Fourier transform of the Hamiltonian is an analytic function of $\bk$ throughout the Brillouin zone (BZ).

To prove the criterion, we need to address the transformation of $\boldsymbol{\mathcal{A}}(\bk)$ under a symmetry $g\in \mathcal{G}$:
\e{
\boldsymbol{\mathcal{A}}(s_g\check g \bsk)= \check g \boldsymbol{\mathcal{A}}(\bsk)  + s_g\check g \nabla_{\bsk}\beta_g.
\label{eq:conntransform}
}
where $s_g=\pm 1$ indicates the reversal of time, and $\check g \in O(3)$ is the matrix representation of $g\in \mathcal{G}$ acting on spatial coordinates. The second term on the right-hand side of Eq.~(\ref{eq:conntransform}) is derived from the symmetry  constraint between tight-binding wave functions at symmetry-related momenta $\bsk \leftrightarrow s_g\check g \bsk$,
\e{ 
R_g u(\bsk) = e^{i\beta_g(\bsk)} u(s_g\check g \bsk),
\label{eq:sewingmatrix}
}
where $R_g$ is a matrix representation of $g$ that is antiunitary if $g$ inverts time, and unitary otherwise. Crucially, the $U(1)$ phase
$\beta_g(\bsk)$ in \q{eq:sewingmatrix}  encodes the change of the Wannier-center position due to $g$ \cite{nogo_AAJH} and cannot generally be made $\bsk$-independent \cite{hughes_inversionsymmetricTI,Chen_bulktopologicalinvariants}. On the other hand, the transformation of the curvature does not involve $\beta_g$ and is exactly as claimed in the intuitive argument from Sec.~\ref{app:incompat-intuit},
\e{ 
\bcalf(s_g \check g \bsk) = s_g \,\text{det}[\check g] \,\check g \bcalf(\bsk).
\label{eq:curvtransform}
}
Combining \q{eq:conntransform} with \q{eq:curvtransform} we derive that the Hopf invariant is proportional to
\e{ 
&\int d\bsk \bcalf \cdot \boldsymbol{\mathcal{A}}\big|_{\bsk} = \int d\bsk \bcalf \cdot \boldsymbol{\mathcal{A}}\big|_{s_g\check g\bsk} \lin
\eq s_g \,\text{det}[\check g] \,\int d\bsk \bcalf \cdot \boldsymbol{\mathcal{A}}\big|_{\bsk}  + \text{det}[\check g] \,\int d\bsk \bcalf \cdot \nabla_{\bsk}\beta_g,
\label{eq:hopfseparate}
}
where the first step corresponds to a change of coordinates, and in the second step we used the invariance of the dot product under rotations.

Next we would like to prove that the last term in this expression vanishes. Having trivial first Chern class means that $u(\bsk)$ can be made an analytic and periodic function of $\bsk$ throughout the BZ \cite{Panati_trivialityblochbundle}. Applying these properties to \q{eq:sewingmatrix}, we deduce that the phase factor $e^{i\beta_g(\bsk)}$  can also be made analytic and periodic. The phase $\beta_g$ is then also analytic, but can in principle wind by integer multiples of $2\pi$ if $\bsk$ is advanced by a reciprocal vector $\bG$. Without loss of generality, we may separate $\beta_g(\bsk) = \tilde{\beta}_g(\bsk)+\bsk \cdot \bR_g$ into a periodic component [$\tilde{\beta}_g(\bsk)=\tilde{\beta}_g(\bsk+\bG)$]   and a winding component, with  $\bR_g$ a $\bsk$-independent Bravais-lattice vector. 

The term in \q{eq:hopfseparate} that is proportional to $\tilde{\beta}_g$ may be expressed via Leibniz rule as 
\e{ 
\int d\bsk \bcalf \cdot \nabla_{\bsk}\tilde{\beta}_g=\int d\bsk \nabla_{\bsk}\cdot (\tilde{\beta}_g \bcalf)- \int d\bsk \beta_g \nabla_{\bsk}\cdot \bcalf; 
}
the first term on the right-hand side is seen to vanish by applying Stoke's theorem and the periodicity of both $\tilde{\beta}_g$ and $\bcalf$; the second term on the right-hand side vanishes because the curvature is generally divergence-free for a nondegenerate energy band of a tight-binding Hamiltonian (assumed to be analytic in $\bsk$). Similarly, the winding component of $\beta_g$ leads to a term in \q{eq:hopfseparate} that is proportional to $\int d\bsk \bcalf \cdot \bR_g$, with $\bR_g$ that is $\bsk$-independent.  This quantity vanishes because the first Chern number has been assumed to vanish on any 2D cut of the BZ. Eq.~\ref{eq:hopfseparate} therefore reduces to
\e{ 
\chi \eq s_g \,\text{det}[\check g] \chi, 
\label{eq:hopfseparate0}
}
implying that $\chi$ transforms as a pseudoscalar that is odd under time reversal, and therefore vanishes if $s_g \,\text{det}[\check g]=-1$ for some $g\in \mathcal{G}$. 

\newpage

\bibliography{bib_Apr2018}
\end{document}